\numberwithin{equation}{section}
\newtheorem{theorem}{Theorem}[section]
\newtheorem{lemma}[theorem]{Lemma}
\newtheorem{claim}[theorem]{Claim}
\newtheorem{corollary}[theorem]{Corollary}
\theoremstyle{definition}
\newtheorem{definition}[theorem]{Definition}
\newtheorem{redrule}{Reduction rule}[section]
\newcommand{\Oh}{\mathcal{O}}
\newcommand{\Ohstar}{\Oh^\star}
\def\cqedsymbol{\ifmmode$\lrcorner$\else{\unskip\nobreak\hfil
\penalty50\hskip1em\null\nobreak\hfil$\lrcorner$
\parfillskip=0pt\finalhyphendemerits=0\endgraf}\fi} 
\newcommand{\cqed}{\renewcommand{\qed}{\cqedsymbol}}
\newcommand{\icname}{\textsc{Interval Completion}\xspace}
\newcommand{\picname}{\textsc{Proper Interval Completion}\xspace}
\newcommand{\ic}{\textsc{IC}\xspace}
\newcommand{\pic}{\textsc{PIC}\xspace}
\newcommand{\Ibeg}[1]{\alpha_{#1}}
\newcommand{\Iend}[1]{\omega_{#1}}
\newcommand{\Sbeg}[2]{\alpha_{#2}(#1)}
\newcommand{\Send}[2]{\omega_{#2}(#1)}
\newcommand{\eventssymb}{\mathcal{E}}
\newcommand{\events}[1]{\eventssymb(#1)}
\newcommand{\event}{\varepsilon}
\newcommand{\model}{\sigma}
\newcommand{\Isec}{\Omega}
\newcommand{\sol}{F}
\newcommand{\incsol}[2]{#1(#2)}
\newcommand{\incF}[1]{\incsol{\sol}{#1}}
\newcommand{\Uroot}{\mathfrak{r}}
\newcommand{\Lroot}{\mathfrak{r}_L}
\newcommand{\Rroot}{\mathfrak{r}_R}
\newcommand{\Ccomp}[1]{\mathtt{cc}(#1)}
\newcommand{\Cfam}{\mathcal{C}}
\newcommand{\Dfam}{\mathcal{D}}
\newcommand{\sectionset}{\mathcal{S}}
\newcommand{\pmcset}{\mathcal{K}}
\newcommand{\fiset}{\mathcal{F}}
\newcommand{\Tfam}{\mathcal{T}}
\newcommand{\minnei}[1]{\phi_1(#1)}
\newcommand{\maxnei}[1]{\phi_2(#1)}
\newcommand{\minneiset}[1]{\Phi_1(#1)}
\newcommand{\maxneiset}[1]{\Phi_2(#1)}
\newcommand{\ToutL}[1]{a_1(#1)}
\newcommand{\ToutR}[1]{a_2(#1)}
\newcommand{\TinL}[1]{b_1(#1)}
\newcommand{\TinR}[1]{b_2(#1)}
\newcommand{\W}{\mathbf{W}}
\newcommand{\Win}{\W^{\mathrm{in}}}
\newcommand{\Wout}{\W^{\mathrm{out}}}
\newcommand{\Wv}[1]{v(#1)}
\newcommand{\WsecL}[1]{\Isec_L(#1)}
\newcommand{\WsecR}[1]{\Isec_R(#1)}
\newcommand{\WpL}[1]{p_L(#1)}
\newcommand{\WpR}[1]{p_R(#1)}
\newcommand{\WF}[1]{F_v(#1)}
\newcommand{\WI}[1]{I(#1)}
\newcommand{\worldset}{\mathbb{W}}
\newcommand{\WW}[1]{\Gamma(#1)}
\newcommand{\Smodel}{\pi}
\newcommand{\T}{\mathbf{T}}
\newcommand{\TAI}[1]{I^1(#1)}
\newcommand{\TAsecL}[1]{\Isec_L^1(#1)}
\newcommand{\TAsecR}[1]{\Isec_R^1(#1)}
\newcommand{\TApL}[1]{p_L^1(#1)}
\newcommand{\TApR}[1]{p_R^1(#1)}
\newcommand{\TBI}[1]{I^2(#1)}
\newcommand{\TBsecL}[1]{\Isec_L^2(#1)}
\newcommand{\TBsecR}[1]{\Isec_R^2(#1)}
\newcommand{\TBpL}[1]{p_L^2(#1)}
\newcommand{\TBpR}[1]{p_R^2(#1)}
\newcommand{\stateset}{\mathbb{S}}
\newcommand{\state}{\mathbf{S}}
\newcommand{\SsecL}[1]{\Isec_L(#1)}
\newcommand{\SsecR}[1]{\Isec_R(#1)}
\newcommand{\SpL}[1]{p_L(#1)}
\newcommand{\SpR}[1]{p_R(#1)}
\newcommand{\SI}[1]{I(#1)}
\newcommand{\Sevents}[1]{\events{#1}}
\newcommand{\SW}[1]{\Gamma(#1)}
\newcommand{\terset}{\mathbb{T}}
\newcommand{\proofapp}{\ensuremath{\spadesuit}}
\title{A subexponential parameterized algorithm for\\ \textsc{Interval Completion}%
\thanks{The research leading to these results has received funding from the European Research Council under the European Union's Seventh Framework Programme (FP/2007-2013) / ERC Grant Agreement n. 267959}}
\author{
  Ivan Bliznets%
  \thanks{
    St. Petersburg Academic University of the Russian Academy of Sciences, Russia, \texttt{ivanbliznets@tut.by}.
  }
  \and
  Fedor V. Fomin%
  \thanks{
    Department of Informatics, University of Bergen, Norway, \texttt{fomin@ii.uib.no}.
  }
  \and
  Marcin Pilipczuk%
  \thanks{
    Department of Computer Science, University of Warwick, United Kingdom, \texttt{M.Pilipczuk@dcs.warwick.ac.uk}.
  }
  \and
  Micha\l{} Pilipczuk%
  \thanks{
    Faculty of Mathematics, Computer Science, and Mechanics, University of Warsaw, Poland, \texttt{michal.pilipczuk@mimuw.edu.pl}.
  }
  }
\date{}
\begin{document}

\begin{titlepage}
\def\thepage{}
\thispagestyle{empty}
\maketitle

\begin{abstract}
In the \icname{} problem we are given an $n$-vertex graph $G$ and an integer $k$,
and the task is to transform $G$ by making use of at most $k$ edge additions into an interval graph. This is a fundamental graph modification problem with applications  
in sparse matrix multiplication and molecular biology.
The question about fixed-parameter tractability of  \icname{}  was asked by 
 Kaplan, Shamir and Tarjan~[FOCS 1994; SIAM J. Comput. 1999]
and  was answered affirmatively more than a decade later
by Villanger at el.~[STOC 2007; SIAM J. Comput. 2009],
who presented an algorithm with running time $\Oh(k^{2k} n^3 m)$.
We give the first  subexponential parameterized algorithm solving 
  \icname{}   in 
time $k^{\Oh(\sqrt{k})} n^{\Oh(1)}$. This adds \icname{} to a very small list of parameterized 
graph modification problems solvable in subexponential time.

\end{abstract}
\end{titlepage}

\section{Introduction}\label{sec:intro}

In the \icname{} problem we are asked if a given graph $G$ can be complemented by at most $k$ edges into an interval graph, i.e.,
the intersection graph of intervals of the real line. This is a fundamental NP-complete problem, mentioned as problem GT35 in Garey  and Johnson \cite{GareyJ79}, arising naturally in different areas. In sparse matrix  computations the problem is equivalent to reordering columns and rows of a matrix reducing its profile \cite{Gibbs76}. In molecular biology, the problem models the task of 
building a map describing the relative position of the clones \cite{goldberg1995four,Karp93}. 
  \icname{} 
fits into the broader class of graph modification problems  on which hundreds of papers have been written.
The systematic study of the parameterized complexity of completion problems was initiated by Kaplan, Shamir, and Tarjan in \cite{focs/KaplanST94,KaplanST99}, who showed that 
\textsc{Chordal Completion}, \textsc{Strongly Chordal Completion}, and \picname{} are fixed-parameter tractable (FPT). The parameterized complexity of  \icname{}  remained open   till 2007, when Villanger  et al. \cite{HeggernesPTV07,yngve:ic}  settled this long-standing open problem by showing that the problem is FPT.  
Very recently,  Cao in \cite{yixin:ic,yixin:new} announced a single-exponential time $\Oh(6^{k}(n+m))$ algorithm.

Our main interest to  \icname{} is  due  to the new developments in parameterized complexity. 
It is well known (see e.g.~\cite{FlumGrohebook}) that
  for most of the
natural parameterized problems the existence of subexponential
parameterized algorithms can be refuted, unless the  Exponential Time Hypothesis (ETH)~\cite{ImpagliazzoPZ01} fails.
Until recently, the only notable exceptions of parameterized subexponential problems 
  were problems on special classes of graphs like planar graphs, or more generally, graphs excluding some fixed graph as a
 minor~\cite{demaine2005subexponential}, and on tournaments \cite{alon2009fast}. 
Luckily the structure of the ``parameterized subsexponential world" is  much more interesting and complicated than it was anticipated for a long time.   It appeared very recently that several graph modification problems, mostly problems of complementing to some graph class, like  
\textsc{Chordal  Completion}, \textsc{Threshold Completion},  \picname{},  and \textsc{Trivially Perfect Completion} are
solvable in subexponential time $k^{\Oh(\sqrt{k})}n^{\Oh(1)}$, where $n$ is the input size and $k$ is the number of edges in the completion~\cite{BliznetsFPP14,DrangeFPV13,fomin2011subexponential,FominV13,ghosh2012faster}. 
On the other hand, even for completion problems for a vast majority of graph classes (even very simple ones, like cographs or complements of cluster graphs), it is possible to rule out existence of subexponential parameterized algorithms~\cite{DrangeFPV13,komusiewicz2012cluster} under plausible complexity assumptions. Thus subexponential-time solvability is very unusual and exceptional property of a parameterized problem.

 While the examples of subexponential-time solvability show that some parameterized NP-hard problems are significantly ``easier" than most of the  problems from the same complexity class, we do not know  why this is the case, what the underlying difference is, and how to identify such problems. The usual ``prerequisites" for all parameterized graph modification problems solvable in subexponential time prior to this work  were that establishing membership in  FPT is easy (in most of the cases a simple branching does the job) and, moreover,  the problem is admitting a polynomial kernel.\footnote{Recall that a \emph{polynomial kernel} for a parameterized problem
  is a polynomial-time preprocessing routine that reduces an input instance $(G,k)$
  to one of size bounded polynomially in $k$, without increasing the parameter.} 
 \icname{}  absolutely does not fit into this pattern:  
 All known FPT algorithms solving this problem are quite non-trivial  \cite{yngve:ic,yixin:ic,yixin:new} (it took 13 years to make the first such algorithm) and existence of a polynomial kernel for  \icname{}  is a long time open question. This is why we find 
 the subexponential-time solvability of 
 \icname{} striking.

 Another interesting point about  \icname{} is the following. 
Completion problems 
have deep connections with width measures of graphs.
 For example, the treewidth of a graph, one of the most
fundamental graph parameters, is the minimum over all possible
completions into a chordal graph of the maximum clique size minus one. 
Similarly, the pathwidth of a graph, can be defined as  
  the minimum over all possible
completions into an  interval graph of the maximum clique size minus one.  See the survey of Bodlaender for more information on these parameters \cite{Bodlaender98}.
 Another important graph  parameter is the treedepth, also known as the
vertex ranking number, the ordered chromatic number, and the minimum
elimination tree height. This parameter appears in various settings, in particular
in the theory of sparse
graphs developed by Ne{\v{s}}et{\v{r}}il and Ossona de
Mendez~\cite{NesetrilOdM12}.  Mirroring the connection between
treewidth and chordal graphs, pathwidth and interval graphs, the treedepth of a graph can be defined
as the largest clique size in a completion to a  {trivially perfect graph}. Similarly, we may observe a relation between the class of proper interval graphs and the bandwidth of a graph,
as well as threshold graphs and the vertex cover number of a graph. (We refer for definitions of these graph classes to \cite{brandstadt1999graph}.)
Taking into account relations between these graph classes and parameters, we arrive at the diagram
presented in Fig.~\ref{fig:diagram}. It is interesting to note that all problems related to parameters 
in Fig.~\ref{fig:diagram} were established to be solvable in subexponential parameterized time \cite{BliznetsFPP14,DrangeFPV13,FominV13}. The only and the most difficult piece of the puzzle  in  Fig.~\ref{fig:diagram} remained   \icname.
 \begin{figure}[htb]
\centering
\includegraphics{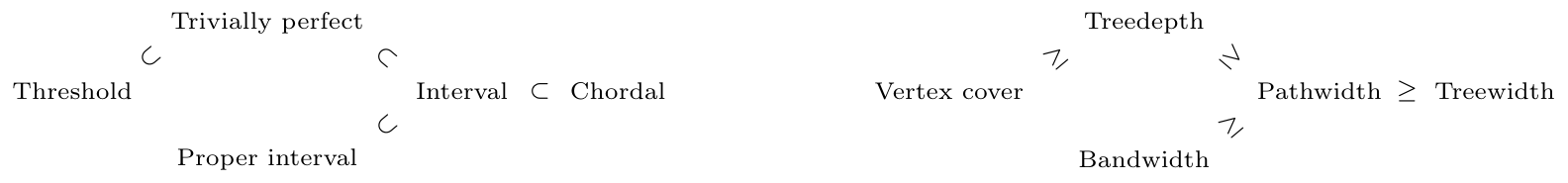}
\caption{Graph classes and corresponding graph parameters. Inequalities on the right side are with $\pm 1$ slackness.}
\label{fig:diagram}
\end{figure}

\paragraph{Our results and techniques.}
Our main result is the following theorem.
\begin{theorem}\label{thm:ic} 
\icname{}  is  solvable in time  $k^{\Oh(\sqrt{k})} \cdot n^{\Oh(1)}$.
\end{theorem}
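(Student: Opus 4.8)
The plan is to follow the by-now-standard route to subexponential parameterized algorithms for completion problems \cite{fomin2011subexponential,BliznetsFPP14}, pushed through the considerably more rigid combinatorics of interval graphs. I would first reduce the problem to the following task: given $(G,k)$, either certify that $G$ has no interval completion of cost at most $k$, or produce a \emph{minimal} such completion $H = G + \sol$ with $|\sol| \le k$. Since $H$ is an interval graph one may fix an interval model $\model$ of $H$, equivalently its clique path, whose bags are the maximal cliques of $H$ listed from left to right. The whole argument then rests on a structural statement: among all solutions of cost at most $k$, one can choose $H$ and $\model$ so that the clique path is cut, by only $\Oh(\sqrt{k})$ distinguished bags, into $\Oh(\sqrt{k})$ consecutive \emph{chunks} that are individually ``cheap'' --- the part of $\sol$ located essentially inside a chunk has size $\Oh(\sqrt{k})$, and only $\Oh(\sqrt{k})$ of the vertices incident to $\sol$ are in play within it. The $\sqrt{k}$ here is forced by the elementary fact that $k$ objects placed along a line can be covered by $\Oh(\sqrt{k})$ blocks meeting $\Oh(\sqrt{k})$ objects each; the work lies in arranging that the block boundaries coincide with \emph{honest} objects --- potential maximal cliques of $G$, i.e.\ near-cliques of $G$ realizable as bags of some minimal interval completion --- and that re-routing $\model$ near a boundary spends no extra budget. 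One then argues, using minimality to pin down the $G$-clique part of each cut bag in terms of its $\Oh(\sqrt{k})$-sized boundary data living on the $\Oh(k)$ vertices touched by $\sol$, that the whole system of $\Oh(\sqrt{k})$ cuts with its interface data ranges over a family of size $k^{\Oh(\sqrt{k})}$, enumerable in $k^{\Oh(\sqrt{k})}\, n^{\Oh(1)}$ time.

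Given this enumeration, the algorithm loops over all $k^{\Oh(\sqrt{k})}$ candidate cut systems; for the correct guess it remains only to decide, chunk by chunk and independently, whether the cheap piece of the completion promised for that chunk can be realized so as to match the guessed interfaces at its two endpoints. This is the job of a polynomial-time dynamic programming routine that sweeps the chunk's clique path from left to right, maintaining a \emph{state} consisting of: the current cross-section restricted to the relevant (i.e.\ $\sol$-touched or boundary) vertices; a record of which $\sol$-edges have already been ``paid for'' and which are still pending; and a bounded-size certificate of how the untouched vertices of $G$ --- which must occur as consecutive intervals and receive no new edge --- are threaded through the current position. Because inside a chunk only $\Oh(\sqrt{k})$ edges are added and $\Oh(\sqrt{k})$ touched vertices are involved, the set of reachable states has polynomial size, the transitions are the two elementary moves on a clique path (opening a maximal bag, closing a bag) filtered by the interval-graph axioms, and the whole chunk is processed in $n^{\Oh(1)}$ time. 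Accepting $(G,k)$ iff some cut system makes every chunk feasible with total cost at most $k$ then gives the claimed bound $k^{\Oh(\sqrt{k})}\, n^{\Oh(1)}$.

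I expect the difficulty to concentrate in two places, both traceable to the absence, for interval graphs, of as clean a potential-maximal-clique calculus as chordal graphs enjoy. First, the structural lemma itself: one must show that an arbitrary interval completion of cost at most $k$ can be massaged, without increasing its cost, into one whose clique path breaks at honest potential maximal cliques of $G$ at $\Oh(\sqrt{k})$ balanced positions; performing this surgery at the level of interval models, while keeping the left-right geometry consistent and not spending extra edges, is delicate. Second, and in my view the real crux, is designing the per-chunk dynamic programming so that its state space stays polynomial \emph{while still certifying global interval-ness}: a single bag of the clique path may contain $\Omega(n)$ vertices, so the state cannot record it explicitly, and even the $\sol$-touched part of a boundary bag could be large, so the interface data one guesses must itself be compressed; one must then prove that the compressed signatures carried across chunk boundaries suffice to guarantee that the graph assembled from all chunks is a bona fide interval graph containing $G$. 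Reconciling a polynomial-size, locally-checkable state with the inherently asymmetric global consistency conditions of interval models is where the bulk of the technical machinery --- the sections, potential maximal cliques, and the various state and ``world'' bookkeeping suggested by the notation --- will have to go.
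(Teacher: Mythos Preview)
Your high-level plan --- chop the clique path of $G+\sol$ into $\Oh(\sqrt{k})$ consecutive chunks separated by guessed bags, then solve each chunk by a polynomial left-to-right sweep --- is essentially the approach that works for \textsc{Chordal Completion}, and the paper explicitly explains why it breaks down here. The core difficulty is not the enumeration of the cut bags (which the paper does handle, in Theorems~\ref{thm:over:sections} and~\ref{thm:over:pmcs}); it is that \emph{knowing a cut bag $\Omega$ does not determine which vertices lie to its left and which to its right}. Concretely, many connected components $C$ of $G\setminus\Omega$ may satisfy $N_G(C)\subseteq\Omega$, each of them untouched by $\sol$ and hence free to sit on either side; see the ``pyramid'' example of Figure~\ref{fig:piramid}. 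Your claim that the chunks can then be solved ``chunk by chunk and independently'' is therefore false: even after all $\Oh(\sqrt{k})$ cut bags are fixed, the partition of $V(G)$ into chunks is \emph{not} determined, and the number of admissible partitions can be $2^{\Omega(n)}$. The ``bounded-size certificate of how the untouched vertices are threaded through the current position'' that you posit for the per-chunk DP is precisely the object that does not exist in a purely left-to-right decomposition.

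The paper's fix is structurally different from anything in your outline: instead of a linear sequence of cuts, it builds a \emph{nested} family of states. A ``world'' is a cheap vertex $v$ together with the sections at its two endpoints and the set $\incF{v}$ (enumerable by Theorem~\ref{thm:over:cheap-fill-in}); a ``terrace'' records how a world sits inside the two tightest enclosing cheap vertices, and crucially carries a function $g$ deciding, for each component living in the gap, whether it goes to the left or right of $v$. The point is that the set $(\WsecL{\Win}\cap\WsecR{\Win})\setminus(\WsecL{\Wout_1}\cap\WsecR{\Wout_2})$ has size at most $2\sqrt{k}$ (all its vertices are expensive), so Theorem~\ref{thm:left-right} and a greedy argument bound the number of relevant $g$'s by $k^{\Oh(\sqrt{k})}$. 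The DP then recurses on \emph{pairs} of terraces, combining a top-to-bottom nesting with a left-to-right sweep (Figures~\ref{fig:over:state}, \ref{fig:dp:states1}, \ref{fig:dp:states2}). None of this machinery is visible in your proposal; the left/right assignment problem is exactly the ``real crux'' you flag in your last paragraph, but your framework of linear chunks does not provide a hook on which to hang its solution.
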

We now describe briefly our techniques employed to prove Theorem~\ref{thm:ic}, together with the main obstacles making our approach significantly different from the approaches used for previous subexponential algorithms.

First of all,  the subexponential algorithm for \icname{} cannot be obtained by modifying  
previous algorithms  of 
  Villanger et al.  \cite{yngve:ic} and 
  Cao   \cite{yixin:ic,yixin:new}  because 
the crucial step in both
 algorithms 
 is a branching procedure that identifies a subgraph which is a witness of non-membership in the  class of interval graphs, and branches recursively on all possible ways of adding a set of edges destroying the witness. Since such a recursive branching cannot lead to time complexity better than single-exponential, this technique cannot be used in subexponential algorithm, and hence we need something completely different from what was used before.

The natural way to proceed then would be to  follow the approach which worked nicely for other completion problems:  
%
  focus on the structural definition of interval graphs
(as opposed to the definition via forbidden induced subgraphs)
and build an interval model of the output graph via dynamic programming.
The natural ``dividing'' structures in all graph classes on Fig.~\ref{fig:diagram}
are maximal cliques and clique separators, and the core part
of the known subexponential algorithms for \textsc{Chordal Completion}~\cite{FominV13}, \picname~\cite{BliznetsFPP14},
and \textsc{Trivially Perfect Completion}~\cite{DrangeFPV13} is a combinatorial
argument that bounds the number of candidates for such structures by $n^{\Oh(\sqrt{k})}$.
This, in combination with known polynomial kernels 
for these problems, yields a $k^{\Oh(\sqrt{k})}$ bound on the number
of candidates for maximal cliques and clique separators.
A second step is to design a dynamic programming algorithm whose states
are based on these structures. As the number of states is subexponential in $k$,
the entire algorithm would run in subexponential parameterized time.

There are two major problems with this approach in the case of \icname{}.
First, although we are able to provide a combinatorial bound of
$n^{\Oh(\sqrt{k})}$ reasonable candidates for maximal cliques and clique separators
in the output interval graph (see Lemma~\ref{lem:over:char}),
the existence of the second ingredient---a polynomial kernel for  \icname ---
remains a notorious open problem.
Observe that a $n^{\Oh(\sqrt{k})}$ term is unacceptable in any fixed-parameter algorithm,
not to mention a subexponential one.
To cope with this obstacle, we employ a much more insightful analysis of maximal cliques in the output interval graph,
and arrive at a (finally useful) improved $k^{\Oh(\sqrt{k})} n^8$ bound on the number of candidates.

The lack of known polynomial kernel for the problem raises also one more difficulty.
One of the more popular ``atomic operations'' in the known subexponential algorithms
 is to choose one vertex $v$ and guess \emph{all} edges from the solution incident with  it,
provided that there are at most $\sqrt{k}$ of them.
In the presence of a polynomial kernel, such a step leads to $k^{\Oh(\sqrt{k})}$ subcases---perfectly fine if we perform only a constant number of such steps.
However, in the case of  \icname{} such a step yields an (again) unacceptable $n^{\Oh(\sqrt{k})}$
term in the running time. Luckily, 
  a deep analysis of the structure of YES-instances to  \icname{} shows that
there are actually only $k^{\Oh(\sqrt{k})} n^{\Oh(1)}$ reasonable ways
to choose solution edges incident with such a ``cheap'' vertex,
making the aforementioned ``atomic operation'' possible also in our case.
Despite its triviality in the case of previous works,
it turns out that the proof of the $k^{\Oh(\sqrt{k})} n^{\Oh(1)}$ bound
is the most technical and involved part of our paper.
%

The  second major obstacle in our quest for a subexponential parameterized algorithm for \icname{}
appears when we try to develop a dynamic programming algorithm based on the knowledge
of candidates for maximal cliques and clique separators in the output interval graph.
Contrary to the case of \textsc{Chordal Completion} and \textsc{Trivially Perfect Completion},
it turns our that these structures   are far from being sufficient to design a
dynamic programming algorithm  constructing a model of the output interval graph
in a natural ``left-to-right'' manner.
The reason is that the knowledge of a clique separator $\Omega$ in the output interval graph
does not tell us much which of the components of $G \setminus \Omega$ are to the left,
and which are to the right of the separator $\Omega$ in an interval model of the output interval
graph.
(Recall that in an interval graph, each clique separator corresponds to a vertical line
that pierces intervals belonging to the separator.)
However, the knowledge which vertices of $G$ were already processed is
crucial for constructing an interval model in a ``left-to-right'' manner.

\begin{figure}[tbh]
\centering
\includegraphics{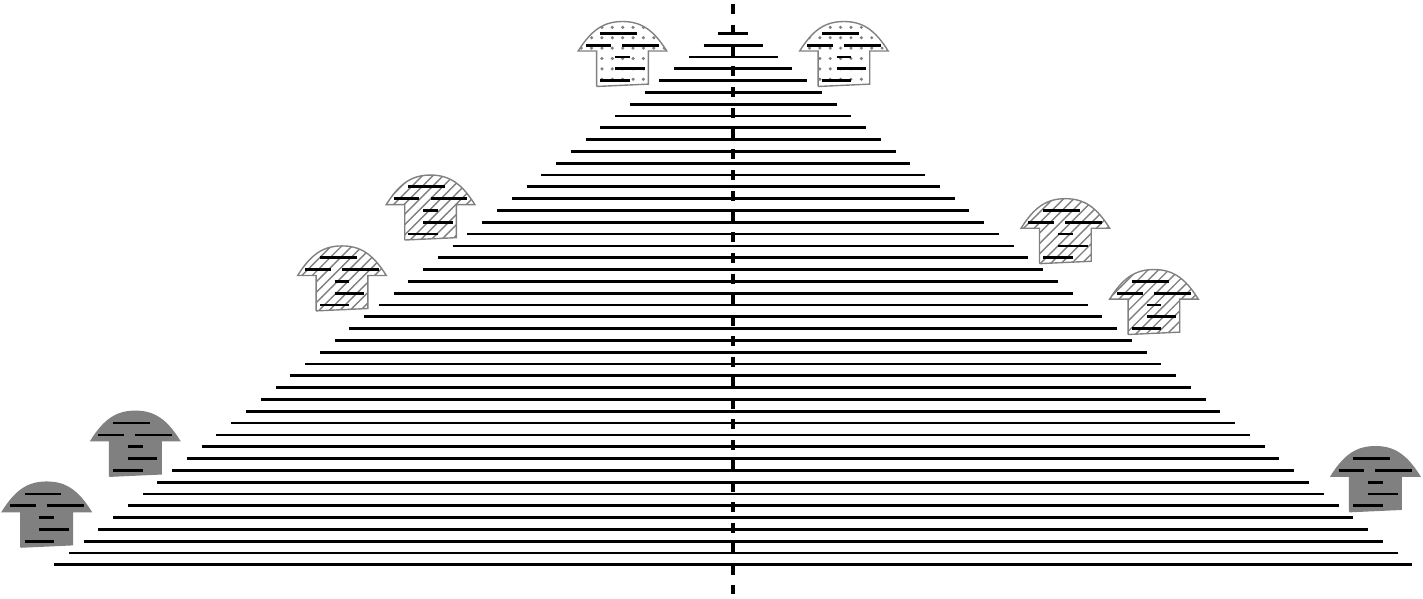}
\caption{An example of an interval graph with a large number of choices for left/right
  alignment. Within each pattern group (dotted, lined, solid), the small ``mushroom'' components
  can swap sides. A state of a dynamic programming algorithm at the middle clique marked with a dashed line
  would need to remember an alignment choice for each pattern group.}
\label{fig:piramid}
\end{figure}

An example illustrating
why it is hard to deduce the alignment of the components of $G \setminus \Omega$
for a maximal clique or clique separator $\Omega$ is depicted on Fig.~\ref{fig:piramid}.
Here, a maximal clique $\Omega$ is marked with a vertical dashed line.
The small ``mushrooms'' are components of $G \setminus \Omega$. Observe that
one can swap (take mirror image) the set of dotted mushrooms, stripped mushrooms and
solid mushrooms independently of each other. Hence, a state of a dynamic programming algorithm
needs to remember, apart from the maximal clique $\Omega$, the alignment
choice of each ``pattern'' group
of mushrooms (dotted, stripped, solid) --- and there can be many of them.

Looking at the example on Fig.~\ref{fig:piramid}, it is tempting to develop
a different dynamic programming algorithm that processes the graph in a ``top-to-bottom'' manner,
subsequently taking alignment decisions on each mushroom group, but not remembering the decision
in the state between the groups.
However, observe that if the graph locally looks as a proper interval graph (as opposed to the example
on Fig.~\ref{fig:piramid}), the ``left-to-right'' approach seems much more feasible.
Hence, to make the dynamic programming approach work in the case of \icname{},
we need to merge the ``left-to-right'' and ``top-to-bottom'' approaches, arriving at a quite technical 
definition of an actual state of dynamic programming.

A short comparison with the algorithm for seemingly similar \picname{} (\pic{} for short) is in place.
Although both algorithms follow the same general approach paved by Fomin and Villanger~\cite{FominV13},
the actual difficulties, and methods to avoid them, are completely different.
First, in the \pic{} case a polynomial kernel is known~\cite{pic-kernel}, and a subexponential bound
on both the number of candidates for maximal cliques $\Omega$, and on the number of left/right choices for $G-\Omega$,
are not trivial, but relatively simple. The main difficulty in the \pic{} case lies in the fact that this information is not sufficient
to perform a natural left-to-right dynamic programming, as one needs to ensure that no interval contains another
in the output model; an issue non-existent in the interval case.
To cope with this obstacle, in~\cite{BliznetsFPP14} the dynamic programming structure is also reengineered,
but not only for a completely different reason than here, and also in a completely different manner --- loosely speaking,
apart from maximal cliques, the algorithm of~\cite{BliznetsFPP14} uses a type of separation similar to the classic
$\Ohstar(10^n)$ exact algorithm for bandwidth of Feige~\cite{Feige00}.

\paragraph{Organisation of the paper.}
We first introduce notation and preliminary results in Section~\ref{sec:prelims},
and give a more detailed, yet still informal overview of the proof of Theorem~\ref{thm:ic}
in Section~\ref{sec:overview}.

Then, in Sections~\ref{sec:neighbors}--\ref{sec:dp}, we provide a full proof of Theorem~\ref{thm:ic}.
Section~\ref{sec:neighbors} describes a module-based reduction rule
and introduces some auxiliary results on neighborhood classes in a (near) interval graph.
In Section~\ref{sec:pmc} we prove the subexponential bound on the number of candidates
for \emph{sections}, a technical notion close to a clique separator.
In Section~\ref{sec:fill-in} we provide a bound of $k^{\Oh(\sqrt{k})} n^{\Oh(1)}$ reasonable ways to add solution
edges incident to one vertex, provided that there are at most $\sqrt{k}$ of them.
After one additional combinatorial lemma in Section~\ref{sec:left-right}, we describe
the final dynamic programming algorithm in Section~\ref{sec:dp}.

Section~\ref{sec:conc} concludes the paper and suggests directions of future research.

\section{Preliminaries}\label{sec:prelims}
\paragraph{Graph notation.} In most cases, we follow standard graph notation.
For a graph $G$, by $\Ccomp{G}$ we denote the family of vertex sets of connected components of $G$.
For a path $P$ and two vertices $x,y \in V(P)$, by $P[x,y]$ we denote the subpath of $P$ between $x$ and $y$, inclusive.  For   vertex $v$, we use  
$N_G(v)$  and   $N_G[v]$ to denote the open and the closed neighborhood of $v$.
For a vertex set $S\subseteq V$ we denote by $N_G(S)$ the set $\bigcup_{v \in S} N_G(v)\setminus S$.

For any graph $G$ we shall speak about, we implicitly
fix some arbitrary total ordering $\prec$ on $V(G)$.
We shall use this ordering to break ties and canonize some objects
(interval models, completion sets, solutions, etc.). Such a canonization
will turn out to be helpful when handling greedy arguments
in the final dynamic programming routine.

\paragraph{Interval graphs.}
A graph $G$ is an \emph{interval graph} if it admits an intersection model of the following form:
each vertex is assigned a closed interval on a line, and two vertices are adjacent if and only if
their intervals intersect.

We formalize the notion of a model in the following combinatorial
way. For each $v \in V(G)$ we create two symbols $\Ibeg{v}$ and $\Iend{v}$, called henceforth
\emph{events}, and denote $\events{X} = \bigcup_{v \in X} \{\Ibeg{v}, \Iend{v}\}$ for
any $X \subseteq V(G)$. 
An \emph{interval model} is a permutation (bijection) $\model:\events{V(G)} \to \{1,2,\ldots,2n\}$
such that:
\begin{enumerate}
\item for each $v \in V(G)$ we have $\model(\Ibeg{v}) < \model(\Iend{v})$ (an interval starts
before it ends), and
\item for each $u, v \in V(G)$ we have $uv \notin E(G)$ if and only if
$\model(\Iend{v}) < \model(\Ibeg{u})$ or $\model(\Iend{u}) < \model(\Ibeg{v})$
(vertices are nonadjacent if and only if their intervals are disjoint).
\end{enumerate}
The numbers $1,2,\ldots,2n$ in the codomain of a model $\model$ are called \emph{positions}.

Informally speaking, the aforementioned combinatorial notion of an interval model corresponds
to a ``real'' model, where no two endpoints of intervals coincide (which we can assume
without loss of generality).
The permutation $\model$ corresponds to the order of endpoints of intervals: $\Ibeg{v}$ represents
the starting (left) endpoint of the interval associated with $v$, and $\Iend{v}$ represents
the ending (right) endpoint. See Figure~\ref{fig:model} for an example.

\begin{figure}
\centering
\includegraphics{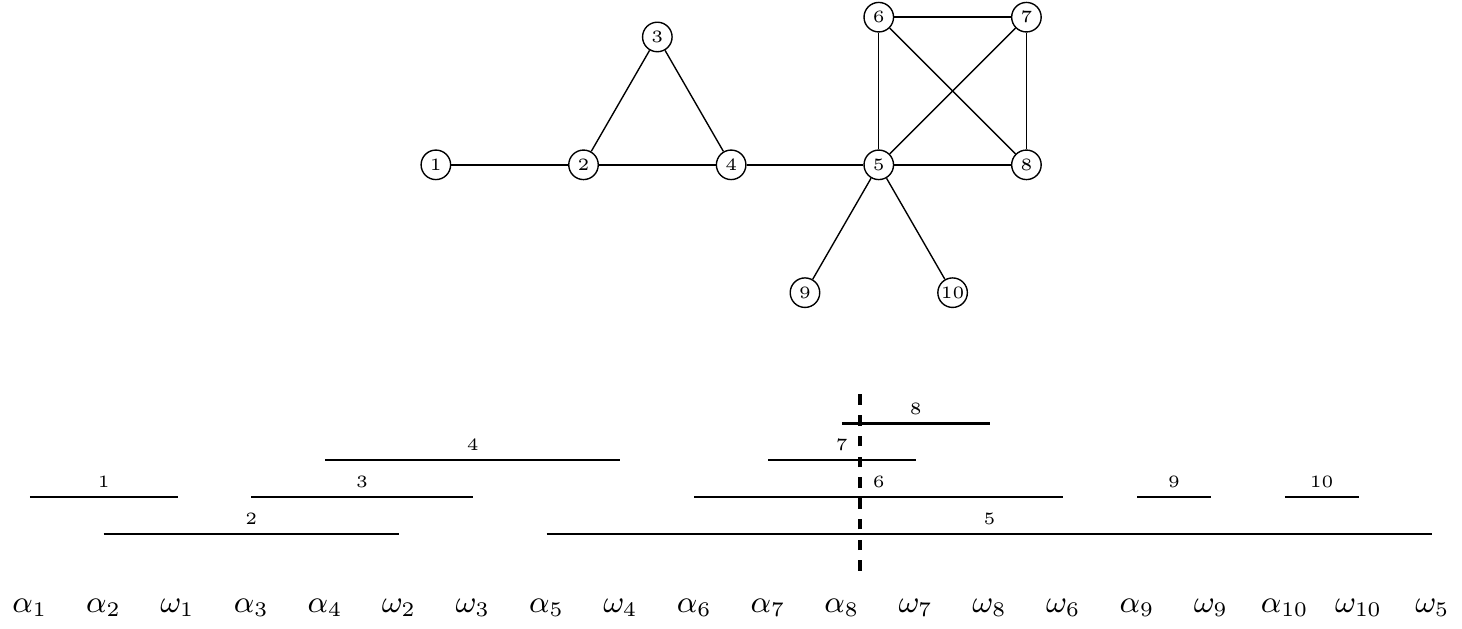}
\caption{An example of a graph with an interval model and its combinatorial representation.
  The vertical dashed line represents one of the maximal cliques of the graph, being section $\Omega_\sigma(12)$.
  We remark that this is not the canonical model of the represented graph (assuming the natural order on the vertex labels):
    for the canonical model, one should swap events $\omega_2$ with $\omega_3$ and $\omega_7$ with $\omega_8$.
}
\label{fig:model}
\end{figure}

Given an interval model $\model$ of a graph $G$, we say that an event $\event_1$ is \emph{before}
or \emph{to the left} of an event $\event_2$ iff $\model(\event_1) < \model(\event_2)$.
In this situation we also say that $\event_2$ is \emph{later} or \emph{to the right}
of $\event_1$.

For an interval model $\model$ of a graph $G$ and a set $X \subseteq V(G)$, we denote
by $\Sbeg{X}{\model}$ and $\Send{X}{\model}$, respectively, the first and last positions where
events of $\events{X}$ appear in $\model$.

For an interval model $\model$ of a graph $G$ and an integer $p$, the set
$$\Isec_\model(p) = \{v \in V(G): \model(\Ibeg{v}) \leq p < \model(\Iend{v})\}$$
is called a \emph{section at position $p$}. By somehow abusing the notation, for an event $\event$
we write $\Isec_\model(\event)$ for $\Isec_\model(\model(\event))$, and call it
a \emph{section at event $\event$}. We omit the subscript if it is clear from the context.
Note that every  section is  a clique in $G$.

Intuitively speaking, a section is a set of vertices whose intervals become ``pinned down" by a vertical line drawn {\emph{just after}} event $\model^{-1}(p)$, see Figure~\ref{fig:model}. Thus, all these intervals share a common point, so they are pairwise adjacent in the graph.

We refer to an inclusion-wise maximal clique of a graph $G$ as to  a maximal clique. 
It is well-known~\cite{Golumbic80}
that $\Isec \subseteq V(G)$ is a maximal clique in an interval graph $G$ with model $\model$
if and only if it is a section drawn between a starting and ending event: there exists $v_1,v_2 \in V(G)$ (possibly $v_1 = v_2$) such that
$\Isec = \Isec_\model(\Ibeg{v_2})$ and $\model(\Ibeg{v_2})+1 = \model(\Iend{v_1})$.

We also use the following notions of maximality and minimality in interval models.
Let $X \subseteq V(G)$,
where $G$ is an interval graph with a fixed model $\model$.
We say that $v \in X$ is \emph{interval-maximal} in $X$ (w.r.t. $\model$)
if for no other $w \in X$ it holds that $\model(\Ibeg{w}) < \model(\Ibeg{v}) < \model(\Iend{v}) < \model(\Iend{w})$.
Analogously, $v \in X$ is \emph{interval-minimal} in $X$ (w.r.t. $\model$)
if for no other $w \in X$ it holds that $\model(\Ibeg{v}) < \model(\Ibeg{w}) < \model(\Iend{w}) < \model(\Iend{v})$.
Clearly, each non-empty set of vertices has an interval-maximal and interval-minimal vertex,
but these vertices may not be defined uniquely.

We recall that in linear time 
we can check if a given graph $G$ is an interval graph, and if this is the case,
find an interval model of $G$~\cite{Golumbic80}.
In our work we will need a slightly stronger statement.\footnote{Proofs marked with \proofapp{} are straightforward, and have been moved to the appendix in order not to disturb the flow
  of the arguments.}
\begin{lemma}[\proofapp{}]\label{lem:ic-cliques-drawing}
Given an interval  graph $G$ and two cliques  $\Omega_1,\Omega_2 \subseteq V(G)$,
one can in polynomial time check whether there exists an interval model of $G$ that starts with
all starting events of $\events{\Omega_1}$
and ends with all ending events of $\events{\Omega_2}$.
\end{lemma}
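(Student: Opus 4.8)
The plan is to reduce the problem to checking interval-graph recognition on suitably modified graphs, exploiting the characterization of maximal cliques in an interval model as sections drawn between a starting and an ending event, together with the well-known fact that a clique in an interval graph is contained in some maximal clique, hence in some section. Concretely, I would first add two fresh simplicial ``test'' vertices $x_1$ and $x_2$ to $G$: make $x_i$ adjacent exactly to the vertices of $\Omega_i$ (and make $x_1,x_2$ nonadjacent to each other, unless $\Omega_1\cup\Omega_2$ is itself a clique, which is the only case where their intervals could be forced to meet). Call the resulting graph $G'$. The claim I would aim for is: $G'$ is an interval graph in which there is a model where $\Ibeg{x_1}$ is the globally first event and $\Iend{x_2}$ is the globally last event, if and only if $G$ admits an interval model starting with all starting events of $\events{\Omega_1}$ and ending with all ending events of $\events{\Omega_2}$.

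For the forward direction, given a model of $G$ with the required behaviour, I would extend it by placing $\Ibeg{x_1}$ at the very left, just before all events of $\events{V(G)}$, and placing $\Iend{x_1}$ immediately after the last starting event of $\events{\Omega_1}$; symmetrically for $x_2$ on the right. Because $\Omega_1$'s starting events form a prefix, the interval of $x_1$ so constructed intersects exactly the vertices of $\Omega_1$, as desired, and the ordering $\Ibeg{x_1} < \Iend{x_1}$ holds. The reverse direction is the more delicate one: from a model of $G'$ with $\Ibeg{x_1}$ first and $\Iend{x_2}$ last, restricting to $\events{V(G)}$ gives a model of $G$, and I must argue that all starting events of $\events{\Omega_1}$ precede all events not in $\events{\Omega_1}$ of the form... more precisely that they precede every ending event. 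Here I would use that $x_1$ is adjacent to all of $\Omega_1$: each $v\in\Omega_1$ has $\Ibeg{v} < \Iend{x_1}$, and no $\Iend{w}$ for $w\notin\Omega_1$ can lie before $\Iend{x_1}$ while $\Ibeg{w}$ lies after $\Ibeg{x_1}$ (which is the global minimum), since that would force $x_1 w\in E(G')$. Combining these and the fact that $\Ibeg{x_1}$ is first, all starting events of $\Omega_1$ come before all ending events, i.e. they form a prefix of the model. Symmetrically for $\Omega_2$ and the suffix.

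The remaining ingredient is to turn ``there exists a model of $G'$ with $\Ibeg{x_1}$ first and $\Iend{x_2}$ last'' into a polynomial-time check. Since $x_1,x_2$ are simplicial, this should follow from standard interval-graph machinery: one can work with a clique tree (PQ-tree) of $G'$ and test whether there is an admissible ordering of maximal cliques in which a maximal clique containing $x_1$ is an endpoint and a maximal clique containing $x_2$ is the other endpoint, with the additional local requirement that within the first clique $x_1$'s interval is ``pushed to the left''. Alternatively, and perhaps more cleanly, I would iterate once more: it suffices to force an arbitrary model to begin with $\Ibeg{x_1}$, which is equivalent to asking that $x_1$ lie in a unique maximal clique that can be placed first — and testing whether a given vertex can be placed as the leftmost-ending interval in some interval model is a routine, known polynomial-time task via PQ-trees (reroot/reverse so that a leaf clique containing $x_1$ is first). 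I expect this last step — phrasing the ``$\Ibeg{x_1}$ is globally first and $\Iend{x_2}$ is globally last'' constraint precisely enough that a PQ-tree (or clique-tree) manipulation resolves it in polynomial time, while correctly handling degenerate cases such as $\Omega_1=\Omega_2$ or $\Omega_i=\emptyset$ — to be the main obstacle; the combinatorial equivalence itself is straightforward given the section characterization of maximal cliques recalled above, which is why this lemma is marked with \proofapp{}.
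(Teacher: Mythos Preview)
Your approach is in the same spirit as the paper's and your combinatorial equivalence is correct, but the paper sidesteps your ``main obstacle'' entirely with a small trick you almost reach at the end.

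Instead of attaching a single simplicial vertex $x_1$ to $\Omega_1$ (and symmetrically $x_2$ to $\Omega_2$) and then needing PQ-tree machinery to test whether $\Ibeg{x_1}$ can be globally first, the paper attaches a pendant \emph{3-path} $x_1,x_2,x_3$ with only $x_1$ adjacent to $\Omega_1$ (and symmetrically a 3-path $y_1,y_2,y_3$ with $y_1$ adjacent to $\Omega_2$). The point is that in any interval model of the resulting graph $H$, the interval of $x_2$ cannot be contained in that of $x_1$ (there would be no room for $x_3$), so $x_2$ sticks out of $x_1$ on one side; this forces the connected component containing $\Omega_1$ to have $\Omega_1$'s starting events as a prefix (or, after reversal, $\Omega_1$'s ending events as a suffix). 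A short case analysis on the two sides then shows that the desired model of $G$ exists if and only if $H$ is an interval graph, full stop. Thus the test is a single call to interval-graph recognition, with no PQ-tree manipulation.

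Your PQ-tree route is workable (testing whether a specified maximal clique can be an endpoint of the consecutive ordering is standard), and your proof of the equivalence between ``$\Ibeg{x_1}$ first'' and ``starting events of $\Omega_1$ form a prefix'' is correct. But the paper's pendant-path gadget is exactly the ``iterate once more'' idea you hint at, carried out: adding two more vertices per side is enough to force extremality and reduce to plain recognition.
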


For the final dynamic programming routine, we need to  ``canonize" a model of an interval
graph $G$. Recall that we have fixed a total order $\prec$ on $V(G)$;
assume $V(G) = \{v_1,v_2,\ldots,v_n\}$ where $v_1 \prec v_2 \prec \ldots \prec v_n$.
For a model $\model$ of $G$, we consider a tuple
$$(\model(\Ibeg{v_1}),\model(\Ibeg{v_2}),\ldots,\model(\Ibeg{v_n}),\model(\Iend{v_n}),\model(\Iend{v_{n-1}}),\ldots,\model(\Iend{v_1}))$$
and define a \emph{canonical model} of $G$ to be the model with the aforementioned
tuple being lexicographically minimum among all models of $G$.

We note two properties of a canonical model $\model$ that are of our interest.
The first one is straightforward.
\begin{lemma}\label{lem:can:endpoints}
Assume $\model$ is the canonical model of an interval graph $G$.
Then, for each $u,v \in V(G)$, if $\model(\Ibeg{u}) + 1 = \model(\Ibeg{v})$
then $u \prec v$ and if $\model(\Iend{u}) + 1 = \model(\Iend{v})$ then $u \succ v$.
That is,
the canonical model orders consecutive starting/ending points of the intervals according to $\prec$.
\end{lemma}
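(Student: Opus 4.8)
The plan is to argue by contradiction using a purely local swapping argument: if two consecutive starting events (or two consecutive ending events) violated the claimed ordering, then exchanging their positions would yield another valid interval model of $G$ whose canonical tuple is lexicographically smaller, contradicting the minimality in the definition of a canonical model.

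Concretely, suppose $\model(\Ibeg{u}) + 1 = \model(\Ibeg{v})$; since $\model$ is injective we have $u \neq v$, so it suffices to exclude $u \succ v$. Write $p = \model(\Ibeg{u})$, so $\model(\Ibeg{v}) = p+1$, and let $\model'$ agree with $\model$ everywhere except $\model'(\Ibeg{u}) = p+1$ and $\model'(\Ibeg{v}) = p$; this is again a bijection onto $\{1,\dots,2n\}$. I would first check that $\model'$ is an interval model of $G$. For the ``start before end'' requirement, only $\Ibeg{u}$ and $\Ibeg{v}$ moved, and they stayed inside $\{p,p+1\}$: from $\model(\Ibeg{u}) < \model(\Iend{u})$ together with $p+1 = \model(\Ibeg{v}) \neq \model(\Iend{u})$ we get $p+1 < \model(\Iend{u})$, and $\model'(\Ibeg{v}) = p < p+1 \le \model(\Ibeg{v}) < \model(\Iend{v})$, while all other pairs are untouched. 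For the adjacency requirement, the key observation is that disjointness of two intervals is decided solely by comparisons between a starting event of one vertex and an ending event of the other; since $\Ibeg{u},\Ibeg{v}$ are both starting events that merely traded the two consecutive positions $p,p+1$ (no other event occupies $p$ or $p+1$), every comparison between such a pair and an event sitting at a fixed position outside $\{p,p+1\}$ has the same outcome under $\model$ and $\model'$. Hence $E(G)$ is unchanged and $\model'$ is a valid model.

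It then remains to compare canonical tuples. The tuple lists the starting positions in the order $v_1 \prec \dots \prec v_n$ followed by the ending positions in reverse; only the entries $\model(\Ibeg{u})$ and $\model(\Ibeg{v})$ changed, and $\model'(\Ibeg{v}) = p < p+1 = \model(\Ibeg{v})$. If $u \succ v$, then $v$ precedes $u$ under $\prec$, so the coordinate of $\Ibeg{v}$ occurs strictly before that of $\Ibeg{u}$ in the tuple, and every earlier coordinate (starting positions of vertices preceding $v$) is unchanged; thus $\model'$ has a lexicographically smaller tuple, contradicting canonicity of $\model$. Therefore $u \prec v$. The ending-events statement is symmetric: given $\model(\Iend{u}) + 1 = \model(\Iend{v})$, swap the positions of $\Iend{u}$ and $\Iend{v}$; the same two verifications show the result is a valid model, and because ending positions are listed in the order $v_n,\dots,v_1$, the coordinate of $\Iend{v}$ precedes that of $\Iend{u}$ in the tuple exactly when $v \succ u$, so assuming $u \prec v$ would again produce a lexicographically smaller tuple — hence $u \succ v$.

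The only step requiring any care is verifying that the swap preserves the adjacency relation, i.e., that no comparison used to test interval disjointness is ever a comparison between the two swapped events (which are either both starting or both ending events); everything else is a direct check against the definition of an interval model and of the canonical tuple, so I expect no genuine obstacle.
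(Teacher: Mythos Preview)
Your argument is correct and is exactly the intended ``straightforward'' swap argument; the paper in fact omits the proof entirely, simply declaring the lemma immediate from the definition of the canonical model. Your careful checks that the swap preserves both the ``start before end'' condition and the adjacency relation (the latter because disjointness is tested only via start-versus-end comparisons, never start-versus-start or end-versus-end) are precisely what is needed, and the lexicographic comparison is handled correctly in both cases.
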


The second one says that canonizing a model fixes an order in which modules with the same
neighborhood appear in the model.
\begin{lemma}[\proofapp]\label{lem:can:modules}
Let $\model$ be  the canonical model of an interval graph $G$.
Let $X \subseteq V(G)$ be   a clique,
and let $C_1,C_2,\ldots,C_s$ be components of $G \setminus X$ (not necessarily all of them)
such that $N_G(v) \setminus C_i = X$ for every $1 \leq i \leq s$ and every $v \in C_i$.
Since the components $C_i$ are pairwise nonadjacent, $\Send{C_i}{\model} < \Sbeg{C_j}{\model}$
or $\Send{C_j}{\model} < \Sbeg{C_i}{\model}$ for any $i \neq j$.
Without loss of generality, assume that
$$\Sbeg{C_1}{\model} < \Send{C_1}{\model} < \Sbeg{C_2}{\model} < \Send{C_2}{\model} < \cdots < \Sbeg{C_s}{\model} < \Send{C_s}{\model}.$$
For each $1 \leq i \leq s$, let $x_i \in C_i$ be the first vertex of $C_i$ in the order $\prec$.
Then
$$x_1 \prec x_2 \prec \cdots \prec x_s.$$
That is, $\model$ sorts the components $C_i$ according to the order of their $\prec$-minimum vertices.
\end{lemma}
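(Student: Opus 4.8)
\textbf{Proof proposal for Lemma~\ref{lem:can:modules}.}

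The plan is to argue by contradiction: suppose that the conclusion fails, so that there exist two consecutive blocks $C_i$, $C_{i+1}$ (in the order of their appearance in $\model$) with $x_i \succ x_{i+1}$, where $x_i, x_{i+1}$ are the $\prec$-minimum vertices of $C_i$, $C_{i+1}$ respectively. I would then construct a new model $\model'$ that is strictly lexicographically smaller than $\model$ in the tuple defining the canonical model, contradicting the minimality of $\model$. The new model is obtained from $\model$ by swapping the positions of the event blocks of $C_i$ and $C_{i+1}$: informally, we take the contiguous segment of positions occupied by events of $\events{C_i}$ and the contiguous segment occupied by events of $\events{C_{i+1}}$, and exchange these two segments, keeping the internal order of events within each segment intact and leaving all other events fixed.

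The first step is to verify that $\model'$ is indeed a valid interval model of $G$. This uses crucially the hypothesis $N_G(v) \setminus C_i = X$ for all $v \in C_i$ and all $i$: every vertex of $C_i$ sees exactly $X$ outside of $C_i$, and likewise for $C_{i+1}$. Hence the only adjacencies that could be affected by the swap are (a) adjacencies internal to $C_i$ or to $C_{i+1}$ — unaffected, since we keep each block's internal order; (b) adjacencies between $C_i$ (or $C_{i+1}$) and $X$ — here I need to check that the events of $X$ that span the region between $\Sbeg{C_i}{\model}$ and $\Send{C_{i+1}}{\model}$ still span both blocks after the swap, which holds because all vertices of $X$ adjacent to some vertex of $C_i$ are adjacent to \emph{every} vertex of $C_i$ (as $C_i$ with neighborhood exactly $X$ forms, together with $X$, a set where $X$ is a clique module-separator), and the same for $C_{i+1}$; and (c) adjacencies between $C_i$ and $C_{i+1}$ — there are none, and after the swap the two blocks are still disjoint intervals in the ordering. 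One also checks that no event of a vertex outside $C_i \cup C_{i+1}$ lies strictly between $\Sbeg{C_i}{\model}$ and $\Send{C_{i+1}}{\model}$ except events of $X$; any such event would belong to a component $C$ of $G \setminus X$ with $\Sbeg{C_i}{\model} < \Send{C}{\model}$ and $\Sbeg{C}{\model} < \Send{C_{i+1}}{\model}$, forcing $C$ to be adjacent to both blocks' spanning-$X$-vertices or to be nested, and a short case analysis (using that $C_i, C_{i+1}$ are \emph{consecutive} among the listed $C_1, \ldots, C_s$, but $C$ need not be among them) shows this still does not break validity — the cleanest route is to observe that we may instead swap the whole maximal contiguous segments "between the two $X$-walls", which contains $C_i$, possibly some other components, and $C_{i+1}$, and re-examine; alternatively, restrict attention to swapping just the two blocks and note that interleaving components $C$ have $N_G(C) \subseteq X$ as well only if they are of the same module type, which is not guaranteed. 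I expect this verification — pinning down exactly which segment to swap so that validity is preserved — to be the main obstacle, and the honest fix is to define the swap on the two \emph{contiguous runs} of events delimited by positions $\Sbeg{C_i}{\model}$ and $\Send{C_{i+1}}{\model}$ together with a careful accounting of $X$-events.

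The second and final step is to show $\model' <_{\mathrm{lex}} \model$ in the canonical tuple. Since only events of vertices in $C_i \cup C_{i+1}$ move, and all of these precede, in $\model$, any starting event $\Ibeg{w}$ with $w \notin C_i \cup C_{i+1}$ whose position exceeds $\Send{C_{i+1}}{\model}$ while coinciding with $\model$ on all earlier coordinates, the first coordinate of the tuple at which $\model$ and $\model'$ differ is a starting event. Before the swap the leftmost starting event among $\events{C_i} \cup \events{C_{i+1}}$ is $\Ibeg{x_i'}$ for the interval-minimal-position vertex of $C_i$; after the swap it is $\Ibeg{x_{i+1}'}$ for the corresponding vertex of $C_{i+1}$. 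I would argue that the relevant first differing coordinate in the tuple is $\model(\Ibeg{x_i})$ versus $\model'(\Ibeg{x_i})$, and since in $\model'$ the event $\Ibeg{x_{i+1}}$ now occupies (the earlier) position formerly held by the start of the $C_i$-block while $\Ibeg{x_i}$ is pushed later, and because $x_{i+1} \prec x_i$ so $\Ibeg{x_{i+1}}$ comes earlier in the tuple ordering of starting events, the value the tuple records at the coordinate for $\Ibeg{x_{i+1}}$ strictly decreases — or, if one prefers to reason coordinate by coordinate from the front, the first place the two tuples differ is at the starting event of whichever of $x_i, x_{i+1}$ is $\prec$-smaller, namely $x_{i+1}$, and $\model'(\Ibeg{x_{i+1}}) < \model(\Ibeg{x_{i+1}})$. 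This contradicts the choice of $\model$ as canonical, completing the proof. Repeating the swap argument to bubble-sort the blocks then yields $x_1 \prec x_2 \prec \cdots \prec x_s$ outright.
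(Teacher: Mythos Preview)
Your overall strategy matches the paper's: argue by contradiction, pick two adjacent blocks in the wrong $\prec$-order, rearrange to get a lexicographically smaller model. The gap is exactly where you flagged it: the ``swap the two contiguous segments'' construction is not well-defined, and your suggested patches do not close it. The events of $\events{C_i}$ need not occupy a contiguous run of positions. In the boundary cases $i=1$ or $i+1=s$ there can be starting or ending events of $X$-vertices inside the span $[\Sbeg{C_i}{\model},\Send{C_i}{\model}]$, and between the two blocks there can be events of \emph{other} components $D$ of $G\setminus X$ that are not among $C_1,\dots,C_s$ (the lemma only lists some of them). So ``swap the two segments, leaving the in-between untouched'' is not a well-posed operation, and swapping ``maximal segments between $X$-walls'' runs into the same ambiguity.

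The paper sidesteps all of this with a cleaner rearrangement. Take $i$ minimal with $x_i \prec x_{i-1}$ (so $i\ge 2$), set $p = \Sbeg{C_i}{\model}$, and build $\model'$ as the concatenation: (1) events at positions $1,\dots,p-1$ \emph{minus} $\events{C_{i-1}}$; (2) all of $\events{C_i}$ in their $\model$-order; (3) all of $\events{C_{i-1}}$ in their $\model$-order; (4) events at positions $p,\dots,2n$ \emph{minus} $\events{C_i}$. This is an ``extract and reinsert at a single cut'' rather than a swap of segments, and it works because $\Isec_\model(p-1)=X$: the $\supseteq$ direction uses that every $x\in X$ is adjacent to all of $C_{i-1}$ (which lies before $p$) and to all of $C_i$ (which starts at $p$), and the $\subseteq$ direction uses $N_G(v)\setminus C_i = X$ for the vertex $v$ with $\model(\Ibeg{v})=p$. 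Since the section at the cut is exactly $X$, and both $C_{i-1}$ and $C_i$ have outside-neighbourhood exactly $X$, validity is immediate; no case analysis on interleaving components or $X$-endpoints is needed.

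Your lexicographic comparison is also too loose. The clean claim (which the paper proves) is: $\model'(\Ibeg{u}) \le \model(\Ibeg{u})$ for \emph{every} $u \notin C_{i-1}$, with strict inequality at $u = x_i$. Since every vertex of $C_{i-1}$ is $\succeq x_{i-1} \succ x_i$, this gives $\model'(\Ibeg{v}) \le \model(\Ibeg{v})$ for all $v \preceq x_i$ and strict at $x_i$, which is exactly what lexicographic minimality needs. Your statement ``the first place the two tuples differ is at $\Ibeg{x_{i+1}}$'' is not justified and not needed --- the first difference might occur earlier, but it can only go the right way.
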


\paragraph{Interval completion.}
For a graph $G$, a \emph{completion} of $G$ is a set $\sol \subseteq \binom{V(G)}{2} \setminus E(G)$ such that $G+\sol := (V(G),E(G) \cup \sol)$ is an interval graph.
A completion is \emph{minimal} if it is inclusion-wise minimal, and \emph{minimum}
if it has minimum possible cardinality.
In the \textsc{Interval Completion} problem the input consists of a graph $G$ and an integer $k$,
and we ask for a completion of $G$ of size at most $k$.
For an instance $(G,k)$ of \textsc{Interval Completion}, a completion of cardinality at most $k$
is called a \emph{solution}. The notions of \emph{minimal } and \emph{minimum solutions}
are defined naturally.

For a completion $F$ in a graph $G$, we say that $v$ is \emph{touched} by $F$
if there is an edge in $F$ incident with $v$; otherwise $v$ is \emph{untouched}.
A set of vertices $X$ is \emph{touched} if it contains a touched vertex, and \emph{untouched}
otherwise.
We also say that a vertex $v \in V(G)$ is \emph{cheap} (with respect to the completion $F$)
if at most $\sqrt{k}$ edges of $F$ are incident with $v$; a vertex is \emph{expensive}
if it is not cheap. Note that there are at most $2k$ touched vertices and at most
$2\sqrt{k}$ expensive ones.
For a completion $F$ and a vertex $v \in V(G)$, by $\incF{v}$ we denote the set of edges
$e \in F$ that are incident with $v$.

We now canonize solutions $F$ to an \icname{} instance $(G,k)$.
Given a partial order $\prec$ on a finite set $U$, we  define a partial order
on the family of subsets of $U$ as follows: if $A,B \subseteq U$, then
we first sort the elements of $A$ and $B$ according to $\prec$, and then compare the
obtained sequences lexicographically.
By somehow abusing the notation, we denote by $\prec$ the imposed order on the subsets of $U$
as well.

This definition automatically extends the partial order $\prec$ on $V(G)$
first onto $\binom{V(G)}{2}$, and then onto the family of completions of $G$.
We define the canonical solution to $(G,k)$ to be the minimum solution in the order $\prec$
among all minimum solutions to $(G,k)$.

Given an instance $(G,k)$ of \textsc{Interval Completion},
      we start with augmenting it in the following
way. We add a universal vertex $\Uroot$ adjacent to all vertices of $V(G)$, and two vertices
$\Lroot$ and $\Rroot$, adjacent only to $\Uroot$, obtaining a graph $G'$.
 We assume $\Uroot \prec \Lroot \prec v \prec \Rroot$ for any $v \in V(G)$. 
Note that for any completion $F$ of $G$, $F$ is also a completion of $G'$: given a model
of $G+F$, we may construct a model of $G'+F$ by preceding the events of $\events{V(G)}$
with $\Ibeg{\Uroot}, \Ibeg{\Lroot}, \Iend{\Lroot}$ and succeeding them
with $\Ibeg{\Rroot}, \Iend{\Rroot}, \Iend{\Uroot}$.
Consequently, in every minimal completion of $G'$, the vertices $\Uroot$, $\Lroot$ and $\Rroot$
are untouched.
Thus, henceforth we assume that, whenever we consider an instance $(G,k)$ to \textsc{Interval Completion},
$G$ already contains vertices $\Uroot$, $\Lroot$ and $\Rroot$.
By Lemmata~\ref{lem:can:endpoints} and~\ref{lem:can:modules} (applied to $X = \{\Uroot\}$),
the canonical model of any completion of $G$ starts
with $\Ibeg{\Uroot}, \Ibeg{\Lroot}, \Iend{\Lroot}$ and ends
with $\Ibeg{\Rroot}, \Iend{\Rroot}, \Iend{\Uroot}$.

A short informal rationale for this augmentation is that in some places of the algorithm we would
like to pick the ``first/last untouched vertex whose interval ends/starts after/before position $p$''
or ``an untouched vertex whose interval contains the interval of $v$''; note that $\Lroot$/$\Rroot$
is always a good candidate for the first choice, and $\Uroot$ for the second one.

\section{Overview of the algorithm}\label{sec:overview}
 
In this section we provide an informal overview on the proof of Theorem~\ref{thm:ic}.

\subsection{Module Reduction Rule}

We start with a simple module-based reduction rule. 
Recall that $M \subseteq V(G)$ is a \emph{module} in a graph $G$ if
$N(v_1) \setminus M = N(v_2) \setminus M$ for any $v_1,v_2 \in M$.
(Equivalently, for any $v \notin M$ we have either $M \subseteq N(v)$ or $M \cap N(v) = \emptyset$.)

Assume that in a YES-instance $(G,k)$ of \icname{} we have recognized a set $X \subseteq V(G)$
such that many (significantly more than $2k$) connected components $M_1,M_2,\ldots,M_r$
of $G \setminus X$ are 
modules, fully adjacent to $X$.
Then it is easy to observe that any solution $\sol$ to $(G,k)$ needs to yield
an ordering $\model$ of $G+\sol$ similar to the one
depicted on Figure~\ref{fig:over:module}:
$X$ becomes a clique, and most of the components $M_i$ are drawn one after another
on the ``plateau'' formed by all the intervals of the vertices of $X$. 
Moreover, note that all but at most $k$ components $M_i$ need to induce interval graphs,
and all but at most $2k$ components $M_i$ are left untouched by the solution $\sol$.

\begin{figure}
\centering
\includegraphics{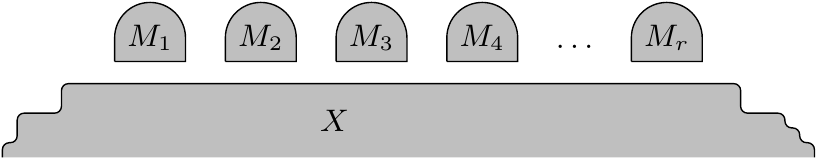}
\caption{The alignment of most of the components $M_i$ in 
  the model $\model$ of the interval graph $G+\sol$.}
\label{fig:over:module}
\end{figure}

However, if $r \geq 2k+2$, then there are at least two such untouched $M_i$s --- say $M_1$ and $M_2$ ---
and, in the interval graph $G+\sol$ they force $X$ to be a clique, 
reserving space between $M_1$ and $M_2$ for any other $M_i$ with $G[M_i]$ being an interval graph.
Thus, we may reduce the number of such $M_i$s to $2k+2$, without changing the answer
to the instance $(G,k)$.

\begin{redrule}[Module Reduction Rule]
Let $(G,k)$ be an instance of \textsc{Interval Completion}.
Assume there exists $X \subseteq V(G)$ and connected components
$M_1,M_2,\ldots,M_{2k+3}$ of $G \setminus X$ that are modules in $G$ and, moreover,
$N(M_i) = N(M_1)$ for each $1 \leq i \leq 2k+3$.
Then proceed as follows. If for more than $k$ indices $i$ the subgraph $G[M_i]$
is not an interval graph, return that $(G,k)$ is a NO-instance.
Otherwise, pick arbitrary $j$ such that $G[M_j]$ is an interval graph and remove $M_j$
from $G$.
\end{redrule}

We remark here that the Module Reduction Rule can be applied exhaustively in polynomial
time, using the module decomposition of the graph $G$:
It is easy to observe that, if the rule is applicable, then all components $M_i$
are children of a single \emph{union} node in the module decomposition tree.

\begin{figure}
\centering
\includegraphics{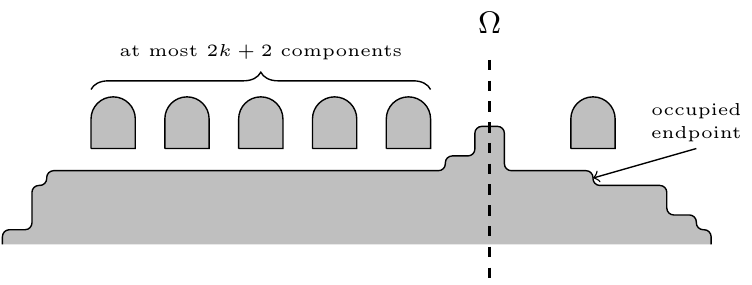}
\caption{Motivation for the Module Reduction Rule.}
\label{fig:over:module-apply}
\end{figure}

Let us now explain our motivation for introducing the Module Reduction Rule
(see also Figure~\ref{fig:over:module-apply}).
In many steps of the algorithm, we analyse some clique $\Isec$ of the interval
graph $G+\sol$, and we would like to control the number of  connected components of
$G \setminus \Isec$.
There are two types of such components: the ones that are modules, and the ones that are not modules.
If Module Reduction Rule has been applied exhaustively, then we have a bound
on the number of components of the first type
for a fixed neighborhood
$X \subseteq \Isec$; observe that there are only $2(|\Isec|+1)$ choices for such neighborhood.
For a component $C$ that is not a module, with vertices $v_1,v_2 \in C$ s.t.
$N(v_1) \setminus C \neq N(v_2) \setminus C$,
observe that either $C$ is touched by the solution $\sol$
or $C$ ``occupies'', in the interval model of $G+\sol$,
an endpoint event of every vertex of $(N(v_1) \triangle N(v_2)) \setminus C$.
Consequently, there are at most $2k+ 2|\Isec|$ components of the second type.

\subsection{Candidates for sections and maximal cliques}

Our first milestone combinatorial result is the following:

\begin{theorem}\label{thm:over:sections}
Given an \textsc{Interval Completion} instance $(G,k)$, where the Module Reduction Rule is not applicable, one
can in $k^{\Oh(\sqrt{k})} n^{\Oh(1)}$ time enumerate
a family $\sectionset$ of $k^{\Oh(\sqrt{k})} n^{17}$ subsets
of $V(G)$, such that for any minimal solution $F$ to $(G,k)$,
in the canonical model $\model$ of $G+F$
all sections of $\model$ belong to $\sectionset$.
\end{theorem}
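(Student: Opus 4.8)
The plan is to reduce the enumeration of all sections of the canonical model to the enumeration of only the \emph{maximal cliques} of $G+F$, and then to control those via a branching/guessing argument combined with the already-available combinatorial bound of $n^{\Oh(\sqrt{k})}$ candidates (the precursor of Lemma~\ref{lem:over:char}) lifted to a $k^{\Oh(\sqrt{k})} n^{\Oh(1)}$ bound. The first observation is that every section of a model sits between two consecutive maximal cliques in the left-to-right order; more precisely, if $\Omega$ is a section at event $\event$, then there are maximal cliques $\Omega_L,\Omega_R$ appearing consecutively such that $\Omega_L \cap \Omega_R \subseteq \Omega \subseteq \Omega_L \cup \Omega_R$, and in fact $\Omega$ is obtained from $\Omega_L$ by deleting some vertices whose intervals end and then is exactly $\Omega_R$ just before the new intervals start; so it suffices to enumerate maximal cliques together with a bounded amount of extra information about which vertices of $\Omega_L \setminus \Omega_R$ have already ``expired''. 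Since the order of expiration of these vertices is fixed by the canonical model (Lemma~\ref{lem:can:endpoints} forces consecutive ending events to respect $\prec$), the number of sections squeezed between $\Omega_L$ and $\Omega_R$ is at most $|\Omega_L \setminus \Omega_R| + 1 \le n+1$, so an extra $n$ factor in the count absorbs this, and we are left to enumerate maximal cliques.

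Next I would enumerate candidates for maximal cliques of $G+F$. Here the key structural fact is the standard dichotomy of components of $G \setminus \Omega$ (as exploited in the overview after the Module Reduction Rule): for a maximal clique $\Omega$ of $G+F$, split $V(G) \setminus \Omega$ into components; each component is either touched by $F$, or is a module attached fully to some subset $X \subseteq \Omega$, or is a non-module but then ``pins down'' an endpoint event of a vertex in a symmetric difference $N(v_1)\triangle N(v_2)$. Combined with the Module Reduction Rule (so the number of untouched module-components with a fixed attachment set $X \subseteq \Omega$ is at most $2k+2$, and there are at most $2(|\Omega|+1)$ choices of $X$), and the fact that there are at most $2\sqrt{k}$ expensive and at most $2k$ touched vertices, one derives that $\Omega$ is essentially determined by: a choice of $\Oh(1)$ ``anchor'' vertices/events in $G$ (giving the $n^{\Oh(1)}$ factor), together with a guess, for each of the at most $2\sqrt{k}$ expensive vertices, of which side of $\Omega$ it falls on and how its incident fill edges interact with $\Omega$ --- a guess that costs $k^{\Oh(\sqrt{k})}$ in total. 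More concretely: an untouched maximal clique in $G+F$ is already a clique of $G$, and the maximal cliques of an interval graph $G$ can be listed in polynomial time, so those contribute $n^{\Oh(1)}$; a touched maximal clique intersects the set $T$ of $\le 2k$ touched vertices, and its intersection with $T$ can be taken to include $\le 2\sqrt k$ expensive vertices plus some cheap-but-touched vertices, and the ``cheap'' part of $\Omega$ is recoverable by a polynomial amount of extra data (the bordering intervals of $G$) once the expensive part is fixed. Iterating the $\sqrt k$-sized guesses $\Oh(\sqrt{k})$ times, as in the earlier subexponential algorithms, yields the $k^{\Oh(\sqrt{k})}$ term.

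The precise bookkeeping is what produces the stated $n^{17}$: each of the $\Oh(1)$ anchoring steps (choose a maximal clique of $G$, choose a handful of vertices/events delimiting a ``segment'' of the model, choose the module attachment set $X$, choose the section refinement inside a maximal-clique interval) contributes a polynomially bounded number of choices, and multiplying these constantly many polynomial factors gives some fixed $n^{c}$; the analysis is arranged so that $c = 17$ suffices. I would prove, by induction on these enumeration steps, that at each stage the correct object (the section of the canonical model we are trying to capture) remains consistent with one of the enumerated partial guesses, using at each step either a polynomial-time subroutine on the interval graph $G$ (Lemma~\ref{lem:ic-cliques-drawing} is exactly the tool here, to test realizability of a guessed clique as a start/end of a model of $G$) or the canonicity lemmas (Lemmata~\ref{lem:can:endpoints} and~\ref{lem:can:modules}) to pin down the remaining freedom. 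The running time is dominated by the $k^{\Oh(\sqrt{k})} n^{\Oh(1)}$ enumeration with a polynomial-time check per candidate, giving the claimed bound.

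The main obstacle I expect is the step bounding the number of candidates for a \emph{touched} maximal clique by $k^{\Oh(\sqrt{k})} n^{\Oh(1)}$ rather than the naive $n^{\Oh(\sqrt{k})}$: the naive approach would guess the $\le 2\sqrt{k}$ touched vertices of $\Omega$ directly, each from among $n$ vertices, which is fatal. Circumventing this requires the deeper structural analysis alluded to in the introduction --- showing that once a constant number of anchor vertices and the at most $2\sqrt k$ \emph{expensive} vertices near $\Omega$ are fixed (the expensive ones being the only ones we can afford to enumerate by ``which of the $n$ vertices'', and even there only $2\sqrt k$ of them, costing $\binom{n}{2\sqrt k} = n^{\Oh(\sqrt k)}$, which is still too much unless we instead branch over the \emph{bounded-size set of already-committed touched vertices} carried through the $\Oh(\sqrt k)$ phases, as in Section~\ref{sec:fill-in}), the remaining cheap part of $\Omega$ is forced, up to polynomially many choices, by the structure of $G$ itself. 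Getting this forcing argument right --- correctly characterising which cheap vertices must lie in $\Omega$ given the committed data, and proving the polynomial bound on the residual ambiguity --- is the technical heart, and is where the module decomposition, the Module Reduction Rule, and the neighborhood-class machinery of Section~\ref{sec:neighbors} all come into play.
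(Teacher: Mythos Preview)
Your first paragraph is essentially the paper's reduction: a section is sandwiched between two consecutive maximal cliques $\Omega_L,\Omega_R$, and since Lemma~\ref{lem:can:endpoints} fixes the order of the ending/starting events between them, the section is determined by the pair $(\Omega_L,\Omega_R)$ together with one integer in $\{0,\dots,n\}$. This is exactly how the paper derives Theorem~\ref{thm:over:sections} from Theorem~\ref{thm:over:pmcs}, and explains the exponent $17 = 8+8+1$. (Note you do need \emph{both} maximal cliques, not just $\Omega_L$: without $\Omega_R$ you do not know which vertices of $\Omega_L$ expire before the next clique forms.)

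The gap is everything after that: you do not prove Theorem~\ref{thm:over:pmcs}, and your paragraphs~2--4 are a hand-wave rather than an argument. The dichotomy you propose (untouched maximal cliques of $G+F$ are cliques of~$G$; touched ones are determined by their expensive part plus $\Oh(1)$ anchors) does not obviously work: a maximal clique of $G+F$ may contain arbitrarily many cheap-but-touched vertices, and nothing in your sketch recovers them. You correctly name the obstacle --- guessing $\Oh(\sqrt{k})$ vertices among $n$ is fatal --- but your proposed resolution, ``branch over the bounded-size set of already-committed touched vertices, as in Section~\ref{sec:fill-in}'', is misdirected: Section~\ref{sec:fill-in} proves Theorem~\ref{thm:over:cheap-fill-in}, which \emph{uses} Theorem~\ref{thm:over:sections} as a black box, so invoking it here would be circular.

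The paper's actual argument for Theorem~\ref{thm:over:pmcs} (Section~\ref{sec:pmc}) is specific and not hinted at in your sketch. For a maximal clique $\Omega$ one guesses eight designated vertices $v_1,v_2,c_1,c_2,f_1,f_2,g_1,g_2$ (this is the $n^8$), proves the characterization of Lemma~\ref{lem:over:char}, and then runs a branching procedure of depth $\Oh(\sqrt{k})$ that at each node adds one vertex to one of the sets $X_i,F_i$ and has only $\mathrm{poly}(k)$ children. The entire content lies in establishing the $\mathrm{poly}(k)$ bound on the number of children at each step: this requires a case analysis on the components of $G \setminus (X_1 \cup X_2 \cup K)$, exploiting the guessed untouched vertices $f_i,g_i$, inapplicability of the Module Reduction Rule, and the neighborhood-class bounds of Section~\ref{ssec:nei-classes}. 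None of this structure is present in your proposal.
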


As an intermediate step, we provide an enumeration algorithm for potential maximal cliques
in the \textsc{Interval Completion} problem, showing the following.

\begin{theorem}\label{thm:over:pmcs}
Given an \textsc{Interval Completion} instance $(G,k)$, where the Module Reduction Rule is not applicable, one
can in $k^{\Oh(\sqrt{k})} n^{\Oh(1)}$ time enumerate
a family $\pmcset$ of $k^{\Oh(\sqrt{k})} n^{8}$ subsets
of $V(G)$, such that for any minimal solution $F$ to $(G,k)$,
all maximal cliques of $G+F$ belong to $\pmcset$.
\end{theorem}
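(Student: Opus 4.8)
\textbf{Proof proposal for Theorem~\ref{thm:over:pmcs}.}

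My plan is to reconstruct any maximal clique $\Omega$ of $G+F$ from a small amount of ``anchor'' data that can be guessed in $k^{\Oh(\sqrt{k})}n^8$ ways, exploiting that almost all vertices of $G$ are untouched by $F$. First I would fix a minimal solution $F$ and the canonical model $\model$ of $G+F$, recall that $\Omega$ is a section drawn just before some ending event $\Iend{v_1}$ and just after the corresponding starting event $\Ibeg{v_2}$, and split $\Omega$ into its touched part $\Omega_{\mathrm{t}} = \Omega \cap \mathrm{touched}(F)$ and its untouched part $\Omega_{\mathrm{u}} = \Omega \setminus \Omega_{\mathrm{t}}$. Since $|\mathrm{touched}(F)| \le 2k$, guessing $\Omega_{\mathrm{t}}$ directly costs $2^{\Oh(k)}$, which is too much; instead I would guess $\Omega_{\mathrm{t}}$ by first guessing the at most $2\sqrt{k}$ expensive vertices it contains (only $n^{\Oh(\sqrt{k})}$, still too much naively, but here is where the Module Reduction Rule and a kernel-free counting argument must intervene). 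The key realisation is that a touched vertex's membership in $\Omega$ is determined by the positions of its two events relative to the two ``boundary'' events $\Iend{v_1}, \Ibeg{v_2}$; so really what needs to be guessed is a \emph{constant} number of reference vertices plus the behaviour of touched vertices, and the latter is controlled by the structure of $F$ rather than by $n$.

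The heart of the argument is the treatment of the untouched part $\Omega_{\mathrm{u}}$. A vertex $w$ is untouched, so $N_{G+F}(w) = N_G(w)$; hence $w \in \Omega$ forces $\Omega \setminus \{w\} \subseteq N_G(w)$, i.e. $\Omega_{\mathrm{u}} \setminus \{w\}$ is a clique already in $G$ contained in $N_G(w)$. I would pick two ``extreme'' untouched vertices of $\Omega$ in the model --- say $w_L$ interval-minimal-ish on the left and $w_R$ on the right, more precisely chosen so that $\Ibeg{w_L}$ is as late as possible and $\Iend{w_R}$ as early as possible among untouched members, using $\Lroot/\Rroot$-style fallbacks if $\Omega_{\mathrm{u}}$ is empty --- and argue that every other untouched $w \in \Omega$ satisfies $\model(\Ibeg{w}) \le \model(\Ibeg{w_L})$ and $\model(\Iend{w}) \ge \model(\Iend{w_R})$, so $w$ is adjacent in $G$ to both $w_L$ and $w_R$ and moreover its interval spans the ``gap'' between them. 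This should let me write $\Omega_{\mathrm{u}}$ as $\{w : N_G[w] \supseteq S\}$ for a suitable small witness set $S \subseteq \{w_L, w_R\} \cup (\text{a few touched anchors})$, or as a neighbourhood class that is determined once a bounded number of reference events are fixed; combined with the Module Reduction Rule (which caps, for each fixed neighbourhood $X$, the number of module-components one can have) this yields a polynomial $n^{\Oh(1)}$ count for the untouched part, with the $k^{\Oh(\sqrt{k})}$ factor and the bounded exponent of $n$ coming only from guessing the $\Oh(\sqrt{k})$ expensive vertices and a constant number of reference/boundary vertices.

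Concretely, the enumeration procedure I would write is: iterate over all choices of a constant number of ``boundary'' vertices $v_1, v_2$ and a constant number of reference untouched vertices $w_L, w_R$ (that is $n^{\Oh(1)}$ choices, contributing to the $n^8$), over all subsets of at most $2\sqrt{k}$ expensive vertices and their adjacency pattern to the boundary (that is $k^{\Oh(\sqrt{k})}$ choices), and then over all ways the remaining at most $\Oh(k)$ cheap-but-touched vertices can sit relative to the boundary --- which I would argue collapses, via the per-neighbourhood bounds established after exhaustive application of the Module Reduction Rule together with the counting of non-module components in the overview, to only $k^{\Oh(\sqrt{k})}$ genuinely distinct possibilities rather than $2^{\Oh(k)}$. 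For each such guess, the untouched remainder of $\Omega$ is then \emph{forced}: it equals the set of all untouched vertices $w$ whose $G$-neighbourhood contains the already-fixed part of $\Omega$ and whose interval is compatible, which can be computed in polynomial time. Running $\Oh(\sqrt{k})$-wise and constant-wise products of these bounds gives the promised $k^{\Oh(\sqrt{k})}n^8$ family $\pmcset$, and every maximal clique of $G+F$ is produced for the correct guesses by construction. The main obstacle --- and the step I expect to be by far the most technical --- is precisely the claim that the cheap touched vertices contribute only a $k^{\Oh(\sqrt{k})}$ (and not $2^{\Oh(k)}$) factor: this is where the absence of a polynomial kernel bites, and it presumably requires the ``deep analysis of YES-instances'' alluded to in the introduction, showing that solution edges incident to a cheap vertex have only $k^{\Oh(\sqrt{k})}n^{\Oh(1)}$ reasonable configurations (the content deferred to Section~\ref{sec:fill-in}).
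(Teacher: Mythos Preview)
Your proposal has a genuine gap at its core step, and it also differs structurally from the paper's route in a way that matters.

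The decisive problem is the claim that once you have guessed the touched part of $\Omega$ (plus a few anchors), the untouched part is \emph{forced} as ``all untouched $w$ whose $G$-neighbourhood contains the already-fixed part and whose interval is compatible.'' First, you do not know which vertices are untouched --- that depends on the unknown $F$, so the enumeration procedure cannot filter by this. Second, even granting oracle access to the touched set, the condition $N_G[w] \supseteq (\text{fixed part})$ is only \emph{necessary} for $w \in \Omega_{\mathrm u}$, not sufficient: an untouched $w$ may be adjacent in $G$ to every vertex you have fixed yet lie entirely to one side of position $p$ in $\model$, hence outside $\Omega$. Your parenthetical ``whose interval is compatible'' is exactly the missing information (the model $\model$), and it cannot be computed in polynomial time from the data you have guessed. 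So the untouched part is not determined, and the argument does not produce a single candidate set per guess.

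There is also a circularity: you invoke the ``only $k^{\Oh(\sqrt{k})}n^{\Oh(1)}$ configurations for cheap touched vertices'' result of Section~\ref{sec:fill-in} to handle $\Omega_{\mathrm t}$, but that theorem is proved \emph{using} Theorem~\ref{thm:over:sections}, which in turn rests on the present Theorem~\ref{thm:over:pmcs}.

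The paper sidesteps both issues by never splitting $\Omega$ into touched/untouched parts. Instead it proves a closed-form identity (Lemma~\ref{lem:over:char}):
\[
\Omega = \bigl(N_G[\{v_1,c_1\}\cup X_1^\circ]\cup F_1^\circ\bigr)\cap\bigl(N_G[\{v_2,c_2\}\cup X_2^\circ]\cup F_2^\circ\bigr),
\]
where $c_i$ are specific \emph{cheap} vertices, $F_i^\circ$ are their fill-in neighbours ($\le \sqrt{k}$ each), and $X_i^\circ$ are the expensive vertices squeezed between $c_i$ and $v_i$ ($\le 2\sqrt{k}+1$ total). The $\supseteq$ direction is easy; the $\subseteq$ direction is a shifting argument that uses \emph{minimality of $F$} to show that any $v\in\Omega$ not captured by the right-hand side could have its endpoint slid inward, deleting an edge of $F$. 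This formula characterises $\Omega$ purely in terms of $G$-neighbourhoods of $\Oh(\sqrt{k})$ vertices plus two small vertex sets, with no reference to which vertices are touched. The $n^8$ then comes from guessing eight anchor vertices $v_1,v_2,c_1,c_2,f_1,f_2,g_1,g_2$, and the $k^{\Oh(\sqrt{k})}$ from a branching procedure (depth $\Oh(\sqrt{k})$, fan-out $\mathrm{poly}(k)$) that reconstructs $X_i^\circ,F_i^\circ$ one vertex at a time by analysing how connected components of $G\setminus K$ can meet $\Omega$; the vertices $f_i,g_i$ and the Module Reduction Rule are what bound the fan-out by $\mathrm{poly}(k)$ rather than $n$. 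Your anchor-vertex intuition is on the right track, but the missing ingredient is precisely this minimality-based closed form for $\Omega$.
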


It is not hard to see that Theorem~\ref{thm:over:pmcs} implies Theorem~\ref{thm:over:sections}.
\begin{proof}[Proof of Theorem~\ref{thm:over:sections}]
Let $(G,k)$ be an \textsc{Interval Completion} instance, $F$ be a minimal solution to $(G,k)$ with $\model$ being the canonical model of $G+F$.
Clearly, $\emptyset$, $\{\Uroot\}$, $\{\Uroot,\Lroot\}$ and $\{\Uroot,\Rroot\}$ are sections of $\model$;
we include them into $\sectionset$ at the beginning.

Let $\Isec_\model(p)$ be a section of $\model$. Without loss of generality, assume that
$\Isec_\model(p)$ 
is not one  of the four aforementioned ``obvious'' sections.
Let $p_1 \leq p$ be the largest integer such that $\Isec_\model(p_1)$ is a maximal clique
of $G+F$; such $p_1$ always exists as $p_1=2$ with $\Isec_\model(2) = \{\Uroot,\Lroot\}$
is a candidate value. Symmetrically, we define $p_2$ to be the smallest integer with
$p_2 \geq p$ such that $\Isec_\model(p_2)$ is a maximal clique of $G+F$.

Let $r = |\Isec_\model(p_1) \setminus \Isec_\model(p_2)|$.
We infer that 
$\model$ places events of $\{\Iend{v}: v \in \Isec_\model(p_1) \setminus \Isec_\model(p_2)\}$ on positions $p_1+1, p_1+2,\ldots,p_1+r$, and then it places events of $\{\Ibeg{v}: v \in \Isec_\model(p_2) \setminus \Isec_\model(p_1)\}$ on positions $p_1+r+1,p_1+r+2,\ldots,p_2$; otherwise there would be a section between sections $\Isec_\model(p_1)$ and $\Isec_\model(p_2)$ that would yield a maximal clique, contradicting the choice of $p_1$ or of $p_2$.
Moreover, by Lemma~\ref{lem:can:endpoints} the events of $\{\Iend{v}: v \in \Isec_\model(p_1) \setminus \Isec_\model(p_2)\}$ are sorted according to the reversed total order $\prec$, while the events of $\{\Ibeg{v}: v \in \Isec_\model(p_2) \setminus \Isec_\model(p_1)\}$ are sorted according to the total order $\prec$.
Consequently, the set $\Isec_\model(p)$ can be deduced from
the maximal cliques $\Isec_\model(p_1)$ and $\Isec_\model(p_2)$ (both belonging
to the set $\pmcset$ given by Theorem~\ref{thm:over:pmcs})
and the value of $p-p_1$, for which we have $n+1$ choices. Theorem~\ref{thm:over:sections} follows.
\end{proof}

Hence, we now sketch the proof of Theorem~\ref{thm:over:pmcs}.
We first start with an $n^{\Oh(\sqrt{k})}$ bound, and then argue how to obtain
the actual FPT bound of Theorem~\ref{thm:over:pmcs}.

Let us fix an \textsc{Interval Completion} instance $(G,k)$, its minimal solution $F$,
a model $\model$ of $G+F$ and a maximal clique $\Isec = \Isec_\model(p)$. Recall
that $\model(\Ibeg{v_2}) = p$ and $\model(\Iend{v_1}) = p+1$ for some vertices $v_1$ and $v_2$.
Without loss of generality, assume that $\Isec$ is different than two ``obvious'' maximal
cliques $\{\Uroot,\Lroot\}$ and $\{\Uroot,\Rroot\}$ and, consequently,
$3 < p < 2n-3$ and $v_1,v_2 \notin \{\Uroot,\Lroot,\Rroot\}$.

\begin{figure}
\centering
\includegraphics{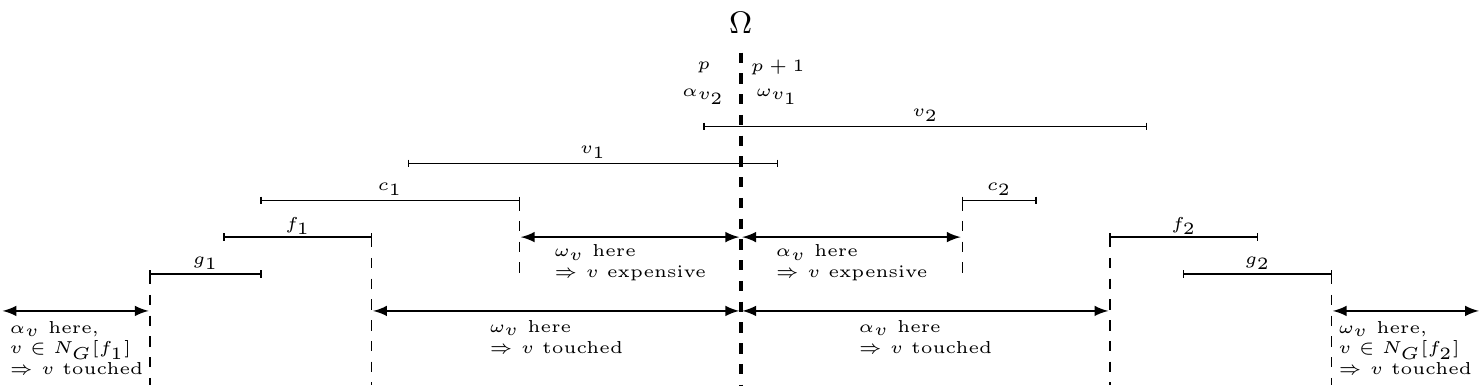}
\caption{The anatomy of a maximal clique $\Omega$, with eight important vertices
  guessed by the algorithm.}
\label{fig:over:eight}
\end{figure}

We define the following vertices (see also Figure~\ref{fig:over:eight}):
\begin{enumerate}
\item $c_1$ is the cheap vertex with the rightmost position of $\Iend{c_1}$, among
all cheap vertices $c$ satisfying $\model(\Iend{c}) \leq \model(\Iend{v_1}) = p+1$;
\item $c_2$ is the cheap vertex with the  leftmost position of $\Ibeg{c_2}$, among
all cheap vertices $c$ satisfying $\model(\Ibeg{c}) \geq \model(\Ibeg{v_2}) = p$;
\item $f_1$ is the untouched vertex with the rightmost position of $\Iend{f_1}$, among
all untouched vertices $f$ satisfying $\model(\Iend{f}) \leq \model(\Iend{v_1}) = p+1$;
\item $f_2$ is the untouched vertex with the leftmost position of $\Ibeg{f_2}$, among
all  untouched vertices $f$ satisfying $\model(\Ibeg{f}) \geq \model(\Ibeg{v_2}) = p$;
\item $g_1$ is the untouched vertex with the leftmost position of $\Ibeg{g_1}$, among
all untouched vertices of $N_G[f_1] \setminus \{\Isec \setminus \{v_1\}\}$;
\item $g_2$ is the untouched vertex with the rightmost position of $\Iend{g_2}$, among
all untouched vertices of $N_G[f_2] \setminus \{\Isec \setminus \{v_2\}\}$.
\end{enumerate}
Let us remark that some of these vertices can be in fact equal. We also remark that all quantifications in the aforementioned definitions are done on nonempty sets:
$\Lroot$ is a good candidate for both $c_1$ and $f_1$, $\Rroot$ is a good candidate for both
$c_2$ and $f_2$, $f_1$ is a good candidate for $g_1$ and $f_2$ is a good candidate for $g_2$.
Hence, all these vertices are well-defined.

Also, observe that $v_1\in N_G[v_2]$ and $v_2\in N_G[v_1]$, as otherwise
$v_1v_2 \in \sol$ and, by swapping the events $\Iend{v_1}$ and $\Ibeg{v_2}$
in the model $\model$, we obtain a model for $G+(\sol \setminus \{v_1v_2\})$,
contradicting the minimality of $\sol$.

We say that a vertex $v$ \emph{lies to the left} of the clique $\Isec$ 
if $\model(\Iend{v}) \leq p+1$, and \emph{lies to the right} if $\model(\Ibeg{v}) \geq p$.
Clearly, $v_1,c_1,f_1,g_1$ lie to the left of $\Isec$ and
$v_2,c_2,f_2,g_2$ lie to the right of $\Isec$.
Note that, perhaps a bit counter-intuitively, if $v=v_1=v_2$, then $v$ lies both to the left and
to the right of $\Isec$.

Let $w$ be any vertex of the graph. Observe that
if some vertex of $N_{G+F}[w]$ lies to the left of $\Isec$, then $\model(\Ibeg{w}) \leq p$.
Similarly,
  if some vertex of $N_{G+F}[w]$ lies to the right of $\Isec$, then $\model(\Iend{w}) \geq p+1$.
In particular, if both these events happen, then $w$ belongs to $\Isec$.

Define now the following sets.
\begin{align*}
F_i^\circ &= \{v \in V(G): vc_i \in F\}\ \mathrm{for}\ i = 1,2; \\
X_1^\circ &= \{v \in V(G): \model(\Iend{c_1}) < \model(\Iend{v}) \leq p+1\}; \\
X_2^\circ &= \{v \in V(G): p \leq \model(\Ibeg{v}) < \model(\Ibeg{c_2})\}.
\end{align*}
As $c_1$ and $c_2$ are cheap, $|F_1^\circ|, |F_2^\circ| \leq \sqrt{k}$.
By the definition of $c_1$ and $c_2$, all vertices of $X_1^\circ \cup X_2^\circ$ are expensive. 
Note that $|X_1^\circ \cap X_2^\circ| \leq 1$ and $X_1^\circ \cap X_2^\circ$ is nonempty only if it consists
of $v_1=v_2$. Therefore $|X_1^\circ| + |X_2^\circ| \leq 2\sqrt{k}+1$.

We now show the main combinatorial observation: the knowledge of vertices $v_1,v_2,c_1,c_2$
and sets $F_i^\circ$ and $X_i^\circ$ for $i=1,2$ already uniquely defines the 
clique $\Isec$.
\begin{lemma}\label{lem:over:char}
$$\Isec = (N_G[\{v_1,c_1\} \cup X_1^\circ] \cup F_1^\circ)\cap (N_G[\{v_2,c_2\} \cup X_2^\circ] \cup F_2^\circ).$$
\end{lemma}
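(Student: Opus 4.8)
The plan is to prove the two inclusions separately, working in the fixed model $\model$ of $G+F$ and using the geometric observation stated just before the lemma: for any vertex $w$, if $N_{G+F}[w]$ meets the left side of $\Isec$ then $\model(\Ibeg{w}) \le p$, and if it meets the right side then $\model(\Iend{w}) \ge p+1$; hence if both happen $w \in \Isec$. First I would handle the easy inclusion $\supseteq$. Take $u$ in the right-hand side. Then $u \in N_G[\{v_1,c_1\}\cup X_1^\circ] \cup F_1^\circ$; in either case $u$ has a neighbor in $G+F$ among $\{v_1,c_1\}\cup X_1^\circ$ (note $F_1^\circ$ consists exactly of the $F$-neighbors of $c_1$, so $u\in F_1^\circ$ means $uc_1\in E(G+F)$), and all of $v_1,c_1,X_1^\circ$ lie to the left of $\Isec$ (for $v_1,c_1$ by construction, for $X_1^\circ$ because $\model(\Iend{v})\le p+1$ there). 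So $N_{G+F}[u]$ meets the left side, giving $\model(\Ibeg{u})\le p$. Symmetrically the membership in the second set of the intersection gives $\model(\Iend{u})\ge p+1$, and therefore $u\in \Isec_\model(p)=\Isec$.

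For the nontrivial inclusion $\subseteq$, take $u \in \Isec$, so $\model(\Ibeg{u})\le p < \model(\Iend{u})$. I must show $u \in N_G[\{v_1,c_1\}\cup X_1^\circ]\cup F_1^\circ$ (the argument for the second factor being symmetric). Consider the $G+F$-neighbors of $u$ lying to the left of $\Isec$, i.e. those $w\in N_{G+F}[u]$ with $\model(\Iend{w})\le p+1$; this set is nonempty since it contains $v_1$ (as $v_1 \in N_{G+F}[u]$ because $\Isec$ is a clique, and $\model(\Iend{v_1})=p+1$). Among these, pick the one, call it $w$, whose starting event $\model(\Ibeg{w})$ is leftmost. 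Now split on whether the edge $uw$ is an original edge or a fill edge. If $uw\in E(G)$ (or $u=w$, treating closed neighborhoods), I want to further locate $w$ relative to $c_1$: either $\model(\Iend{w})\le \model(\Iend{c_1})$, in which case I argue $w$ is "reached" by $c_1$ in $G$ via an untouched/cheap chain and ends up inside $N_G[\{v_1,c_1\}]$ or $N_G[X_1^\circ]$; or $\model(\Iend{w})>\model(\Iend{c_1})$, which by definition of $X_1^\circ$ puts $w\in X_1^\circ$, hence $u\in N_G[X_1^\circ]$. If instead $uw\in F$, then $w$ being a fill-neighbor of $u$ ending to the left forces a more careful comparison with $c_1$: since $c_1$ is the rightmost-ending \emph{cheap} vertex ending at or before $p+1$, and $w$ ends at or before $p+1$, I would compare $\model(\Iend{w})$ with $\model(\Iend{c_1})$ once more; if $\model(\Iend{w})>\model(\Iend{c_1})$ then $w\in X_1^\circ$ and $uw\in E(G)\cup F$ gives $u\in N_G[X_1^\circ]\cup F_1^\circ$ appropriately, while if $\model(\Iend{w})\le\model(\Iend{c_1})$ I use that $w$ and $c_1$ overlap or that some common point is covered to transfer adjacency of $u$ to $c_1$.

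The hard part will be this last case analysis: showing that every $u\in\Isec$ whose "leftmost-starting left-neighbor" $w$ falls \emph{strictly before} $c_1$ nonetheless lies in $N_G[\{v_1,c_1\}]\cup F_1^\circ$. The intended mechanism is that $u$'s interval, starting at or before $p$ and extending past $p$, must intersect $c_1$'s interval in the model (since $c_1$ ends at or before $p+1$ but, being the rightmost cheap vertex of that kind, ends late enough that $\model(\Ibeg{u})\le\model(\Iend{c_1})$ — this needs a short argument why $u$ cannot start after $c_1$ ends), so $uc_1\in E(G+F)$, i.e. $u\in N_G[c_1]\cup F_1^\circ$. The delicate point is exactly ruling out $\model(\Ibeg{u})>\model(\Iend{c_1})$: if that happened, the whole interval of $u$ would lie strictly to the right of $\Iend{c_1}$ yet still overlap position $p$, so all vertices ending before $\Ibeg{u}$ — in particular anything between $c_1$ and $u$ — would be squeezed out, and one shows (using that such squeezed vertices are expensive, hence collected into $X_1^\circ$, and that their number is $\le 2\sqrt k+1$) that this is consistent only if $u$ itself is adjacent in $G$ to something in $X_1^\circ\cup\{v_1\}$. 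Once the geometry is set up, each of these subcases is a one-line intersection-pattern argument; the bookkeeping of which of $\{v_1,c_1\}$, $X_1^\circ$, $F_1^\circ$ captures $u$ in each subcase is where all the care goes, and the symmetric treatment of the second factor then finishes the proof.
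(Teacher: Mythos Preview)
Your $\supseteq$ direction is fine and matches the paper. The $\subseteq$ direction, however, has a real gap: you never invoke the minimality of $F$, and the lemma is simply false for an arbitrary (non-minimal) completion. Concretely, your ``delicate point'' is the case $\model(\Ibeg{u})>\model(\Iend{c_1})$ for some $u\in\Isec$. You assert that then ``$u$ itself is adjacent in $G$ to something in $X_1^\circ\cup\{v_1\}$'', but nothing forces this. It is perfectly possible that $u\in\Isec$ has \emph{no} $G$-neighbor $w$ with $\model(\Iend{w})\le p+1$ at all: every left-side adjacency of $u$ (including $uv_1$) could be a fill edge, and since $u$'s interval misses $c_1$'s, $u\notin F_1^\circ$ either. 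Your leftmost-starting $G+F$-neighbor $w$ then satisfies $uw\in F$, and knowing $w\in X_1^\circ$ gives you nothing about $N_G[X_1^\circ]$. The bound $|X_1^\circ|\le 2\sqrt{k}+1$ is irrelevant here; it does not manufacture a $G$-edge.

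The paper's proof handles exactly this case by an exchange argument using minimality: if $u\in\Isec$ lies outside (say) the second factor, one shows $u$ has no $G$-neighbor whose interval starts at or after position $p$, then slides $\Iend{u}$ left to just before $\Ibeg{v_2}$. This yields a valid model of a completion $F'\subsetneq F$ (the edge $uv_2$ is dropped), contradicting minimality. So the missing ingredient in your plan is precisely this shrinking-the-interval step; once you add it, the convoluted case analysis on the leftmost-starting neighbor becomes unnecessary.
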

\begin{proof}
The inclusion ``$\supseteq$'' is immediate from the previous discussion: 
every vertex  $v\in N_G[\{v_1,c_1\} \cup X_1^\circ] \cup F_1^\circ$ is either 
to the left of $\Isec$ in $G+F$, or at least one  neighbor of $v$ is  to the left of $\Isec$. Similarly, for every 
$u\in N_G[\{v_2,c_2\} \cup X_2^\circ] \cup F_2^\circ$, at least one vertex from $N_{G+F}[u]$ is to the right of $\Isec$ in $G+F$. Hence, we now focus
 on the other inclusion.


Without loss of generality, assume there exists a vertex $v \in \Isec$ that does not belong
to $F_2^\circ$ nor to $N_G[\{v_2,c_2\} \cup X_2^\circ]$. In particular $v \notin \{v_1,v_2,c_2\}$,
   and hence $\Ibeg{v}<p$.
As $v \notin F_2^\circ$ and $vc_2 \notin E(G)$, we have $\model(\Iend{v}) < \model(\Ibeg{c_2})$.
Moreover, by the definition of $X_2^\circ$, $v$ is not adjacent in $G$ to any vertex whose starting event
lies between positions $p$ and $\model(\Ibeg{c_2}) - 1$.
Hence, $v$ is not adjacent in $G$ to any vertex whose starting event lies on or after position $p$.

Consider an ordering $\model'$ that is created from the model $\model$ by moving the event
$\Iend{v}$ to the position just before the event $\Ibeg{v_2}$ (that is, we move $\Iend{v}$
to the position $p$ and shift all events on positions $p$ and later by one to the right).
By our previous arguments, $\model'$ is a valid interval model of some completion
$F'$ of $G$. As $v \in \Isec$, the event $\Iend{v}$ has been moved to the left during this operation,
and $F' \subseteq F$. Moreover $vv_2 \in F \setminus F'$, which contradicts the minimality of $F$.
\end{proof}
As the sets $F_i^\circ$ and $X_i^\circ$ are of size $\Oh(\sqrt{k})$,
Lemma~\ref{lem:over:char} already gives us an $n^{\Oh(\sqrt{k})}$ bound on the number
of candidates for maximal cliques in $G+\sol$.
However, in the absence of polynomial kernel for \icname{}, we need to work further
to obtain the bound promised in Theorem~\ref{thm:over:pmcs}. In this quest we will make
use of the vertices $f_i$ and $g_i$.

The choice of vertices $v_i,c_i,f_i$ and $g_i$ for $i=1,2$ contributes with factor $n^8$ to the
bound of Theorem~\ref{thm:over:pmcs}; our goal is to produce $k^{\Oh(\sqrt{k})}$ candidates
for a fixed choice of these eight vertices. To this end, we develop a branching algorithm
that maintains a choice of candidate sets $X_1,X_2,F_1,F_2$ for $X_1^\circ$, $X_2^\circ$,
$F_1^\circ$ and $F_2^\circ$, respectively, and a guess $K$ on the clique $\Isec$.
At each step of the recursion, the algorithm
outputs the current set $K$ as a possible choice, and branches into $k^{\Oh(1)}$ number of subcases,
choosing one additional vertex to include into one of the sets $X_i$ or $F_i$, updating
$K$ accordingly%
\footnote{This statement is not completely true, in some cases we are able only to guess
a \emph{neighborhood} of a vertex in $X_i^\circ$, without indicating the vertex itself.
However, this is sufficient for the purpose of the reasoning of Lemma~\ref{lem:over:char}.}.
As the depth can be bounded by $\Oh(\sqrt{k})$, we obtain the promised bound
of $k^{\Oh(\sqrt{k})}$ candidates for the clique $\Isec$.

Obviously, the main technical difficulty lies in the argumentation that there are only
$k^{\Oh(1)}$ reasonable choices in each step of the recursion.
Here the guess on the vertices $f_i$ and $g_i$ help: we carefully analyse the structure
of connected components of $G \setminus (X_1 \cup X_2 \cup K \cup \{v_1,v_2,c_1,c_2,f_1,f_2,g_1,g_2\})$ and argue that only a limited number of vertices may possibly live between $f_1$ and $f_2$
in the model $\model$ of $G+\sol$. Moreover, in this argument we heavily rely on the fact that the Module Reduction Rule is not applicable, which in various places enables us to bound the number of components that are considered.
For all the details of the reasoning, we refer to Section~\ref{sec:pmc}.

\subsection{Guessing fill-in edges with fixed endpoint}

Armed with the bound on the number of possible sections (Theorem~\ref{thm:over:sections}),
we move to the most technical result of our work.
\begin{theorem}\label{thm:over:cheap-fill-in}
Given an \icname{} instance $(G,k)$, where the Module Reduction Rule is not applicable,
and a designated vertex $v \in V(G)$, 
one can in $k^{\Oh(\sqrt{k})}n^{\Oh(1)}$ time enumerate a family $\fiset$
of at most $k^{\Oh(\sqrt{k})} n^{70}$ subsets of $V(G)$, such that
for any minimal solution $F$ to $(G,k)$ for which $v$ is cheap w.r.t. $F$,
the set $\{w \in V(G): vw \in F\}$ belongs to $\fiset$.
\end{theorem}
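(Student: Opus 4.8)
Let me think about this carefully. We're given an instance $(G,k)$, Module Reduction Rule not applicable, and a vertex $v$. For any minimal solution $F$ where $v$ is cheap (so $|\incF{v}| \le \sqrt{k}$), we want to enumerate $k^{O(\sqrt{k})}n^{O(1)}$ candidates for the set $N_F(v) := \{w : vw \in F\}$.

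Let me consider the structure. Fix $F$ minimal, $\model$ the canonical model of $G+F$, $v$ cheap. The neighbors of $v$ in $G+F$ form an interval in $\model$ — a section-like set. Actually $N_{G+F}[v]$ is exactly the set of vertices whose intervals intersect $v$'s interval. In the model $\model$, consider the positions $\model(\Ibeg{v}) = p_L$ and $\model(\Iend{v}) = p_R$. A vertex $w$ is in $N_{G+F}[v]$ iff $\model(\Ibeg{w}) < p_R$ and $\model(\Iend{w}) > p_L$; it's a new neighbor (in $F$) iff additionally $wv \notin E(G)$. So $N_F(v) = (N_{G+F}[v] \setminus N_G[v])$.

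So the task reduces to: enumerate $k^{O(\sqrt{k})} n^{O(1)}$ candidates for $N_{G+F}[v]$ (then intersect with $\overline{N_G[v]}$, which is determined by $G$). Now $N_{G+F}[v]$ is a clique in $G+F$, and it's "pinned" by $v$'s interval. Two natural sections bound it: the section just after $\Ibeg{v}$ (call it $\Omega_L = \Omega_\model(\Ibeg{v})$) and the section just before $\Iend{v}$ (call it $\Omega_R = \Omega_\model(\model(\Iend{v})-1)$). We have $v \in \Omega_L \cap \Omega_R$ and every vertex of $N_{G+F}[v]$ has its interval intersecting $v$'s, hence it's in $\Omega_\model(p)$ for some $p_L \le p < p_R$. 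The vertices in $N_{G+F}[v]$ split into: those in $\Omega_L$ (intervals starting at or before $\Ibeg{v}$), those in $\Omega_R$ (intervals ending at or after $\Iend{v}$), and those strictly inside (starting after $\Ibeg{v}$ and ending before $\Iend{v}$). The first two groups are controlled by Theorem~\ref{thm:over:sections}: $\Omega_L, \Omega_R \in \sectionset$, contributing a factor $|\sectionset|^2 = k^{O(\sqrt{k})} n^{34}$.

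Wait — $\Omega_L$ and $\Omega_R$ need not be sections in the canonical model in the strict sense required by Theorem~\ref{thm:over:sections}... actually they are: any $\Omega_\model(p)$ is a section of $\model$, and Theorem~\ref{thm:over:sections} says all sections of $\model$ belong to $\sectionset$. Good. So the plan: guess $\Omega_L, \Omega_R \in \sectionset$ and the relative position $\model(\Iend{v}) - \model(\Ibeg{v})$ (factor $n$). This determines $N_{G+F}[v] \cap (\Omega_L \cup \Omega_R)$. The remaining part — call it $Y$ — consists of vertices strictly nested in $v$'s interval and adjacent (in $G+F$, i.e. possibly via $F$) to $v$. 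The key structural claim I expect to need: because $v$ is cheap, $|Y \setminus N_G(v)| \le \sqrt{k}$, and moreover $Y$ itself has bounded "complexity." Indeed a vertex $w \in Y$ with $wv \in E(G)$ already: its whole interval is between $\Ibeg{v}$ and $\Iend{v}$, and it forms, together with all of $N_{G+F}[v]$, a clique. The dangerous ones are $w \in Y$ with $wv \in F$ (at most $\sqrt{k}$ of them, guessable directly in $n^{\sqrt{k}}$ — too expensive!) so instead one must argue, as in the maximal-clique case, that there are only $k^{O(\sqrt{k})}$ reasonable ways to pick them, using auxiliary "anchor" vertices (untouched/cheap witnesses like the $f_i, g_i$ of Lemma~\ref{lem:over:char}) and a branching procedure of depth $O(\sqrt{k})$ with $k^{O(1)}$ branches per step, where non-applicability of the Module Reduction Rule bounds the number of candidate components. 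The $n^{70}$ in the statement suggests roughly $34 + 1 + 8 + 8 + \dots$ polynomial-in-$n$ guesses of anchor vertices plus the two section guesses squared, consistent with a reduction to a Lemma~\ref{lem:over:char}-style argument applied "around $v$."

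The main obstacle, exactly as the authors flag in the overview, is proving the $k^{O(1)}$-branching bound: showing that at each recursion step there are only polynomially many sensible choices for the next solution-touched vertex to commit to, which requires a delicate analysis of the connected components of $G$ minus the currently-committed set and the anchor vertices, combining the Module Reduction Rule (to bound module-components with a fixed neighborhood) with a counting argument on non-module components (each such component either is touched by $F$ or occupies an endpoint event of a symmetric-difference vertex). I would structure the proof as: (1) reduce $N_F(v)$ to guessing $N_{G+F}[v]$; (2) guess $\Omega_L,\Omega_R \in \sectionset$ and the offset, fixing the "outer" part; (3) set up anchor vertices analogous to $f_1,f_2,g_1,g_2$ but localized to $v$'s interval; (4) run the depth-$O(\sqrt{k})$ branching maintaining candidate sets for the $\le\sqrt{k}$ fill-in edges at $v$ that are "internal"; (5) at each leaf output the implied candidate for $N_F(v)$; (6) bound the branching degree by $k^{O(1)}$ via the component analysis under non-applicability of the Module Reduction Rule. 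Steps (1)–(3) are routine given the preliminaries; step (6) is where essentially all the work lies, and I would expect its proof to mirror, but be substantially more involved than, the proof sketched for Theorem~\ref{thm:over:pmcs} in Section~\ref{sec:pmc}, since here we must also keep track of the interaction between $v$'s interval and the nested components.
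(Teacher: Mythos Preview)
Your proposal has the right opening moves --- guess the sections $\Omega_L^v,\Omega_R^v$ at $v$'s endpoints via Theorem~\ref{thm:over:sections} --- but then diverges from the paper in a way that leaves a genuine gap.

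The paper does \emph{not} run a depth-$O(\sqrt{k})$ branching to pick the fill-in neighbours one by one. Instead, it guesses \emph{four} sections: in addition to $\Omega_L^v,\Omega_R^v$, it picks untouched vertices $f_L,f_R$ whose intervals contain $v$'s interval (with $f_L$ having rightmost start and $f_R$ leftmost end among such vertices) and guesses the sections $\Omega_L^f,\Omega_R^f$ at the endpoints $\alpha_{f_L},\omega_{f_R}$. The factor $n^{70}=n^2\cdot(n^{17})^4$ comes from these two vertices and four sections. After this single guessing step, the paper computes \emph{deterministically} one set $B$ of size $O(k^5)$ that is guaranteed to contain $\{w:vw\in F\}$, and then outputs all $\le\sqrt{k}$-subsets of $B$. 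The heart of the argument is: classify connected components of $G\setminus(\Omega_L^f\cup\Omega_L^v\cup\Omega_R^v\cup\Omega_R^f)$; identify $O(k^2)$ \emph{troublesome} components (those in $\mathcal{C}_L\cap\mathcal{C}_R$, those not freely drawable, those not fully adjacent to a set $X$, and a $2$-approximate vertex cover of a conflict graph); prove that any non-troublesome component \emph{far} from every troublesome one is untouched by $F$ (Lemma~\ref{lem:over:fi:structure} and the subsequent analysis); and conclude that only $O(k^4)$ components can fall into case~\ref{case:over:fi:top}.

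Your plan omits the outer sections $\Omega_L^f,\Omega_R^f$. These are essential: the choice of $f_L,f_R$ as \emph{untouched} vertices is what forces every vertex of $(\Omega_L^v\cap\Omega_R^v)\setminus(\Omega_L^f\cap\Omega_R^f)$ to be touched (hence $|K|\le 2k$), which is the leverage for bounding $|\mathcal{C}_L\cap\mathcal{C}_R|$ via Lemma~\ref{lem:A-r-nei}. Without $\Omega_L^f,\Omega_R^f$ you have no control over the components nested inside $v$'s interval, and the analogy to the $f_i,g_i$ of the maximal-clique argument does not carry over directly --- there the anchors lie \emph{outside} the clique, whereas here the relevant anchors lie \emph{outside} $v$'s interval but \emph{contain} it. Your step~(6), which you flag as the hard part, would need a fundamentally different argument from the PMC case, and the paper's troublesome-component machinery is precisely that argument.
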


\begin{figure}
\centering
\includegraphics{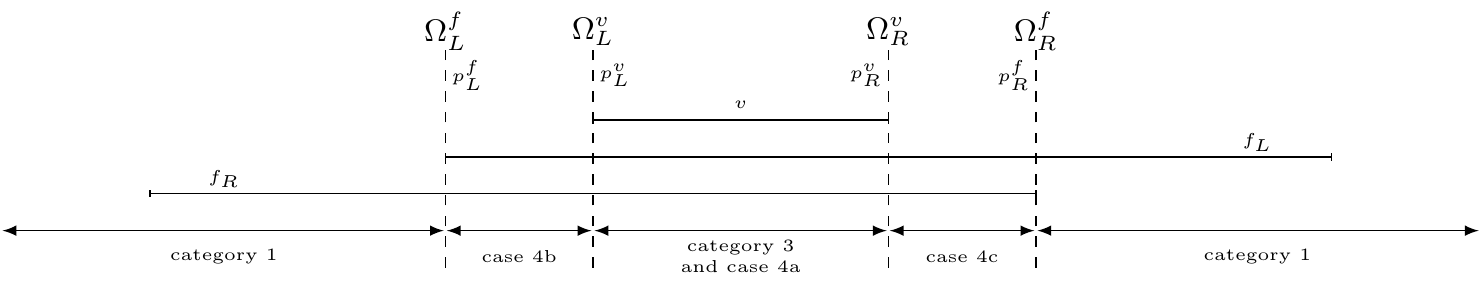}
\caption{Situation around the vertex $v$ in the proof of Theorem~\ref{thm:over:cheap-fill-in},
  together with categories and cases of Lemma~\ref{lem:over:fi:structure}.}
\label{fig:over:fill-in}
\end{figure}

We now sketch the proof of Theorem~\ref{thm:over:cheap-fill-in}; let $(G,k)$ and $v \in V(G)$
be as in the statement.
Fix a minimal completion $F$ of the \icname{} instance $(G,k)$, and fix a model $\sigma$ of $G+F$.
We define the following (see also Figure~\ref{fig:over:fill-in}).
\begin{enumerate}
\item Denote $p_L^v = \model(\Ibeg{v})$ and $p_R^v = \model(\Iend{v})$.
\item Let $f_L$ be the untouched vertex with the rightmost starting endpoint
among untouched vertices $f$ satisfying
$\model(\Ibeg{f}) \leq p_L^v < p_R^v \leq \model(\Iend{f})$.
\item Let $f_R$ be the untouched vertex with the leftmost ending endpoint
among untouched vertices $f$ satisfying
$\model(\Ibeg{f}) \leq p_L^v < p_R^v \leq \model(\Iend{f})$.
\item Denote $p_L^f = \model(\Ibeg{f_L})$ and $p_R^f = \model(\Iend{f_R})$.
\item Denote $\Isec_L^f = \Isec_\model(p_L^f)$, $\Isec_L^v = \Isec_\model(p_L^v)$,
  $\Isec_R^v = \Isec_\model(p_R^v-1)$ and $\Isec_R^f = \Isec_\model(p_R^f-1)$.
\end{enumerate}
Note that $\Uroot$ is a good candidate for both $f_L$ and $f_R$, thus these vertices exist.
We remark also that it may happen that $v=f_L$, $v=f_R$ or $f_L = f_R$. However, we may say
the following about the order of these vertices.
$$\model(\Ibeg{f_R}) \leq p_L^f \leq p_L^v < p_R^v \leq p_R^f \leq \model(\Iend{f_L}).$$

We start by enumerating all possible choices of vertices $f_L,f_R$ and sections
$\Isec_L^f$, $\Isec_L^v$, $\Isec_R^v$, $\Isec_R^f$, using the family $\sectionset$
of Theorem~\ref{thm:over:sections}.
By the bound of Theorem~\ref{thm:over:sections}, there are at most $k^{\Oh(\sqrt{k})} n^{70}$
subcases (henceforth called \emph{branches}) to consider.
In the rest of the proof we aim to compute a single set $B$ of size $\Oh(k^5)$
for a single choice of the aforementioned two vertices and four sections, such that
$B$ contains $\{w: vw \in F\}$ for any minimal solution $F$ to $(G,k)$
for which the choice of $f_L, f_R$ and $\Isec_L^f$, $\Isec_L^v$, $\Isec_R^v$, $\Isec_R^f$
is correct.
When the set $B$ is computed, we insert all its subsets of size at most $\sqrt{k}$
into the family $\fiset$.

Thus, henceforth we fix a choice of
$f_L,f_R$ and $\Isec_L^f$, $\Isec_L^v$, $\Isec_R^v$, $\Isec_R^f$
and we assume that the guess of these vertices and sets 
is correct for a minimal solution $F$ with model $\model$ of $G+F$.
Observe that we should expect the following:
\begin{align*}
v &\in \Isec_L^v \cap \Isec_R^v, \\
f_L,f_R &\in \Isec_L^f \cap \Isec_R^f, \\
\Isec_L^f \cap \Isec_R^f &\subseteq \Isec_L^f \cap \Isec_R^v \subseteq \Isec_L^v \cap \Isec_R^v, \\
\Isec_L^f \cap \Isec_R^f &\subseteq \Isec_L^v \cap \Isec_R^f \subseteq \Isec_L^v \cap \Isec_R^v.
\end{align*}

We maintain also a set $B^\mathrm{sure}$ of vertices $w$ for which we deduce
that $vw \in F$ is implied by the choice of 
$f_L,f_R$ and $\Isec_L^f$, $\Isec_L^v$, $\Isec_R^v$, $\Isec_R^f$.
We start with $B^\mathrm{sure} = (\Isec_L^v \cup \Isec_R^v) \setminus N_G(v)$.
If at any point the size of $B^\mathrm{sure}$ exceeds $k$, we discard the current branch.

We start with the following observation, directly implied by the assumption that $f_L$ and $f_R$
are untouched and $|F| \leq k$.
\begin{lemma}\label{lem:over:fi:structure}
For any connected component $C$ of $G \setminus (\Isec_L^f \cup \Isec_L^v \cup \Isec_R^v \cup \Isec_R^f)$ the following holds:
\begin{enumerate}
\item\label{case:over:fi:outside}
If $C \cap N_G(f_L) \cap N_G(f_R) = \emptyset$, then $\Send{C}{\model} < p_L^f$
or $\Sbeg{C}{\model} > p_R^f$. In particular, $vw \notin E(G) \cup F$ for every $w \in C$.
\item If $C$ contains a vertex of  $N_G(f_L) \cap N_G(f_R)$, then
$p_L^f < \Sbeg{C}{\model} < \Send{C}{\model} < p_R^f$ and
$C \subseteq N_G(f_L) \cap N_G(f_R)$.
\item\label{case:over:fi:top-forced}
If, moreover, $C$ contains a neighbor of $v$ in $G$, then
$p_L^v < \Sbeg{C}{\model} < \Send{C}{\model} < p_R^v$
and $vw \in E(G) \cup F$ for every $w \in C$.
\item\label{case:over:fi:dont-know}
In the last case, if $C \subseteq (N_G(f_L) \cap N_G(f_R)) \setminus N_G(v)$, then
one of the following cases hold:
\begin{enumerate}
\item\label{case:over:fi:top}
$p_L^v < \Sbeg{C}{\model} < \Send{C}{\model} < p_R^v$
and $vw \in F$ for every $w \in C$.
Moreover, in this case $N_G(C) \subseteq \Isec_L^v \cup \Isec_R^v$.
\item \label{case:over:fi:left}
$p_L^f < \Sbeg{C}{\model} < \Send{C}{\model} < p_L^v$
and $vw \notin F$ for every $w \in C$.
Moreover, in this case $N_G(C) \subseteq \Isec_L^f \cup \Isec_L^v$.
\item \label{case:over:fi:right}
$p_R^v < \Sbeg{C}{\model} < \Send{C}{\model} < p_R^f$
and $vw \notin F$ for every $w \in C$.
Moreover, in this case $N_G(C) \subseteq \Isec_R^f \cup \Isec_R^v$.
\end{enumerate}
Moreover, if $|C| > k$, then the first option does not happen.
\end{enumerate}
\end{lemma}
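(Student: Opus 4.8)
The plan is to reduce everything to a single ``localisation'' dichotomy for the component $C$ inside the model $\model$, and then read off all four assertions from it by elementary interval arithmetic. The one background fact I would invoke is the standard observation that, since $C$ is connected in $G$ and hence in the interval graph $G+F$, the set of positions $\{p : \Isec_\model(p)\cap C\neq\emptyset\}$ is exactly the contiguous block $\{\Sbeg{C}{\model},\ldots,\Send{C}{\model}-1\}$; equivalently, $\Isec_\model(q)\cap C=\emptyset$ iff $q<\Sbeg{C}{\model}$ or $q\geq\Send{C}{\model}$. (If this is not already recorded among the preliminaries it is immediate: a position $q$ with $\Sbeg{C}{\model}\leq q<\Send{C}{\model}$ and $\Isec_\model(q)\cap C=\emptyset$ would split $C$ into the nonempty sets $\{w\in C:\model(\Iend{w})\leq q\}$ and $\{w\in C:\model(\Ibeg{w})>q\}$, between which there is no edge of $G+F$.) I would also record the trivialities that $v,f_L,f_R\notin C$ (each lies in one of the four removed sections), that $N_G(C)$ is contained in the union of those four sections (a vertex of $N_G(C)$ lies outside $C$ yet is adjacent to it, so it cannot avoid all four), and that $f_L,f_R$ are untouched, so $N_G(f_L)=N_{G+F}(f_L)$ and $N_G(f_R)=N_{G+F}(f_R)$, with adjacency in $G+F$ read off from overlap of intervals in the usual way.

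Step one is the dichotomy. The hypothesis says $C$ misses $\Isec_\model(p_L^f)$, $\Isec_\model(p_L^v)$, $\Isec_\model(p_R^v-1)$ and $\Isec_\model(p_R^f-1)$; by the order relation $p_L^f\leq p_L^v<p_R^v\leq p_R^f$ noted above, these four positions, say $q_1\leq q_2\leq q_3\leq q_4$, are sorted, with $q_1=p_L^f$, $q_2=p_L^v$, $q_3=p_R^v-1$, $q_4=p_R^f-1$. Since $\{\Sbeg{C}{\model},\ldots,\Send{C}{\model}-1\}$ is a nonempty contiguous block of integers avoiding all of $q_1,\ldots,q_4$, it lies in exactly one of the five regions $(-\infty,q_1)$, $(q_1,q_2)$, $(q_2,q_3)$, $(q_3,q_4)$, $(q_4,+\infty)$; I will call these cases (I)--(V). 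Using only that $\Sbeg{C}{\model}$ is a starting-event position and $\Send{C}{\model}$ an ending-event position (so each differs from $p_L^f=\model(\Ibeg{f_L})$, $p_L^v=\model(\Ibeg{v})$, $p_R^v=\model(\Iend{v})$, $p_R^f=\model(\Iend{f_R})$ of the opposite type), I upgrade each case to the strict bounds in the statement: (I) gives $\Send{C}{\model}<p_L^f$; (II) gives $p_L^f<\Sbeg{C}{\model}<\Send{C}{\model}<p_L^v$; (III) gives $p_L^v<\Sbeg{C}{\model}<\Send{C}{\model}<p_R^v$; (IV) gives $p_R^v<\Sbeg{C}{\model}<\Send{C}{\model}<p_R^f$; (V) gives $\Sbeg{C}{\model}>p_R^f$.

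Step two extracts the assertions. Two facts follow at once from the case bounds: in (II), (III), (IV) every $w\in C$ has $\model(\Ibeg{f_L})=p_L^f<\model(\Ibeg{w})$ and $\model(\Iend{w})<p_R^f\leq\model(\Iend{f_L})$, so the interval of $w$ overlaps that of $f_L$, hence $wf_L\in E(G+F)$, and symmetrically $wf_R\in E(G+F)$; as $f_L,f_R$ are untouched this gives $C\subseteq N_G(f_L)\cap N_G(f_R)$; and in (I), (V) the interval of each $w\in C$ is disjoint from that of $v$, so $vw\notin E(G)\cup F$. The first assertion is now immediate: if $C\cap N_G(f_L)\cap N_G(f_R)=\emptyset$ then (II)--(IV) are impossible, so (I) or (V) holds, and the ``in particular'' is the second fact. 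For the second assertion, a vertex of $C$ in $N_G(f_L)\cap N_G(f_R)$ has an interval meeting both that of $f_L$ and that of $f_R$, incompatible with (I) (its interval ends at or before $\model(\Ibeg{f_L})$) and with (V) (it starts at or after $\model(\Iend{f_R})$); so one of (II)--(IV) holds, giving $p_L^f<\Sbeg{C}{\model}<\Send{C}{\model}<p_R^f$ and, by the first fact, $C\subseteq N_G(f_L)\cap N_G(f_R)$. For the third assertion, a neighbour of $v$ in $C$ excludes (II) and (IV) (there the interval of $C$ is disjoint from that of $v$), leaving (III), whence $\model(\Ibeg{v})<\model(\Ibeg{w})<\model(\Iend{w})<\model(\Iend{v})$ for all $w\in C$ and so $vw\in E(G)\cup F$.

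For the last assertion, the second assertion puts us in (II), (III) or (IV), which are options (b), (a), (c); the listed position bounds are those of the respective cases, ``$vw\in F$'' in (a) is the adjacency of the third assertion together with $C\cap N_G(v)=\emptyset$, ``$vw\notin F$'' in (b), (c) is disjointness of intervals, and $|C|\leq k$ in (a) is forced since then $\{vw:w\in C\}\subseteq F$. The only point needing real care is the refinement of $N_G(C)$. We know $N_G(C)$ lies in the union of the four removed sections; take $u\in N_G(C)$ adjacent to some $w\in C$. In option (b), $\model(\Ibeg{u})<\model(\Iend{w})\leq\Send{C}{\model}<p_L^v$ and $\model(\Iend{u})>\model(\Ibeg{w})\geq\Sbeg{C}{\model}>p_L^f$; if moreover $u$ lay in $\Isec_R^v=\Isec_\model(p_R^v-1)$ or in $\Isec_R^f=\Isec_\model(p_R^f-1)$ then $\model(\Iend{u})>p_R^v-1\geq p_L^v$, and together with $\model(\Ibeg{u})<p_L^v$ this forces $u\in\Isec_\model(p_L^v)=\Isec_L^v$; hence $N_G(C)\subseteq\Isec_L^f\cup\Isec_L^v$, and option (c) is symmetric. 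In option (a), $\model(\Ibeg{u})<p_R^v$ and $\model(\Iend{u})>p_L^v$, so $u\in\Isec_L^f$ forces $u\in\Isec_\model(p_L^v)=\Isec_L^v$ and $u\in\Isec_R^f$ forces $u\in\Isec_\model(p_R^v-1)=\Isec_R^v$, giving $N_G(C)\subseteq\Isec_L^v\cup\Isec_R^v$. This piece---arguing that a neighbour of $C$ which a priori sits only in a ``far'' section must in fact meet the adjacent ``near'' section, because its interval already reaches into the footprint of $C$---is the fiddliest part of the argument; everything else is a routine case check once the dichotomy is in place.
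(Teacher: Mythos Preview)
Your argument is correct and matches the paper's intent: the paper does not spell out a proof for this lemma at all, stating only that it is ``directly implied by the assumption that $f_L$ and $f_R$ are untouched and $|F|\leq k$.'' Your write-up is precisely the elementary case analysis the paper leaves implicit---localising the connected component $C$ into one of five intervals determined by the four section positions, then reading off adjacencies from interval overlap and using untouchedness of $f_L,f_R$ to pass from $G+F$ to $G$---so there is nothing to compare.
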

By Lemma~\ref{lem:over:fi:structure}, we can sort the connected components of
$G \setminus (\Isec_L^f \cup \Isec_L^v \cup \Isec_R^v \cup \Isec_R^f)$
into three \emph{categories}, depending on whether they fall into
point~\ref{case:over:fi:outside}, \ref{case:over:fi:top-forced}
or~\ref{case:over:fi:dont-know}. Obviously, the last category is the most interesting, as we are not able to directly decide whether the vertices of the component should be inserted into $B$ or not. 
The subpoints of this category (i.e,~\ref{case:over:fi:top},~\ref{case:over:fi:left} and~\ref{case:over:fi:right})
are henceforth called \emph{cases}. Note that for each connected component $C$
we know its category, but we do not know its case if it falls into category~\ref{case:over:fi:dont-know}.

We now perform some cleaning. 
If there exists a component
$C \in \Ccomp{G \setminus (\Isec_L^f \cup \Isec_L^v \cup \Isec_R^v \cup \Isec_R^f)}$
that does not fall into any category
(e.g., we have 
$C \not\subseteq N_G(f_L) \cap N_G(f_R)$, but $C$ contains a common neighbor
of $f_L$ and $f_R$), we discard the current branch.
Moreover,
we may include into $B^\mathrm{sure}$ all non-neighbors of $v$ that lie
in a connected component $C$ that falls into category~\ref{case:over:fi:top-forced}
of Lemma~\ref{lem:over:fi:structure}, that is, that contains a neighbor of $v$.

Clearly, only at most $k$ components fall into case~\ref{case:over:fi:top} of Lemma~\ref{lem:over:fi:structure}, since each such component induces at least one fill edge incident to $v$. However, we do not know which of the components falling into category~\ref{case:over:fi:dont-know} are in fact those interesting ones. Hence, our main task now is to pinpoint a set of $\Oh(k^4)$ potential components
falling into category~\ref{case:over:fi:dont-know}
for which case~\ref{case:over:fi:top} may possibly happen. As each such component
is of size at most $k$, this would conclude the proof of Theorem~\ref{thm:over:cheap-fill-in}.

Let $\Cfam$ be the family of all connected component $C$ of 
$G \setminus (\Isec_L^f \cup \Isec_L^v \cup \Isec_R^v \cup \Isec_R^f)$
that fall into category~\ref{case:over:fi:dont-know} of Lemma~\ref{lem:over:fi:structure},
that is, $C \subseteq (N_G(f_L) \cap N_G(f_R)) \setminus N_G(v)$.
We distinguish the following subfamilies that correspond to the subcases of category~\ref{case:over:fi:dont-know}.
\begin{align*}
\Cfam_v &= \{C \in \Cfam: N_G(C) \subseteq \Isec_L^v \cup \Isec_R^v\} \\
\Cfam_L &= \{C \in \Cfam: N_G(C) \subseteq \Isec_L^f \cup \Isec_L^v\} \\
\Cfam_R &= \{C \in \Cfam: N_G(C) \subseteq \Isec_R^f \cup \Isec_R^v\}
\end{align*}
If $\Cfam_v \cup \Cfam_L \cup \Cfam_R \neq \Cfam$, we discard the current branch.
Moreover, for any $C \in \Cfam_v \setminus (\Cfam_L \cup \Cfam_R)$ we include
all vertices of $C$ into $B^\mathrm{sure}$, as such a component will surely fall into case~\ref{case:over:fi:top}.

Our goal now is to focus on $\Cfam_L$ and pinpoint a small set
of components of $\Cfam_L \cap \Cfam_v$ that may possibly fall into case~\ref{case:over:fi:top}
of Lemma~\ref{lem:over:fi:structure}. The arguments for $\Cfam_R$ will be symmetrical.

To this end, we will construct a family $\Tfam \subseteq \Cfam_L$
of \emph{troublesome} components. Informally speaking, a component is troublesome
if it is highly unclear where or how it should live in the model $\model$.
We will argue that there is a bounded number of troublesome components (strictly speaking, $\Oh(k^2)$ of them)
and any component that falls into case~\ref{case:over:fi:top} of Lemma~\ref{lem:over:fi:structure}
is in some sense ``close'' to a troublesome component.

We first focus on components $C \in \Cfam_L \cap \Cfam_R$. Observe that
for such a component we have $N_G(C) \subseteq  \Isec_L^v \cap \Isec_R^v$.
Denote $P = \Isec_L^f \cap \Isec_R^f$ and $K = (\Isec_L^v \cap \Isec_R^v) \setminus P$.
By the choice of $f_L$ and $f_R$, each vertex in $K$ is touched by the solution $\sol$
and, consequently, $|K| \leq 2k$.
If there exists a vertex $v \in C$ with $P \not\subseteq N_G(v)$, then necessarily
$C$ is touched by the solution. Otherwise, $P \subseteq N_G(v) \subseteq P \cup K$ for any
$v \in C$ and, since the Module Reduction Rule is not applicable, we infer
that there are only $\Oh(k^2)$ components of $\Cfam_L \cap \Cfam_R$.
We treat all of them as troublesome ones, and put them into $\Tfam$.

Furthermore, we put into $\Tfam$ all connected components $C \in \Cfam_L$
that cannot be drawn in the model of a completion of $G$ between sections $\Isec_L^f$ and $\Isec_L^v$
without adding a fill-in edge. More formally,
we denote 
$F_L = \binom{\Isec_L^v}{2} \setminus E(G) \subseteq F$
and define the following:
\begin{definition}
A component $C \in \Cfam_L \cap \Cfam_v$ is \emph{freely drawable}
if there exists an interval model $\model_C$ of $(G+F_L)[C \cup \Isec_L^v]$
that starts with all starting events of $\events{\Isec_L^v \cap \Isec_L^f}$\
and ends with all ending events of $\events{\Isec_L^v}$.
\end{definition}
Observe that one can recognize freely drawable components in polynomial time using Lemma~\ref{lem:ic-cliques-drawing}.

It is easy to see that each component that is not freely drawable either
is touched by the solution $\sol$, or
falls into case~\ref{case:over:fi:right}.
However, in the latter case we have $C \in \Cfam_L \cap \Cfam_R$, and all such components
have already been considered troublesome.
Hence, we expect at most $2k$ not freely drawable components of $(\Cfam_L \cap \Cfam_v) \setminus \Cfam_R$, and we put all of them into $\Tfam$.

We now inspect the possible order of the starting endpoints
of the vertices of $\Isec_L^v \setminus \Isec_L^f$; all these endpoints
appear between positions $p_L^f$ and $p_L^v$. We denote
$$X = \bigcup_{C \in \Cfam_L \setminus \Cfam_v} N_G(C) \cap \Isec_L^v.$$
It turns out that any component
$C \in (\Cfam_L \cap \Cfam_v) \setminus \Cfam_R$ that contains a vertex
$w \in C$ with $X \not\subseteq N_G(w)$ 
is necessarily touched by $F$: the solution $F$ needs to make $w$
adjacent either to the entire $X$, or to some vertices of the connected component
of $\Cfam_L \setminus \Cfam_v$ that neighbors a vertex of $X \setminus N_G(w)$.
Thus, we may treat all such components $C$ as troublesome, and assume henceforth
that each remaining component $C \in (\Cfam_L \cap \Cfam_v) \setminus \Cfam_R$
is both freely drawable and fully adjacent to $X$.
We refer to Figure~\ref{fig:over:fill-in2} for an illustration.

\begin{figure}
\centering
\includegraphics{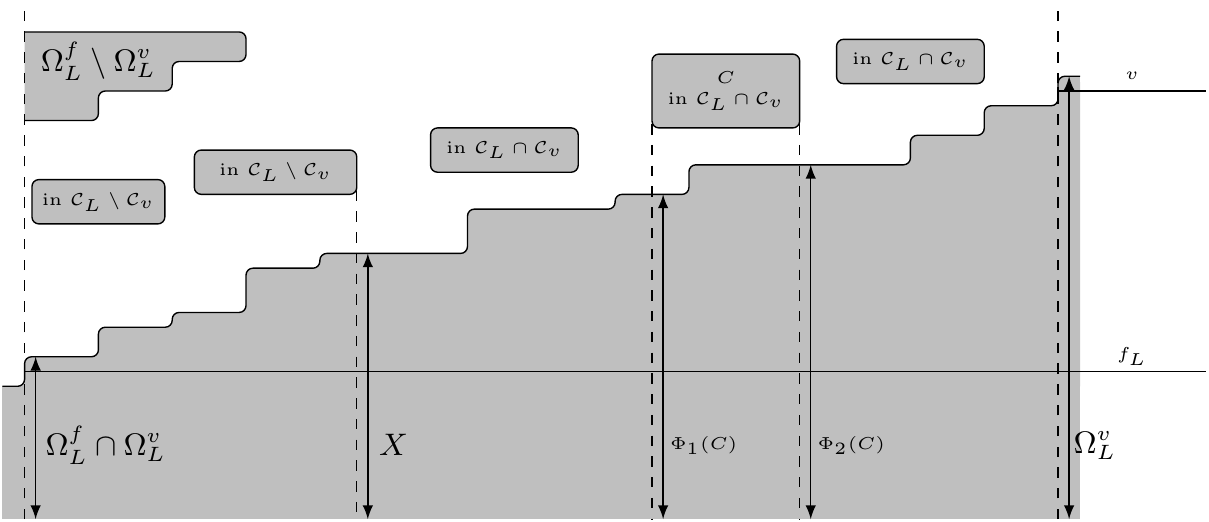}
\caption{The situation between $\Isec_L^f$ and $\Isec_L^v$.}
\label{fig:over:fill-in2}
\end{figure}

Now observe that if a component $C \in (\Cfam_L \cap \Cfam_v) \setminus \Cfam_R$ is freely
drawable, then there exist vertices $v_1,v_2 \in C$ with
\begin{align*}
N_G(v_1) \cap \Isec_L^v &= \minneiset{C} := \bigcap_{w \in C} N_G(w) \cap \Isec_L^v, \\
N_G(v_2) \cap \Isec_L^v &= \maxneiset{C} := \bigcup_{w \in C} N_G(w) \cap \Isec_L^v.
\end{align*}
Consider now two components $C_1,C_2 \in (\Cfam_L \cap \Cfam_v) \setminus \Cfam_R$.
If neither of them is touched by $\sol$
(in particular, neither of them falls into case~\ref{case:over:fi:top}), then
we should expect $\maxneiset{C_1} \subseteq \minneiset{C_2}$ or $\maxneiset{C_2} \subseteq \minneiset{C_1}$, depending on the relative order of $C_1$ and $C_2$ in the model $\model$.
Hence, if this is not the case, we have a \emph{conflict} between $C_1$ and $C_2$: one 
of these two components needs to be touched by $F$. 

We construct an auxiliary conflict graph, where each vertex corresponds to a not-yet-troublesome
component of $(\Cfam_L \cap \Cfam_v) \setminus \Cfam_R$,
and each edge corresponds to a conflict; by the previous argumentation, the components
touched by the solution need to form a vertex cover of this auxiliary conflict graph.
Hence, we may compute a $2$-approximate vertex cover of the conflict graph, and consider
all components of this vertex cover as troublesome.

This step concludes the recognition of troublesome components $\Tfam$.

We now observe that
$$(G+F_L)\left[\Isec_L^v \cup \bigcup \left((\Cfam_L \cap \Cfam_v) \setminus (\Cfam_R \cup \Tfam)\right)\right]$$
is an interval graph and, moreover, it admits an interval model that starts with the starting
events of $X$ and ends with the ending events of $\Isec_L^v$.
The crucial observation now is the following: if for some $C \in (\Cfam_L \cap \Cfam_v) \setminus (\Cfam_R \cup \Tfam)$, the sets $\minneiset{C}$ and $\maxneiset{C}$ differ significantly
from sets $\minneiset{D}$ and $\maxneiset{D}$ for all $D \in \Tfam$, then no troublesome
component will interfere with the representation of $C$ between positions $p_L^f$ and $p_L^v$
and, consequently, $C$ is untouched by the solution and falls into case~\ref{case:over:fi:left}.
The exhaustive application of Module Reduction Rule ensures that only a bounded number of 
components $C$ may have sets $\minneiset{C}$ and $\maxneiset{C}$ similar to some troublesome
component.
As there are only $\Oh(k^2)$ troublesome components, we are left only with a bounded number
of candidates for case~\ref{case:over:fi:top}.
This concludes the sketch of the proof of Theorem~\ref{thm:over:cheap-fill-in}.

\subsection{Dynamic programming}

Using the structural results of Theorems~\ref{thm:over:sections} and~\ref{thm:over:cheap-fill-in},
we now design a dynamic programming for \icname{}.

A straightforward approach, basing on the subexponential algorithm for the \textsc{Chordal Completion} problem,
would be to enumerate all possible sections via Theorem~\ref{thm:over:sections} and, for each section $\Isec$, try
to deduce (or guess) which components of $G \setminus \Isec$ lie to the left and which lie to the right
to the section $\Isec$. However, if $\Isec$ is large, there may be many such components with many different
neighborhoods in $\Isec$ and, consequently, such a guessing step seems expensive (see Figure~\ref{fig:piramid} in the introduction).
Thus, we need to employ a more involved definition of a ``separation'' to define a subproblem for the dynamic programming.

Inspired by the example on Figure~\ref{fig:piramid}, we start with the following approach.
For each vertex $v$ that is cheap in the canonical solution $\sol$,
we take all possible candidate values for $p_L = \model(\Ibeg{v})$, $p_R = \model(\Iend{v})-1$,
$\Isec_L^v = \Isec_\model(p_L)$, $\Isec_R^v = \Isec_\model(p_R)$
and $\incF{v}$; we call such a tuple a \emph{world} $\W$.
Observe that, by Theorems~\ref{thm:over:sections} and~\ref{thm:over:cheap-fill-in},
there are only $k^{\Oh(\sqrt{k})} n^{\Oh(1)}$ reasonable worlds. 
For each world, we would like to know the optimum way to arrange the events
between positions $p_L$ and $p_R$, i.e., among vertices of $N_{G+\incF{v}}(v)$.
Observe that, in particular, a world does not distinguish
which vertices of $G \setminus N_{G+\incF{v}}(v)$ are before or after $v$
in the model $\model$.

\begin{figure}
\centering
\includegraphics{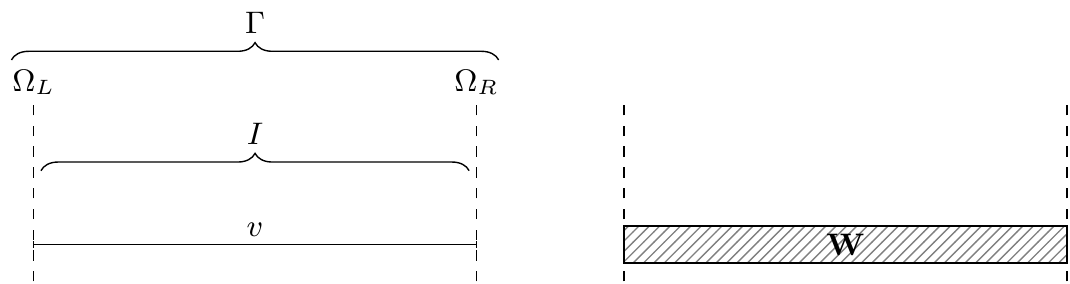}
\caption{A world with its most important elements (to the left) and its symbolic notation used in subsequent figures (to the right).}
\label{fig:over:world}
\end{figure}

\begin{figure}
\centering
\includegraphics{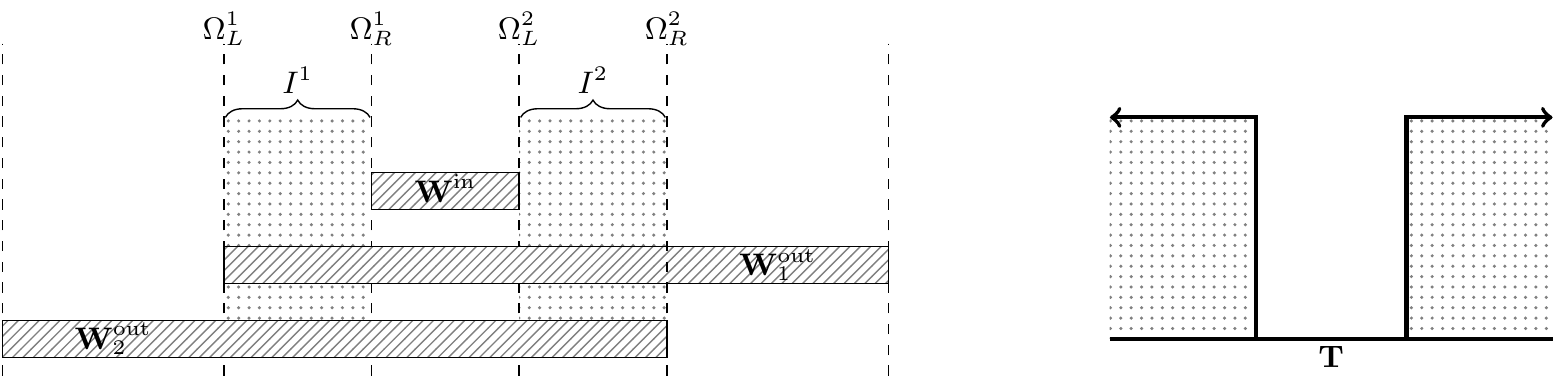}
\caption{A terrace with its most important notation (to the left) and its symbolic notation used in subsequent figures (to the right).
The dotted areas are the `important' areas for a terrace: the left one has borders $\Omega_L^1$, $\Omega_R^1$ and interior $I^1$,
    and the right one has borders $\Omega_L^2$, $\Omega_R^2$ and interior $I^2$.}
\label{fig:over:terrace}
\end{figure}

However, the family of worlds is not rich enough to allow a dynamic programming algorithm. To understand it, consider two worlds that are nested, i.e., one contains the other. To compute the value for the outer world basing on the inner one, we need to control {\em{two}} areas in their difference, which seems difficult given only other worlds as states. Therefore, we introduce the notion of a \emph{terrace}, depicted on Figure~\ref{fig:over:terrace}.
Here, we consider three worlds $\Win$, $\Wout_1$ and $\Wout_2$ with their respective
cheap vertices $v$, $v_1$ and $v_2$ where:
\begin{itemize}
\item $\model(\Ibeg{v_i}) < \model(\Ibeg{v}) < \model(\Iend{v}) < \model(\Iend{v_i})$ for $i=1,2$, and
\item $v_1$ has the rightmost starting event in the model $\model$, among cheap vertices satisfying the previous condition,
 and $v_2$ has the leftmost ending event. 
\end{itemize}
In a terrace, we are interested the optimum way to arrange events
in one of the dotted areas on Figure~\ref{fig:over:terrace}.
Observe that each vertex whose interval is fully contained in one of these areas
belongs to $I := (N_{G+\sol}(v_1) \cap N_{G+\sol}(v_2)) \setminus (N_{G+\sol}(v) \cup \Omega_L^1 \cup \Omega_R^2)$.

We would like to reason how the vertices of $I$ are split between areas $I^1$ and $I^2$.
The crucial observation is that, by the choice of $v_1$ and $v_2$,
each vertex of $\Omega_R^1 \cap \Omega_L^2$ that
has an endpoint in the dotted areas (i.e., does not belong to $\Omega_L^1 \cap \Omega_R^2$)
needs to be expensive and, consequently, there are at most $2\sqrt{k}$ such vertices.
Denote the set of these vertices as $K$, that is, $K = (\Omega_R^1 \cap \Omega_L^2) \setminus (\Omega_L^1 \cap \Omega_R^2)$.

Consider now a connected component $C$ of $G[I]$. We distinguish two cases for the alignment
of $C$ in the interval graph $G+\sol$: either there exist two vertices $v_1,v_2 \in C$
with $N_{G+\sol}(v_1) \cap K \neq N_{G+\sol}(v_2) \cap K$, or all of the vertices
of $C$ have the same neighborhood in $K$ in the graph $G+\sol$.
In the latter case, we argue that the component $C$ chooses its place in the model $\model$
in a greedy manner, and there are only $k^{\Oh(\sqrt{k})}$ ways to arrange such components.
In the first case, observe that such a component $C$ ``occupies'' an endpoint event of a vertex of $K$
and, if two components $C_1$ and $C_2$ occupy the same endpoint, they need to be connected
by an edge of $\sol$. Since $|K|\leq 2\sqrt{k}$, then we have at most $4\sqrt{k}$ endpoints of vertices of $K$. If endpoint $\varepsilon\in \events{K}$ is occupied by $a_\varepsilon$ components, then this means that we need to add at least $\binom{a_\epsilon}{2}$ fill-in edges between these components. Then we have that $|\events{K}|\leq 4\sqrt{k}$ and $\sum_{\varepsilon\in \events{K}}\binom{a_\varepsilon}{2}\leq k$, and a simple application of the Cauchy-Schwarz inequality shows that $\sum_{\varepsilon\in \events{K}} a_\varepsilon = \Oh(k^{3/4})$, i.e., there are only $\Oh(k^{3/4})$ components that fall into the first case. Moreover, the exhaustive application of Module Reduction Rule ensures us
that there are only $k^{\Oh(1)}$ components of $G[I]$ in total.

Hence, we have $k^{\Oh(k^{3/4})}$ guesses which components fall into the first case, $2^{\Oh(k^{3/4})}$ guesses about their alignment to $I^1$ or $I^2$,
and then the remaining components can be processed greedily.
In Section~\ref{sec:left-right} we develop a more careful argument that bounds
the number of components that fall into the first case by $\Oh(\sqrt{k})$, instead of $\Oh(k^{3/4})$ as presented in the argument above.

To sum up, we have $k^{\Oh(\sqrt{k})} n^{\Oh(1)}$ reasonable choices
for a terrace, together with the partition of the set $I$ into dotted areas
$I^1$ and $I^2$.

It turns out that 
the family of all terraces and worlds is almost sufficient to perform a dynamic programming
algorithm.
More precisely, we consider \emph{pairs} of terraces or worlds, together with their
``important areas'', and ask for the best way to arrange events in the \emph{intersection}
of the important areas (see Figure~\ref{fig:over:state}).
As the number of such dynamic programming states is bounded by $k^{\Oh(\sqrt{k})} n^{\Oh(1)}$,
   we obtain a dynamic programming algorithm running within the promised time bound,
   concluding the proof of Theorem~\ref{thm:ic}.

\begin{figure}
\centering
\includegraphics{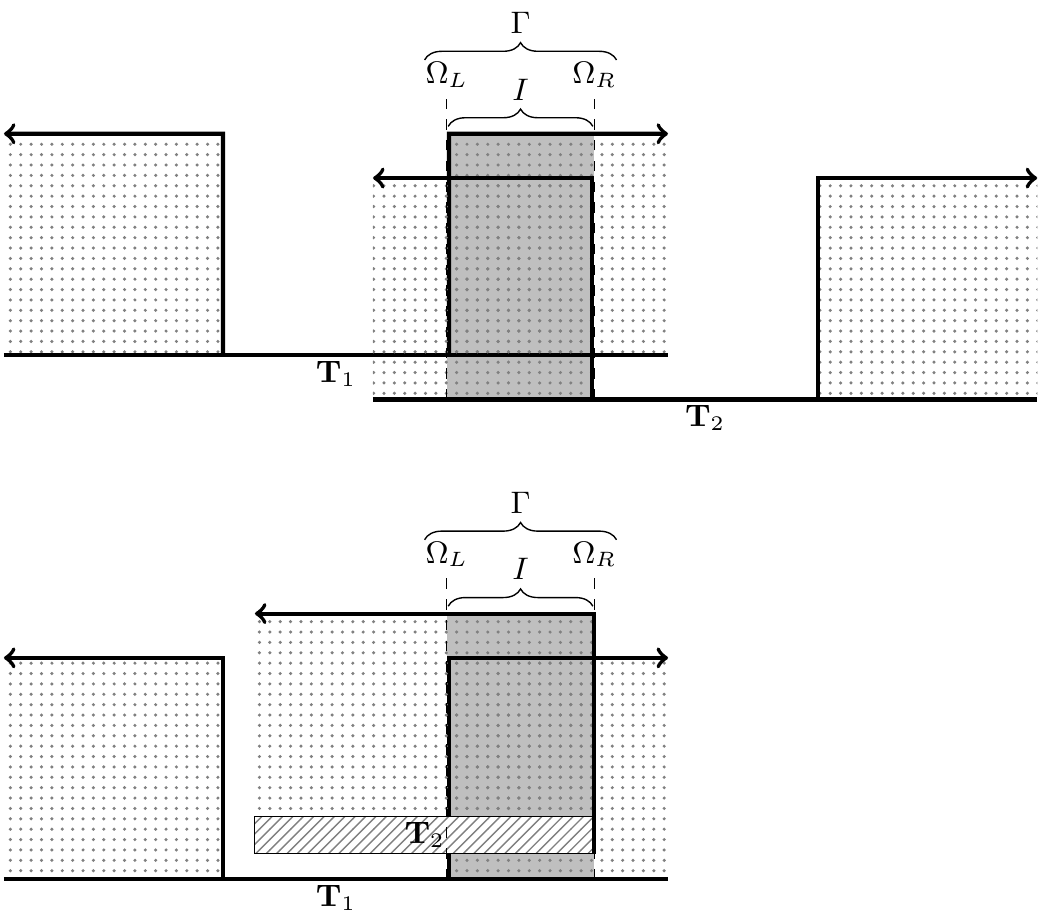}
\caption{A DP state defined by two terraces (above) and a terrace and a world (below).
 The DP state asks for the optimal way to arrange events in the gray area.}
\label{fig:over:state}
\end{figure}

\section{Modules and neighborhood classes}\label{sec:neighbors}
Sections~\ref{sec:neighbors}--\ref{sec:dp} contain a full proof of Theorem~\ref{thm:ic}.
We start with a study of possible neighborhood classes in
a (almost) interval graph $G$, and provide the aforementioned module-based reduction rule
in full detail.

\subsection{Modules and module-based reduction rule}

Recall that $M \subseteq V(G)$ is a \emph{module} in a graph $G$ if
$N(v_1) \setminus M = N(v_2) \setminus M$ for any $v_1,v_2 \in M$.
(Equivalently, for any $v \notin M$ we have either $M \subseteq N(v)$ or $M \cap N(v) = \emptyset$.)
A module $M$ is \emph{connected} if $G[M]$ is connected.
Cao proved the following:
\begin{lemma}[Theorem 4.2 of~\cite{yixin:ic}]\label{lem:module-stays}
If $M$ is a connected module in $G$, and $F$ is a minimum completion of $G$, then $M$ is a module in $G+F$ as well.
\end{lemma}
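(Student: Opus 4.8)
The plan is to prove that a connected module $M$ in $G$ remains a module in $G+F$ for any minimum completion $F$, by a swapping/exchange argument on an interval model of $G+F$. First I would fix a model $\model$ of $G+F$ and consider the set $\events{M}$ of endpoints of intervals of $M$ in $\model$. The intuition is that, since $G[M]$ is connected, the intervals of $M$ cannot be split into two groups separated by a gap in $\model$ without incurring extra fill-in edges; hence the ``span'' of $M$ in $\model$ is an interval of positions, and every vertex $w \notin M$ whose interval intersects this span must (by minimality) be adjacent in $G$ to some vertex of $M$ and therefore, because $M$ is a module in $G$, adjacent in $G$ to \emph{all} of $M$.

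More concretely, the key steps in order would be: (1) Show that in $\model$ the intervals of $M$ form a ``connected region'': if they were not, one could pick the topologically rightmost endpoint of the left block and the leftmost endpoint of the right block and argue that the fill-in edges crossing this gap could be removed (or, since $G[M]$ is connected, that deleting the crossing fill-in edges still leaves $G[M]$ drawable), contradicting minimality of $|F|$. (2) Let $L$ (resp. $R$) be the set of vertices of $V(G) \setminus M$ whose interval lies entirely to the left (resp. right) of the span of $M$, and let $S$ be the remaining vertices of $V(G)\setminus M$, i.e. those whose interval meets the span of $M$. For $w \in S$, some vertex of $M$ is adjacent to $w$ in $G+F$; I want to upgrade this to adjacency in $G$ to all of $M$. (3) The crucial exchange: suppose $w\in S$ and $v_1,v_2 \in M$ with $v_1 w \in E(G)$ but $v_2 w \notin E(G+F)$ (this is the only bad configuration, since a module that is a module in $G$ is ``violated'' in $G+F$ exactly when a new non-edge appears inside $M\times(V\setminus M)$ asymmetrically — wait, $F$ only adds edges, so actually the asymmetry can only come from a vertex $w$ adjacent in $G+F$ to part of $M$ but not all). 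Reformulate: since $F$ adds edges, $M$ fails to be a module in $G+F$ iff there is $w \notin M$ with $M \cap N_{G+F}(w) \neq \emptyset$ and $M \not\subseteq N_{G+F}(w)$. Pick such $w$ with, say, $v_2\in M\setminus N_{G+F}(w)$ and $v_1\in M\cap N_{G+F}(w)$. Now manipulate the model: move the interval of $w$ (or reroute $F$) so that $w$ becomes adjacent to all of $M$ or to none of $M$, whichever is cheaper, using connectivity of $G[M]$ and the fact that $M$'s span is a contiguous region; show this does not increase $|F|$, and in fact strictly decreases it or keeps it equal while reducing the number of violating pairs, contradicting minimality (or minimality plus a secondary minimization).

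The step I expect to be the main obstacle is (3): carefully defining the local surgery on the interval model that ``fixes'' $w$ without creating new fill-in edges elsewhere. The delicate point is that $w$'s interval may genuinely need to overlap part of $M$'s span for reasons coming from $L$ and $R$ (e.g. $w$ has neighbors on both sides), so one cannot simply push $w$ out; instead one must either extend $w$'s interval to cover the whole span of $M$ (adding the missing edges to all of $M$ but possibly deleting other fill-in edges between $w$ and vertices of $S$ that $w$ was only artificially adjacent to) or contract it — and one has to check the edge-count bookkeeping. I would handle this by choosing, among all minimum completions, one that additionally minimizes $\sum_{w\notin M} $ (something like the number of vertices of $M$ in $N_{G+F}(w)$ when this number is neither $0$ nor $|M|$, or a suitable lexicographic refinement), and then showing the surgery strictly improves this secondary measure — this is essentially how Cao's original proof proceeds, and I would follow that outline, with connectivity of $G[M]$ being exactly what guarantees the span is contiguous so that ``all or nothing'' is the only stable configuration.
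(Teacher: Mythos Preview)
The paper does not prove this lemma at all; it is stated with attribution to Cao~\cite{yixin:ic} (Theorem~4.2 there) and used as a black box to justify the Module Reduction Rule. So there is no ``paper's own proof'' to compare against.

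Your sketch is a plausible reconstruction of an exchange argument, but two remarks. First, step~(1) needs no minimality argument: since $G[M]$ is connected, $G[M]$ is also connected in $G+F$, and in \emph{any} interval model a connected set of vertices has intervals whose union is an interval. Second, your step~(3) is where the real content lies, and your outline is a bit vague there. The clean way to organise it (and essentially what Cao does) is not to perform surgery on each offending vertex $w$ separately, but to replace the entire interval representation of $M$ inside its span by a fresh model of $(G+F)[M]$ shrunk to occupy a single point, then observe that every vertex $w\notin M$ whose interval meets the span now meets all of $M$; comparing edge counts shows this does not increase $|F|$ because for each such $w$ with $M\not\subseteq N_G(w)$ (hence $M\cap N_G(w)=\emptyset$ by the module property in $G$) you are trading the fill-in edges $w$ had to \emph{part} of $M$ for fill-in edges to \emph{all} of $M$ --- so one still needs the opposite direction, namely that one can also push $M$ to the ``thinnest'' section of its span, and this is where the careful bookkeeping you anticipate is required. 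Your secondary-minimisation idea would work, but the direct counting argument via quotienting $M$ to a single vertex is cleaner.
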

Motivated by Lemma~\ref{lem:module-stays}, we formulate the following reduction rule.

\begin{redrule}[Module Reduction Rule]
Let $(G,k)$ be an instance of \textsc{Interval Completion}.
Assume there exists $X \subseteq V(G)$ and connected components
$M_1,M_2,\ldots,M_{2k+3}$ of $G \setminus X$ that are modules in $G$ and, moreover,
$N(M_i) = N(M_1)$ for each $1 \leq i \leq 2k+3$.
Then proceed as follows. If for more than $k$ indices $i$ the subgraph $G[M_i]$
is not an interval graph, return that $(G,k)$ is a NO-instance.
Otherwise, pick arbitrary $j$ such that $G[M_j]$ is an interval graph and remove $M_j$
from $G$.
\end{redrule}

Clearly, if $G[M_i]$ is not an interval graph, any completion of $G$ needs to contain
an edge with both endpoints in $M_i$. Hence, the size of a minimum completion of $G$
is lower bounded by the number of $M_i$s such that $G[M_i]$ is not an interval graph. Consequently,
if the Module Reduction Rule concludes that $(G,k)$ is a NO-instance, then the
conclusion is correct.

Moreover, observe that any solution to \textsc{Interval Completion} in $G$
naturally projects to a solution in $G \setminus M_j$ of at most the same size:
if $G+F$ is an interval graph, so is $(G+F) \setminus M_j$.
The following lemma shows that the deletion of $M_j$ in the
Module Reduction Rule actually does not change our task at all.

\begin{lemma}\label{lem:module-rule-safe}
Assume that Module Reduction Rule is applicable to graph $G$, and its application
deletes a module $M_j$.
Then any solution to $(G \setminus M_j,k)$ is a solution to $(G,k)$ as well.
\end{lemma}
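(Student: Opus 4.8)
The plan is to show that, given a solution $F'$ to $(G \setminus M_j, k)$, we can extend it to a solution $F$ to $(G, k)$ without adding any new edges, i.e.\ with $F = F'$. Since $M_j$ is a connected module of $G$ fully adjacent to $X = N(M_j)$ and $N(M_i) = N(M_1)$ for all $i \le 2k+3$, and $G[M_j]$ is itself an interval graph (as guaranteed by the applicability of the rule), the idea is to take an interval model of $(G \setminus M_j) + F'$ and ``slot'' a model of $G[M_j]$ into it, exploiting the fact that the remaining $M_i$'s (of which there are at least $2k+2$) pin down $X = N(M_1)$ to a clique in $(G \setminus M_j)+F'$ and, crucially, at least two of them are left untouched by $F'$, hence are drawn as clean ``mushrooms'' on the plateau of $X$. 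Between two such untouched mushrooms there is free space on the plateau of $X$ into which we can insert the interval model of $G[M_j]$.

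First I would invoke Lemma~\ref{lem:module-stays}: since $M_j$ was a connected module in $G$, in a \emph{minimum} completion $F$ of $G$ the set $M_j$ stays a module of $G+F$; but here I want the converse direction (extending a solution to $G \setminus M_j$), so instead I argue directly with interval models. Fix a minimal solution $F'$ to $(G\setminus M_j, k)$ and a model $\model'$ of $(G\setminus M_j)+F'$. Since $|F'| \le k$, at most $k$ of the modules $M_1,\dots,M_{2k+3}$ (restricted to $G \setminus M_j$, so really $2k+2$ of them) can be touched by $F'$, and at most $k$ of them can induce a non-interval graph. Hence at least two of them, say $M_a$ and $M_b$, are untouched by $F'$ and induce interval graphs. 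By Lemma~\ref{lem:can:modules}-type reasoning (or directly, since $M_a$, $M_b$ are untouched modules fully adjacent to $X$ in the interval graph $(G\setminus M_j)+F'$), the sets $\events{M_a}$ and $\events{M_b}$ each occupy a contiguous block of positions in $\model'$, within the ``plateau'' where all of $X$ is present; in particular $X$ is a clique in $(G\setminus M_j)+F'$ and there is a position $p$ with $\Send{M_a}{\model'} \le p$, $p < \Sbeg{M_b}{\model'}$ (after possibly renaming $M_a$, $M_b$) such that $X \subseteq \Isec_{\model'}(p)$.

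Next I would take an interval model $\model_j$ of $G[M_j]$ (which exists since $G[M_j]$ is an interval graph) and build a model $\model$ of $G+F'$ by inserting, at position $p$ of $\model'$, first all starting events of $\events{X}$-duplicates --- no, more carefully: I insert the entire sequence $\model_j$ of the $2|M_j|$ events of $\events{M_j}$ at position $p+1$, shifting everything after $p$ to the right. Because $X$ is present throughout positions near $p$ in $\model'$, every vertex of $M_j$ gets its interval inside the plateau of $X$ and is thus adjacent to all of $X$ and to nothing else outside $M_j$; within $M_j$ adjacency is governed by $\model_j$ and hence agrees with $G[M_j]$. One checks the two axioms of an interval model hold for $\model$, so $\model$ witnesses that $G + F'$ is an interval graph, i.e.\ $F'$ is a solution to $(G,k)$; since $|F'| \le k$ it is in fact a solution. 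Combined with the trivial ``projection'' direction (restricting a model of $G+F$ to $V(G)\setminus M_j$), this establishes the lemma.

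The main obstacle I anticipate is making the ``insert $\model_j$ into the plateau at position $p$'' step fully rigorous: one must verify that there genuinely is a gap at the chosen position where no endpoint event of $X$ lies between $\Send{M_a}{\model'}$ and $\Sbeg{M_b}{\model'}$ (so that the inserted $M_j$-intervals see \emph{all} of $X$ and none of the intervals of $M_a$, $M_b$, or anything else), and that no vertex outside $X \cup M_a \cup M_b \cup M_j$ has an endpoint in that gap that would create a spurious adjacency --- this is exactly where untouchedness of $M_a$, $M_b$ and the structure of interval models of modules is used, and it is essentially the argument sketched around Figure~\ref{fig:over:module} in the overview. Everything else is bookkeeping.
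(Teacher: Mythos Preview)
Your approach is essentially the paper's: among the remaining $2k+2$ modules find two untouched ones $M_a,M_b$, and splice an interval model of $G[M_j]$ into $\model'$ right after the last event of $M_a$. Two small fixes: the number of touched modules is at most $2k$ (not $k$), which with $2k+2$ modules still leaves two untouched; and you need not require $M_a,M_b$ to induce interval graphs.

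Your anticipated ``obstacle'' is not really one, and the paper dispatches it in one line: you do not need a gap free of foreign endpoints, only that $\Isec_{\model'}(p) = Y$ where $p := \Send{M_a}{\model'}$ and $Y := N(M_1) = N(M_j)$ (note $Y \subseteq X$, not necessarily $Y = X$; your identification $X = N(M_j)$ is a slight slip). For $\supseteq$, every $y \in Y$ is adjacent in $G$ (hence in $G'+F'$, since $M_a,M_b$ are untouched) to both the last vertex of $M_a$ and the first vertex of $M_b$, so its interval covers position $p$. For $\subseteq$, any $w$ in that section is adjacent in $G'+F'$ to the (untouched) last vertex $v$ of $M_a$, hence $wv \in E(G)$ and $w \in N_G(v) \setminus M_a = Y$. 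Then inserting $\model_j$ immediately after position $p$ makes every vertex of $M_j$ adjacent to exactly $\Isec_{\model'}(p) = Y = N_G(M_j)$ outside $M_j$, and to the correct vertices inside $M_j$ by choice of $\model_j$; no further checking is needed.
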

\begin{proof}
Without loss of generality assume that $j = 2k+3$.
Let $G' = G \setminus M_j$, let $F$ be a solution to $(G', k)$ and let $\model$
be an interval model of $G'+F$.
As $|F| \leq k$, there are at least two modules $M_i$ ($1 \leq i \leq 2k+2$) untouched by $F$;
w.l.o.g. assume $M_1$ and $M_2$ are untouched by the solution.
In the following we show that $M_1$ and $M_2$ ``reserve'' a space in the model $\model$
where we can insert $M_j$ without any further cost.

As $M_1$ and $M_2$ are two connected component of $G \setminus X$ and both are untouched by $F$,
all events of $\events{M_1}$ lie before all events of $\events{M_2}$, or
all events of $\events{M_1}$ lie after all events of $\events{M_2}$ in the model $\model$;
w.l.o.g. assume the first case.
Denote $p_1 = \Send{M_1}{\model}$ and $p_2 = \Sbeg{M_2}{\model}$; note that $p_1 < p_2$. 
Let $Y = N(M_1) = N(M_2) \subseteq X$.
As both $M_1$ and $M_2$ are untouched by $F$,
we infer that $\Isec_\model(p_1) = \Isec_\model(p_2-1) = Y$,
and $Y$ is a clique in $G'+F$.

Let $\hat{\model}$ be an interval model of $G[M_j]$.
Consider a model $\model'$ created from $\model$
by inserting all events of $\events{M_j}$ after position $p_1$ in $\model$, in the order
according to model $\hat{\model}$.
As $\Isec_\model(p_1) = N_G(M_j) = Y$, this is an interval model of $G+F$, and the lemma is proven.
\end{proof}

We now describe how to apply the Module Reduction Rule efficiently.
To this end, we recall the module decomposition theorem, introduced by Gallai~\cite{modular-decomp}.

A module decomposition of a graph $G$ is a rooted tree $T$,
where each node $t$ is labeled by a module $M^t \subseteq V(G)$,
and is one of four types:
\begin{description}
\item[leaf] $t$ is a leaf of $T$, and $M^t$ is a singleton;
\item[union] $G[M^t]$ is disconnected, and the children of $t$ are labeled with different connected
components of $G[M^t]$;
\item[join] the complement of $G[M^t]$ is disconnected, and the children of $t$ are labeled
with different connected components of the complement of $G[M^t]$;
\item[prime] neither of the above holds, and the children of $t$ are labeled with different
modules of $G$ that are proper subsets of $M^t$, and are inclusion-wise maximal with this property.
\end{description}
Moreover, we require that the root of $T$ is labeled with the module $V(G)$. We need the following properties of the module decomposition.
\begin{theorem}[see \cite{compute-modular-decomp}]\label{thm:module-decomp}
For a graph $G$, the following holds.
\begin{enumerate}
\item A module decomposition $(T,(M^t)_{t \in V(T)})$ of $G$ exists, is unique, and computable in linear time.
\item At any prime node $t$ of $T$, the labels of the children form a partition of $M^t$. In particular, for each vertex $v$ of $G$ 
there exists exactly one leaf node with label $\{v\}$.
\item Each module $M$ of $G$ is either a label of some node of $T$, or there exists
a \textbf{union} or \textbf{join} node $t$ such that $M$ is a union of labels of some children
of $G$.
\end{enumerate}
\end{theorem}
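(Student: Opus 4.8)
The plan is to build the tree $T$ out of the \emph{strong modules} of $G$ --- those modules that do not \emph{overlap} any other module, where two vertex sets $A,B$ overlap if $A\cap B$, $A\setminus B$ and $B\setminus A$ are all nonempty. The argument proceeds in three combinatorial steps resting on a ``module calculus'', and then defers the linear-time implementation to the cited literature.

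First I would establish the basic closure properties: if $A$ and $B$ are modules with $A\cap B\neq\emptyset$, then $A\cap B$ and $A\cup B$ are modules, and if in addition $A$ and $B$ overlap then $A\setminus B$ and $B\setminus A$ are modules whose mutual adjacency is uniform (either complete or empty between them). These follow by a direct case check using that, for $v\notin A\cup B$, membership in $N(v)$ is constant on $A$ and on $B$. A consequence is that the family of strong modules is \emph{laminar}: any two strong modules are disjoint or nested. Since $V(G)$ and every singleton $\{v\}$ are always strong, adding $V(G)$ as the root gives a rooted tree $T$ whose nodes are exactly the strong modules, labelled by themselves, with leaves the singletons; this yields the existence and uniqueness of part~1 and the ``exactly one leaf labelled $\{v\}$'' part of part~2.

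Next I would show that for a non-singleton strong module $M$, the inclusion-wise maximal strong modules strictly contained in $M$ form a partition $\mathcal{P}=\{P_1,\dots,P_r\}$ of $M$: they cover $M$ because every singleton inside $M$ is a strong proper submodule of $M$, and they are pairwise disjoint because two overlapping ones would contradict strongness while two properly nested ones would contradict maximality. These are the children of the node $M$, giving the partition claim of part~2. Then I would classify the quotient $H=G[M]/\mathcal{P}$ (well-defined since each $P_i$ is a module): $H$ is edgeless at a \textbf{union} node, complete at a \textbf{join} node, and otherwise $H$ is \textbf{prime}. The key lemma is that, for $S\subsetneq\{1,\dots,r\}$ with $|S|\ge 2$, the set $\bigcup_{i\in S}P_i$ is a module of $G$ iff $S$ is a module of $H$; combined with the module calculus and maximality of the $P_i$, any nontrivial module of $H$ would produce a strong module of $G$ strictly between some $P_i$ and $M$ --- a contradiction. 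The same correspondence, used in reverse, gives part~3: a module of $G$ that is not strong overlaps something, so tracing it down to the minimal strong module $M$ containing it, it corresponds to a union of children of $M$, and the overlap forces $M$ to be a \textbf{union} or \textbf{join} node.

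I expect the main obstacle not to be any of these structural claims (each is a short case analysis, with the prime-quotient lemma being the most delicate), but rather the \emph{linear-time} computability asserted in part~1: a naive enumeration of modules is polynomial but nowhere near linear, and the known $O(n+m)$ algorithms (partition refinement followed by a careful recursive assembly of the decomposition tree) are genuinely intricate. Since for this paper we only need the combinatorial structure together with efficient computability, I would prove parts~1--3 as sketched above and simply cite \cite{compute-modular-decomp} (building on Gallai~\cite{modular-decomp}) for the linear running time.
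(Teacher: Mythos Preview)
The paper does not prove this theorem; it is stated as a known result with the citation \cite{compute-modular-decomp} and no argument is given. Your sketch is a reasonable outline of the standard proof via strong modules and the quotient-graph trichotomy (and you are right that the linear-time bound is the nontrivial algorithmic content that must be deferred to the cited reference), but there is nothing in the paper to compare it against.
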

We now show that the Module Reduction Rule can be applied efficiently
using the module decomposition of a graph.
\begin{lemma}\label{lem:module-rule-apply}
There is a polynomial-time algorithm that, given an instance $(G,k)$
finds sets $X,M_1,\ldots,M_{2k+3} \subseteq V(G)$ on which Module Reduction Rule
is applicable, or correctly concludes that no such sets exists.
\end{lemma}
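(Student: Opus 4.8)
The plan is to use the module decomposition tree $(T,(M^t)_{t \in V(T)})$ of $G$, which by Theorem~\ref{thm:module-decomp} is computable in linear time, and reduce the search for applicable sets $X,M_1,\ldots,M_{2k+3}$ to a local inspection at each \textbf{union} node. First I would argue that if Module Reduction Rule is applicable via some $X$ and components $M_1,\ldots,M_{2k+3}$ of $G\setminus X$, then each $M_i$ is a connected module of $G$, so by part~3 of Theorem~\ref{thm:module-decomp} each $M_i$ is either a label of a node of $T$ or a union of labels of children of some \textbf{union} or \textbf{join} node. Since each $M_i$ is connected, it cannot be a union of at least two labels of children of a \textbf{union} node (those are distinct connected components, hence nonadjacent), and being a union of labels of children of a \textbf{join} node contradicts connectivity only if it is a proper such union together with the requirement that the $M_i$ are pairwise nonadjacent components of $G\setminus X$ with a common neighborhood --- I would need to rule this out or fold it into the same treatment. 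The cleanest route: observe that the $M_i$ are pairwise nonadjacent and have identical neighborhoods $N(M_i)=N(M_1)$ disjoint from $\bigcup_i M_i$, hence $\bigcup_i M_i$ together with $N(M_1)$ sits inside a single module-decomposition substructure, and in fact $M_1,\ldots,M_{2k+3}$ must all be distinct connected components of $G[M^t]$ for a common \textbf{union} node $t$ (after possibly grouping children of a \textbf{join} node, but a \textbf{join} node's child-components are pairwise adjacent, so that case is impossible for $\geq 2$ of the $M_i$).

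Granting that reduction, the algorithm is then straightforward: for every \textbf{union} node $t$ of $T$, look at the children $t_1,\ldots,t_q$ with labels $C_1,\ldots,C_q$ (the connected components of $G[M^t]$), group them by their outer neighborhood $N_G(C_j)$ (computable in polynomial time), and check whether some group has at least $2k+3$ members. If some group $\{C_{j_1},\ldots,C_{j_s}\}$ with $s\geq 2k+3$ is found, output $X = V(G)\setminus (C_{j_1}\cup\cdots\cup C_{j_s})$ --- more precisely $X$ can be taken as $N_G(C_{j_1})$ together with everything else, but we only need \emph{some} valid $X$; taking $X=V(G)\setminus\bigcup_{i=1}^{2k+3}C_{j_i}$ makes each $C_{j_i}$ a connected component of $G\setminus X$ and they are modules with equal neighborhoods by construction. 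Conversely, if no \textbf{union} node yields such a group, then by the structural argument above no applicable $X,M_1,\ldots,M_{2k+3}$ exists, so we correctly report inapplicability. All of this is clearly polynomial time: linear time to build $T$, then for each \textbf{union} node a polynomial-time grouping of its children by neighborhood.

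The main obstacle I anticipate is the structural claim that the $M_i$ in an applicable instance must be children of a single \textbf{union} node --- in particular cleanly excluding the possibility that some $M_i$ is a non-trivial union of child-labels of a \textbf{join} node, or that the $M_i$ come from different nodes of $T$. The key leverage is that the $M_i$ are pairwise nonadjacent (as distinct components of $G\setminus X$) yet are modules with a common external neighborhood; two nonadjacent modules with the same neighborhood are ``siblings'' in a very strong sense, and standard modular decomposition theory says their least common ancestor must be a \textbf{union} node with each $M_i$ equal to (a union of components forming) one of its children --- and connectivity of $M_i$ forces it to be exactly one child-component. I would phrase this carefully, possibly via the ``common neighborhood'' characterization of modules, rather than by brute inspection of cases. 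The remaining steps (grouping children by neighborhood, the polynomial running time, the equivalence with the existence of applicable sets) are routine bookkeeping.
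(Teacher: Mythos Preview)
Your proposal is correct and follows essentially the same approach as the paper: compute the modular decomposition, argue that the $M_i$ must be labels of children of a single \textbf{union} node, and then for each \textbf{union} node group its children by outer neighborhood. The one place where the paper is sharper than your sketch is the structural claim you flag as the main obstacle: rather than invoking ``standard modular decomposition theory'' to show all $M_i$ sit under a common \textbf{union} node, the paper observes that $\bigcup_{i=1}^{2k+2} M_i$ and $\bigcup_{i=2}^{2k+3} M_i$ are two overlapping (neither contained in the other, nonempty intersection) modules, so by part~3 of Theorem~\ref{thm:module-decomp} both must be unions of child-labels of the \emph{same} \textbf{union} or \textbf{join} node $t$; pairwise nonadjacency of the $M_i$ then rules out \textbf{join}, and connectivity of each $M_i$ forces it to be a single child-label of $t$.
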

\begin{proof}
We claim that, if the Module Reduction Rule is applicable to sets $X,M_1,\ldots,M_{2k+3}$
then there exists a \textbf{union} node $t$ such that each set $M_i$ is a label of some
child of $t$.

From the last property of Theorem~\ref{thm:module-decomp} we infer that, 
for any two modules $M$, $M'$ of $G$, we have $M \subseteq M'$, $M' \subseteq M$
or $M \cap M' = \emptyset$ unless
there exists a \textbf{union} or \textbf{join} node $t$ in the module decomposition of $G$
such that both $M$ and $M'$ are unions of labels of some children of $t$.

Notice now that a union of arbitrary number of sets $M_i$ is a module in $G$ as well. By applying the conclusion of the last paragraph to the modules $\bigcup_{i=1}^{2k+2} M_i$ and $\bigcup_{i=2}^{2k+3} M_i$, and using the fact that all $M_i$s are connected and pairwise non-adjacent, we infer that $M_i$s must be in fact children of the same \textbf{union} node $t$.

Therefore, to look for an application of the Module Reduction Rule it suffices
to inspect all \textbf{union} nodes of the module decomposition of $G$, and
for each such node $t$, classify the labels of the children of $t$ according to their
neighborhood.
The Module Reduction Rule is applicable if and only if for some \textbf{union} node $t$
at least $2k+3$ children of $t$ have labels with equal neighborhood.
\end{proof}

By Lemma~\ref{lem:module-rule-safe}, an application of the Module Reduction Rule
does not change the answer to the input instance $(G,k)$.
Lemma~\ref{lem:module-rule-apply} shows that the rule can be applied in polynomial time.
Thus, we may apply Module Reduction Rule exhaustively and henceforth we assume, sometimes implicitly, that
it is no longer applicable.

\subsection{Neighborhood classes}\label{ssec:nei-classes}

We now provide some auxiliary structural lemmas about neighborhood classes
in the input graph $G$.

For a graph $G$ and a set $A \subseteq V(G)$, we say
that two vertices $v_1,v_2 \notin A$ have \emph{the same neighborhood with respect to $A$}
if $N_G(v_1) \cap A = N_G(v_2) \cap A$.
Clearly, this is an equivalence relation on $V(G) \setminus A$; each equivalence class
of this relation is called \emph{a neighborhood class w.r.t. $A$}.

The motivation for the results in this section is the following.
In many places the algorithm makes
some branching, choosing some vertex or a connected subgraph.
In a straightforward analysis, each such branching will have around $n$ options.
With a branching of depth $\sqrt{k}$, and without a polynomial kernel for \textsc{Interval Completion},
this would lead to undesirable $n^{\sqrt{k}}$ factor in the running time.
The structural results developed here limit the number of options in such branchings to polynomial in $k$;
in some sense they are ``local'' kernelization results.

\begin{lemma}\label{lem:A-nei}
Assume $G$ is a graph with completion set $F$, and let $A \subseteq V(G)$.
Then in $G$ there are at most $(2|A|+1)^2 + |F|$ neighborhood classes w.r.t. $A$.
In particular, if $(G,k)$ is a YES-instance of \textsc{Interval Completion},
then there are at most $(2|A|+1)^2 + k$ neighborhood classes w.r.t. $A$.
\end{lemma}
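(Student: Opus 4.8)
The plan is to use the interval model of $G+F$ to control how neighborhoods with respect to $A$ can differ, distinguishing vertices according to whether they are ``responsible'' for an endpoint of $A$ in the model or whether their neighborhood is forced by their position relative to $A$.

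First I would fix an interval model $\model$ of $G+F$. For a vertex $v \notin A$, its neighborhood in $G+F$ with respect to $A$ is determined by the set of vertices of $A$ whose interval intersects $v$'s interval, which in turn is determined by the relative position of the events $\Ibeg{v},\Iend{v}$ among the events of $\events{A}$. The events of $\events{A}$ partition the set of positions into at most $2|A|+1$ \emph{slots} (the gaps before, between, and after the $2|A|$ events). So in $G+F$, the neighborhood of $v$ with respect to $A$ depends only on the pair (slot of $\Ibeg{v}$, slot of $\Iend{v}$), giving at most $(2|A|+1)^2$ neighborhood classes w.r.t. $A$ \emph{in the graph $G+F$}.

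The remaining task is to pass from $G+F$ to $G$, paying only an additive $|F|$. Here the key observation is that for any vertex $v \notin A$ untouched by $F$, we have $N_G(v)\cap A = N_{G+F}(v)\cap A$, since $F$ adds no edge incident to $v$. Hence untouched vertices are partitioned into neighborhood classes w.r.t. $A$ \emph{in $G$} by the same $(2|A|+1)^2$ classes as in $G+F$ (two untouched vertices in the same $G+F$-class have the same $G$-neighborhood in $A$). Each touched vertex — of which there are at most $|F|$, counting each endpoint of each edge of $F$ but we only need a vertex count, so at most $2|F|$; to get the cleaner bound one notes that a vertex contributing a genuinely ``new'' neighborhood class must be touched, and charges it to an incident edge of $F$, giving at most $|F|$ extra classes — may have a $G$-neighborhood in $A$ different from all untouched classes, contributing at most $|F|$ additional classes. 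Summing, the number of neighborhood classes w.r.t. $A$ in $G$ is at most $(2|A|+1)^2 + |F|$. The ``in particular'' clause is then immediate by taking $F$ to be a solution of size at most $k$.

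I expect the only subtlety — rather than a genuine obstacle — to be the bookkeeping in the charging argument: one must be careful to argue that a touched vertex only creates at most one new class, and to charge it to a single incident fill edge in a way that no fill edge is charged twice, so that the additive term is exactly $|F|$ rather than $2|F|$. One clean way is: among all vertices in a given $G+F$-neighborhood-class whose $G$-neighborhood in $A$ is nonstandard (i.e.\ differs from the common $G$-neighborhood of the untouched vertices of that class, if any), pick one; it must be touched, so assign to it an incident fill edge, and observe distinct such vertices get distinct fill edges since they lie in... — actually the simplest fix is to just bound the number of touched vertices relevant here by $|F|$ via a direct injection from ``extra'' classes to edges of $F$, choosing one representative vertex per extra class and one incident $F$-edge per representative, all distinct. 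This is routine, so I would not belabor it.
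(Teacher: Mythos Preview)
Your approach is the same as the paper's, and the slot-counting for $(2|A|+1)^2$ is exactly right. The only issue is the charging step for the additive $|F|$: your proposed injection from ``extra'' classes to edges of $F$ does not obviously give distinct edges (two representatives $v_1,v_2$ in different extra classes could both be touched only by the single edge $v_1v_2\in F$, neither endpoint in $A$), so as written it is not routine.

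The paper sidesteps this entirely with a sharper dichotomy than touched/untouched: let $X$ be the set of vertices $v\notin A$ having some edge $va\in F$ with $a\in A$. Then $|X|\le |F|$ trivially (each such edge contributes at most one vertex outside $A$), and for every $v\notin A\cup X$ we have $N_G(v)\cap A = N_{G+F}(v)\cap A$, so the slot argument bounds the classes on $V(G)\setminus(A\cup X)$ by $(2|A|+1)^2$. The vertices of $X$ add at most $|X|\le |F|$ further classes. This is the one-line fix to your argument: replace ``touched by $F$'' with ``has an $F$-edge into $A$'', and the charging problem disappears.
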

\begin{proof}
Let $X \subseteq V(G) \setminus A$ be the set of vertices such that there
exists some fill-in edge $xa \in F$ with $x \in X$ and $a \in A$. Clearly $|X| \leq |F|$.
To prove the lemma it suffices to show that there are at most $(2|A|+1)^2$ neighborhood classes
w.r.t. $A$ in the graph $G \setminus X$.

Let $\sigma$ be an interval model of the graph $G+F$.
Pick any $v \in V(G) \setminus (A \cup X)$. 
As $v \notin X$, the edges between $v$ and $A$ in $G$ are defined by the interval model $\sigma$,
that is, $va \notin E(G)$ for $a \in A$ iff $\sigma(\Iend{a}) < \sigma(\Ibeg{v})$ or $\sigma(\Ibeg{a}) > \sigma(\Iend{v})$.
Consider the model $\sigma$ restricted to $\events{A}$, and note that
there are $|\events{A}|+1 = 2|A|+1$ ways to insert the event $\Ibeg{v}$ into this model, and at most this number of ways to insert $\Iend{v}$.
Consequently, there at most $(2|A|+1)^2$ possible neighborhood classes w.r.t. $A$ for vertices $v \in V(G) \setminus (A \cup X)$
and the lemma follows.
\end{proof}

\begin{lemma}\label{lem:A-r-nei}
Assume $(G,k)$ is a YES-instance of \textsc{Interval Completion}, and the Module Reduction Rule
is not applicable to $(G,k)$. Let $r$ be a positive integer and let $A \subseteq V(G)$.
Then the number of connected components $C$ of $G \setminus A$ for which there exists $v_C \in C$ with
$|A \setminus N_G(v_C)| \leq r$ is at most $12kr + 4k + 18r + 4$.
\end{lemma}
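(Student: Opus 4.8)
\textbf{Proof plan for Lemma~\ref{lem:A-r-nei}.}
The plan is to combine the counting idea from Lemma~\ref{lem:A-nei} with the Module Reduction Rule, in order to bound separately the number of ``heavy'' components (those touched by the completion, or those forced to occupy endpoint events) and the number of ``light'' components (those that are modules or behave like modules). First I would fix a completion $F$ with $|F| \le k$ and an interval model $\sigma$ of $G+F$, and call a component $C$ of $G \setminus A$ \emph{relevant} if it contains a vertex $v_C$ with $|A \setminus N_G(v_C)| \le r$. The key intuition is that such a $v_C$, and hence $C$, must be ``pinned'' close to a large chunk of $A$ in the model $\sigma$, so relevant components cannot be scattered arbitrarily.

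The main step is a case distinction on a relevant component $C$. If $C$ is touched by $F$, there are at most $2k$ such components. Otherwise $C$ is untouched, so the adjacency between $C$ and $A$ in $G$ coincides with the adjacency in $G+F$, and $C$ occupies a contiguous block of positions in $\sigma$. Now I would further split: either $C$ is a module in $G$ with $N_G(C) \subseteq A$, or it is not. In the non-module case, as in the motivation paragraph after the Module Reduction Rule, there exist $v_1, v_2 \in C$ with $N_G(v_1)\setminus C \ne N_G(v_2)\setminus C$; since $C$ is untouched and drawn as a contiguous block, the block of $C$ must contain an endpoint event of every vertex of $(N_G(v_1) \triangle N_G(v_2))\setminus C$, and at least one such vertex lies outside $A$ (otherwise $C$ would still be a module relative to $A$, which we can fold into the module case) or inside $A$; in either case each such ``witness'' endpoint event is charged to at most one component, giving $\Oh(k + |A|)$ non-module components. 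The remaining relevant components are modules $M$ of $G$ with $N_G(M) \subseteq A$; the condition $|A \setminus N_G(v_M)| \le r$ forces $N_G(M)$ to be one of the subsets of $A$ obtained by deleting at most $r$ elements from a neighborhood class, and — crucially — by Lemma~\ref{lem:A-nei} the set $A$ itself has only $\Oh(|A|^2 + k)$ neighborhood classes, so there are only $\Oh((|A|^2+k)\binom{|A|}{\le r})$\,\dots\ this is too crude. Instead, I would bound $|A| = \Oh(r)$ for the purposes of this estimate: if $|A| > $ (some constant times $r$) then for \emph{every} vertex $v$ with $|A\setminus N_G(v)|\le r$ we have $N_G(v)\cap A$ determined up to $\binom{|A|}{r}$ choices, which is still large — so the right move is rather to observe that all module components $M$ with the \emph{same} neighborhood $N_G(M)$ are bounded in number by $2k+2$ by the exhausted Module Reduction Rule, and that the number of distinct possible neighborhoods $N_G(M)$ of relevant module components is $\Oh(|A| + r)$ because each such neighborhood, having co-size at most $r$ inside $A$, together with the contiguous block placement in $\sigma$, is determined by the position of the block relative to $\events{A}$ restricted to $\sigma$, of which there are $\Oh(|A|)$ with only $r$-bounded perturbations.

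Concretely, the counting I expect to assemble is: at most $2k$ touched relevant components; at most $2k + 2|A|$ untouched non-module relevant components (charging to witness endpoints, exactly as in the Module Reduction Rule motivation, noting $|A| = \Oh(r)$ may be assumed since otherwise the lemma's bound is trivially large); and for module relevant components, at most $(2k+2)$ per distinct neighborhood and $\Oh(r)$ distinct neighborhoods arising from an $\Oh(1)$ number of ``slots'' in $\sigma$ each perturbed within an $r$-window, giving $\Oh(kr)$ of them. Summing yields a bound of the shape $ak r + bk + cr + d$; matching constants to the stated $12kr + 4k + 18r + 4$ is then a routine bookkeeping exercise.

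\textbf{Main obstacle.} The delicate point is the module case: I need to argue that, despite $A$ possibly being large, a relevant \emph{module} component $C$ with $N_G(C)\subseteq A$ and $|A\setminus N_G(C)|\le r$ can realize only $\Oh(r)$ distinct neighborhoods — not $\binom{|A|}{r}$ of them. This is where the interval model $\sigma$ and untouchedness are essential: an untouched module component sits in a contiguous block of $\sigma$, so $N_G(C) = \Isec_\sigma(p)$ for $p$ anywhere inside that block, hence $N_G(C)$ is a \emph{section}, and sections to the left/right of a fixed position form a linearly ordered chain; the constraint $|A \setminus N_G(C)| \le r$ confines this section to an interval of $\Oh(r)$ consecutive sections around the ``$A$-heavy'' region of $\sigma$. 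Getting this localization argument exactly right, and then correctly invoking the non-applicability of the Module Reduction Rule (which caps the multiplicity per neighborhood at $2k+2$) to finish, is the crux; everything else is the same endpoint-charging bookkeeping already used to justify the Module Reduction Rule.
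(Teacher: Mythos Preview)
Your overall strategy (separate touched / untouched-non-module / untouched-module components, then invoke the Module Reduction Rule for the last class) is the right shape, but there is a genuine gap that breaks the argument.

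The problem is your handling of $|A|$. You write that ``$|A| = \Oh(r)$ may be assumed since otherwise the lemma's bound is trivially large.'' This is false: the bound $12kr + 4k + 18r + 4$ does not depend on $|A|$ at all, and there is no a priori relation between $|A|$ and the number of relevant components. So you cannot assume $|A|=\Oh(r)$. Once that assumption is gone, your non-module charging yields only $2|A|$ (each untouched non-module component occupies an endpoint of some vertex of $A$), which is useless when $A$ is large. Similarly, your module-case argument that there are ``$\Oh(r)$ distinct neighborhoods'' relies on the same unwarranted smallness of $A$: without it, the number of sections of $\model$ restricted to $A$ with co-size at most $r$ can still be $\Theta(|A|)$, not $\Theta(r)$.

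The missing idea, which the paper supplies, is to first \emph{manufacture} a small set of relevant $A$-events. Pick any two untouched relevant components $C_1,C_2$; their witness vertices $v_{C_1},v_{C_2}$ give $K := A \cap N_G(v_{C_1}) \cap N_G(v_{C_2})$ with $|A\setminus K|\le 2r$, and since $C_1,C_2$ are untouched and lie on opposite sides of each other, $K$ is a \emph{clique} in $G+F$. Now all starting events of $K$ precede all ending events of $K$; take only the last $r+1$ starts and the first $r+1$ ends of $K$, together with all events of $B:=A\setminus K$. This is a set $\mathcal{E}$ of at most $6r+2$ events, and every untouched relevant component is squeezed between the first and last event of $\mathcal{E}$ (because its witness sees all but $r$ vertices of $A$, hence all but at most $r$ vertices of $K$). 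At most $|\mathcal{E}|\le 6r+2$ components straddle one of these events; every remaining component sits entirely in one of the at most $6r+1$ gaps between consecutive events of $\mathcal{E}$, is therefore a module with a fixed neighborhood determined by the gap, and the Module Reduction Rule caps each gap at $2k+2$. This is what collapses the dependence on $|A|$ down to $r$; your proposal never isolates such a small event set and so cannot close the argument.
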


\begin{figure}
\centering
\includegraphics{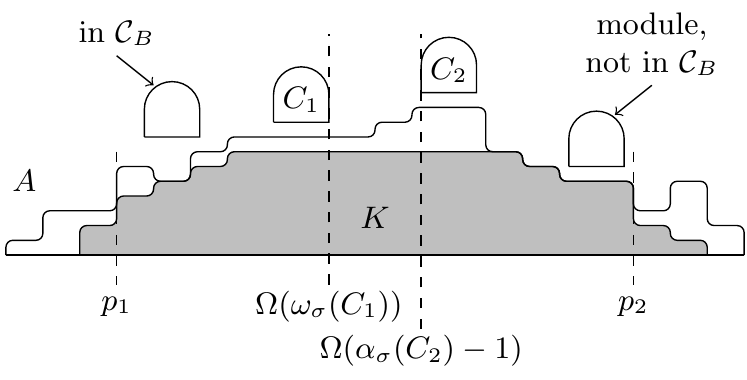}
\caption{Notation used in the proof of Lemma~\ref{lem:A-r-nei}.}
\label{fig:A-r-nei}
\end{figure}

\begin{proof}
Let $F$ be a solution to $(G,k)$, and let $\model$ be a model of $G+F$.
Let $\mathcal{C}$ be the set of all connected components $C$ of $G \setminus A$ that are untouched by $F$
and for which there exists $v_C \in C$ with $|A \setminus N_G(v_C)| \leq r$. We aim to show that $|\mathcal{C}| \leq (6r+1)(2k+2) + 6r+2$, which will settle the claim since at most $2k$ components of $G\setminus A$ are touched by $F$.

If $|\mathcal{C}| \leq 1$ then there is nothing to show, so assume otherwise.
Let $C_1,C_2 \in \mathcal{C}$. As both $C_1$ and $C_2$ are untouched, and there are no edges between the vertices of $C_1$ and the vertices of $C_2$,
in the model $\model$ all events of $\events{C_1}$ lie before or after all events of $\events{C_2}$; without loss of generality assume that $\Send{C_1}{\model} < \Sbeg{C_2}{\model}$.
Denote $K = A \cap N_G(v_{C_1}) \cap N_G(v_{C_2})$.
Note that $|K| \geq |A| - 2r$ and $K \subseteq \Isec(\Send{C_1}{\model})$, $K \subseteq \Isec(\Sbeg{C_2}{\model}-1)$. Consequently, $K$ is a clique in $G+F$.
We refer to Figure~\ref{fig:A-r-nei} for an illustration of the notation used in this proof.

Denote $B = A \setminus K$, we have $|B| \leq 2r$.
Let $\eventssymb \subseteq \events{K}$ be the set of  the last
$r+1$ starting events of $\events{K}$ and the first $r+1$ ending events of $\events{K}$ in the model
$\model$ (or $\eventssymb=\events{K}$ in case $|K|\leq r+1$).
Recall that $K$ is a clique in $G+F$ and $K \subseteq \Isec(\Send{C_1}{\model})$,
so all starting events of $\events{K}$ appear before position $\Send{C_1}{\model}$,
and all ending events of $\events{K}$ appear after this position.

Let $\mathcal{C}_B$ be the set of these connected components $C\in \mathcal{C}$ for which there exists $\event \in \eventssymb \cup \events{B}$ with
\begin{equation}\label{eq:eats-endpoint}
\Sbeg{C}{\model} < \model(\event) < \Send{C}{\model}.
\end{equation}
As the components of $\mathcal{C}$
are untouched by $F$ and pairwise non-adjacent in $G$, no two components of $\mathcal{C}$ can satisfy \eqref{eq:eats-endpoint}
with the same event $\event$. Consequently,
$$|\mathcal{C}_B| \leq |\eventssymb \cup \events{B}| \leq 6r+2.$$

Denote by $p_1$ and $p_2$ the positions of the first and last event of $\eventssymb$, respectively.
By the definition of $\eventssymb$, all events of $\events{A}$ that lie between $p_1$ and $p_2$
belong to $\eventssymb \cup \events{B}$.

Let $C \in \mathcal{C} \setminus \mathcal{C}_B$.
As $|A \setminus N_G(v_C)| \leq r$, 
in the model $\model$ all events of $\events{C}$ lie between the first and the last
event of $\eventssymb$. Consequently, by the definition of $\mathcal{C}_B$, $C$
is a module in $G+F$; as $C$ is untouched by $F$, $C$ is a module in $G$ as well.
Moreover, if for two components $C,C' \in \mathcal{C} \setminus \mathcal{C}_B$ 
the events of $\events{C}$ and $\events{C'}$ lie between the same two events of
$\eventssymb \cup \events{B}$, then $N_G(C) = N_G(C')$.
Therefore, if more than $2k+2$ such components lie between two consecutive events
of $\eventssymb \cup \events{B}$, the Module Reduction Rule would be applicable.
Consequently $|\mathcal{C} \setminus \mathcal{C}_B| \leq (6r+1)(2k+2)$,
and the lemma is proven.
\end{proof}

\section{Listing potential maximal cliques and sections}\label{sec:pmc}
In this section we prove the following result.

\begin{theorem}\label{thm:sections}
Given an \textsc{Interval Completion} instance $(G,k)$, where the Module Reduction Rule is not applicable, one
can in $\Ohstar(k^{\Oh(\sqrt{k})})$ time enumerate
a family $\sectionset$ of $k^{\Oh(\sqrt{k})} n^{17}$ subsets
of $V(G)$, such that for any minimal solution $F$ to $(G,k)$,
in the canonical model $\model$ of $G+F$
all sections of $\model$ belong to $\sectionset$.
\end{theorem}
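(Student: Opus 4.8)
The plan is to reduce the problem to enumerating potential maximal cliques of $G+F$, and then recover arbitrary sections from pairs of \emph{consecutive} maximal cliques. So first I would prove the formal counterpart of Theorem~\ref{thm:over:pmcs}: in $\Ohstar(k^{\Oh(\sqrt{k})})$ time one can enumerate a family $\pmcset$ of $k^{\Oh(\sqrt{k})} n^8$ subsets of $V(G)$ such that for every minimal solution $F$ every maximal clique of $G+F$ belongs to $\pmcset$. Granting this, Theorem~\ref{thm:sections} follows by the argument already given in the overview for Theorem~\ref{thm:over:sections}: put $\emptyset$, $\{\Uroot\}$, $\{\Uroot,\Lroot\}$, $\{\Uroot,\Rroot\}$ into $\sectionset$, and for any other section $\Isec_\model(p)$ of the canonical model $\model$ of $G+F$ take the largest $p_1\le p$ and the smallest $p_2\ge p$ with $\Isec_\model(p_1),\Isec_\model(p_2)$ maximal cliques. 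Between positions $p_1$ and $p_2$ the model must first place the ending events of $\Isec_\model(p_1)\setminus\Isec_\model(p_2)$ and then the starting events of $\Isec_\model(p_2)\setminus\Isec_\model(p_1)$ — any other arrangement would create an intermediate section that is a maximal clique, contradicting the choice of $p_1$ or $p_2$ — and Lemma~\ref{lem:can:endpoints} pins down the order of those events in the canonical model. Hence $\Isec_\model(p)$ is determined by the pair $(\Isec_\model(p_1),\Isec_\model(p_2))\in\pmcset\times\pmcset$ together with the offset $p-p_1$, for which there are $n+1$ choices; this yields $|\sectionset|=k^{\Oh(\sqrt{k})}n^{17}$, computable within the time budget.

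For the potential-maximal-clique enumeration I would fix a minimal solution $F$, a model $\model$ of $G+F$, and a maximal clique $\Isec=\Isec_\model(p)$ with $\model(\Ibeg{v_2})=p$, $\model(\Iend{v_1})=p+1$; the two ``obvious'' cliques $\{\Uroot,\Lroot\}$ and $\{\Uroot,\Rroot\}$ are handled separately, so we may assume $v_1,v_2\notin\{\Uroot,\Lroot,\Rroot\}$. I would then define the eight marker vertices $v_1,v_2,c_1,c_2,f_1,f_2,g_1,g_2$ exactly as on Figure~\ref{fig:over:eight}: $c_i$ the cheap and $f_i$ the untouched vertices whose intervals end/start closest to $\Isec$, and $g_i$ the untouched vertices of $N_G[f_i]$ reaching furthest past $\Isec$. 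Enumerating this $8$-tuple costs a factor $n^8$. With these vertices fixed, the core is a branching procedure that maintains a candidate clique $K$ together with candidate sets $X_1,X_2$ (for $X_1^\circ,X_2^\circ$) and $F_1,F_2$ (for $F_1^\circ,F_2^\circ$, the fill-in neighbours of $c_1,c_2$); at each node it outputs the current $K$ and branches into $k^{\Oh(1)}$ subcases, each inserting one additional vertex (or, in some cases, only its neighbourhood) into one of the four sets and recomputing $K$ via the identity of Lemma~\ref{lem:over:char}. Since all vertices of $X_1^\circ\cup X_2^\circ$ are expensive (so $|X_1^\circ|+|X_2^\circ|\le 2\sqrt{k}+1$) and $c_1,c_2$ are cheap (so $|F_1^\circ|,|F_2^\circ|\le\sqrt{k}$), the recursion has depth $\Oh(\sqrt{k})$, the branching tree has $k^{\Oh(\sqrt{k})}$ nodes, and altogether this enumerates $k^{\Oh(\sqrt{k})}n^8$ candidates; Lemma~\ref{lem:over:char} guarantees that once $X_i=X_i^\circ$ and $F_i=F_i^\circ$ the corresponding node outputs exactly $\Isec$.

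The main obstacle — and what I expect to occupy most of Section~\ref{sec:pmc} — is proving that each branching step has only $k^{\Oh(1)}$ reasonable choices, rather than the naive $n^{\Oh(1)}$ (the latter would already re-derive the $n^{\Oh(\sqrt{k})}$ bound implicit in Lemma~\ref{lem:over:char}, but nothing better). Here I would exploit the marker vertices $f_i,g_i$ together with the ``local kernelization'' lemmata of Section~\ref{sec:neighbors}: a careful analysis of the connected components of $G$ minus $X_1\cup X_2\cup K\cup\{v_1,v_2,c_1,c_2,f_1,f_2,g_1,g_2\}$ — distinguishing components that are modules, components touched by $F$, and components that must ``occupy'' an endpoint event of some guessed vertex — shows that only $k^{\Oh(1)}$ vertices can lie between $f_1$ and $f_2$ in $\model$, hence only $k^{\Oh(1)}$ candidates to add to some $X_i$, and likewise only $k^{\Oh(1)}$ candidates for the at most $\sqrt{k}$ fill-in edges incident to $c_i$. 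The key leverage at each such counting step is that the Module Reduction Rule is not applicable, which via Lemmata~\ref{lem:A-nei} and~\ref{lem:A-r-nei} caps the number of relevant components and neighbourhood classes by a polynomial in $k$; carrying this out over all the cases is the delicate part. The remaining running-time bookkeeping — that $\pmcset$ and then $\sectionset$ can indeed be produced within $\Ohstar(k^{\Oh(\sqrt{k})})$ — is routine.
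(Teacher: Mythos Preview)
Your proposal is correct and follows essentially the same approach as the paper: the reduction from sections to pairs of consecutive maximal cliques plus an offset (yielding the $n^{17}$ factor) is exactly the paper's proof of Theorem~\ref{thm:sections}, and your sketch of the maximal-clique enumeration---guessing the eight marker vertices, maintaining $(X_1,X_2,F_1,F_2,K)$, branching with $k^{\Oh(1)}$ options to depth $\Oh(\sqrt{k})$, and relying on Lemmata~\ref{lem:A-nei} and~\ref{lem:A-r-nei} together with the Module Reduction Rule to bound the number of candidates per step---matches the structure of Section~\ref{sec:pmc}. The part you flag as ``delicate'' is indeed where the work lies: the paper handles it by a case split on the components of $G\setminus(X_1\cup X_2\cup K)$ according to whether they meet $N_G(X_1)\cup F_1$ and/or $N_G(X_2)\cup F_2$, treating the three non-trivial cases (both sides, neither side, one side) separately.
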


As an intermediate step, we provide an enumeration algorithm for potential maximal cliques
in the \textsc{Interval Completion} problem, showing the following.

\begin{theorem}\label{thm:pmcs}
Given an \textsc{Interval Completion} instance $(G,k)$, where the Module Reduction Rule is not applicable, one
can in $\Ohstar(k^{\Oh(\sqrt{k})})$ time enumerate
a family $\pmcset$ of $k^{\Oh(\sqrt{k})} n^{8}$ subsets
of $V(G)$, such that for any minimal solution $F$ to $(G,k)$,
all maximal cliques of $G+F$ belong to $\pmcset$.
\end{theorem}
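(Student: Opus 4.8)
The plan is to prove Theorem~\ref{thm:pmcs}, from which Theorem~\ref{thm:sections} follows by the argument already sketched in the proof of Theorem~\ref{thm:over:sections}. Fix a minimal solution $F$, a model $\model$ of $G+F$, and a maximal clique $\Isec = \Isec_\model(p)$, witnessed by vertices $v_1,v_2$ with $\model(\Iend{v_1}) = p+1$ and $\model(\Ibeg{v_2}) = p$. We may ignore the two trivial maximal cliques $\{\Uroot,\Lroot\}$ and $\{\Uroot,\Rroot\}$, so $v_1,v_2 \notin \{\Uroot,\Lroot,\Rroot\}$. First I would fix the eight distinguished vertices $v_1,v_2,c_1,c_2,f_1,f_2,g_1,g_2$ exactly as in the overview; guessing them costs the $n^8$ factor. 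I would then record the basic facts: $c_1,c_2$ are cheap, so the fill-in sets $F_i^\circ = \{w : wc_i \in F\}$ have size at most $\sqrt{k}$; all vertices of $X_i^\circ$ are expensive, so $|X_1^\circ|+|X_2^\circ| \le 2\sqrt{k}+1$; and Lemma~\ref{lem:over:char} expresses $\Isec$ purely in terms of $v_1,v_2,c_1,c_2,F_1^\circ,F_2^\circ,X_1^\circ,X_2^\circ$. Thus it suffices, for a fixed choice of the eight vertices, to produce a family of $k^{\Oh(\sqrt{k})}$ candidate 4-tuples $(X_1,X_2,F_1,F_2)$ that is guaranteed to contain the correct $(X_1^\circ,X_2^\circ,F_1^\circ,F_2^\circ)$ (or, as the footnote warns, enough information about their neighborhoods to reconstruct $\Isec$ via Lemma~\ref{lem:over:char}).

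The core of the proof is a branching/recursion that builds the four candidate sets vertex by vertex. The recursion maintains the current guesses $X_1,X_2,F_1,F_2$ together with the running guess $K$ for $\Isec$ computed from them through Lemma~\ref{lem:over:char}'s formula. At each node we emit the current $K$ into $\pmcset$, and we branch over $k^{\Oh(1)}$ ways to add one more vertex to one of the four sets. Since the target sets have total size $\Oh(\sqrt{k})$, the recursion depth is $\Oh(\sqrt{k})$ and the total number of emitted candidates is $k^{\Oh(\sqrt{k})}$; multiplied by the $n^8$ choice of the eight vertices this gives the claimed $k^{\Oh(\sqrt{k})} n^8$ bound. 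To make the branching factor polynomial in $k$, I would analyze the connected components of $G \setminus (X_1 \cup X_2 \cup K \cup \{v_1,v_2,c_1,c_2,f_1,f_2,g_1,g_2\})$: the vertices of $X_i^\circ$ live between $f_i$ and $v_i$ in $\model$, so any not-yet-guessed element of $X_i^\circ$ lies in a component touching the region between $f_1$ and $f_2$, and the roles of $f_i$ and $g_i$ pin down the neighborhood structure of such components tightly enough. Concretely, Lemma~\ref{lem:A-nei} bounds the number of neighborhood classes of these components with respect to the already-guessed set by a polynomial in $k$, and Lemma~\ref{lem:A-r-nei}, together with the fact that the Module Reduction Rule is not applicable, bounds the number of components with ``large'' neighborhood in the guessed set --- these are exactly the ones a further $X_i^\circ$-vertex could sit in. Picking a representative from each such class and each such component gives $k^{\Oh(1)}$ branches; for the remaining ``generic'' components one argues that no element of $X_i^\circ$ can lie there, so the recursion may stop.

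The step I expect to be the main obstacle is precisely proving that each recursion node has only $k^{\Oh(1)}$ reasonable children, i.e., that after guessing the eight vertices and the partial sets, the ``unexplained'' part of $\Isec$ --- the vertices still missing from $X_1,X_2,F_1,F_2$ --- is confined to a polynomially bounded collection of candidate components/neighborhood-classes. This requires a careful case analysis of how a component $C$ of $G$ minus the guessed set can be aligned in $\model$ relative to the line at position $p$: either $C$ is touched by $F$ (only $\Oh(k)$ such), or $C$ is ``anchored'' by occupying an endpoint event of $f_1$, $f_2$, $g_1$ or $g_2$ or of an already-guessed expensive vertex, or $C$ is a module with a fixed neighborhood in the guessed set --- and in the last case non-applicability of the Module Reduction Rule caps the count. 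Handling the edge cases $v_1 = v_2$, $c_i = f_i$, $f_1 = f_2$, and the situation where we can only recover a neighborhood rather than the vertex itself, and threading the minimality of $F$ through (as in the shifting argument of Lemma~\ref{lem:over:char}) to justify that the branching really does capture every possibility, is where the bulk of the technical work lies. Once this ``bounded branching factor'' lemma is in place, the running time and output-size bounds follow by the depth-$\Oh(\sqrt{k})$ accounting, and Theorem~\ref{thm:pmcs} --- hence Theorem~\ref{thm:sections} --- is proved.
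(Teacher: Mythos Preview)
Your plan is correct and matches the paper's approach: guess the eight vertices for the $n^8$ factor, invoke the characterization lemma, and run a branching procedure of depth $\Oh(\sqrt{k})$ and fan-out $k^{\Oh(1)}$ that incrementally grows $X_1,X_2,F_1,F_2,K$, using Lemmata~\ref{lem:A-nei} and~\ref{lem:A-r-nei} together with inapplicability of the Module Reduction Rule to control the branching degree. The paper organizes the ``main obstacle'' a bit differently than your touched/anchored/module trichotomy --- it partitions the components of $G\setminus(X_1\cup X_2\cup K)$ into four classes $\Cfam_{ab}$ according to whether they meet $N_G(X_1)\cup F_1$ and/or $N_G(X_2)\cup F_2$ --- and one nontrivial ingredient your sketch does not anticipate is that for the ``both sides'' class $\Cfam_{11}$ the bounded branching comes from a $3k$ upper bound on the length of a shortest path inside the component between the two sides, proved via a chordal-triangulation counting argument.
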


It is not hard to see that Theorem~\ref{thm:pmcs} implies Theorem~\ref{thm:sections}.
\begin{proof}[Proof of Theorem~\ref{thm:sections}]
Let $(G,k)$ be an \textsc{Interval Completion} instance, $F$ be a minimal solution to $(G,k)$ with $\model$ being the canonical model of $G+F$.
Clearly, $\emptyset$, $\{\Uroot\}$, $\{\Uroot,\Lroot\}$ and $\{\Uroot,\Rroot\}$ are sections of $\model$;
we include them into $\sectionset$ at the beginning.

Let $\Isec_\model(p)$ be a section of $\model$. Without loss of generality, assume that
$\Isec_\model(p)$ does not equal any of the four aforementioned ``obvious'' sections. 
Let $p_1 \leq p$ be the largest integer such that $\Isec_\model(p_1)$ is a maximal clique
of $G+F$; such $p_1$ always exists as $p_1=2$ with $\Isec_\model(2) = \{\Uroot,\Lroot\}$
is a candidate value. Symmetrically, we define $p_2$ to be the smallest integer with
$p_2 \geq p$ such that $\Isec_\model(p_2)$ is a maximal clique of $G+F$.

Let $r = |\Isec_\model(p_1) \setminus \Isec_\model(p_2)|$.
We infer that 
$\model$ places events of $\{\Iend{v}: v \in \Isec_\model(p_1) \setminus \Isec_\model(p_2)\}$ on positions $p_1+1, p_1+2,\ldots,p_1+r$, and then it places events of $\{\Ibeg{v}: v \in \Isec_\model(p_2) \setminus \Isec_\model(p_1)\}$ on positions $p_1+r+1,p_1+r+2,\ldots,p_2$; otherwise there would be a section between sections $\Isec_\model(p_1)$ and $\Isec_\model(p_2)$ that would yield a maximal clique, contradicting the choice of $p_1$ or of $p_2$.
Moreover, by Lemma~\ref{lem:can:endpoints} the events of $\{\Iend{v}: v \in \Isec_\model(p_1) \setminus \Isec_\model(p_2)\}$ are sorted according to the reversed total order $\prec$, while the events of $\{\Ibeg{v}: v \in \Isec_\model(p_2) \setminus \Isec_\model(p_1)\}$ are sorted according to the total order $\prec$.
Consequently, the set $\Isec_\model(p)$ can be deduced from
the maximal cliques $\Isec_\model(p_1)$ and $\Isec_\model(p_2)$ (both belonging
to the set $\pmcset$ given by Theorem~\ref{thm:pmcs})
and the value of $p-p_1$, for which we have $n+1$ choices. Theorem~\ref{thm:sections} follows.
\end{proof}

Thus, the rest of this section is devoted to the proof of Theorem~\ref{thm:pmcs}.

\subsection{Eight important vertices and the structure of the clique}

Let us fix an \textsc{Interval Completion} instance $(G,k)$, its minimal solution $F$,
a model $\model$ of $G+F$ and a maximal clique $\Isec = \Isec_\model(p)$. Recall
that $\model(\Ibeg{v_2}) = p$ and $\model(\Iend{v_1}) = p+1$ for some vertices $v_1$ and $v_2$.
Without loss of generality, assume that $\Isec$ is different than two ``obvious'' maximal
cliques $\{\Uroot,\Lroot\}$ and $\{\Uroot,\Rroot\}$ and, consequently,
$3 < p < 2n-3$ and $v_1,v_2 \notin \{\Uroot,\Lroot,\Rroot\}$.

\begin{figure}
\centering
\includegraphics{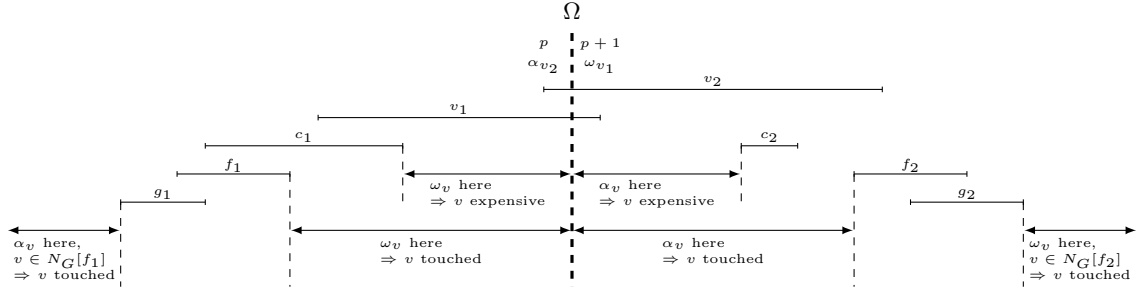}
\caption{The anatomy of a maximal clique $\Omega$, with eight vertices guessed by the algorithm.}
\label{fig:eight}
\end{figure}

Define the following vertices (see also Figure~\ref{fig:eight}):
\begin{enumerate}
\item $c_1$ is the cheap vertex with rightmost position of $\Iend{c_1}$, among
the cheap vertices $c$ satisfying $\model(\Iend{c}) \leq \model(\Iend{v_1}) = p+1$;
\item $c_2$ is the cheap vertex with leftmost position of $\Ibeg{c_2}$, among
the cheap vertices $c$ satisfying $\model(\Ibeg{c}) \geq \model(\Ibeg{v_2}) = p$;
\item $f_1$ is the untouched vertex with rightmost position of $\Iend{f_1}$, among
the untouched vertices $f$ satisfying $\model(\Iend{f}) \leq \model(\Iend{v_1}) = p+1$;
\item $f_2$ is the untouched vertex with leftmost position of $\Ibeg{f_2}$, among
the untouched vertices $f$ satisfying $\model(\Ibeg{f}) \geq \model(\Ibeg{v_2}) = p$;
\item $g_1$ is the untouched vertex with leftmost position of $\Ibeg{g_1}$, among
all untouched vertices of $N_G[f_1] \setminus \{\Isec \setminus \{v_1\}\}$;
\item $g_2$ is the untouched vertex with rightmost position of $\Iend{g_2}$, among
all untouched vertices of $N_G[f_2] \setminus \{\Isec \setminus \{v_2\}\}$.
\end{enumerate}
Let us remark that some of these vertices can be in fact equal. We also remark that all quantifications in the aforementioned definitions are done on nonempty sets:
$\Lroot$ is a good candidate for both $c_1$ and $f_1$, $\Rroot$ is a good candidate for both
$c_2$ and $f_2$, $f_1$ is a good candidate for $g_1$ and $f_2$ is a good candidate for $g_2$.
Hence, all these vertices are well-defined.

We observe the following relations between the positions of endpoints of the previously defined vertices.
\begin{lemma}\label{lem:pmc:vtx-pos}
The following inequalities hold:
\begin{align*}
& \model(\Iend{g_1}) \leq \model(\Iend{f_1}) \leq \model(\Iend{c_1}) \leq \model(\Iend{v_1}) = p+1 \\
& \model(\Ibeg{g_2}) \geq \model(\Ibeg{f_2}) \geq \model(\Ibeg{c_2}) \geq \model(\Ibeg{v_2}) = p
\end{align*}
\end{lemma}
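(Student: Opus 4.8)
The plan is to prove the two displayed chains separately, reading each from its right end inward, and to derive the second chain from the first by passing to the mirror image of the model; so I concentrate on the first chain.

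Two of its three inequalities are immediate. By the choice of $v_1$ we have $\model(\Iend{v_1}) = p+1$, and the defining constraint on $c_1$ gives $\model(\Iend{c_1}) \le p+1$, so $\model(\Iend{c_1}) \le \model(\Iend{v_1})$. For $\model(\Iend{f_1}) \le \model(\Iend{c_1})$, observe that an untouched vertex has no incident edge of $F$, hence certainly at most $\sqrt{k}$ of them, so every untouched vertex is cheap; in particular $f_1$, being untouched with $\model(\Iend{f_1}) \le p+1$, is one of the vertices over which $c_1$ is selected, and since $c_1$ maximises the position of the ending event among those vertices, $\model(\Iend{f_1}) \le \model(\Iend{c_1})$.

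The remaining inequality $\model(\Iend{g_1}) \le \model(\Iend{f_1})$ is the real content of the lemma, and I expect case (ii) below to be the only delicate point. Assume for contradiction that $\model(\Iend{g_1}) > \model(\Iend{f_1})$. Since $g_1 \in N_G[f_1]$ and $f_1$ is untouched --- so $N_{G+F}[f_1] = N_G[f_1]$ --- the intervals of $g_1$ and $f_1$ overlap in $\model$ (this holds trivially also when $g_1 = f_1$), hence $\model(\Ibeg{g_1}) < \model(\Iend{f_1}) \le p+1$, i.e.\ $\model(\Ibeg{g_1}) \le p$. We distinguish two cases. (i) If $\model(\Iend{g_1}) \le p+1$, then $g_1$ is an untouched vertex whose ending event lies at position at most $p+1$ but strictly to the right of $\model(\Iend{f_1})$, contradicting the maximality in the definition of $f_1$. (ii) If $\model(\Iend{g_1}) > p+1$, then together with $\model(\Ibeg{g_1}) \le p$ we obtain $\model(\Ibeg{g_1}) \le p < \model(\Iend{g_1})$, whence $g_1 \in \Isec_\model(p) = \Isec$; but $g_1$ was chosen among vertices of $N_G[f_1] \setminus (\Isec \setminus \{v_1\})$, so $g_1 \in \Isec$ forces $g_1 = v_1$, and thus $\model(\Iend{g_1}) = \model(\Iend{v_1}) = p+1$, contradicting $\model(\Iend{g_1}) > p+1$. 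Both cases being impossible, $\model(\Iend{g_1}) \le \model(\Iend{f_1})$, which closes the first chain.

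For the second chain, let $\model'$ be the mirror image of $\model$, obtained by reversing the order of all positions and exchanging each $\Ibeg{v}$ with $\Iend{v}$; one checks directly that $\model'$ is again an interval model of $G+F$. Under this reflection $\Isec$ remains a maximal clique, its position $p$ is replaced by the corresponding reflected position, and the vertices $c_1,f_1,g_1,v_1$ (defined through ``leftward'' and ``rightmost-ending'' choices) are precisely the vertices $c_2,f_2,g_2,v_2$ when the construction is carried out in $\model'$. Hence the first chain, applied to $\model'$, translates into the second chain for $\model$, completing the proof.
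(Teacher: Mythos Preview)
Your proof is correct and follows essentially the same approach as the paper's. The paper's proof is a terse two-liner: it says the first inequality in each chain holds ``as otherwise $g_1$ or $g_2$ would be a better candidate for $f_1$ or $f_2$, respectively,'' and that the remaining inequalities follow directly from the definitions. Your argument makes explicit exactly the case analysis that this sentence hides --- in particular, your case~(ii) handles the possibility $\model(\Iend{g_1}) > p+1$, which the paper's phrasing glosses over but which is indeed ruled out by the constraint $g_1 \notin \Isec \setminus \{v_1\}$ together with $\model(\Ibeg{g_1}) \le p$. The mirror-image argument for the second chain is a valid and standard way to invoke the evident symmetry.
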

\begin{proof}
The first inequality in each line follows from the definition of $f_1$ and $f_2$, as otherwise
$g_1$ or $g_2$ would be a better candidate for $f_1$ or $f_2$, respectively.
The remaining inequalities follow directly from the
definitions of the corresponding vertices.
\end{proof}

We also need the following observation.
\begin{lemma}\label{lem:pmc:v1v2}
$v_1\in N_G[v_2]$ and $v_2\in N_G[v_1]$.
\end{lemma}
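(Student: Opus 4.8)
The plan is to argue by contradiction: assume that $v_1 \neq v_2$ and $v_1 v_2 \notin E(G)$, and derive a contradiction with the minimality of $F$. Since $\Isec = \Isec_\model(p)$ is a clique in $G+F$ containing both $v_1$ and $v_2$ (recall $\model(\Ibeg{v_1}) < p < \model(\Iend{v_1})$ — here $\model(\Ibeg{v_1}) \neq p$ because $\model(\Ibeg{v_2}) = p$ and $\model$ is a bijection — and $\model(\Ibeg{v_2}) = p < \model(\Iend{v_2})$, so $v_1,v_2 \in \Isec$), the vertices $v_1$ and $v_2$ are adjacent in $G+F$. As $v_1v_2 \notin E(G)$, this forces $v_1 v_2 \in F$. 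The goal is then to exhibit a strictly smaller completion of $G$, namely $F \setminus \{v_1v_2\}$.

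\textbf{Key step: swapping two consecutive events.} By assumption the events $\Iend{v_1}$ and $\Ibeg{v_2}$ occupy the consecutive positions $p+1$ and $p$ in $\model$. Define $\model'$ to be identical to $\model$ except that $\model'(\Iend{v_1}) = p$ and $\model'(\Ibeg{v_2}) = p+1$. First I would check condition (1) of an interval model for the two affected vertices: $\model'(\Ibeg{v_1}) = \model(\Ibeg{v_1}) < p = \model'(\Iend{v_1})$ (strict inequality again from $\model(\Ibeg{v_1}) \neq p$), and $\model'(\Ibeg{v_2}) = p+1 < \model(\Iend{v_2}) = \model'(\Iend{v_2})$ (strict inequality from $\model(\Iend{v_2}) \neq p+1$, as $\model(\Iend{v_1}) = p+1$); for all other vertices condition (1) is untouched. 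Then I would observe that the only pair of events whose relative order differs between $\model$ and $\model'$ is $(\Iend{v_1}, \Ibeg{v_2})$; consequently, by condition (2), the only pair of \emph{vertices} whose adjacency can change is $\{v_1, v_2\}$. In $\model'$ we have $\model'(\Iend{v_1}) = p < p+1 = \model'(\Ibeg{v_2})$, so $v_1$ and $v_2$ are non-adjacent in the graph represented by $\model'$, while every other pair of vertices has the same adjacency as in $G+F$.

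\textbf{Conclusion.} It follows that $\model'$ is a valid interval model of the graph $(G+F) - v_1 v_2 = G + (F \setminus \{v_1v_2\})$. Since $v_1v_2 \notin E(G)$, the set $F \setminus \{v_1v_2\}$ is therefore a completion of $G$, and it is strictly smaller than $F$, contradicting the minimality of $F$. Hence no such pair $v_1 \neq v_2$ with $v_1v_2 \notin E(G)$ exists, i.e., $v_1 \in N_G[v_2]$ and $v_2 \in N_G[v_1]$.

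\textbf{Main obstacle.} There is no deep obstacle here; the only point requiring a little care is verifying that after the swap the endpoints of $v_1$ and $v_2$ still satisfy "interval starts before it ends", which hinges on using the injectivity of $\model$ to exclude the two boundary collisions $\model(\Ibeg{v_1}) = p$ and $\model(\Iend{v_2}) = p+1$. Everything else is a direct consequence of the fact that swapping two adjacent events in a model changes the adjacency of exactly one vertex pair.
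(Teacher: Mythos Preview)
Your proof is correct and follows essentially the same approach as the paper: both argue by contradiction, observe that $v_1v_2 \in F$, and swap the consecutive events $\Iend{v_1}$ and $\Ibeg{v_2}$ to obtain a model of $G + (F \setminus \{v_1v_2\})$, contradicting minimality of $F$. Your version is simply more explicit about verifying that the swapped ordering remains a valid interval model.
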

\begin{proof}
If $v_1=v_2$ then the claim is obvious, so assume otherwise. For the sake of contradiction suppose $v_1v_2\notin E(G)$, so $v_1v_2 \in F$ since $v_1v_2\in E(G+F)$.
Note that by swapping the events $\Iend{v_1}$ and $\Ibeg{v_2}$ in the model $\model$
we obtain a model for $G+(F \setminus \{v_1v_2\})$, contradicting the minimality of $F$.
\end{proof}

We say that a vertex $v$ \emph{lies to the left} of the clique $\Isec$ 
if $\model(\Iend{v}) \leq p+1$, and \emph{lies to the right} if $\model(\Ibeg{v}) \geq p$.
Clearly, $v_1,c_1,f_1,g_1$ lie to the left of $\Isec$ and
$v_2,c_2,f_2,g_2$ lie to the right of $\Isec$.
Note that, perhaps a bit counter-intuitively, if $v=v_1=v_2$, then $v$ lies both to the left and
to the right of $\Isec$.

We note the following straightforward observation.
\begin{lemma}\label{lem:pmc:K}
If some vertex of $N_{G+F}[w]$ lies to the left of $\Isec$, then $\model(\Ibeg{w}) \leq p$.
If some vertex of $N_{G+F}[w]$ lies to the right of $\Isec$, then $\model(\Iend{w}) \geq p+1$.
In particular, if both these events happen, $w$ belongs to $\Isec$.
\end{lemma}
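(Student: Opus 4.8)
Lemma~\ref{lem:pmc:K} is essentially an immediate consequence of the way the notions ``lies to the left'' and ``lies to the right'' interact with the interval model $\model$, so the proof should be a short direct argument unpacking the definitions. The plan is to treat the two implications separately, as they are symmetric.

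First I would prove the first implication. Suppose some vertex $u \in N_{G+F}[w]$ lies to the left of $\Isec$, that is, $\model(\Iend{u}) \leq p+1$. If $u = w$ then in particular $\model(\Ibeg{w}) < \model(\Iend{w}) \leq p+1$, so $\model(\Ibeg{w}) \leq p$ and we are done. If $u \neq w$, then $uw \in E(G+F)$, so by the second condition in the definition of an interval model the intervals of $u$ and $w$ intersect; in particular it is \emph{not} the case that $\model(\Iend{u}) < \model(\Ibeg{w})$, so $\model(\Ibeg{w}) \leq \model(\Iend{u}) \leq p+1$, hence $\model(\Ibeg{w}) \leq p+1$. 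To squeeze out the bound $\model(\Ibeg{w}) \leq p$, I note that the only way $\model(\Ibeg{w})$ could equal $p+1$ is if it coincided with $\model(\Iend{u})$; but $\model$ is a bijection, so distinct events get distinct positions, and thus $\model(\Ibeg{w}) \neq p+1$ whenever $u \neq w$, giving $\model(\Ibeg{w}) \leq p$. (Alternatively one can just observe that $\model(\Ibeg{w}) \le \model(\Iend u) - 1 \le p$ would not quite hold when $u=w$; the clean unified statement is that $\model(\Ibeg{w}) < \model(\Iend{w})$ and $\model(\Ibeg{w}) \le \model(\Iend{u})$, and since positions are distinct integers and $\Iend u \le p+1$, we get $\model(\Ibeg{w}) \le p$ unless $w$ itself witnesses $\model(\Iend w) \le p+1$, in which case again $\model(\Ibeg w) \le p$.) The second implication is proved by the mirror-image argument, using the right endpoint $\model(\Iend{w})$ and the condition $\model(\Ibeg{u}) \geq p$.

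Finally, for the ``in particular'' clause: if both events happen — some vertex of $N_{G+F}[w]$ lies to the left and some (possibly different) vertex of $N_{G+F}[w]$ lies to the right — then the two implications give $\model(\Ibeg{w}) \leq p$ and $\model(\Iend{w}) \geq p+1$ simultaneously, which is precisely the condition $\model(\Ibeg{w}) \leq p < \model(\Iend{w})$ defining membership of $w$ in the section $\Isec_\model(p) = \Isec$. Hence $w \in \Isec$, as claimed.

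I do not expect any genuine obstacle here; the only point requiring a moment of care is the boundary case where the ``witness'' vertex $u$ equals $w$ itself, and the off-by-one subtlety that $\model(\Iend u) \le p+1$ yields $\model(\Ibeg w) \le p$ rather than just $\le p+1$ — this is handled cleanly by invoking injectivity of $\model$ (positions are distinct), exactly as the paper does elsewhere when manipulating models. Everything else is a direct translation of the definitions of ``lies to the left/right'' and of a section.
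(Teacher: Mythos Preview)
Your argument is correct and is exactly the direct definition-chase one would expect; the paper itself states this lemma as a ``straightforward observation'' and gives no proof at all. The only minor remark is that your parenthetical hedging is unnecessary: since $\Ibeg{w}$ and $\Iend{u}$ are always distinct events (even when $u=w$), injectivity of $\model$ gives $\model(\Ibeg{w}) < \model(\Iend{u}) \leq p+1$ uniformly, so there is no need for a separate case analysis.
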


Define now the following sets.
\begin{align*}
F_i^\circ &= \{v \in V(G): vc_i \in F\}\ \mathrm{for}\ i = 1,2; \\
X_1^\circ &= \{v \in V(G): \model(\Iend{c_1}) < \model(\Iend{v}) \leq p+1\}; \\
X_2^\circ &= \{v \in V(G): p \leq \model(\Ibeg{v}) < \model(\Ibeg{c_2})\}.
\end{align*}
As $c_1$ and $c_2$ are cheap, $|F_1^\circ|, |F_2^\circ| \leq \sqrt{k}$.
By the definition of $c_1$ and $c_2$, all vertices of $X_1^\circ \cup X_2^\circ$ are expensive. 
Note that $|X_1^\circ \cap X_2^\circ| \leq 1$ and $X_1^\circ \cap X_2^\circ$ is nonempty only if it consists
of $v_1=v_2$. Therefore $|X_1^\circ| + |X_2^\circ| \leq 2\sqrt{k}+1$.

The following lemma characterizes $\Isec$ in terms of previously defined
vertices and sets, and is a starting point of our algorithm.
\begin{lemma}\label{lem:pmc:char}
$$\Isec = (N_G[\{v_1,c_1,f_1\} \cup X_1^\circ] \cup F_1^\circ)\cap (N_G[\{v_2,c_2,f_2\} \cup X_2^\circ] \cup F_2^\circ).$$
\end{lemma}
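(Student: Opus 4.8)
The plan is to prove the equality by showing the two inclusions separately, closely following the template of Lemma~\ref{lem:over:char} but now also incorporating the vertices $f_1$ and $f_2$. For the inclusion ``$\subseteq$'' I would, as usual, split it into two symmetric containments: $\Isec \subseteq N_G[\{v_2,c_2,f_2\} \cup X_2^\circ] \cup F_2^\circ$ and the mirror statement with indices swapped; I only spell out the first. For the ``$\supseteq$'' direction, I would argue that every vertex of $N_G[\{v_1,c_1,f_1\} \cup X_1^\circ] \cup F_1^\circ$ either lies to the left of $\Isec$ or has a neighbor in $G+F$ lying to the left of $\Isec$, and symmetrically for the right-hand set; then Lemma~\ref{lem:pmc:K} forces any vertex in the intersection to lie in $\Isec$.

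For the ``$\supseteq$'' inclusion the key observations are: $v_1, c_1, f_1 \in \Isec_\model(p)$ or lie to the left (indeed each has its ending event at position $\leq p+1$ by Lemma~\ref{lem:pmc:vtx-pos} and, for $v_1$, by Lemma~\ref{lem:pmc:v1v2} it is in $N_{G+F}[v_2]$), and each vertex of $X_1^\circ$ by definition has its ending event at position $\leq p+1$, hence lies to the left of $\Isec$. Therefore any $w \in N_G[\{v_1,c_1,f_1\} \cup X_1^\circ]$ has a neighbor (or is itself) to the left of $\Isec$, so $\model(\Ibeg{w}) \leq p$ by Lemma~\ref{lem:pmc:K}; and if $w \in F_1^\circ$, then $wc_1 \in F \subseteq E(G+F)$, and $c_1$ lies to the left, so again $\model(\Ibeg{w}) \leq p$. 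Symmetrically, a vertex of the right-hand set satisfies $\model(\Iend{w}) \geq p+1$. Any $w$ in the intersection thus satisfies both, and by Lemma~\ref{lem:pmc:K} lies in $\Isec$.

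For the ``$\subseteq$'' inclusion I would argue by contradiction exactly as in Lemma~\ref{lem:over:char}: suppose some $v \in \Isec$ belongs neither to $F_2^\circ$ nor to $N_G[\{v_2,c_2,f_2\} \cup X_2^\circ]$. Then in particular $v \notin \{v_1, v_2, c_2, f_2\}$ and $\model(\Ibeg{v}) < p$ (it cannot be in $\Isec$ with starting event at $p$ unless it equals $v_2$, and it is not adjacent to $v_2$). Since $v \notin F_2^\circ$ and $vc_2 \notin E(G)$, we get $\model(\Iend{v}) < \model(\Ibeg{c_2})$; combined with $v \notin N_G[X_2^\circ]$ and the definition of $X_2^\circ$ as the set of vertices with starting event strictly between $p$ and $\model(\Ibeg{c_2})$, we conclude $v$ is nonadjacent in $G$ to every vertex whose starting event is at position $\geq p$. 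Now move the event $\Iend{v}$ leftward to position $p$ (just before $\Ibeg{v_2}$), shifting later events right by one; this yields a valid interval model of some completion $F' \subsetneq F$ (we removed, in particular, the edge $vv_2 \in F$, since $v \in \Isec$ but $vv_2 \notin E(G)$ — here the hypothesis $v \notin N_G[v_2]$ forces $vv_2 \in F$), contradicting the minimality of $F$.

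The main obstacle is the bookkeeping around the "degenerate" cases where several of the guessed vertices coincide — e.g.\ $v = v_1 = v_2$, or $f_1 = v_1$, or $c_2 = v_2$ — and making sure that in each such case the claimed membership facts (which vertex lies to the left, which edge is a fill edge) still hold; in particular one must be careful that when $v_1 = v_2$ the vertex lies on both sides, so the argument that a would-be counterexample $v$ has $\model(\Ibeg{v}) < p$ needs the explicit exclusion $v \neq v_1$ as well. The role of $f_1, f_2$ here is mild — they only strengthen the left/right sets — so the substance of the proof is identical to that of Lemma~\ref{lem:over:char}, and I expect the write-up to be short.
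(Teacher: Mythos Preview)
Your proposal is correct and follows essentially the same approach as the paper's own proof: the ``$\supseteq$'' direction via Lemma~\ref{lem:pmc:K}, and the ``$\subseteq$'' direction by the same contradiction argument that moves $\Iend{v}$ just before $\Ibeg{v_2}$ to produce a strictly smaller completion. Your remark that $f_1,f_2$ play no essential role here is also accurate---the paper's proof of Lemma~\ref{lem:pmc:char} is verbatim the proof of Lemma~\ref{lem:over:char} with $f_1,f_2$ harmlessly added to the neighborhood sets.
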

\begin{proof}
The inclusion ``$\supseteq$'' follows directly from Lemma~\ref{lem:pmc:K}: vertices of $N_G[\{v_1,c_1,f_1\} \cup X_1^\circ] \cup F_1^\circ$ either are or have at least one neighbor on the left of $\Isec$ in $G+F$, while vertices $N_G[\{v_2,c_2,f_2\} \cup X_2^\circ] \cup F_2^\circ$ either are or have at least one neighbor on the right of $\Isec$ in $G+F$. Hence, we now focus
on the other inclusion.

Without loss of generality, assume there exists a vertex $v \in \Isec$ that does not belong
to $F_2^\circ$ nor to $N_G[\{v_2,c_2,f_2\} \cup X_2^\circ]$. In particular $v \notin \{v_1,v_2,c_2\}$ by Lemma~\ref{lem:pmc:v1v2}, and hence $\Ibeg{v}<p$.
As $v \notin F_2^\circ$ and $vc_2 \notin E(G)$, we have $\model(\Iend{v}) < \model(\Ibeg{c_2})$.
Moreover, by the definition of $X_2^\circ$, $v$ is not adjacent in $G$ to any vertex whose starting event
lies between positions $p$ and $\model(\Ibeg{c_2}) - 1$.
Hence, $v$ is not adjacent in $G$ to any vertex whose starting event lies on or after position $p$.

Consider an ordering $\model'$ that is created from the model $\model$ by moving the event
$\Iend{v}$ to the position just before the event $\Ibeg{v_2}$ (that is, we move $\Iend{v}$
to the position $p$ and shift all events on positions $p$ and later by one to the right).
By our previous arguments, $\model'$ is a valid interval model of some completion
$F'$ of $G$. As $v \in \Isec$, the event $\Iend{v}$ has been moved to the left during this operation,
and $F' \subseteq F$. Moreover $vv_2 \in F \setminus F'$, which contradicts the minimality of $F$.
\end{proof}

We note that, if a polynomial kernel for \textsc{Interval Completion} had been known,
Lemma~\ref{lem:pmc:char} would have finished the proof of Theorem~\ref{thm:pmcs},
as it provides us with a way to enumerate $n^{\Oh(\sqrt{k})}$ candidates for maximal cliques
in $G+F$, by guessing the vertices $v_i,c_i,f_i$ and sets $F_i^\circ$, $X_i^\circ$ for $i=1,2$.%
\footnote{Actually, one may observe that the vertices $f_1$ and $f_2$ are not needed for the argumentation of Lemma~\ref{lem:pmc:char}.
  We include them for convenience, as they will be needed in further arguments.}
However, the question of existence of such a kernel is widely open.
Hence, we need to employ a careful and involved analysis of the structure of the clique
$\Isec$ and the sets defined above to show the following: we may replace brute-force guessing of sets $F_i^\circ$, $X_i^\circ$ with a branching procedure that selects each vertex of $F_i^\circ$, $X_i^\circ$ among $\mathrm{poly}(k)$ potential candidates, instead of $n$.

\subsection{Structure of the recursion}

We now proceed to the description of the algorithm of Theorem~\ref{thm:pmcs}.
The algorithm first iterates through all possible choices of the vertices $v_i,c_i,f_i,g_i$ for $i=1,2$;
for each choice, we seek for maximal cliques where the chosen vertices correspond to their definitions in the previous section.
This step yields the promised $n^8$ factor in the bound on the size of the family $\pmcset$.

Hence, for fixed choice of vertices $v_i,c_i,f_i,g_i$, we aim to output $k^{\Oh(\sqrt{k})}$
sets in the family $\pmcset$. The algorithm now becomes a  branching algorithm:
at each recursive call, in polynomial time we will insert at most one set into the family $\pmcset$, invoke
at most $\mathrm{poly}(k)$ recursive calls, and the depth of the recursion will be bounded
by $\Oh(\sqrt{k})$. Intuitively, we aim to guess the sets $F_i^\circ$ and $X_i^\circ$,
and at each step we want to identify a set of only $\mathrm{poly}(k)$ candidate vertices, such that one of the candidates certainly belongs to one of the sets $F_i^\circ$, $X_i^\circ$.
Thus, we describe the algorithm in the language of ``guessing'' the
maximal clique $\Isec$.

More formally,
during the course of the recursive branching algorithm we keep five sets
$X_1,X_2,F_1,F_2,K \subseteq V(G)$,
and we are looking for maximal cliques $\Isec$ satisfying the following:
\begin{enumerate}
\item $\{v_1,c_1,f_1\} \subseteq X_1 \subseteq X_1^\circ \cup \{v_1,c_1,f_1\}$
and
$\{v_2,c_2,f_2\} \subseteq X_2 \subseteq X_2^\circ \cup \{v_2,c_2,f_2\}$.
\item $F_1 \subseteq F_1^\circ$ and $F_2 \subseteq F_2^\circ$.
\item $(N_G[X_1] \cup F_1) \cap (N_G[X_2] \cup F_2) \subseteq K \subseteq \Isec$.
\end{enumerate}
The set $X_i$ is our ``current guess'' on the set $X_i^\circ \cup \{v_i,c_i,f_i\}$
and the set $F_i$ is our ``current guess'' on the set $F_i^\circ$.
By Lemma~\ref{lem:pmc:char}, already properties 1 and 2 imply
$(N_G[X_1] \cup F_1) \cap (N_G[X_2] \cup F_2) \subseteq \Isec$; the set $K$
is our ``current guess'' for the clique $\Isec$.

However, in some cases we will not be able to guess a vertex of $X_1$ or $X_2$, but
instead we will be guessing its \emph{neighborhood class} with respect to $\Isec$.
The results of Section~\ref{ssec:nei-classes} help us to limit the number of choices
in such a step. 
For this reason, we allow the set $K$ to be a proper superset of 
$(N_G[X_1] \cup F_1) \cap (N_G[X_2] \cup F_2)$, that is, to contain more than the vertices definitely
included in $\Isec$ by Lemma~\ref{lem:pmc:char}.

We initially define $X_1 = \{v_1,c_1,f_1\}$, $X_2 = \{v_2,c_2,f_2\}$, $F_1 = F_2 = \emptyset$
and $K = N_G[X_1] \cap N_G[X_2]$.
It is straightforward to verify that these sets satisfy all aforementioned properties.
We note the following:
\begin{lemma}\label{lem:pmc:missing-bound}
$$|\Isec \setminus (N_G[v_1] \cap N_G[v_2])| \leq k.$$
\end{lemma}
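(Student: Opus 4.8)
The claim is that $\Isec$ contains at most $k$ vertices outside $N_G[v_1] \cap N_G[v_2]$. The plan is to argue that every such vertex must be the endpoint of a fill-in edge incident to $v_1$ or to $v_2$, and then use $|F| \le k$ together with a careful accounting that avoids double-counting.

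First I would recall that $\Isec$ is a clique in $G+F$, so any $w \in \Isec$ is adjacent in $G+F$ to both $v_1$ and $v_2$. Hence if $w \notin N_G[v_1]$ then $w v_1 \in F$, and if $w \notin N_G[v_2]$ then $w v_2 \in F$. So every vertex $w \in \Isec \setminus (N_G[v_1] \cap N_G[v_2])$ contributes at least one edge of $F$ incident to $\{v_1, v_2\}$. The naive bound this gives is $|\Isec \setminus (N_G[v_1] \cap N_G[v_2])| \le |\incF{v_1}| + |\incF{v_2}|$, which need not be at most $k$ (both $v_1$ and $v_2$ could be expensive). So the main obstacle is to sharpen this: I need to show that the edges witnessing membership in $\Isec \setminus (N_G[v_1]\cap N_G[v_2])$ are in fact \emph{distinct} edges of $F$, i.e.\ that we never charge the same edge of $F$ to two different vertices $w$. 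Since $v_1 \ne w$ and $v_2 \ne w$ for $w \in \Isec$ outside the trivial part (here one uses Lemma~\ref{lem:pmc:v1v2}: $v_1 \in N_G[v_2]$ and $v_2 \in N_G[v_1]$, so $v_1, v_2 \in N_G[v_1]\cap N_G[v_2]$ and are thus excluded from the set we are counting), the edge charged to $w$ is $w v_1$ or $w v_2$, and distinct $w$'s give distinct edges. The only subtlety is whether a single $w$ could force \emph{both} $wv_1 \in F$ and $wv_2 \in F$; in that case it still suffices to charge $w$ to just one of them, so the map $w \mapsto (\text{one chosen fill edge incident to } w)$ is injective into $F$.

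Putting this together: define an injection from $\Isec \setminus (N_G[v_1] \cap N_G[v_2])$ into $F$ by sending $w$ to $wv_1$ if $w \notin N_G[v_1]$, and otherwise (then $w \notin N_G[v_2]$) to $wv_2$. Distinct vertices $w \ne w'$ map to edges having distinct non-$\{v_1,v_2\}$ endpoints, hence to distinct edges of $F$; and by Lemma~\ref{lem:pmc:v1v2} neither $v_1$ nor $v_2$ lies in the domain, so the images are genuine edges $wv_i$ with $w \notin \{v_1,v_2\}$. Therefore $|\Isec \setminus (N_G[v_1] \cap N_G[v_2])| \le |F| \le k$, which is exactly the claim. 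I expect the proof to be short; the only thing to be careful about is the injectivity argument and the use of Lemma~\ref{lem:pmc:v1v2} to ensure $v_1$ and $v_2$ themselves are not counted.
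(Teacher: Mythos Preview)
Your proposal is correct and follows essentially the same approach as the paper's proof, which simply notes that any $v \in \Isec \setminus (N_G[v_1] \cap N_G[v_2])$ has $vv_1 \in F$ or $vv_2 \in F$. Your version is more explicit about the injectivity (in particular the use of Lemma~\ref{lem:pmc:v1v2} to exclude $v_1$ and $v_2$ from the domain), which the paper leaves implicit, but the underlying argument is identical.
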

\begin{proof}
Note that for any $v \in \Isec \setminus (N_G[v_1] \cap N_G[v_2])$, either $vv_1$
or $vv_2$ belongs to $F$.
\end{proof}

Let us now focus on one recursive call, where the sets $X_1,X_2,F_1,F_2,K$ are given.
We consider connected components of $G \setminus (X_1 \cup X_2 \cup K)$ and classify them
into four classes, depending on whether they contain a vertex of $N_G(X_1) \cup F_1$
and whether they contain a vertex of $N_G(X_2) \cup F_2$.
That is, we partition the set $\Ccomp{G \setminus (X_1 \cup X_2 \cup K)}$ into
four classes $\Cfam_{ab}$ for $a,b \in \{0,1\}$:
$C \in \Cfam_{10} \cup \Cfam_{11}$ iff $C \cap (N_G(X_1) \cup F_1) \neq \emptyset$ and
$C \in \Cfam_{01} \cup \Cfam_{11}$ iff $C \cap (N_G(X_2) \cup F_2) \neq \emptyset$.

\subsection{Case one: components knowing both sides of the clique}

Assume there exists $C \in \Cfam_{11}$.
Note that $v_1,v_2 \notin C$, since $v_1\in X_1$ and $v_2\in X_2$. Hence, by Lemma~\ref{lem:pmc:K}, $C$ contains a vertex whose interval
starts before position $p$ in the model $\model$, and a vertex whose interval ends after position $p+1$.
As $G[C]$ is connected, $C \cap (\Isec \setminus K) \neq \emptyset$.

Let $P$ be a shortest path between $N_G(X_1) \cup F_1$ and $N_G(X_2) \cup F_2$ in the subgraph $G[C]$.
Note that $P$ contains at least two vertices, as otherwise the single vertex of $P$ should be included in $K$.
We note the following.
\begin{lemma}\label{lem:pmc:P-cases}
Either $V(P) \subseteq \Isec$ or $V(P)$ contains a vertex of
$(F_1^\circ \setminus F_1) \cup (X_1^\circ \setminus X_1) \cup (F_2^\circ \setminus F_2) \cup (X_2^\circ \setminus X_2)$.
\end{lemma}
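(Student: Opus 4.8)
The plan is to prove the statement in contrapositive form: assuming $V(P)\not\subseteq\Isec$, I will exhibit a vertex \emph{of} $V(P)$ lying in $(F_1^\circ\setminus F_1)\cup(X_1^\circ\setminus X_1)\cup(F_2^\circ\setminus F_2)\cup(X_2^\circ\setminus X_2)$. First I would fix notation: write $P=(u_1,u_2,\ldots,u_\ell)$ with $u_1\in A:=N_G(X_1)\cup F_1$ and $u_\ell\in B:=N_G(X_2)\cup F_2$, recall $\ell\geq 2$, and note that, since $P$ is a shortest $A$--$B$ path in $G[C]$, we have $u_m\notin A$ for $m\geq 2$ and $u_m\notin B$ for $m\leq\ell-1$, while $V(P)\cap(X_1\cup X_2\cup K)=\emptyset$ because $V(P)\subseteq C$. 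In the model $\model$ of $G+F$, classify each $u\in V(P)$ as \emph{left} if $\model(\Iend{u})\leq p$, as \emph{central} if $u\in\Isec$, i.e.\ $\model(\Ibeg{u})\leq p<\model(\Iend{u})$, or as \emph{right} if $\model(\Ibeg{u})\geq p+1$. Since $u_1$ has a $G+F$-neighbour among $X_1\cup\{c_1\}$ (a vertex of $X_1\subseteq\{v_1,c_1,f_1\}\cup X_1^\circ$ if $u_1\in N_G(X_1)$, and $c_1\in F_1^\circ$ itself if $u_1\in F_1$), and all such vertices lie to the left of $\Isec$, Lemma~\ref{lem:pmc:K} gives $\model(\Ibeg{u_1})\leq p$, so $u_1$ is not right; symmetrically $u_\ell$ is not left. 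Finally, consecutive vertices of $P$ have intersecting intervals in $\model$, hence no left vertex is adjacent on $P$ to a right vertex.

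Next, assuming $V(P)\not\subseteq\Isec$, some vertex of $V(P)$ is left or right; by the evident left--right symmetry of the whole configuration I may assume some $u_i$ is left, and take $i$ maximal. Since $u_\ell$ is not left, $i<\ell$; and $u_{i+1}$ is neither left (maximality) nor right (adjacent to the left vertex $u_i$), so $u_{i+1}\in\Isec$, with $u_{i+1}\neq u_1$ because $i+1\geq 2$. If $u_i\in X_1^\circ$ then, as $u_i\in V(P)$ is disjoint from $X_1$, we get $u_i\in X_1^\circ\setminus X_1$ and are done. Otherwise $\model(\Iend{u_i})\leq p<p+1$ together with $u_i\notin X_1^\circ$ forces $\model(\Iend{u_i})\leq\model(\Iend{c_1})$; combining this with the overlap $\model(\Ibeg{u_{i+1}})<\model(\Iend{u_i})$ of the adjacent pair and with $\model(\Ibeg{c_1})<\model(\Iend{c_1})\leq p+1\leq\model(\Iend{u_{i+1}})$, one checks that the interval of $u_{i+1}$ contains the position $\model(\Iend{c_1})$, so $u_{i+1}$ and $c_1$ are adjacent in $G+F$. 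Since $u_{i+1}\in C$ and $c_1\in X_1$, this is a genuine edge and $u_{i+1}\notin X_1$; if $u_{i+1}c_1\in E(G)$ then $u_{i+1}\in N_G(X_1)\subseteq A$, contradicting $u_{i+1}\neq u_1$, so $u_{i+1}c_1\in F$, i.e.\ $u_{i+1}\in F_1^\circ$. As $F_1\subseteq A$ meets $V(P)$ only in $u_1$, we conclude $u_{i+1}\in F_1^\circ\setminus F_1$. The case where $V(P)$ contains a right vertex is handled in the mirror-image way: take the minimum-index right vertex $u_j$ (so $u_{j-1}\in\Isec$ and $u_{j-1}\neq u_\ell$), use $c_2\in X_2$ in place of $c_1$, and land in $(X_2^\circ\setminus X_2)\cup(F_2^\circ\setminus F_2)$.

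The hard part will be exactly the step that produces the fill edge $u_{i+1}c_1$. The naive route — observing $u_{i+1}\in\Isec$ and invoking Lemma~\ref{lem:pmc:char} — only reveals that $u_{i+1}$ is a $G$-neighbour of \emph{some} vertex of $X_1^\circ$ or lies in $F_1^\circ$, but that witnessing neighbour in $X_1^\circ\setminus X_1$ need not lie on $P$, whereas the lemma insists on a vertex \emph{on} $P$. The resolution, and the crux of the argument, is to sidestep Lemma~\ref{lem:pmc:char} altogether and reason geometrically in $\model$: the condition $u_i\notin X_1^\circ$ exactly says that $u_i$ reaches no farther right than $\model(\Iend{c_1})$, while its central neighbour $u_{i+1}$ straddles that position; this forces $u_{i+1}$ itself to see the already-known vertex $c_1\in X_1$, and hence to sit in $F_1^\circ$. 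Beyond this I expect only routine care: the boundary case $i=1$ needs no modification (since $u_2\neq u_1$ still holds), and degenerate coincidences such as $v_1=v_2$ or equalities among the eight special vertices do not disturb any of the inequalities used.
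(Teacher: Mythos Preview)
Your proof is correct and follows essentially the same approach as the paper's. The paper picks the first vertex on $P$ (traversed from the $A$-end to the $B$-end) that lies to the right of $\Isec$, calls it $v$, notes its predecessor $w$ must lie in $\Isec$, and then splits on whether $\model(\Iend{w}) \geq \model(\Ibeg{c_2})$ (giving $w \in F_2^\circ \setminus F_2$) or not (giving $v \in X_2^\circ \setminus X_2$); you run the mirror-image argument with the last left vertex $u_i$, its central successor $u_{i+1}$, and $c_1$ in place of $c_2$, arriving at the same dichotomy.
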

\begin{proof}
Assume there exists $v \in V(P) \setminus \Isec$. Without loss of generality, assume
that $v$ is to the right of $\Isec$, that is, $\model(\Ibeg{v}) > p+1$ (as $v \notin \{v_1,v_2\}$).
Moreover, assume that $v$ is the first vertex on the path $P$ (when traversed from $N_G(X_1) \cup F_1$ to $N_G(X_2) \cup F_2$) that lies to the right of $\Isec$.

As the first vertex of $P$ belongs to $N_G(X_1) \cup F_1$, $v$ is not the first vertex of $P$.
Let $w$ be the predecessor of $v$ on the path $P$. 
Since $w$ does not lie to the right of $\Isec$ (by the choice of $v$), and $vw \in E(G)$, we infer
that $w \in \Isec$.
As $P$ is a shortest path between $N_G(X_1) \cup F_1$ and $N_G(X_2) \cup F_2$, we have $w \notin F_2$ and
$wc_2 \notin E(G)$.

If $\model(\Iend{w}) \geq \model(\Ibeg{c_2})$ then $wc_2 \in F$, but $w \notin F_2$. Hence, $w \in F_2^\circ \setminus F_2$.
Otherwise, if $\model(\Iend{w}) < \model(\Ibeg{c_2})$, then we have $p+1 < \model(\Ibeg{v}) < \model(\Iend{w}) < \model(\Ibeg{c_2})$. By the choice
of $c_2$, we infer that $v \in X_2^\circ$. Clearly $v \notin X_2$, so $v\in X_2^\circ\setminus X_2$ and the lemma is proven.
\end{proof}

Lemma~\ref{lem:pmc:P-cases} enables us to do a good branching providing that $P$ is short. Luckily, this is always the case.
\begin{lemma}\label{lem:pmc:P-short}
$|V(P)| \leq 3k$.
\end{lemma}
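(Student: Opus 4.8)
Lemma~\ref{lem:pmc:P-short} asserts that the shortest path $P$ between $N_G(X_1) \cup F_1$ and $N_G(X_2) \cup F_2$ inside the component $C \in \Cfam_{11}$ has at most $3k$ vertices. The plan is to exploit the fact that $P$, being a shortest path in $G[C]$, is an induced path in $G$, and that an induced path of length $\geq 4$ is not an interval graph: any induced $P_\ell$ with $\ell \geq 5$ vertices is an asteroidal-triple-free obstruction, but more to the point, every set of three consecutive vertices on $P$ that are pairwise non-adjacent to the fourth forces fill-in edges. So the core idea is to charge a constant number of edges of the solution $F$ to each ``block'' of a few consecutive vertices of $P$.

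More precisely, I would first recall that $P$ is induced in $G$, hence (since $G+F$ is an interval graph and induced paths in interval graphs have length at most... no, they can be long) the right statement is: in the interval model $\model$ of $G+F$, the vertices of $P$ are ordered in a ``staircase'' fashion, and any four consecutive vertices $x_1x_2x_3x_4$ along $P$ satisfy $x_1x_3, x_1x_4, x_2x_4 \notin E(G)$ while these pairs may or may not be in $F$. The key point is that $P$ starts at a vertex to the left of $\Isec$ (or in $\Isec$) and ends at a vertex to the right (or in $\Isec$), with $C \cap (\Isec \setminus K) \neq \emptyset$, so $P$ must ``cross'' $\Isec$. Combined with Lemma~\ref{lem:pmc:P-cases}, if $P$ contained no vertex of $(F_i^\circ \setminus F_i) \cup (X_i^\circ \setminus X_i)$ then $V(P) \subseteq \Isec$, making $P$ a clique in $G+F$; then, since $P$ is an induced path in $G$, all but a bounded number of its $\binom{|V(P)|}{2}$ pairs are in $F$, forcing $|V(P)|$ small. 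I expect the clean argument to run: a shortest path $P = x_0 x_1 \cdots x_\ell$ is induced in $G$, so $x_i x_j \notin E(G)$ whenever $|i-j| \geq 2$; if $V(P) \subseteq \Isec$ then $x_i x_j \in F$ for all such pairs, so $|F| \geq \binom{\ell+1}{2} - \ell = \binom{\ell}{2}$, hence $\ell \leq O(\sqrt{k})$, far better than $3k$. And if $V(P) \not\subseteq \Isec$, take $v \in V(P) \setminus \Isec$; then $P$ splits into the part before $v$ and after $v$, but a cleaner route is needed since $P$ may leave and re-enter $\Isec$ many times.

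The route I actually expect to work, matching the $3k$ bound, is to directly count solution edges incident to $P$. Consider consecutive triples $x_{i-1}, x_i, x_{i+1}$ on $P$; since $x_{i-1}x_{i+1} \notin E(G)$ but all of $P$ lies in the component $C \subseteq N_G(f_1) \cap \ldots$ relevant region, each such ``$P_3$'' must be destroyed, i.e. $x_{i-1}x_{i+1} \in F$ or the interval of $x_i$ contains... Actually, the sharpest argument: for each $i$ with $2 \leq i \leq \ell-1$, at least one of the edges $x_{i-1}x_{i+1}$, or an edge making the interval of $x_i$ nested, lies in $F$ -- more carefully, since $P$ is an induced path and $G+F$ is chordal (being interval), every induced $C_\ell$ and long induced path structure must be triangulated, so the number of non-edges of $P$ that become edges in $G+F$ is at least $\ell - 2$ (a path on $\ell+1$ vertices needs at least $\ell - 2$ chords to become chordal). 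Hence $\ell - 2 \leq |F| \leq k$, giving $|V(P)| = \ell + 1 \leq k + 3 \leq 3k$ for $k \geq 2$ (small $k$ handled trivially). The main obstacle I anticipate is being careful that these ``chords'' are genuinely edges of $F$ and not already edges of $G$ -- this is exactly where we use that $P$ is a \emph{shortest} path in $G[C]$, hence induced, so none of the chords $x_i x_j$ with $|i - j| \geq 2$ belong to $E(G)$ and all of them that appear in $G+F$ are in $F$. The standard fact that making a path on $m$ vertices chordal requires at least $m - 3$ fill edges (triangulating a path, whose minimum fill equals its number of vertices minus the size of a maximum... it is exactly $\binom{m-1}{2}$ for a path? no -- for a path on $m$ vertices the treewidth of its minimal triangulation interacts, but the minimum number of fill edges to triangulate $P_m$ is $\binom{m-2}{2}$, which is $\geq m - 2$ for $m \geq 4$). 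Either bound suffices comfortably for $3k$, so I would cite the coarse bound ``a chordal supergraph of an induced path on $m \geq 4$ vertices has at least $m - 3$ extra edges'' and conclude $|V(P)| \leq k + 3 \leq 3k$, treating $k \leq 1$ separately where $G$ must already be an interval graph or close to it.
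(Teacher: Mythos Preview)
Your proposal has a genuine gap. The core of your general-case argument rests on the claim that ``a chordal supergraph of an induced path on $m \geq 4$ vertices has at least $m - 3$ extra edges'' (and the variants you consider). This is simply false: a path $P_m$ is already chordal --- indeed already an interval graph --- so its minimum fill-in is zero. There is no a priori reason why $F$ must contain \emph{any} edge with both endpoints on $P$. Your sub-case where $V(P) \subseteq \Isec$ does work (the path becomes a clique in $G+F$, forcing $\binom{|V(P)|}{2} - (|V(P)|-1) \leq k$ and hence $|V(P)| = O(\sqrt{k})$), but this covers only one alternative of Lemma~\ref{lem:pmc:P-cases}, and the bound on $|V(P)|$ is needed unconditionally for the branching step that follows.

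The paper's proof is genuinely more structural and exploits information you never touch. It passes to the shortest path $R$ in $(G+F)[V(P)]$ between the endpoints of $P$ and shows two things: (i) every vertex of $R$ is touched by $F$, hence $|V(R)| \leq 2k$; and (ii) $|V(P)| - |V(R)| \leq |F| \leq k$ by counting fill-in edges in the triangulations of the cycles that $R$-edges form with subpaths of $P$. Step~(i) is where the real work lies: a vertex of $R \cap \Isec$ is touched since $vv_1$ or $vv_2 \in F$, while for a vertex $v \in R \setminus \Isec$ one uses the definition of $f_1$ (and $f_2$) --- the untouched vertex with rightmost $\Iend{f_1}$ --- to argue that $\model(\Iend{v}) > \model(\Iend{f_1})$, forcing $v$ to be touched. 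Without invoking the special vertices $f_1, f_2$ and the position of $\Isec$, there is no way to bound $|V(P)|$ in terms of $k$ alone.
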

\begin{proof}
Denote $H = G+F$.
Let $R$ be a shortest path between the first and the last vertex of $P$ in the graph $H[V(P)]$.
We first claim that each vertex on $R$ is touched by the solution $F$ and, consequently, $|V(R)| \leq 2k$.

Clearly, each vertex $v \in V(R) \cap \Isec$ is touched by $F$, as $vv_1$ or $vv_2$ needs to belong to $F$.
Consider then $v \in V(R) \setminus \Isec$ and, without loss of generality, assume that $v$ lies to the left of $\Isec$,
that is, $\model(\Iend{v}) < p$.
We now show that $\model(\Iend{v}) > \model(\Iend{f_1})$; this would prove the claim as then $v$ is touched by the definition of $f_1$.
Assume otherwise. Clearly, $v$ is not the last vertex of $P$ (and $R$), and the vertex $w$ succeeding $v$ on $R$ needs to
satisfy $\model(\Ibeg{w}) \leq \model(\Iend{f_1})$.
Consequently, there exists a vertex $w'$ on $R$ that lies later than $v$ on $R$, and which neighbors $f_1$ in $H$.
As $f_1$ is untouched, we have that $w'f_1 \in E(G)$, which means that $w'\in N_G(X_1)$. Since $w'$ is not the first vertex of $P$, this contradicts the choice of $P$.

To finish the proof we now show that $|V(P)| - |V(R)| \leq |F| \leq k$. 
Let $s = |V(P)|$ and $x_1,x_2,\ldots,x_s$ be the vertices
of $P$ in the order of their appearance.
The essence of the proof lies in the fact that whenever $R$ uses some edge $x_ax_b \in F$, $a < b$,
then $F$ needs to contain a triangulation of the cycle $x_a-x_{a+1}-\ldots-x_b-x_a$, consisting of $(b-a-2)$ edges.
Thus, we need to ``pay'' with $(b-a-1)$ edges of $F$ (including $x_ax_b$) to shorten the length of $P$ by, again, $(b-a-1)$.
The formal argumentation follows.

Define the sequence $a_1,a_2,\ldots,a_r$ as follows.
Let $a_1 = 1$ and, given $1 \leq a_i < s$, define $a_{i+1}$ to be such an index, such that
$x_{a_{i+1}}$ is the vertex from the set $\{x_{a_i + 1}, x_{a_i + 2}, \ldots, x_s\}$ that appears earliest on the path $R$.
Clearly, by the definition, $x_{a_{i+1}}$ appears on $R$ later than $x_{a_i}$ and $a_i < a_{i+1}$.
This definition ends when $a_r = s$ for some index $r$.

Consider now an edge $e_{i+1} := x_{b_{i+1}}x_{a_{i+1}}$ on the path $R$, that is, $x_{b_{i+1}}$ is the predecessor of $x_{a_{i+1}}$ on $R$. Clearly $b_{i+1}\leq a_i$, since otherwise $b_{i+1}$ would be a better candidate for $a_{i+1}$.
If $e_{i+1} \in E(G)$, then we have $b_{i+1} = a_i = a_{i+1}-1$ since $P$ is an induced path in $G$.
Otherwise, $e_{i+1} \in F$.
By the definition of $a_{i+1}$, all internal vertices $x_b$ of $R[x_{a_i}, x_{a_{i+1}}]$ satisfy $b < a_i$, as otherwise they would be better candidates for $a_{i+1}$.
Hence, as $P$ is an induced path in $G$ and $R$ is an induced path in $H=G+F$, $F$ needs to contain a triangulation of the cycle consisting of the subpath $R[x_{a_i}, x_{a_{i+1}}]$
and the subpath $P[x_{a_i},x_{a_{i+1}}]$. This triangulation consists of at least $(a_{i+1} - a_i - 2)$ edges.
Moreover, since $R$ is an induced path in $H = G+F$, all the edges of the triangulation needs to have at least one endpoint
in the set $\{x_{a_i + 1}, x_{a_i + 2}, \ldots, x_{a_{i+1}-1}\}$; note that the second endpoint always lies in the set
$\{x_1,x_2,\ldots,x_{a_{i+1}}\}$.
Together with the edge $e_{i+1}$, we infer that there are at least $(a_{i+1} - a_i - 1)$ edges $x_ax_b$ of $F$ such that $a < b$
and $a_i < b \leq a_{i+1}$. Note that this statement also trivially holds in the first case, when $e_{i+1} \in E(G)$.

Observe that the specified set of edges of $F$ are pairwise disjoint for different edges $e_{i+1}$. We infer that
$$|V(P)| - |V(R)| \leq s - r = \sum_{i=1}^{r-1} (a_{i+1}-a_i-1) \leq |F| \leq k,$$
and the lemma is proven.
\end{proof}

Lemmata~\ref{lem:pmc:P-cases} and~\ref{lem:pmc:P-short} enable us to perform the following branching strategy.
In a loop, as long as $\Cfam_{11}$ is not empty, we pick arbitrary $C \in \Cfam_{11}$,
compute a shortest path $P$ in $G[C]$ between $N_G(X_1) \cup F_1$ and $N_G(X_2) \cup F_2$,
and proceed as follows.
First, if the bound of Lemma~\ref{lem:pmc:P-short} does not hold, that is, if $|V(P)| > 3k$, then we conclude that the current guesses are incorrect and we terminate the current branch.
Second, we invoke at most $4|V(P)|$ recursive calls (branches),
in each branch assigning one of the vertices $v \in V(P)$ to one of the sets $F_1$, $F_2$, $X_1$, $X_2$ that does not contain $v$ already.
Third, we put the entire $V(P)$ into $K$ and go back to the beginning of the loop.
By Lemma~\ref{lem:pmc:missing-bound}, we may terminate the current branch if the size of the set $K$ increased
by more than $k$ since the root of the recursion.
Consequently, by the bound of Lemma~\ref{lem:pmc:P-short}, the aforementioned loop produces $\Oh(k^2)$ recursive calls, and leaves
us with a situation where $\Cfam_{11} = \emptyset$.

\subsection{Case two: components not knowing any side of the clique}

We now focus on a component $C \in \Cfam_{00}$, that is, a connected component
of $G \setminus (X_1 \cup X_2 \cup K)$
that does not contain any vertices of $N_G(X_1 \cup X_2) \cup F_1 \cup F_2$.
In particular, note that for any such component it holds that $N_G(C) \subseteq K \setminus \{v_1,v_2\}$.

We now prove a few properties of such components $C$, assuming $C \cap \Isec \neq \emptyset$.
Our goal is to prove that each such component contains a vertex of $F_1^\circ \cup X_1^\circ \cup F_2^\circ \cup X_2^\circ$, and, moreover, both the sizes and the number of candidates for such components are bounded polynomially in $k$.

\begin{lemma}\label{lem:pmc:neiK-between}
If $C \in \Cfam_{00}$ and 
$C \cap \Isec \neq \emptyset$, then $\model(\Iend{f_1}) < \Sbeg{C}{\model} < \Send{C}{\model} < \model(\Ibeg{f_2})$.
\end{lemma}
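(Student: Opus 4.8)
The plan is to combine two elementary facts. First, a connected component $C$ of an interval graph occupies, in any model $\model$, the contiguous block of positions $[\Sbeg{C}{\model},\Send{C}{\model}]$: if some position strictly between were missed by all intervals of $C$, splitting $C$ into the vertices whose intervals end before that position and those whose intervals start after it would disconnect $G[C]$ (this splitting has to be carried out with respect to $G$, where $C$ is connected, not $G+F$). Second, $f_1$ and $f_2$ are untouched, so their neighbourhoods in $G+F$ equal their neighbourhoods in $G$, and since $C\in\Cfam_{00}$ we have $C\cap N_G(X_1)=C\cap N_G(X_2)=\emptyset$, while $f_1\in X_1$ and $f_2\in X_2$ guarantee $f_1,f_2\notin C$.

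First I would record the consequences of the hypotheses. Fix $z\in C\cap\Isec$; since $z\in\Isec_\model(p)$ we get $\Sbeg{C}{\model}\le\model(\Ibeg{z})\le p$ and $\Send{C}{\model}\ge\model(\Iend{z})\ge p+1$. Also recall $\model(\Iend{f_1})\le p+1$ and $\model(\Ibeg{f_2})\ge p$ directly from the definitions of $f_1$ and $f_2$.

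Next comes the core step; I prove $\Send{C}{\model}<\model(\Ibeg{f_2})$, the other outer inequality being symmetric. Suppose $\Send{C}{\model}\ge\model(\Ibeg{f_2})$. Combined with $\Sbeg{C}{\model}\le p\le\model(\Ibeg{f_2})$, the position $\model(\Ibeg{f_2})$ lies in the block $[\Sbeg{C}{\model},\Send{C}{\model}]$, so by the contiguity fact there is $w\in C$ with $\model(\Ibeg{w})\le\model(\Ibeg{f_2})\le\model(\Iend{w})$. Checking the combinatorial model definition, $w$ and $f_2$ are then adjacent in $G+F$ (neither $\model(\Iend{f_2})<\model(\Ibeg{w})$ nor $\model(\Iend{w})<\model(\Ibeg{f_2})$ holds); as $f_2$ is untouched this forces $wf_2\in E(G)$, hence $w\in N_G(f_2)\subseteq N_G(X_2)$, contradicting $C\in\Cfam_{00}$. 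For $\Sbeg{C}{\model}>\model(\Iend{f_1})$ one argues identically: assuming $\Sbeg{C}{\model}\le\model(\Iend{f_1})$, using $\model(\Iend{f_1})\le p+1\le\Send{C}{\model}$ the position $\model(\Iend{f_1})$ lies in $C$'s block, yielding $w\in C$ adjacent to $f_1$ in $G+F$, hence $wf_1\in E(G)$ and $w\in N_G(X_1)$, again a contradiction. Finally $\Sbeg{C}{\model}<\Send{C}{\model}$ is trivial since $C$ is nonempty.

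I do not expect a genuine obstacle here; the only points that require care are translating the geometric phrase ``some interval of $C$ covers position $q$'' into the strict-inequality conditions of the combinatorial interval model from Section~\ref{sec:prelims}, and making sure the disconnection argument establishing the contiguity of $C$'s block is performed in $G$ rather than in $G+F$.
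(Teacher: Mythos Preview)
Your proof is correct and follows the same approach as the paper's: $f_1,f_2$ are untouched and belong to $X_1\cup X_2$, while $C\in\Cfam_{00}$ contains no $G$-neighbour of $X_1\cup X_2$, so $C$ has no $(G+F)$-neighbour of $f_1$ or $f_2$ and must lie strictly between $\Iend{f_1}$ and $\Ibeg{f_2}$ once one uses $C\cap\Isec\neq\emptyset$ to pin its block across position $p$. The paper records only these three facts in a single sentence; you have spelled out the contiguity-of-a-connected-component argument and the contradiction explicitly, which is fine.
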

\begin{proof}
Recall that $f_1$ and $f_2$ are untouched by the solution $F$,
both belong to $X_1 \cup X_2$, and $C$ does not contain
any neighbor of $X_1 \cup X_2$.
\end{proof}

\begin{lemma}\label{lem:pmc:neiK-touched}
If $C \in \Cfam_{00}$ and 
$C \cap \Isec \neq \emptyset$, then all vertices of $C$ are
touched by the solution, and, consequently, $|C| \leq 2k$.
\end{lemma}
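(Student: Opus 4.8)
The plan is to argue vertex by vertex: I will show that every $u \in C$ is incident to an edge of the solution $F$, and then $|C| \le 2k$ follows immediately, since a completion of size at most $k$ touches at most $2k$ vertices. So fix $u \in C$ and assume for contradiction that $u$ is untouched by $F$.

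The only structural input I need is Lemma~\ref{lem:pmc:neiK-between}, which confines the interval of $u$ strictly between $\model(\Iend{f_1})$ and $\model(\Ibeg{f_2})$; concretely, $\model(\Iend{f_1}) < \model(\Ibeg{u}) < \model(\Iend{u}) < \model(\Ibeg{f_2})$. I then split into three exhaustive cases according to where the interval of $u$ sits relative to the clique $\Isec = \Isec_\model(p)$; recall $\model(\Ibeg{v_2}) = p$, $\model(\Iend{v_1}) = p+1$, and $\model$ is injective. First, if $\model(\Iend{u}) \le p+1$, then $u$ is an untouched vertex with $\model(\Iend{f_1}) < \model(\Iend{u}) \le p+1$, which contradicts the maximality in the definition of $f_1$. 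Symmetrically, if $\model(\Ibeg{u}) \ge p$, then $u$ is untouched with $p \le \model(\Ibeg{u}) < \model(\Ibeg{f_2})$, contradicting the definition of $f_2$. In the remaining case $\model(\Ibeg{u}) < p$ and $\model(\Iend{u}) > p+1$, so $\model(\Ibeg{u}) \le p < \model(\Iend{u})$ and hence $u \in \Isec$. Since $\model$ is a model of $G+F$, the section $\Isec$ is a clique of $G+F$, and $v_1 \in \Isec$ (because $\model(\Ibeg{v_1}) \le p < p+1 = \model(\Iend{v_1})$), so $uv_1 \in E(G+F)$. On the other hand $u \in C$ while $v_1 \notin C$, and $N_G(C) \subseteq K \setminus \{v_1,v_2\}$ because $C \in \Cfam_{00}$, so $uv_1 \notin E(G)$; therefore $uv_1 \in F$ and $u$ is touched after all, a contradiction.

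Having ruled out all three cases, $u$ must be touched, so every vertex of $C$ is touched and $|C| \le 2k$. I do not anticipate a genuine obstacle here, as the argument is a short case analysis; the two points requiring care are verifying that the three cases are exhaustive (which uses injectivity of $\model$ to rule out $\model(\Iend{u}) = p+1$ and $\model(\Ibeg{u}) = p$), and recalling that $f_1$ and $f_2$ are untouched so that $u$ is a legitimate competitor against them in their respective extremal definitions.
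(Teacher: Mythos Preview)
Your proof is correct and follows essentially the same three-case argument as the paper: a vertex of $C$ either lies to the left of $\Isec$ (and then violates the extremality of $f_1$ if untouched), lies to the right (violating $f_2$), or belongs to $\Isec$ (and then $uv_1\in F$ since $N_G(C)\subseteq K\setminus\{v_1,v_2\}$). The paper phrases the first two cases directly rather than by contradiction and notes both $vv_1,vv_2\in F$ in the third, but the substance is identical; your closing remark about needing injectivity for exhaustiveness is unnecessary, since Case~3 is simply the complement of Cases~1 and~2.
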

\begin{proof}
Let $v \in C$. If $v \in \Isec$, $v$ is touched by $F$
as $vv_1,vv_2 \in F$. If $v$ lies to the left of $\Isec$ then,
by Lemma~\ref{lem:pmc:neiK-between}, $\model(\Iend{v}) > \model(\Iend{f_1})$,
and $v$ is touched by the choice of $f_1$.
The case of $v$ lying to the right of $\Isec$ is symmetrical.
\end{proof}

\begin{lemma}\label{lem:pmc:neiK-large-v}
If $C \in \Cfam_{00}$ and 
$C \cap \Isec \neq \emptyset$, then there exists $v \in C$
such that $|K \setminus N_G(v)| \leq k$.
\end{lemma}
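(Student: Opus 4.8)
If $C \in \Cfam_{00}$ and $C \cap \Isec \neq \emptyset$, then there exists $v \in C$ such that $|K \setminus N_G(v)| \leq k$.

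\textbf{Proof plan.}
The plan is to charge every vertex of $K$ that fails to be a $G$-neighbour of a suitably chosen $v\in C$ to a distinct fill edge of $F$ incident with $v$; since $|F|\le k$, this immediately gives the bound.

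First I would fix a vertex $v\in C\cap\Isec$, which exists by the hypothesis $C\cap\Isec\neq\emptyset$. Recall that the current guess satisfies $K\subseteq\Isec=\Isec_\model(p)$. Thus, in the model $\model$ of $G+F$, the interval of every $u\in K$ contains position $p$, and so does the interval of $v$, because $v\in\Isec$. Hence the intervals of $u$ and $v$ intersect for every $u\in K$, i.e.\ $uv\in E(G+F)$.

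Now consider any $u\in K\setminus N_G(v)$. Then $uv\notin E(G)$, but $uv\in E(G+F)$ by the previous paragraph, so $uv\in F$. Moreover $v\notin K$ (as $C$ is a component of $G\setminus(X_1\cup X_2\cup K)$, so $C\cap K=\emptyset$), hence these edges $uv$ obtained for distinct $u$ are pairwise distinct and all lie in $\incF{v}$. Consequently
$$|K\setminus N_G(v)|\;\le\;|\incF{v}|\;\le\;|F|\;\le\;k,$$
as required.

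The statement has essentially no hard step: the only thing to observe is that $K$ lies entirely inside the clique $\Isec$ together with $v$, which forces every $u\in K$ to be adjacent to $v$ in $G+F$; after that the inequality is just $|F|\le k$. (In particular, unlike the preceding lemmas in this subsection, we do not even need $C\in\Cfam_{00}$ nor the positional information of Lemma~\ref{lem:pmc:neiK-between} for this bound; those are used only when $K\setminus N_G(v)$ is later combined with Lemma~\ref{lem:A-r-nei} to bound the number of such components.)
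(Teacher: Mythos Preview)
Your proof is correct and follows exactly the same idea as the paper's: pick any $v\in C\cap\Isec$, observe that $K\subseteq\Isec$ forces $uv\in E(G+F)$ for all $u\in K$, and hence each $u\in K\setminus N_G(v)$ contributes a distinct edge of $F$. The paper states this in one line, and your additional remark that the hypothesis $C\in\Cfam_{00}$ is not actually used here is a valid observation.
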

\begin{proof}
Observe that any vertex of $C \cap \Isec$ needs to be adjacent
to all vertices of $K$ in $G+F$, and $|F| \leq k$.
\end{proof}

\begin{lemma}\label{lem:pmc:neiK-interesting}
If $C \in \Cfam_{00}$ and 
$C \cap \Isec \neq \emptyset$, then $C$ contains a vertex of
$(F_1^\circ \setminus F_1) \cup (X_1^\circ \setminus X_1) \cup (F_2^\circ \setminus F_2) \cup (X_2^\circ \setminus X_2)$.
\end{lemma}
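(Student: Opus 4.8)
The plan is to argue by contradiction. Suppose $C \in \Cfam_{00}$, $C \cap \Isec \neq \emptyset$, and $C$ contains no vertex of $(F_1^\circ \setminus F_1) \cup (X_1^\circ \setminus X_1) \cup (F_2^\circ \setminus F_2) \cup (X_2^\circ \setminus X_2)$. First I would collect the elementary consequences of this. Since $C$ is a connected component of $G \setminus (X_1 \cup X_2 \cup K)$ and $C \in \Cfam_{00}$, $C$ is disjoint from $X_1 \cup X_2 \cup F_1 \cup F_2$, so the hypothesis in fact says that $C$ avoids $F_1^\circ \cup X_1^\circ \cup F_2^\circ \cup X_2^\circ$ entirely; in particular $wc_1 \notin F$ and $wc_2 \notin F$ for every $w \in C$. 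Moreover $N_G(C) \subseteq K \setminus \{v_1,v_2\} \subseteq \Isec \setminus \{v_1,v_2\}$; since $\model(\Ibeg{v_2}) = p$, the only vertex with starting event at position $p$ is $v_2$, so every vertex of $N_G(C)$ starts strictly before position $p$, and, since $\model(\Iend{v_1}) = p+1$, no starting event of any vertex other than $v_1$ occupies position $p+1$. Finally $c_1, c_2 \notin N_G(C)$, because $c_1 \in X_1$, $c_2 \in X_2$, and there are no $G$-edges between $C$ and $X_1 \cup X_2$.

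Then I would split according to whether $C$ has a vertex whose starting event lies at a position $\geq p$. If it does not: pick any $u \in C \cap \Isec$; then every $G$-neighbour of $u$ (inside $C$ by the case assumption, in $N_G(C)$ by the observation above, and $v_2$ is not among them) has its starting event before position $p$. As $u \in \Isec$ we have $\model(\Ibeg{u}) < p < \model(\Iend{u})$, and $uv_2 \in F$ because $u, v_2 \in \Isec$ while $v_2 \notin N_G(C)$. Hence the interval-shrinking move from the proof of Lemma~\ref{lem:pmc:char} --- move $\Iend{u}$ to the position just before $\Ibeg{v_2}$ --- produces a completion $F' \subsetneq F$, contradicting the minimality of $F$. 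So $C$ must contain a vertex whose starting event lies at a position $\geq p$. This half of the argument uses nothing about interesting vertices, so in passing it yields a clean auxiliary statement which I would factor out: every component of $\Cfam_{00}$ meeting $\Isec$ has a vertex starting to the right of position $p$.

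In the remaining case, a vertex of $C \cap \Isec$ starts before position $p$, so by connectivity of $G[C]$ there is an edge $xy \in E(G[C])$ with $\model(\Ibeg{x}) < p \leq \model(\Ibeg{y})$; since positions $p$ and $p+1$ carry $\Ibeg{v_2}$ and $\Iend{v_1}$, in fact $\model(\Ibeg{y}) \geq p+2$. The intervals of $x$ and $y$ overlap in $\model$, so $\model(\Iend{x}) \geq \model(\Ibeg{y})$. Now I would invoke the hypothesis: $y \notin X_2^\circ$ together with $\model(\Ibeg{y}) \geq p$ forces $\model(\Ibeg{y}) \geq \model(\Ibeg{c_2})$; and $c_2 \neq v_2$ (otherwise the vertex $u$ from the previous paragraph would satisfy $uc_2 \in F$, placing $u \in C$ in $F_2^\circ$), whence $\model(\Ibeg{c_2}) \geq p+2$. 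Combining, $\model(\Ibeg{x}) < p < \model(\Ibeg{c_2}) \leq \model(\Iend{x})$, so the intervals of $x$ and $c_2$ overlap in $\model$ and therefore $xc_2 \in E(G+F)$. But $x \in C$, so $xc_2 \notin E(G)$ (as $c_2 \notin N_G(C)$) and $xc_2 \notin F$ (as $x \notin F_2^\circ$) --- a contradiction, which completes the proof.

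I expect the only delicate point to be the bookkeeping of which vertices can be adjacent to a component of $\Cfam_{00}$: the final step is a genuine contradiction precisely because $c_2$ is neither a $G$-neighbour of $x$ nor joined to $x$ by a fill edge, and the first case works only because $u$'s $G$-neighbourhood lies entirely below position $p$, so that the shrinking move of Lemma~\ref{lem:pmc:char} applies verbatim. Both of these reduce to the observations in the first paragraph about $N_G(C)$, and the rest is a reuse of arguments already present in the section.
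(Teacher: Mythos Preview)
Your proof is correct and follows essentially the same strategy as the paper's: an exchange argument to force $C$ to have a vertex on one side of the clique, then the connectivity of $G[C]$ together with the defining properties of $c_i$, $F_i^\circ$, $X_i^\circ$ to produce the interesting vertex. The organization differs slightly: the paper first shows $C \not\subseteq \Isec$ by relocating \emph{all} of $\events{C}$ as a clique just before $\Ibeg{v_2}$, and then picks an edge $vw$ with $v\in\Isec$, $w\notin\Isec$; you instead split on whether some vertex of $C$ starts at a position $\geq p$, and in the negative case reuse the single-endpoint shrink from Lemma~\ref{lem:pmc:char}. Your variant is a touch more economical (one vertex moved instead of the whole component) and, as you note, yields the slightly sharper auxiliary fact that such a $C$ always contains a vertex lying to the \emph{right} of $\Isec$, not merely outside it; the paper's argument only gives ``outside'' and then breaks symmetry by hand. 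The second half of both proofs is the same calculation, yours phrased contrapositively.
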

\begin{proof}
We first show that $C \not\subseteq \Isec$. Assume the contrary.
Let $|C|=s$ and $x_1,x_2,\ldots,x_s$ be the vertices of $C$.
Consider a model
$\model'$ created from $\model$ by taking out all events
of $\events{C}$ and inserting them, in the order
$\Ibeg{x_1},\Ibeg{x_2},\ldots,\Ibeg{x_s},\Iend{x_s},\Iend{x_{s-1}},\ldots,\Iend{x_1}$ between positions $p-1$ and $p$ (i.e., just
before the event $\Ibeg{v_2}$ at position $p$.
As $N_G(C) \subseteq K \setminus \{v_1,v_2\}$, $\model'$ is a valid
interval model of some completion $F'$ of $G$.
As $C \subseteq \Isec\setminus \{v_1,v_2\}$ and in particular $C$ is a clique in $G+F$, for any $x_i \in C$ we have
$\model(\Ibeg{x_i}) < p < \model(\Iend{x_i})$ and, consequently,
$F' \subseteq F$. Moreover, $x_iv_2 \in F \setminus F'$ for any
$x_i \in C$, contradicting the minimality of $F$.

Since $C$ is connected in $G$, we may pick $v,w \in C$ such that
$vw \in E(G)$, $v \in \Isec$ and $w \notin \Isec$;
w.l.o.g. assume that $w$ lies to the left of $\Isec$.
If $\model(\Ibeg{v}) \leq \model(\Iend{c_1})$ then
$vc_1 \in F$ and $v \in F_1^\circ \setminus F_1$.
Otherwise, we have $\model(\Iend{c_1}) < \model(\Iend{w}) < p$
and $w \in X_1^\circ \setminus X_1$.
This finishes the proof of the lemma.
\end{proof}

By Lemmata~\ref{lem:pmc:neiK-touched} and~\ref{lem:pmc:neiK-large-v},
all components $C \in \Cfam_{00}$
that may have a nonempty intersection with $\Isec$ need to (a) be
of size at most $2k$ and (b) have a vertex with at most $k$ non-neighbors
in $K$. By Lemma~\ref{lem:A-r-nei}, applied to the set $A := K$
and parameter $r := k$, 
in a YES-instance we expect $\Oh(k^2)$ components satisfying the second
requirement.
(Formally, we conclude that $(G,k)$ is a NO-instance and return $\pmcset = \emptyset$
 if the bound of Lemma~\ref{lem:A-r-nei} turns out to be violated.)
Consequently, all components satisfying
both requirements (a) and (b) have $\Oh(k^3)$ vertices in total.
This, together with Lemma~\ref{lem:pmc:neiK-interesting},
motivates the following branching step. First, we invoke
$\Oh(k^3)$ recursive calls, in each call picking a vertex
from a component satisfying both (a) and (b)
and inserting it into one of the sets $F_1$, $X_1$, $F_2$, $X_2$.
Finally,  we pass the instance to the next case,
assuming that no component of $\Cfam_{00}$
contains a vertex of $\Isec$.

\subsection{Case three: components knowing one side of the clique}

We are left with the components of $\Cfam_{01} \cup \Cfam_{10}$.
By symmetry, we may focus on $\Cfam_{10}$ only.

Consider $C \in \Cfam_{10}$.
The main obstacle we obtain in this section is that an analogue
of Lemma~\ref{lem:pmc:neiK-touched} does not hold
(in particular $C$ may contain a lot of vertices in $N_G(f_1)$)
and, consequently, $C$ may be large.
To apply arguments similar to the previous case, we need to further analyze the structure
of such component $C$.

To this end, we define $\Dfam_1 = \Ccomp{G[\bigcup \Cfam_{10} \setminus N_G(f_1)]}$.
Now, for each $D \in \Dfam_1$ we have
not only $D \cap (N_G(X_2) \cup F_2) = \emptyset$ but also $D \cap N_G(f_1) = \emptyset$,
and we can state analogues of Lemmata~\ref{lem:pmc:neiK-between} and~\ref{lem:pmc:neiK-touched}.

\begin{lemma}\label{lem:pmc:3-between}
For any $D \in \Dfam_1$ either $\Send{D}{\model} < \model(\Ibeg{f_1})$ or
$\model(\Iend{f_1}) < \Sbeg{D}{\model} < \Send{D}{\model} < \model(\Ibeg{f_2})$.
Moreover, if the second option happens, then all vertices of $D$ are touched by $F$ and $|D| \leq 2k$.
\end{lemma}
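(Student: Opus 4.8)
The plan is to follow closely the template of Lemmata~\ref{lem:pmc:neiK-between} and~\ref{lem:pmc:neiK-touched}, exploiting that $f_1\in X_1$ and $f_2\in X_2$ are untouched ``anchor'' vertices. First I would observe that any $D\in\Dfam_1$ is contained in a single component $C\in\Cfam_{10}$ (the members of $\Cfam_{10}$ are distinct components of $G\setminus(X_1\cup X_2\cup K)$, hence $G$-connected subsets of $\bigcup\Cfam_{10}$ stay inside one of them). By definition of $\Dfam_1$ we have $D\cap N_G(f_1)=\emptyset$, and since $C\cap N_G(X_2)=\emptyset$ and $f_2\in X_2$ we also get $D\cap N_G(f_2)=\emptyset$. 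As $f_1,f_2$ are untouched, in $G+F$ no vertex of $D$ is adjacent to $f_1$ or to $f_2$, so in the model $\model$ every interval of $D$ is disjoint from the interval of $f_1$ and from the interval of $f_2$. Because $G[D]$ is connected, all of $D$ lies on one side of $f_1$'s interval and on one side of $f_2$'s interval (two vertices on opposite sides of a fixed interval $I$ are non-adjacent, so a path inside $D$ between them would contain a consecutive pair straddling $I$ --- impossible). This yields the dichotomies ``$\Send{D}{\model}<\model(\Ibeg{f_1})$ or $\Sbeg{D}{\model}>\model(\Iend{f_1})$'' and ``$\Send{D}{\model}<\model(\Ibeg{f_2})$ or $\Sbeg{D}{\model}>\model(\Iend{f_2})$''.

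Next I would rule out the branch $\Sbeg{D}{\model}>\model(\Iend{f_2})$. Pick $u\in C\cap(N_G(X_1)\cup F_1)$, which exists by definition of $\Cfam_{10}$. Every vertex of $X_1$ has its ending event at a position $\le p+1$ (true for $v_1$, $c_1$, $f_1$, and, by definition, for every vertex of $X_1^\circ$), and $\model(\Iend{c_1})\le p+1$ as well; hence, whether $u\in N_G(X_1)$ or $u\in F_1$ (so $uc_1\in F$), $u$ is adjacent in $G+F$ to some vertex whose interval ends at a position $\le p+1<\model(\Iend{f_2})$ (using Lemma~\ref{lem:pmc:vtx-pos}), so $\model(\Ibeg{u})<\model(\Iend{f_2})$. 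Since $u\in C$ is non-adjacent to $f_2$ in $G+F$, this forces $\model(\Iend{u})<\model(\Ibeg{f_2})$, i.e.\ $u$ lies entirely to the left of $f_2$. If $D$ lay entirely to the right of $f_2$, a $G[C]$-path from $u$ to a vertex of $D$ --- all its vertices lying in $C$, hence non-adjacent to $f_2$ --- would contain a consecutive pair straddling $f_2$'s interval, a contradiction. Therefore $\Send{D}{\model}<\model(\Ibeg{f_2})$, and combining with the $f_1$-dichotomy gives exactly the two stated alternatives: either $\Send{D}{\model}<\model(\Ibeg{f_1})$, or $\model(\Iend{f_1})<\Sbeg{D}{\model}<\Send{D}{\model}<\model(\Ibeg{f_2})$.

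For the ``moreover'' part, suppose the second alternative holds and, for contradiction, that some $v\in D$ is untouched by $F$, so $N_{G+F}(v)=N_G(v)$. From $\model(\Iend{f_1})<\Sbeg{D}{\model}\le\model(\Ibeg{v})$ and the maximality of $\model(\Iend{f_1})$ among untouched vertices ending by position $p+1$, an untouched $v$ with $\model(\Iend{v})\le p+1$ would contradict the choice of $f_1$; hence $\model(\Iend{v})>p+1$. Symmetrically, from $\model(\Iend{v})\le\Send{D}{\model}<\model(\Ibeg{f_2})$ and the minimality in the definition of $f_2$, we get $\model(\Ibeg{v})<p$. Thus $v\in\Isec_\model(p)=\Isec$. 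But $\Isec$ is a clique of $G+F$ containing $v_2$ (as $\model(\Ibeg{v_2})=p$ and $\model(\Iend{v_2})>p$), and since $v$ is untouched and $v\ne v_2$ (because $v\in C$ while $v_2\in X_2$), we obtain $vv_2\in E(G)$, i.e.\ $v\in N_G(X_2)$ --- contradicting $v\in C\in\Cfam_{10}$. Hence every vertex of $D$ is touched by $F$, and as $|F|\le k$ this gives $|D|\le 2k$.

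I expect the main obstacle to be the middle step: excluding the third a priori possibility that $D$ sits entirely to the right of $f_2$. Non-adjacency of $D$ to $f_1$ and $f_2$ alone does not forbid this, so one has to reach back to the defining property of $\Cfam_{10}$ (that $C$ meets $N_G(X_1)\cup F_1$) together with the positional fact that all of $X_1$ (and $c_1$) lies to the left of $\Isec$, in order to pin $C$ --- and therefore $D$ --- to the left of $f_2$. The remaining ingredients are straightforward adaptations of the arguments in Lemmata~\ref{lem:pmc:neiK-between} and~\ref{lem:pmc:neiK-touched}, using that $f_1$ and $f_2$ are untouched extremal vertices; I would also briefly note that the degenerate case $f_1=f_2$ forces $f_1=f_2=v_1=v_2$, in which the second alternative is vacuously impossible and the first alternative holds.
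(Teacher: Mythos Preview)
Your proof is correct and follows essentially the same route as the paper's: both use that $D$ avoids neighbors of the untouched $f_1,f_2$, exclude the ``right of $f_2$'' possibility via the defining property of $\Cfam_{10}$ (the paper phrases this as ``$N_{G+F}(C)$ contains a vertex of $X_1$ and does not contain $f_2$'', which you unpack explicitly), and then argue touchedness from the extremality of $f_1,f_2$. The only cosmetic difference is that for the ``moreover'' part the paper does a direct case split on $v\in\Isec$ versus $v\notin\Isec$, whereas you take the contrapositive; also note your inequality $p+1<\model(\Iend{f_2})$ should be $\le$ in general (equality exactly in the degenerate case you already single out), though this does not affect the conclusion.
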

\begin{proof}
As $D$ is connected and does not contain any neighbor of the untouched vertices
$f_1$ and $f_2$, we need only to exclude the possibility $\Sbeg{D}{\model} > \model(\Iend{f_2})$.
However, this clearly follows from the fact that there exists a connected component $C \in \Cfam_{10}$
containing $D$: $N_{G+F}(C)$ contains a vertex of $X_1$ and does not contain $f_2$. This proves the first assertion of the lemma.

Assume now that 
$\model(\Iend{f_1}) < \Sbeg{D}{\model} < \Send{D}{\model} < \model(\Ibeg{f_2})$.
Pick any $v \in D$. If $v \in \Isec$, then $v$ is touched by $F$ as $vv_2 \in F$.
Otherwise $\model(\Iend{f_1}) < \model(\Iend{v}) < p$ or $\model(\Ibeg{f_2}) > \model(\Ibeg{v}) > p+1$. In both cases $v$ is touched by the choice of $f_1$ or $f_2$.
\end{proof}

The following lemma shows formally why we are interested in components of $\Dfam_1$.
\begin{lemma}\label{lem:pmc:3-D}
A component $C \in \Cfam_{10}$ contains an element of $\Isec$ if and only if
there exists $D \in \Dfam_1$, $D \subseteq C$, such that $D \cap \Isec \neq \emptyset$
or $\Sbeg{D}{\model} > p+1$.
In particular, such a component $D$ satisfies the second option of Lemma~\ref{lem:pmc:3-between}.
\end{lemma}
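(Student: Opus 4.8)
The plan is to prove the two implications of the biconditional separately; the forward direction splits into an easy case and a case where I would invoke the minimality of $F$. First I would record two observations. Since $C\in\Cfam_{10}$, the set $C\cap(N_G(X_1)\cup F_1)$ is nonempty; pick any $w^{\star}$ in it. Every vertex of $X_1\subseteq X_1^\circ\cup\{v_1,c_1,f_1\}$ ends at a position $\le p+1$, and every vertex of $F_1\subseteq F_1^\circ$ is a $G+F$-neighbour of $c_1$, which also ends at a position $\le p+1$; hence $w^{\star}$ has a $G+F$-neighbour lying to the left of $\Isec$, and Lemma~\ref{lem:pmc:K} yields $\model(\Ibeg{w^{\star}})\le p$. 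Moreover, the members of $\Cfam_{10}$ are pairwise non-adjacent in $G$, so each $D\in\Dfam_1$ is contained in a single member of $\Cfam_{10}$; thus for a fixed $C$ it is the same to say ``$D$ meets $C$'' and ``$D\subseteq C$''.

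For ``$\Leftarrow$'', suppose such a $D$ exists. If $D\cap\Isec\ne\emptyset$, then $C\supseteq D$ meets $\Isec$ and there is nothing to prove. Otherwise $\Sbeg{D}{\model}>p+1$, so every vertex of $D$ starts at a position $\ge p+2$ (position $p+1$ holds the ending event $\Iend{v_1}$). Take $d\in D$ and a path $w^{\star}=z_0,z_1,\dots,z_m=d$ in $G[C]$, and let $j$ be least with $\model(\Ibeg{z_j})\ge p+2$. Then $j\ge1$, $\model(\Ibeg{z_{j-1}})\le p$, and since $z_{j-1}z_j\in E(G)$ the two intervals overlap, which forces $\model(\Iend{z_{j-1}})\ge\model(\Ibeg{z_j})\ge p+2>p$; hence $z_{j-1}\in\Isec_\model(p)=\Isec$ and $z_{j-1}\in C$.

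For ``$\Rightarrow$'', assume $C\cap\Isec\ne\emptyset$. If some vertex of $C\cap\Isec$ is not a $G$-neighbour of $f_1$, it lies in a component $D\in\Dfam_1$ with $D\subseteq C$ and $D\cap\Isec\ne\emptyset$, so we may assume $C\cap\Isec\subseteq N_G(f_1)$ and fix $s\in C\cap\Isec$ (note $s\ne v_2$, as $v_2\notin C$). I claim that $C$ has a vertex whose starting event lies at a position $\ge p+2$. Suppose not, so every vertex of $C$ starts at a position $\le p$. Since $v_2\in\Isec$ (because $\model(\Ibeg{v_2})=p$) and $\Isec$ is a clique of $G+F$, the vertices $s$ and $v_2$ are adjacent in $G+F$; they are not adjacent in $G$, for otherwise $s\in N_G(v_2)\subseteq N_G(X_2)$, contradicting $C\in\Cfam_{10}$; hence $sv_2\in F$. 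Now let $\model'$ be obtained from $\model$ by deleting $\Iend{s}$ and re-inserting it immediately before $\Ibeg{v_2}$: only $\Iend{s}$ moves and only leftwards, so no new adjacency appears and any lost edge of $G$ is incident to $s$. But every $G$-neighbour of $s$ lies in $C$ or in $N_G(C)\subseteq X_1\cup K$, so it starts at a position $\le p$ (by the supposition on $C$; or because $X_1$-vertices end, hence start, at a position $\le p+1$, and $K\subseteq\Isec$) and, not being $v_2$, strictly before $\Ibeg{v_2}$; since the original interval of $s$ reached beyond position $p+1$, each such neighbour still meets the shortened interval of $s$. Hence $\model'$ is a model of $G+F'$ for some $F'\subseteq F$ with $sv_2\in F\setminus F'$, contradicting the minimality of $F$.

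Consequently $C$ has a vertex $b$ with $\model(\Ibeg{b})\ge p+2>p+1\ge\model(\Iend{f_1})$, so $b\notin N_G(f_1)$ and $b$ lies in some $D\in\Dfam_1$ with $D\subseteq C$. If $D\cap\Isec\ne\emptyset$ we are done; otherwise each vertex of $D$ either ends at a position $\le p$ or starts at a position $\ge p+2$ (it cannot lie in $\Isec$), two such vertices have disjoint intervals, so by connectedness of $G[D]$ all its vertices are of the same type, and as $b$ is of the second type $\Sbeg{D}{\model}\ge p+2>p+1$, as needed. For the ``moreover'' claim, in either outcome $D$ contains a vertex ending at a position $\ge p+2$ (a vertex of $D\cap\Isec$, using $v_1\notin C\supseteq D$; or any vertex of $D$ when $\Sbeg{D}{\model}>p+1$), so $\Send{D}{\model}\ge p+2>p+1\ge\model(\Iend{f_1})>\model(\Ibeg{f_1})$, which is incompatible with the first alternative of Lemma~\ref{lem:pmc:3-between}; hence $D$ falls into its second alternative. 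The step I expect to be the main obstacle is precisely this minimality argument: I must make sure that sliding $\Iend{s}$ leftwards past $\Ibeg{v_2}$ severs no edge of $G$, and this rests on the very assumption being refuted, namely that all of $C$---and therefore all neighbours of $C$ outside $C$, which lie in $X_1\cup K$---start at a position $\le p$.
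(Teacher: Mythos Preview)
Your proof is correct and follows essentially the same strategy as the paper's: both directions hinge on the same minimality argument (shrinking the part of $C$ that protrudes past $\Ibeg{v_2}$ to contradict minimality of $F$), and both use connectivity of $D$ to conclude $\Sbeg{D}{\model}>p+1$ once $D\cap\Isec=\emptyset$.

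There are only cosmetic differences. For ``$\Leftarrow$'' the paper directly exhibits a vertex $z\in N_G(D)\cap N_G(f_1)\cap C$ and applies Lemma~\ref{lem:pmc:K}, whereas you run a path from $w^\star$ to $D$ and pick the first vertex whose start crosses position $p+1$; both are equally short. For ``$\Rightarrow$'' the paper moves \emph{all} events of $\events{C}$ lying at positions $\ge p$ back before $\Ibeg{v_2}$, while you move only the single endpoint $\Iend{s}$; your variant is slightly simpler and works for the same reason (every $G$-neighbour of $s$ starts strictly before $\Ibeg{v_2}$). Your preliminary case split on whether $C\cap\Isec\subseteq N_G(f_1)$ is harmless but unnecessary: you never use $s\in N_G(f_1)$ in the minimality argument, so the main branch already covers both cases. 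Finally, you explicitly verify the ``moreover'' clause, which the paper leaves implicit.
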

\begin{proof}
Assume first that such a component $D$ exists for some $C \in \Cfam_{10}$.
If $D$ contains a vertex of $\Isec$, then clearly so does $C$, so assume $\Sbeg{D}{\model} > p+1$.
Then $N_{G+F}(D) \cap X_1 = \emptyset$ but $N_{G+F}(C) \cap X_1 \neq \emptyset$.
Hence, as $G[C]$ is connected and $D$ is a connected component of $G[C]\setminus N_G(f_1)$, we infer that there exists some $z \in N_G(D) \cap N_G(f_1)$. Such a $z$ clearly belongs to $\Isec$ by Lemma~\ref{lem:pmc:K}.

In the other direction, assume that $C\cap \Isec\neq \emptyset$. Suppose first that there exists $x \in C$ with $\model(\Ibeg{x}) > p+1$.
Then $x \notin N_G(f_1)$ and $x \in D$ for some $D \in \Dfam_1$. If $D \cap \Isec \neq \emptyset$ we are done.
Otherwise, by the connectivity of $D$ we have $\Sbeg{D}{\model} > p+1$ and the claim is proven.

So we have $\model(\Ibeg{x}) < p$ for any $x \in C$, as $v_1,v_2 \notin C$. Consider an interval model
$\model'$ created from $\model$ by taking all events of $\events{C}$ that are placed at positions at least $p$,
and putting them (in the same order) just before position $p$ (i.e., between positions $p-1$ and $p$). 
As $N_G(C) \subseteq (X_1 \cup K) \setminus \{v_2\}$, this is a valid interval model of $G+F'$ for some completion $F'$.
As $\model(\Ibeg{x}) < p$ for any $x \in C$, we have $F' \subseteq F$. Moreover, $xv_2 \in F \setminus F'$ for any
$x \in C \cap \Isec$. By the minimality of $F$ we have $C \cap \Isec = \emptyset$, which contradicts our assumption about $C$ and concludes the proof.
\end{proof}

Hence, we now focus on components $D$ and try to deduce which of them may possibly satisfy one of
the conditions imposed in Lemma~\ref{lem:pmc:3-D}.
We first make use of the untouched vertex $g_1$ to filter out some clearly ``useless'' components of $\Dfam_1$.
\begin{lemma}\label{lem:pmc:3-g}
If for $D \in \Dfam_1$ we have $D \cap N_G(g_1) \neq \emptyset$ then $\Send{D}{\model} < \model(\Ibeg{f_1})$ (i.e., the first option of Lemma~\ref{lem:pmc:3-between} happens).
\end{lemma}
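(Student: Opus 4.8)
The plan is to argue by contradiction using Lemma~\ref{lem:pmc:3-between}. That lemma leaves exactly two possibilities for a component $D\in\Dfam_1$: either $\Send{D}{\model}<\model(\Ibeg{f_1})$ (the conclusion we want), or $D$ lies ``in the middle'', that is $\model(\Iend{f_1})<\Sbeg{D}{\model}<\Send{D}{\model}<\model(\Ibeg{f_2})$. So I would fix a vertex $d\in D\cap N_G(g_1)$, assume the second possibility holds, and derive a contradiction with the definition of $g_1$ --- concretely, with the fact that $g_1\in N_G[f_1]\setminus(\Isec\setminus\{v_1\})$, so $g_1\notin\Isec$ unless $g_1=v_1$.

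First I would record the easy consequences of the definitions. Since $\Dfam_1=\Ccomp{G[\bigcup\Cfam_{10}\setminus N_G(f_1)]}$, the component $D$ contains no neighbour of $f_1$, and it contains no vertex of $N_G(f_1)$; as $d\in D$ is a neighbour of $g_1$, this forces $g_1\neq f_1$ and $g_1\notin D$. Hence $g_1\in N_G[f_1]\setminus\{f_1\}=N_G(f_1)$, so $g_1f_1\in E(G)$, and $dg_1\in E(G)$ is a genuine edge. From $g_1f_1\in E(G)\subseteq E(G+F)$ and $\model(\Iend{f_1})\le p+1$, the interval-model condition gives $\model(\Ibeg{g_1})<\model(\Iend{f_1})\le p+1$, so $\model(\Ibeg{g_1})\le p$. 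From $dg_1\in E(G)\subseteq E(G+F)$ and the assumed ``middle'' placement of $D$ --- which gives $\model(\Ibeg{d})\ge\Sbeg{D}{\model}>\model(\Iend{f_1})$ --- the interval-model condition gives $\model(\Iend{g_1})>\model(\Ibeg{d})>\model(\Iend{f_1})$.

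Now comes the punchline. The vertex $g_1$ is untouched by $F$ (by its very definition) and has ending event strictly later than that of $f_1$. But $f_1$ was selected as the untouched vertex with the rightmost ending event among all untouched vertices whose ending event is at most $p+1$; hence $\model(\Iend{g_1})>p+1$. Together with $\model(\Ibeg{g_1})\le p$ this yields $\model(\Ibeg{g_1})\le p<\model(\Iend{g_1})$, i.e.\ $g_1\in\Isec_\model(p)=\Isec$. Since $g_1\in N_G[f_1]\setminus(\Isec\setminus\{v_1\})$, membership of $g_1$ in $\Isec$ is possible only if $g_1=v_1$; but then $\model(\Iend{g_1})=\model(\Iend{v_1})=p+1$, contradicting $\model(\Iend{g_1})>p+1$. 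This contradiction rules out the second option of Lemma~\ref{lem:pmc:3-between}, leaving precisely the first one, which is the assertion of the lemma.

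I do not anticipate a real obstacle here: the proof is a short position chase. The only points requiring care are the two degenerate cases ($g_1=f_1$, excluded because $D$ avoids $N_G(f_1)$; and $g_1=v_1$, excluded because then $g_1$'s ending event is only $p+1$), and invoking the correct direction of the adjacency/non-adjacency condition of an interval model when translating ``$g_1f_1\in E(G)$'' and ``$dg_1\in E(G)$'' into inequalities on positions.
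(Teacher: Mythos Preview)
Your proof is correct and follows the same underlying idea as the paper. The paper's one-line proof simply cites the inequality $\model(\Iend{g_1})\le\model(\Iend{f_1})$ from Lemma~\ref{lem:pmc:vtx-pos}: any $d\in D\cap N_G(g_1)$ then satisfies $\model(\Ibeg{d})<\model(\Iend{g_1})\le\model(\Iend{f_1})$, immediately excluding the second option of Lemma~\ref{lem:pmc:3-between}. Your steps~7--9 (using the maximality of $f_1$ among untouched vertices and the fact that $g_1\notin\Isec\setminus\{v_1\}$) effectively re-derive that same inequality by contradiction, so you could shorten the argument considerably by invoking Lemma~\ref{lem:pmc:vtx-pos} directly.
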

\begin{proof}
Follows directly from the inequality $\model(\Iend{g_1}) \leq \model(\Iend{f_1})$ (Lemma~\ref{lem:pmc:vtx-pos}).
\end{proof}

We denote $\Dfam_2 = \{D \in \Dfam_1: g_1 \notin N_G(D)\}$
and define $Z = \bigcup_{D \in \Dfam_2} N_G(D) \setminus (K \cup X_1)$.
Note that $N_G(D) \subseteq X_1 \cup K \cup N_G(f_1)$ by the definition of $\Cfam_{10}$ and $\Dfam_1$.
Consequently, $Z \subseteq N_G(f_1)\cap \bigcup \Cfam_{10}$.
The following observation is the main reason to introduce the vertex $g_1$ and ``filter out'' components
of $\Dfam_1 \setminus \Dfam_2$ in Lemma~\ref{lem:pmc:3-g}.
\begin{lemma}\label{lem:pmc:3-Z-small}
All vertices of $Z$ are touched by $F$ and, consequently, $|Z| \leq 2k$.
\end{lemma}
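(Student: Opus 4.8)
The plan is to show that every $z \in Z$ is touched by the solution $F$; the bound $|Z| \le 2k$ then follows immediately since $F$ has at most $k$ edges and hence touches at most $2k$ vertices. So fix $z \in Z$. By definition of $Z$, there is a component $D \in \Dfam_2$ with $z \in N_G(D)$ and $z \notin K \cup X_1$; and as noted right before the lemma, $z \in N_G(f_1) \cap \bigcup\Cfam_{10}$, so in particular $z$ lies in some $C \in \Cfam_{10}$ and $zf_1 \in E(G)$. Since $f_1$ is untouched and lies to the left of $\Isec$ (it belongs to $X_1$), Lemma~\ref{lem:pmc:K} gives $\model(\Ibeg{z}) \le p$. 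The goal is to produce an edge of $F$ incident to $z$.

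First I would locate $D$ in the model. Since $D \in \Dfam_2 \subseteq \Dfam_1$, Lemma~\ref{lem:pmc:3-between} tells us $D$ satisfies one of its two options. I claim $D$ must satisfy the \emph{first} option, i.e.\ $\Send{D}{\model} < \model(\Ibeg{f_1})$: indeed $z \in N_G(D) \cap N_G(f_1)$, so $z$ has a neighbor in $D$ and a neighbor ($f_1$) whose interval contains position $p$; if $D$ sat to the right of $f_1$ (the second option, $\model(\Iend{f_1}) < \Sbeg{D}{\model}$), then $z$'s interval would have to start at or after $\model(\Ibeg{f_1}) > \model(\Iend{f_1})$ — wait, more carefully: $z$ is adjacent in $G+F$ to a vertex of $D$ lying entirely to the right of $\Iend{f_1}$, and to $f_1$ whose interval ends at $\model(\Iend{f_1})$, hence $\Ibeg{z}$ is to the left of $\Iend{f_1}$ and $\Iend{z}$ is to the right of $\Sbeg{D}{\model} > \model(\Iend{f_1})$, so $z \in \Isec$ by Lemma~\ref{lem:pmc:K} — but then $z \in \Isec \setminus (X_1 \cup X_2 \cup K \cup \dots)$ would itself be a component-free vertex adjacent to $f_1$ on one side and to $D$ on the other, which we can route around: since $z \notin K$ and $z \notin X_1$, either $z$ is in some component of $G\setminus(X_1\cup X_2\cup K)$ that meets $N_G(X_1)$ (because $zf_1 \in E(G)$, $f_1 \in X_1$), so that component lies in $\Cfam_{10}\cup\Cfam_{11}$, and $z$ with its $D$-neighbor shows $D$ and $z$ are in the same such component. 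In any case the cleanest route is: the union $D \cup \{z\}$ is connected in $G$ and meets $N_G(X_1)$ via $z$ but not $N_G(X_2)\cup F_2$, hence lies in some $C' \in \Cfam_{10}$, and $D$ is a connected component of $G[C']\setminus N_G(f_1)$ with $z \in N_G(D) \cap N_G(f_1)$. This puts us exactly in the hypothesis of (the proof idea behind) Lemma~\ref{lem:pmc:3-D}, and so I would invoke the structure there.

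The main work — and the expected main obstacle — is the following swapping argument, analogous to the minimality arguments in Lemmas~\ref{lem:pmc:neiK-interesting} and~\ref{lem:pmc:3-D}. Suppose toward a contradiction that $z$ is untouched by $F$. Then $N_{G+F}(z) = N_G(z)$; since $z \notin \Isec$ (an untouched vertex adjacent to $f_1$, hence with $\Ibeg{z} \le p$, that additionally were in $\Isec$ would be touched as $zv_2 \in F$), $z$ lies strictly to the left of $\Isec$, and combined with $D$ sitting at $\Send{D}{\model} < \model(\Ibeg{f_1})$ from the first option of Lemma~\ref{lem:pmc:3-between}, I would argue that $z$ together with $D$ can be ``slid'' leftward in the model — take all events of $\events{D \cup \{z\}}$ lying at or to the right of $\model(\Ibeg{f_1})$ and move them just before $\model(\Ibeg{f_1})$, preserving their relative order. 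Because $f_1$ is untouched and $N_G(D) \subseteq X_1 \cup K \cup N_G(f_1)$ while $N_G(z) \subseteq X_1 \cup K \cup N_G(f_1) \cup (\text{stuff left of }\Isec)$, one checks that this produces a valid interval model of $G+F'$ with $F' \subsetneq F$ — losing at least one fill edge, the one that (by the proof of Lemma~\ref{lem:pmc:3-D}) must connect $D$ or $z$ to something it has now been separated from, e.g.\ a fill edge incident to $f_1$'s neighborhood that previously forced $D$ rightward. This contradicts minimality of $F$, so $z$ is touched. The delicate point is verifying that the slid model is genuinely valid (no lost real edges of $G$, which holds since everything $z$ and $D$ touch in $G$ is either in $X_1 \cup K$, which stays put, or in $N_G(f_1)$, whose intervals all span position $\model(\Ibeg{f_1})$ and hence still meet the moved intervals) and that $F'$ is strictly smaller — I would mirror the bookkeeping in the proofs of Lemmas~\ref{lem:pmc:neiK-interesting} and~\ref{lem:pmc:3-D} verbatim, substituting $g_1 \notin N_G(D)$ (the defining property of $\Dfam_2$) wherever those proofs used the placement of $f_1$, so that the moved block genuinely clears all of $g_1$'s footprint.
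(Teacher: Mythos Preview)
Your proposal has a genuine gap: the intended swapping argument does not work, and the whole approach misses the much simpler idea the paper uses.

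First, you cannot rule out the second option of Lemma~\ref{lem:pmc:3-between}. Your claim that in this case $z \in \Isec$ is wrong: from $\model(\Iend{z}) \ge \Sbeg{D}{\model} > \model(\Iend{f_1})$ you only get $\model(\Iend{z}) > \model(\Iend{f_1})$, not $\model(\Iend{z}) > p$. The component $D$ may well sit strictly between $\Iend{f_1}$ and position $p$, entirely to the left of $\Isec$. So both options genuinely occur and must be handled.

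Second, your sliding argument in the first option is broken. In that case $\Send{D}{\model} < \model(\Ibeg{f_1})$, so the only event of $\events{D \cup \{z\}}$ lying to the right of $\model(\Ibeg{f_1})$ is $\Iend{z}$ (because $z \in N_G(D)$ forces $\model(\Ibeg{z}) < \Send{D}{\model}$, and $zf_1 \in E(G)$ with both vertices untouched forces $\model(\Iend{z}) > \model(\Ibeg{f_1})$). Moving $\Iend{z}$ to just before $\model(\Ibeg{f_1})$ destroys the real edge $zf_1 \in E(G)$, so the resulting permutation is not a model of any supergraph of $G$. Your claim that ``everything $z$ touches is in $X_1 \cup K$ or in $N_G(f_1)$, whose intervals span $\model(\Ibeg{f_1})$'' overlooks $f_1$ itself: $f_1 \in X_1$ stays put, and its interval begins exactly at $\model(\Ibeg{f_1})$, so after the move it no longer meets $z$.

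The paper's proof uses no minimality argument at all; it simply invokes the extremal choices of $f_1$ and $g_1$. Assume $z \notin \Isec$ (otherwise $zv_2 \in F$ since $z \in \bigcup\Cfam_{10}$ forces $zv_2 \notin E(G)$). Then $z$ lies to the left of $\Isec$. In the second option of Lemma~\ref{lem:pmc:3-between}, $\model(\Iend{z}) > \model(\Iend{f_1})$ as above, so an untouched $z$ would contradict the choice of $f_1$ (rightmost ending event among untouched vertices to the left). In the first option, use the defining property $g_1 \notin N_G(D)$ of $\Dfam_2$: since $g_1 \in N_G[f_1]$ and both are untouched, $\model(\Ibeg{f_1}) < \model(\Iend{g_1})$, so from $\Send{D}{\model} < \model(\Ibeg{f_1})$ and $D$ non-adjacent to the untouched $g_1$ one gets $\Send{D}{\model} < \model(\Ibeg{g_1})$; hence $\model(\Ibeg{z}) < \model(\Ibeg{g_1})$, and an untouched $z \in N_G[f_1] \setminus (\Isec \setminus \{v_1\})$ would contradict the choice of $g_1$ (leftmost starting event). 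That is the entire proof.
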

\begin{proof}
Let $z \in Z$ and let $D \in \Dfam_2$ such that $z \in N_G(D)$.
If $z \in \Isec$ then $zv_2 \in F$ and $z$ is touched, so assume otherwise. As $z \in N_G(f_1)$ we infer that $\model(\Iend{z}) < p$.

Consider two cases for component $D$ given by Lemma~\ref{lem:pmc:3-between}.
If $\Send{D}{\model} < \model(\Ibeg{f_1})$ then, as $D \in \Dfam_2$ and $g_1 \in N_G[f_1]$, we have
actually $\Send{D}{\model} < \model(\Ibeg{g_1})$. Hence, $\model(\Ibeg{z}) < \model(\Ibeg{g_1})$.
As $z \in N_G(f_1)$ and $z \notin \Isec$, we infer that $z$ is touched by the choice of $g_1$.
In the second case, if $\model(\Iend{f_1}) < \Sbeg{D}{\model}$ then $\model(\Iend{z}) > \model(\Iend{f_1})$.
As $\model(\Iend{z}) < p$, we infer that $z$ is touched by the choice of $f_1$.
\end{proof}

Formally, if the bound of Lemma~\ref{lem:pmc:3-Z-small} does not hold, we terminate the current branch.
Otherwise, any $D \in \Dfam_2$ satisfies $N_G(D) \subseteq K \cup X_1 \cup Z$, and $|Z| + |X_1| \leq 2k + \Oh(\sqrt{k})$.

We now focus on the possibility of $D \cap \Isec \neq \emptyset$ for some $D \in \Dfam_2$.
\begin{lemma}\label{lem:pmc:3-DIsec}
If $D \cap \Isec \neq \emptyset$ for some $D \in \Dfam_2$, then
$D \cap ((F_2^\circ \setminus F_2) \cup (X_2^\circ \setminus X_2)) \neq \emptyset$.
\end{lemma}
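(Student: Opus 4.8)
The plan is to argue by contradiction: suppose $D \cap ((F_2^\circ \setminus F_2) \cup (X_2^\circ \setminus X_2)) = \emptyset$, and produce a completion $F' \subsetneq F$, contradicting the minimality of $F$. First I would record the structural consequences of this assumption. Since $D$ lies inside a component of $\Cfam_{10}$ it is disjoint from $N_G(X_2) \cup F_2$, and, being part of a component of $G \setminus (X_1 \cup X_2 \cup K)$, also from $X_2$; hence $D \cap F_2 = D \cap X_2 = \emptyset$, which together with the contradiction hypothesis forces $D \cap F_2^\circ = \emptyset$ and $D \cap X_2^\circ = \emptyset$. I would also observe that $v_2 \notin N_G(D)$ and that every vertex of $N_G(D)$ has its starting event at a position at most $p-1$: unwinding the definitions of $\Cfam_{10}$ and $\Dfam_1$, such a vertex lies either in $(X_1 \cup K) \setminus \{v_2\}$ or in $N_G(f_1)$, and in all these cases $\model(\Ibeg{\cdot}) \le p$ (for $X_1$ by the definitions of $X_1^\circ,v_1,c_1,f_1$, for $K$ because $K \subseteq \Isec$, for $N_G(f_1)$ because $\model(\Iend{f_1}) \le p+1$), while $v_2$ is the unique vertex with starting event exactly at $p$.

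Next I would strengthen this to: $\model(\Ibeg{x}) < p$ for \emph{every} $x \in D$. From $D \cap F_2^\circ = \emptyset$ and $D \cap N_G(c_2) = \emptyset$ (the latter because $c_2 \in X_2$) no vertex of $D$ is adjacent to $c_2$ in $G+F$, so in $\model$ the interval of $c_2$ is disjoint from every interval of $D$. Pick $x_0 \in D \cap \Isec$; then $\model(\Ibeg{x_0}) < p \le \model(\Ibeg{c_2})$ (the strict inequality because $x_0 \neq v_2$), so the interval of $x_0$ lies entirely to the left of that of $c_2$. Since $G[D]$ is connected, $c_2$ is disjoint from all of $D$, and a left-of-$c_2$ interval cannot overlap a right-of-$c_2$ one, every interval of $D$ lies to the left of $c_2$, that is $\model(\Ibeg{x}) < \model(\Ibeg{c_2})$ for all $x \in D$. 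As $D \cap X_2^\circ = \emptyset$ forbids $p \le \model(\Ibeg{x}) < \model(\Ibeg{c_2})$, and $\model(\Ibeg{x}) = p$ holds only for $v_2 \notin D$, we obtain $\model(\Ibeg{x}) < p$ for all $x \in D$.

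The decisive step is then a ``push'' of $D$ to the left. I would form $\model'$ from $\model$ by removing all events of $\events{D}$ placed at positions $\geq p$ --- necessarily ending events, by the previous step --- and reinserting them, in their relative order, immediately before position $p$. Each interval of $D$ only shrinks (its starting event does not move, its ending event moves left), so no new edge appears among vertices of $D$; the only adjacencies that can be created at all are between a vertex of $D$ whose shrunk interval straddles the inserted block --- which forces it to lie in $\Isec$ --- and a vertex $u$ with $\model(\Ibeg{u}) < p \le \model(\Iend{u})$, that is $u \in \Isec$, and such an adjacency already holds in $G+F$ because $\Isec$ is a clique there. Conversely no edge of $G$ is destroyed: for $xy \in E(G)$ with $x \in D$ we have $\model(\Ibeg{y}) \le p-1$, so $\Iend{x}$, even after being moved to just before position $p$, still follows $\Ibeg{y}$. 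Hence $\model'$ is an interval model of $G + F'$ for some $F' \subseteq F$; and for $x_0 \in D \cap \Isec$ the event $\Iend{x_0}$ now precedes $\Ibeg{v_2}$, so $x_0v_2 \notin E(G+F')$ although $x_0 v_2 \in F$ (since $x_0, v_2 \in \Isec$ and $x_0 v_2 \notin E(G)$ because $v_2 \notin N_G(D)$). This gives $F' \subsetneq F$, the required contradiction.

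I expect the last step to be the main obstacle: one must check carefully that relocating the ``late'' ending events of $D$ neither destroys an edge of $G$ --- which is precisely why the first step must establish that \emph{every} neighbour of $D$, not just $D$ itself, starts before position $p$ --- nor produces an interval model of a graph lying outside the range between $G$ and $G+F$. The bookkeeping of which adjacencies can appear or disappear under the shift, and the observation that every newly created adjacency is confined to the clique $\Isec$ and hence was present already, is the delicate part; the remainder is a routine combination of the definitions with the minimality of $F$.
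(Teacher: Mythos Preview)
Your proof is correct and follows essentially the same approach as the paper: a ``push $D$ to the left'' argument combined with an analysis through the vertex $c_2$. The organization differs only in that the paper first proves (via the push) that some $w\in D$ satisfies $\model(\Ibeg{w})>p+1$ and then directly exhibits a vertex of $F_2^\circ$ or $X_2^\circ$, whereas you run the whole thing as a single contradiction, using $c_2$ and connectivity to force $\model(\Ibeg{x})<p$ for all $x\in D$ before pushing. These are contrapositives of one another, and your connectivity argument via non-adjacency to $c_2$ is if anything slightly cleaner than the paper's edge-by-edge case split.

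One minor remark: in the push step, since every interval of $D$ only shrinks and non-$D$ events keep their relative order, \emph{no} new adjacency can be created at all; your discussion of hypothetical created adjacencies being confined to $\Isec$ is therefore unnecessary (though harmless).
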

\begin{proof}
We first show that if $D \cap \Isec \neq \emptyset$ then there exists $w \in D$ with $\model(\Ibeg{w}) > p+1$.
Assume the contrary, and consider a model $\model'$ created from $\model$ by taking all events of $\events{D}$
that are placed by $\model$ on positions to the right of $\Isec$ (i.e., at positions with numbers at least $p$)
and move them just before position $p$ (i.e., the event $\Ibeg{v_2}$), in the same order as they appear in $\model$.
As $N_G(D) \subseteq X_1 \cup N_G(f_1)$, $\model'$ is an interval model of some completion $F'$ of $G$.
Since we supposed that no vertex of $D$ starts in $\model$ after position $p$, we have $F' \subseteq F$. Moreover, $vv_2 \in F \setminus F'$
for any $v \in D \cap \Isec$, a contradiction to the minimality of $F$.

By the connectivity of $D$, there exist $v,w \in D$ such that $vw \in E(G)$, $v \in \Isec$, and $\model(\Ibeg{w}) > p+1$.
Consider two cases. If $\model(\Iend{v}) \geq \model(\Ibeg{c_2})$ then $vc_2 \in F$ and $v \in F_2^\circ \setminus F_2$.
Otherwise we have $\model(\Ibeg{w}) < \model(\Iend{v}) < \model(\Ibeg{c_2})$, and hence, by the choice of $c_2$, $w$ is expensive. Consequently
  $w \in X_2^\circ \setminus X_2$.
\end{proof}

We now note that
if $D \cap \Isec \neq \emptyset$, then any $v \in D \cap \Isec$ needs to satisfy $|K \setminus N_G(v)| \leq k$.
Let $\Dfam_3 \subseteq \Dfam_2$ be the family of these connected components $D$ of $\Dfam_2$ that
(a) have size at most $2k$, and (b) contain a vertex $v$ that has at most $k$ non-neighbors in $K$.
By Lemma~\ref{lem:pmc:3-between}, if $D \cap \Isec \neq \emptyset$ then $D \in \Dfam_3$.
By Lemma~\ref{lem:A-r-nei} applied to the set $A := K \cup X_1 \cup Z$ and
$r = k + |Z| + |X_1| = \Oh(k)$, we infer that in a YES-instance we expect
$|\Dfam_3| = \Oh(k^2)$ (formally, we terminate the algorithm and return $\pmcset = \emptyset$
if this is not the case). Consequently, $|\bigcup \Dfam_3| = \Oh(k^3)$.
Hence, Lemma~\ref{lem:pmc:3-DIsec} allows us 
to branch into $\Oh(k^3)$ recursive calls: in each call we put one of the vertices
of $\bigcup \Dfam_3$ into one of the sets $F_2$, $X_2$.
We proceed further with the assumption that no vertex of $\bigcup \Dfam_2$ belongs to $\Isec$,
and we focus on the possibility that $\Sbeg{D}{\model} > p+1$ for some $D \in \Dfam_2$.

\begin{lemma}\label{lem:pmc:3-Dright}
If $\Sbeg{D}{\model} > p+1$ for some $D \in \Dfam_2$, then
either $Z \cap (F_2^\circ \setminus F_2) \neq \emptyset$ or there exists $w \in D \cap (X_2^\circ \setminus X_2)$
 such that $N_G(w) \cap Z = N_G(w) \cap (\Isec \setminus K) \neq \emptyset$.
\end{lemma}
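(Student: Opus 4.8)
The plan is to fix $D\in\Dfam_2$ with $\Sbeg{D}{\model}>p+1$, so that every event of $\events{D}$ sits to the right of position $p+1$; in particular each vertex of $D$ lies strictly to the right of $\Isec$ and $D\cap\Isec=\emptyset$. I would first observe that in this situation $N_G(D)\subseteq K\cup Z$ with $K\cap Z=\emptyset$: we already know $N_G(D)\subseteq X_1\cup K\cup N_G(f_1)$, so $N_G(D)\setminus(X_1\cup K)\subseteq Z$ by the definition of $Z$; moreover no vertex of $X_1$ is adjacent in $G$ to a vertex of $D$, since every vertex of $X_1\subseteq X_1^\circ\cup\{v_1,c_1,f_1\}$ has its ending event at position at most $p+1$ (by the definitions of these sets and Lemma~\ref{lem:pmc:vtx-pos}) while every vertex of $D$ starts after $p+1$, so $N_G(D)\cap X_1=\emptyset$ and hence $N_G(D)\subseteq K\cup Z$. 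This already yields, for \emph{every} $w\in D$, the identity $N_G(w)\cap Z=N_G(w)\cap(\Isec\setminus K)$: a vertex of $N_G(w)$ inside $\Isec\setminus K$ avoids $D$ (disjoint from $\Isec$) and $K$, hence lies in $Z$; and conversely any $z'\in N_G(w)\cap Z$ satisfies $z'\in N_G(f_1)$ and is adjacent to $w$, which lies to the right of $\Isec$, so $z'\in\Isec$ by Lemma~\ref{lem:pmc:K}, while $z'\notin K$ by the definition of $Z$. Thus it remains only to locate a witness $w$, or to fall into the $Z\cap(F_2^\circ\setminus F_2)$ alternative.

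Next I would exhibit a ``hub'' vertex $z\in Z$ adjacent to $D$. The component $C\in\Cfam_{10}$ containing $D$ holds a vertex of $N_G(X_1)\cup F_1$, which by Lemma~\ref{lem:pmc:K} starts at position at most $p$ and therefore does not belong to $D$; hence $D\subsetneq C$. Since $G[C]$ is connected and $D$ is a connected component of $G[C\setminus N_G(f_1)]$, the set $N_{G[C]}(D)$ is nonempty and contained in $C\cap N_G(f_1)$; fix $z$ in it and $w\in D$ with $zw\in E(G)$. Then $z\notin X_1\cup X_2\cup K$, so $z\in N_G(D)\setminus(K\cup X_1)\subseteq Z$; moreover $z\in\Isec$ by Lemma~\ref{lem:pmc:K} ($f_1$ to the left of $\Isec$, $w$ to the right), and $z\notin N_G(X_2)$ because $C$ contains no neighbour of $X_2$, so in particular $zc_2\notin E(G)$.

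Finally I would split according to whether $\model(\Iend{z})\geq\model(\Ibeg{c_2})$. If it is, then the intervals of $z$ and $c_2$ overlap (recall $\model(\Ibeg{z})\leq p\leq\model(\Ibeg{c_2})$), so $zc_2\in E(G+F)$, and since $zc_2\notin E(G)$ this forces $zc_2\in F$, i.e. $z\in F_2^\circ$; also $z\notin F_2$, for otherwise $z$ would lie in $N_G(f_1)\subseteq N_G[X_1]$ and in $F_2\subseteq N_G[X_2]\cup F_2$, hence in $K$, contradicting $z\in C$. Thus $z\in Z\cap(F_2^\circ\setminus F_2)$, the first alternative. If instead $\model(\Iend{z})<\model(\Ibeg{c_2})$, then from $zw\in E(G)$ and $\model(\Ibeg{z})\leq p<p+1<\model(\Ibeg{w})$ the starting event of $w$ lies inside the interval of $z$, so $p<\model(\Ibeg{w})<\model(\Iend{z})<\model(\Ibeg{c_2})$, whence $w\in X_2^\circ$; since $w\in C$ we have $w\notin X_2$, and $z\in N_G(w)\cap Z$ shows $N_G(w)\cap Z\neq\emptyset$. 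Combined with the identity from the first paragraph, $w$ witnesses the second alternative.

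I expect the only real obstacle to be bookkeeping: juggling the several incarnations of ``lying to the left/right of $\Isec$'', the distinction between $G$- and $(G+F)$-adjacency (crucial both for deducing the fill-in edge $zc_2$ and for ruling out $z\in F_2$), and degenerate configurations such as $v_1=v_2$ or $f_1=v_1$, none of which should affect the argument but each of which must be checked against the positional inequalities of Lemma~\ref{lem:pmc:vtx-pos}. The one genuinely load-bearing step is the opening inclusion $N_G(D)\subseteq K\cup Z$: it is what makes the claimed set equality hold automatically and reduces the lemma to the clean positional dichotomy on $\model(\Iend{z})$ versus $\model(\Ibeg{c_2})$.
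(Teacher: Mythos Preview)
Your proof is correct and follows essentially the same route as the paper's: establish $N_G(D)\subseteq K\cup Z$, pick $z\in N_G(D)\cap Z\subseteq\Isec\setminus K$, and split on whether $\model(\Iend{z})\geq\model(\Ibeg{c_2})$. You supply more detail than the paper (the set identity $N_G(w)\cap Z=N_G(w)\cap(\Isec\setminus K)$ is only asserted there, and your justification of $z\notin F_2$ via the invariant $(N_G[X_1]\cup F_1)\cap(N_G[X_2]\cup F_2)\subseteq K$ is sound, though the simpler observation $z\in C\in\Cfam_{10}\Rightarrow z\notin F_2$ would also do).
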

\begin{proof}
First note that, as $\Sbeg{D}{\model} > p+1$, then $N_G(D) \subseteq K \cup Z$, and $D$ does not contain any vertex of $F_1^\circ$.
Moreover, as $D \subseteq C$ for some $C \in \Cfam_{10}$, we have that $N_G(D) \cap Z \neq \emptyset$.

Pick any $z \in N_G(D) \cap Z$. As $zf_1 \in E(G)$ and $\Sbeg{D}{\model} > p+1$, we have $z \in \Isec \setminus K$.
If $\model(\Iend{z}) \geq \model(\Ibeg{c_2})$, then we have $z \in F_2^\circ \setminus F_2$ and we are done.
Otherwise, any neighbor $w \in N_G(z) \cap D$ satisfies $\model(\Ibeg{w}) < \model(\Iend{z}) < \model(\Ibeg{c_2})$ and, by the choice of $c_2$, we infer that
$w \in X_2^\circ \setminus X_2$.
As $N_G(w) \subseteq D \cup K \cup Z$, such $w$ satisfies the requirements of the lemma; the fact that $N_G(w) \cap Z = N_G(w) \cap (\Isec \setminus K)$ follows easily from the assumptions about $D$ and the definition of $Z$.
\end{proof}

Lemma~\ref{lem:pmc:3-Dright}, together with the bound $|Z| \leq 2k$ of Lemma~\ref{lem:pmc:3-Z-small},
allows us to perform the following branching. In the first $|Z|$ recursive calls we pick a vertex of $Z$
and insert it into $F_2$.
Then, we invoke Lemma~\ref{lem:A-nei} on the set $A := Z$, expecting $\Oh(k^2)$ neighborhood classes
w.r.t. $Z$ in the graph $G$ (formally, if this is not the case, we conclude that $(G,k)$
is a NO-instance and return an empty set $\pmcset$). 
We branch into $\Oh(k^2)$ subcases, in each recursive call picking a neighborhood class $R$
w.r.t. $Z$ with nonempty neighborhood $N_G(R) \cap Z$ and inserting this neighborhood into $K$.

Finally, we are left with the case where the conclusion is that no component $D \in \Dfam_2$ satisfies
$\Sbeg{D}{\model} > p+1$; recall that we have already concluded before that no component $D \in \Dfam_2$ has a nonempty intersection with $\Isec$. By Lemma~\ref{lem:pmc:3-D} we infer that in fact there are no vertices of $\Isec$ at all in the components of $\Cfam_{10}$.

Therefore, we pass the instance to the symmetric case
of $\Cfam_{01}$ and we perform all the symmetric branchings. In the remaining subcase,
we can finally conclude that $K = \Isec$: We have $\Cfam_{11}=\emptyset$, and we have already concluded that there are no vertices of $\Isec$ in the components of $\Cfam_{00}$, of $\Cfam_{10}$, nor of $\Cfam_{01}$. Hence we insert the set $K$ into the constructed family $\pmcset$.

It remains to argue that we output $k^{\Oh(\sqrt{k})}$ sets for each choice
of the vertices $v_i,c_i,f_i,g_i$, $i=1,2$. Clearly, each step of the recursion invokes
$\mathrm{poly}(k)$ recursive calls. To see that the depth of the recursion can be bounded by
$\Oh(\sqrt{k})$, note that whenever we make a recursive call, we either insert a new vertex
into one of the sets $F_1$, $X_1$, $F_2$, $X_2$, or
we put into $K$ all vertices of a non-empty set $N_G(w) \cap (\Isec \setminus K)$ for some $w \in (X_1^\circ \setminus X_1) \cup (X_2^\circ \setminus X_2)$ --- hence this step can be done at most once for every $w\in X_1^\circ\cup X_2^\circ$ during the whole branching process.
As $|F_1^\circ|,|F_2^\circ| \leq \sqrt{k}$ and $|X_1^\circ| + |X_2^\circ| \leq 2\sqrt{k}+1$,
we can prune the recursion tree at depth $6\sqrt{k}+2$, obtaining the claimed bound
on the size of $\pmcset$.
This concludes the proof of Theorem~\ref{thm:pmcs}.

\section{Guessing fill-in edges with fixed endpoint}\label{sec:fill-in}
In this section we prove the following result.
\begin{theorem}\label{thm:fill-in}
Given an \icname{} instance $(G,k)$, where the Module Reduction Rule is not applicable,
and a designated vertex $v \in V(G)$, 
one can in $k^{\Oh(\sqrt{k})}n^{\Oh(1)}$ time enumerate a family
$\fiset$ of at most $k^{\Oh(\sqrt{k})} n^{70}$ subsets of $V(G)$,
each of size $\Oh(k^5)$,
satisfying the following: for any minimal solution $F$ to $(G,k)$ there exists some $B \in \fiset$
such that $w \in B$ whenever $vw \in F$.
\end{theorem}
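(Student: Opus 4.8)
The plan is to fix a minimal solution $F$ with model $\model$ of $G+F$ and to ``anchor'' the situation around $v$ by a bounded number of guessable objects, so that $\{w : vw\in F\}$ becomes confined to a set $B$ of size $\Oh(k^5)$. Concretely, put $p_L^v = \model(\Ibeg{v})$, $p_R^v = \model(\Iend{v})$, let $f_L$ (resp.\ $f_R$) be the untouched vertex whose interval contains $[p_L^v,p_R^v]$ having rightmost starting (resp.\ leftmost ending) endpoint, set $p_L^f = \model(\Ibeg{f_L})$, $p_R^f = \model(\Iend{f_R})$, and let $\Isec_L^f,\Isec_L^v,\Isec_R^v,\Isec_R^f$ be the sections at positions $p_L^f$, $p_L^v$, $p_R^v-1$, $p_R^f-1$; the universal vertex $\Uroot$ always witnesses that $f_L,f_R$ exist. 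First I would enumerate all $n^2$ choices of $f_L,f_R$ and, via Theorem~\ref{thm:sections}, all $(k^{\Oh(\sqrt k)}n^{17})^4$ choices of the four sections, giving $k^{\Oh(\sqrt k)}n^{70}$ \emph{branches}; it then suffices to output one set $B$ of size $\Oh(k^5)$ per branch.

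Fix a branch, assumed correct for $F$. Begin the construction with the ``sure'' part $B^{\mathrm{sure}} = (\Isec_L^v\cup\Isec_R^v)\setminus N_G(v)$, discarding the branch whenever it exceeds $k$ vertices. The backbone is a structural lemma (the analogue of Lemma~\ref{lem:over:fi:structure}, proved using only that $f_L,f_R$ are untouched and $|F|\le k$): every connected component $C$ of $G\setminus(\Isec_L^f\cup\Isec_L^v\cup\Isec_R^v\cup\Isec_R^f)$ is of one of three types --- \emph{outside} ($C$ avoids $N_G(f_L)\cap N_G(f_R)$, hence $vw\notin E(G)\cup F$ for $w\in C$), \emph{top-forced} ($C$ contains a neighbour of $v$, hence $C\subseteq N_{G+F}[v]$ and its non-neighbours of $v$ go into $B^{\mathrm{sure}}$), or \emph{don't-know} ($C\subseteq (N_G(f_L)\cap N_G(f_R))\setminus N_G(v)$), and in the last type $C$ falls into exactly one \emph{case}: \emph{top} (drawn between $p_L^v$ and $p_R^v$, so $vw\in F$ for all $w\in C$ and $N_G(C)\subseteq\Isec_L^v\cup\Isec_R^v$), \emph{left} ($N_G(C)\subseteq\Isec_L^f\cup\Isec_L^v$, $vw\notin F$), or \emph{right} ($N_G(C)\subseteq\Isec_R^f\cup\Isec_R^v$, $vw\notin F$); moreover a don't-know component of size $>k$ cannot be of case top. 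Since each case-top component forces a fill edge at $v$, there are at most $k$ of them; so if I can name a set of $\Oh(k^4)$ don't-know components containing all the case-top ones, then $B := B^{\mathrm{sure}}\cup\bigcup(\text{those components})$ has size $\Oh(k^5)$ and, by the case analysis together with the observation that any $w$ with $vw\in F$ that lies in one of the four sections must lie in $\Isec_L^v\cup\Isec_R^v$, contains every $w$ with $vw\in F$. We then output $B$; the family size and running time follow from the branch count.

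It remains to pin down $\Oh(k^4)$ candidate case-top components in the don't-know family $\Cfam$, which I would do separately for $\Cfam_L=\{C\in\Cfam:N_G(C)\subseteq\Isec_L^f\cup\Isec_L^v\}$ and, symmetrically, $\Cfam_R$; writing also $\Cfam_v=\{C\in\Cfam:N_G(C)\subseteq\Isec_L^v\cup\Isec_R^v\}$, I discard the branch if $\Cfam_v\cup\Cfam_L\cup\Cfam_R\neq\Cfam$ and place every $C\in\Cfam_v\setminus(\Cfam_L\cup\Cfam_R)$ into $B^{\mathrm{sure}}$ (it is surely case top). For $\Cfam_L$ I build a family $\Tfam\subseteq\Cfam_L$ of $\Oh(k^2)$ \emph{troublesome} components and argue every case-top component of $\Cfam_L$ is ``close'' to one of them. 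Troublesome components arise from: (i) all of $\Cfam_L\cap\Cfam_R$, of which only $\Oh(k^2)$ are untouched because their neighbourhoods lie in $(\Isec_L^v\cap\Isec_R^v)\setminus(\Isec_L^f\cap\Isec_R^f)$ --- a set of size $\le 2k$ by the choice of $f_L,f_R$ --- together with $\Isec_L^f\cap\Isec_R^f$, and the Module Reduction Rule is not applicable; (ii) components of $(\Cfam_L\cap\Cfam_v)\setminus\Cfam_R$ that are not \emph{freely drawable} (cannot be inserted between $\Isec_L^f$ and $\Isec_L^v$ in a model of $G+(\binom{\Isec_L^v}{2}\setminus E(G))$, checkable in polynomial time by Lemma~\ref{lem:ic-cliques-drawing}), of which at most $2k$ exist since an untouched such component must be case right, hence in $\Cfam_L\cap\Cfam_R$; (iii) components containing a vertex $w$ with $X\not\subseteq N_G(w)$, where $X=\bigcup_{C\in\Cfam_L\setminus\Cfam_v}N_G(C)\cap\Isec_L^v$, which are forced to be touched; and (iv) a $2$-approximate vertex cover of the conflict graph on the remaining (freely drawable, $X$-adjacent) components, with $C_1\sim C_2$ when neither $\maxneiset{C_1}\subseteq\minneiset{C_2}$ nor $\maxneiset{C_2}\subseteq\minneiset{C_1}$ --- the touched components form a vertex cover, so its size is $\Oh(k)$. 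Then, for a surviving $C\in(\Cfam_L\cap\Cfam_v)\setminus(\Cfam_R\cup\Tfam)$, if $\minneiset{C},\maxneiset{C}$ differ substantially from $\minneiset{D},\maxneiset{D}$ for all $D\in\Tfam$, no troublesome component can interfere with a left-drawing of $C$, so $C$ is untouched and case left; the Module Reduction Rule bounds by $k^{\Oh(1)}$ the number of $C$ whose pair $(\minneiset{C},\maxneiset{C})$ is similar to a fixed troublesome one, leaving $\Oh(k^4)$ candidates for case top. The symmetric argument for $\Cfam_R$ completes the count.

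The main obstacle is this last step: making ``close''/``similar'' precise and verifying, via the neighbourhood-class bounds of Section~\ref{ssec:nei-classes} (Lemmata~\ref{lem:A-nei} and~\ref{lem:A-r-nei}), that only a bounded number of components can be similar to a fixed troublesome one, while at the same time exhibiting the interval model that witnesses ``$C$ is case left'' once no troublesome component interferes. This is where the bulk of the technical work lies and is the most delicate part of the paper.
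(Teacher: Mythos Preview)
Your proposal is correct and follows essentially the same approach as the paper: the same choice of $f_L,f_R$ and four sections, the same component categorization (Lemma~\ref{lem:fi:structure}), the same four sources of troublesome components, and the same overall strategy of showing that any case-top component must be ``close'' to a troublesome one. The only substantive difference is that the paper makes the vague ``close/similar'' step precise via an explicit total quasi-order $\unlhd$ on the non-troublesome components, integer indices $\ToutL{T},\ToutR{T},\TinL{T},\TinR{T}$ for each $T\in\Tfam$, concrete thresholds $\eta,\zeta=\Oh(k^3)$, and a careful swap argument (Lemma~\ref{lem:fi:far-untouched}) replacing a contiguous block of the model to certify that far components are untouched --- exactly the technical work you flag as the remaining obstacle.
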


We will mostly use Theorem~\ref{thm:fill-in} to guess the incident fill-in edges
of a cheap vertex.

\begin{corollary}\label{cor:cheap-fill-in}
Given an \icname{} instance $(G,k)$, where the Module Reduction Rule is not applicable,
and a designated vertex $v \in V(G)$, 
one can in $k^{\Oh(\sqrt{k})}n^{\Oh(1)}$ time enumerate a family $\fiset'$
of at most $k^{\Oh(\sqrt{k})} n^{70}$ subsets of $V(G)$, such that
for any minimal solution $F$ to $(G,k)$ for which $v$ is cheap w.r.t. $F$,
the set $\{w \in V(G): vw \in F\}$ belongs to $\fiset'$.
\end{corollary}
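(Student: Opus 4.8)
The plan is to obtain Corollary~\ref{cor:cheap-fill-in} directly from Theorem~\ref{thm:fill-in} by a single subset-enumeration step; no new structural argument is needed, only the observation that a cheap vertex has few incident fill-in edges.

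First I would apply Theorem~\ref{thm:fill-in} to the instance $(G,k)$ and the designated vertex $v$. This produces, in time $k^{\Oh(\sqrt{k})} n^{\Oh(1)}$, a family $\fiset$ of at most $k^{\Oh(\sqrt{k})} n^{70}$ subsets of $V(G)$, each of size $\Oh(k^5)$, such that for every minimal solution $F$ to $(G,k)$ there is some $B \in \fiset$ with $w \in B$ whenever $vw \in F$; that is, $B \supseteq \{w \in V(G): vw \in F\}$.

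Then I would define $\fiset'$ to consist of all subsets of size at most $\sqrt{k}$ of the members of $\fiset$, i.e. $\fiset' = \{B' \subseteq B : B \in \fiset \text{ and } |B'| \leq \sqrt{k}\}$, and output $\fiset'$. Since each $B \in \fiset$ has $|B| = \Oh(k^5)$, the number of its subsets of size at most $\sqrt{k}$ is at most $\binom{\Oh(k^5)}{\leq \sqrt{k}} \leq (\Oh(k^5))^{\sqrt{k}} = k^{\Oh(\sqrt{k})}$, and all of them can be listed within that time bound. Consequently $|\fiset'| \leq |\fiset| \cdot k^{\Oh(\sqrt{k})} = k^{\Oh(\sqrt{k})} n^{70}$, and the total running time is $k^{\Oh(\sqrt{k})} n^{\Oh(1)}$, as required.

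It remains to check correctness. Let $F$ be a minimal solution to $(G,k)$ for which $v$ is cheap w.r.t. $F$. By the definition of a cheap vertex, $|\incF{v}| \leq \sqrt{k}$, hence the set $S := \{w \in V(G) : vw \in F\}$ satisfies $|S| \leq \sqrt{k}$. By Theorem~\ref{thm:fill-in} there is $B \in \fiset$ with $S \subseteq B$; since moreover $|S| \leq \sqrt{k}$, the set $S$ is one of the size-at-most-$\sqrt{k}$ subsets of $B$ that were placed into $\fiset'$, so $S \in \fiset'$. The only point requiring any verification is the counting bound on the number of small subsets of an $\Oh(k^5)$-element set, which is immediate from $\binom{ck^5}{\leq \sqrt{k}} \leq (ck^5)^{\sqrt{k}} = k^{\Oh(\sqrt{k})}$; everything else is bookkeeping, so I do not anticipate a genuine obstacle here.
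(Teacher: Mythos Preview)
Your proposal is correct and follows essentially the same approach as the paper: apply Theorem~\ref{thm:fill-in} and then take all subsets of size at most $\sqrt{k}$ of each $B \in \fiset$, using the $\Oh(k^5)$ size bound on the elements of $\fiset$ to control the blow-up. The paper's proof is identical in content, just stated more tersely.
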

\begin{proof}
We first enumerate the family $\fiset$ of Theorem~\ref{thm:fill-in} and then define
$$\fiset' = \{A \subseteq V(G): |A| \leq \sqrt{k} \wedge \exists_{B \in \fiset} A \subseteq B\}.$$
The correctness and the size bound follows directly from Theorem~\ref{thm:fill-in}.
\end{proof}

We remark that, similarly as in the previous section, a polynomial kernel for \icname{} would save us 
a lot of effort. In fact, Theorem~\ref{thm:fill-in} becomes obvious as we could then return $\fiset = \{V(G)\}$,
(possibly worsening the polynomial bound on the size of a single element of $\fiset$).
However, the question of existence of a polynomial kernel for \icname{} remains widely open, and we need to employ
a careful analysis to obtain the promised results.

\subsection{Important vertices and sections}

\begin{figure}
\centering
\includegraphics{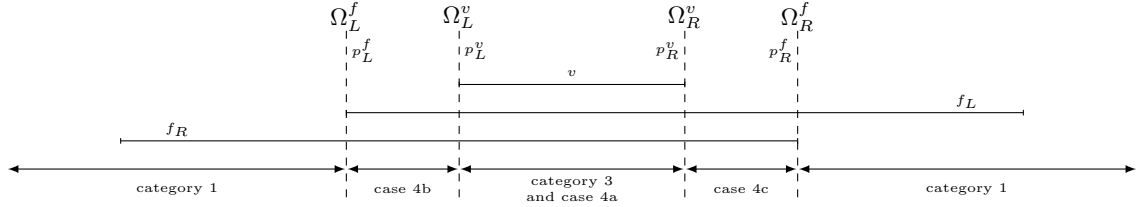}
\caption{Situation around the vertex $v$ in the proof of Theorem~\ref{thm:fill-in},
  together with categories and cases of Lemma~\ref{lem:fi:structure}.}
\label{fig:fill-in}
\end{figure}

We fix a minimal completion $F$ of the \icname{} instance $(G,k)$, and a model $\sigma$ of $G+F$.
We define the following (see also Figure~\ref{fig:fill-in}).
\begin{enumerate}
\item Denote $p_L^v = \model(\Ibeg{v})$ and $p_R^v = \model(\Iend{v})$.
\item Let $f_L$ be the untouched vertex with the rightmost starting endpoint
among untouched vertices $f$ satisfying
$\model(\Ibeg{f}) \leq p_L^v < p_R^v \leq \model(\Iend{f})$.
\item Let $f_R$ be the untouched vertex with the leftmost ending endpoint
among untouched vertices $f$ satisfying
$\model(\Ibeg{f}) \leq p_L^v < p_R^v \leq \model(\Iend{f})$.
\item Denote $p_L^f = \model(\Ibeg{f_L})$ and $p_R^f = \model(\Iend{f_R})$.
\item Denote $\Isec_L^f = \Isec_\model(p_L^f)$, $\Isec_L^v = \Isec_\model(p_L^v)$,
  $\Isec_R^v = \Isec_\model(p_R^v-1)$ and $\Isec_R^f = \Isec_\model(p_R^f-1)$.
\end{enumerate}
Note that $\Uroot$ is a good candidate for both $f_L$ and $f_R$, thus these vertices exist.
We remark also that it may happen that $v=f_L$, $v=f_R$ or $f_L = f_R$. However, we may say
the following about the order of these vertices.
\begin{lemma}\label{lem:fi:order}
$\model(\Ibeg{f_R}) \leq p_L^f \leq p_L^v < p_R^v \leq p_R^f \leq \model(\Iend{f_L})$.
\end{lemma}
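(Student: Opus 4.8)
The plan is to simply unfold the definitions: every one of the five inequalities in the chain
$\model(\Ibeg{f_R}) \leq p_L^f \leq p_L^v < p_R^v \leq p_R^f \leq \model(\Iend{f_L})$
follows either from the defining condition shared by $f_L$ and $f_R$, from the first axiom of an interval model, or from the extremal choice made in the definition of $f_L$ or $f_R$. First I would record the common feature: both $f_L$ and $f_R$ are selected among untouched vertices $f$ satisfying $\model(\Ibeg{f}) \leq p_L^v < p_R^v \leq \model(\Iend{f})$, and this set is nonempty since $\Uroot$ belongs to it (already noted before the statement). Applying this condition to $f_L$ gives $p_L^f = \model(\Ibeg{f_L}) \leq p_L^v$, and applying it to $f_R$ gives $p_R^v \leq \model(\Iend{f_R}) = p_R^f$. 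The middle strict inequality $p_L^v < p_R^v$ is exactly axiom~1 of an interval model applied to $v$ (an interval starts before it ends).

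For the two remaining outer inequalities I would invoke the extremal choice. Since $f_R$ satisfies the defining condition and $f_L$ is, by definition, the untouched vertex with the \emph{rightmost} (largest) starting position among all vertices satisfying that condition, $f_R$ is an admissible competitor, so $\model(\Ibeg{f_R}) \leq \model(\Ibeg{f_L}) = p_L^f$. Symmetrically, since $f_L$ satisfies the defining condition and $f_R$ is the untouched vertex with the \emph{leftmost} (smallest) ending position among such vertices, $f_L$ is an admissible competitor, so $p_R^f = \model(\Iend{f_R}) \leq \model(\Iend{f_L})$. Concatenating the five inequalities proves the lemma.

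There is essentially no obstacle here; the only point requiring a line of care is that $f_L$ and $f_R$ are optimized over the \emph{same} family of vertices (untouched vertices whose interval covers both positions $p_L^v$ and $p_R^v$), which is what makes each of them a legitimate competitor in the other's extremal problem. I would also remark explicitly on the degenerate possibilities $v = f_L$, $v = f_R$, or $f_L = f_R$, noting that the chain of inequalities (with the single strict inequality in the middle coming only from $v$) remains valid in all these cases.
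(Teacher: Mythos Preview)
Your proposal is correct and follows essentially the same approach as the paper: the outer inequalities come from the observation that $f_R$ is a candidate in the maximization defining $f_L$ and vice versa, while the inner inequalities are immediate from the defining condition and the interval model axiom for $v$. The paper's proof is a one-line version of what you wrote.
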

\begin{proof}
The first and the last inequalities follow from the fact that $f_R$ is a good candidate for $f_L$ and vice-versa.
The remaining inequalities are straightforward from the definition.
\end{proof}

We start by enumerating all possible choices of vertices $f_L,f_R$ and sections
$\Isec_L^f$, $\Isec_L^v$, $\Isec_R^v$, $\Isec_R^f$, using the family $\sectionset$
of Theorem~\ref{thm:sections}.
By the bound of Theorem~\ref{thm:sections}, there are at most $k^{\Oh(\sqrt{k})} n^{70}$
subcases (henceforth called \emph{branches}) to consider.
In the rest of the proof we aim to output a single set $B$ of size $\Oh(k^5)$
for a single choice of the aforementioned two vertices and four sections.
That is, given $f_L,f_R$ and $\Isec_L^f$, $\Isec_L^v$, $\Isec_R^v$, $\Isec_R^f$
we show how to deduce a set $B \subseteq V(G)$ of size $\Oh(k^5)$, such that
$B$ contains $\{w: vw \in F\}$ for any minimal solution $F$ to $(G,k)$
for which the choice of $f_L, f_R$ and $\Isec_L^f$, $\Isec_L^v$, $\Isec_R^v$, $\Isec_R^f$
is correct.

Thus, henceforth we fix a choice of
$f_L,f_R$ and $\Isec_L^f$, $\Isec_L^v$, $\Isec_R^v$, $\Isec_R^f$
and we assume that the guess of these vertices and sets 
is correct for a minimal solution $F$ with model $\model$ of $G+F$.
We note that, by Lemma~\ref{lem:fi:order}, we should expect that:
\begin{align*}
v &\in \Isec_L^v \cap \Isec_R^v, \\
f_L,f_R &\in \Isec_L^f \cap \Isec_R^f, \\
\Isec_L^f \cap \Isec_R^f &\subseteq \Isec_L^f \cap \Isec_R^v \subseteq \Isec_L^v \cap \Isec_R^v, \\
\Isec_L^f \cap \Isec_R^f &\subseteq \Isec_L^v \cap \Isec_R^f \subseteq \Isec_L^v \cap \Isec_R^v.
\end{align*}
If this is not the case, we discard the branch in question.

Moreover, we maintain a set $B^\mathrm{sure}$ of vertices $w$ for which we deduce
that $vw \in F$ is implied by the choice of 
$f_L,f_R$ and $\Isec_L^f$, $\Isec_L^v$, $\Isec_R^v$, $\Isec_R^f$.
We start with $B^\mathrm{sure} = (\Isec_L^v \cup \Isec_R^v) \setminus N_G(v)$.
If at any point the size of $B^\mathrm{sure}$ exceeds $k$, we discard the current branch.

\subsection{Preliminary observations and categories of connected components}

We start with the following observation, directly implied by the assumption that $f_L$ and $f_R$
are untouched and $|F| \leq k$.
\begin{lemma}\label{lem:fi:structure}
For any connected component $C$ of $G \setminus (\Isec_L^f \cup \Isec_L^v \cup \Isec_R^v \cup \Isec_R^f)$ the following holds
\begin{enumerate}
\item\label{case:fi:outside}
If $C \cap N_G(f_L) \cap N_G(f_R) = \emptyset$, then $\Send{C}{\model} < p_L^f$
or $\Sbeg{C}{\model} > p_R^f$. In particular, $vw \notin E(G) \cup F$ for every $w \in C$.
\item If $C$ contains a vertex of  $N_G(f_L) \cap N_G(f_R)$, then
$p_L^f < \Sbeg{C}{\model} < \Send{C}{\model} < p_R^f$ and
$C \subseteq N_G(f_L) \cap N_G(f_R)$.
\item\label{case:fi:top-forced}
If, moreover, $C$ contains a neighbor of $v$ in $G$, then
$p_L^v < \Sbeg{C}{\model} < \Send{C}{\model} < p_R^v$
and $vw \in E(G) \cup F$ for every $w \in C$.
\item\label{case:fi:dont-know}
In the last case, if $C \subseteq (N_G(f_L) \cap N_G(f_R)) \setminus N_G(v)$, then
one of the following cases hold:
\begin{enumerate}
\item\label{case:fi:top}
$p_L^v < \Sbeg{C}{\model} < \Send{C}{\model} < p_R^v$
and $vw \in F$ for every $w \in C$.
Moreover, in this case $N_G(C) \subseteq \Isec_L^v \cup \Isec_R^v$.
\item \label{case:fi:left}
$p_L^f < \Sbeg{C}{\model} < \Send{C}{\model} < p_L^v$
and $vw \notin F$ for every $w \in C$.
Moreover, in this case $N_G(C) \subseteq \Isec_L^f \cup \Isec_L^v$.
\item \label{case:fi:right}
$p_R^v < \Sbeg{C}{\model} < \Send{C}{\model} < p_R^f$
and $vw \notin F$ for every $w \in C$.
Moreover, in this case $N_G(C) \subseteq \Isec_R^f \cup \Isec_R^v$.
\end{enumerate}
Moreover, if $|C| > k$, then the first option does not happen.
\end{enumerate}
\end{lemma}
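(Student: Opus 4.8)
The plan is to reduce everything to one elementary observation about interval models, and then do a careful but routine case analysis on where the span $[\Sbeg{C}{\model},\Send{C}{\model}]$ of a component $C$ of $G\setminus(\Isec_L^f\cup\Isec_L^v\cup\Isec_R^v\cup\Isec_R^f)$ can lie. The observation to establish first is that \emph{a connected set disjoint from a section lies entirely to one side of it}: for connected $D$ and a position $q$ with $\Sbeg{D}{\model}\le q<\Send{D}{\model}$ the section $\Isec_\model(q)$ meets $D$, since otherwise $q$ would be a position occupied by no interval of $D$, splitting $D$ into a nonempty left part and a nonempty right part with no edge across it, contradicting connectivity. Applying this to $C$ and the four sections $\Isec_\model(p_L^f)$, $\Isec_\model(p_L^v)$, $\Isec_\model(p_R^v-1)$, $\Isec_\model(p_R^f-1)$, and using that no position is simultaneously a starting and an ending event (so e.g. $\Send{C}{\model}\neq p_L^f$), the span of $C$ must fall on one definite side of each of $p_L^f,p_L^v,p_R^v,p_R^f$; together with the chain $\model(\Ibeg{f_R})\le p_L^f\le p_L^v<p_R^v\le p_R^f\le\model(\Iend{f_L})$ of Lemma~\ref{lem:fi:order}, this restricts the span of $C$ to a short list of regions.

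Next I would exploit that $f_L,f_R$ are untouched, so their $G$- and $(G+F)$-neighbourhoods coincide and their intervals both cover the window $[p_L^f,p_R^f]$. If $C$ contains a common neighbour $u_0$ of $f_L$ and $f_R$, then since the interval of $u_0$ meets that of $f_L$ while $u_0\notin\Isec_\model(p_L^f)$, a parity argument on events forces $p_L^f<\model(\Ibeg{u_0})$, and symmetrically $\model(\Iend{u_0})<p_R^f$; the span observation upgrades this to $p_L^f<\Sbeg{C}{\model}$ and $\Send{C}{\model}<p_R^f$, and then every $w\in C$ has its interval contained in $(p_L^f,p_R^f)$, hence in the interval of $f_L$ and in that of $f_R$, giving $C\subseteq N_G(f_L)\cap N_G(f_R)$ — this is point~2. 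The same computation shows that in point~1 the only leftover possibility, ``$C$ sitting inside $(p_L^f,p_R^f)$'', cannot occur, so there $\Send{C}{\model}<p_L^f$ or $\Sbeg{C}{\model}>p_R^f$, and in either case the interval of $v$ (which lies in $[p_L^f,p_R^f]$ by Lemma~\ref{lem:fi:order}) is disjoint from every interval of $C$, whence $vw\notin E(G)\cup F$. Running the same argument one level in — with $v$ in place of $f_L,f_R$, sections $\Isec_L^v,\Isec_R^v$, and a neighbour $u_1\in C$ of $v$ — gives point~3. Finally, when $C\subseteq(N_G(f_L)\cap N_G(f_R))\setminus N_G(v)$, the span observation for $\Isec_L^v$ and $\Isec_R^v$ together with point~2 forces $[\Sbeg{C}{\model},\Send{C}{\model}]$ into exactly one of the three open gaps $(p_L^f,p_L^v)$, $(p_L^v,p_R^v)$, $(p_R^v,p_R^f)$, which are cases~\ref{case:fi:left},~\ref{case:fi:top},~\ref{case:fi:right}; reading off whether the intervals of $C$ lie inside or outside the interval $[p_L^v,p_R^v]$ of $v$ settles $vw\in F$ versus $vw\notin F$ in each (inside $v$'s interval and $w\notin N_G(v)$ forces $vw\in F$; disjoint forces $vw\notin E(G+F)$).

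For the ``moreover'' inclusions $N_G(C)\subseteq\dots$, I would observe that any $z\in N_G(C)\setminus C$ must belong to $\Isec_L^f\cup\Isec_L^v\cup\Isec_R^v\cup\Isec_R^f$ (otherwise $z$ would lie in the same component of $G$ minus those sections as $C$), and that its interval meets whichever of the three gaps contains $C$, so $\model(\Ibeg{z})$ is small and $\model(\Iend{z})$ is large relative to the two endpoints bounding that gap; a short position check then rules out $z$ lying only in the two ``far'' sections and places $z$ into the two sections bounding the gap that contains $C$. The last sentence is immediate: if $|C|>k$ and case~\ref{case:fi:top} held, then $C$ would contribute $|C|>k$ distinct edges of $F$ incident to $v$, contradicting $|F|\le k$.

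The main obstacle I expect is purely bookkeeping: the combinatorial-model conventions — half-open sections $\Isec_\model(p)$, all $2n$ event positions distinct, a position never being both $\Ibeg{\cdot}$ and $\Iend{\cdot}$ — must be tracked through every inequality, strict/non-strict distinctions must be handled at each boundary, and the degenerate coincidences $v=f_L$, $v=f_R$, $f_L=f_R$, as well as singleton components $C$, have to be verified not to break any step. There is no deeper idea than the span observation and the untouchedness of $f_L,f_R$; the work is in getting the three-gap case split and its boundary inequalities exactly right.
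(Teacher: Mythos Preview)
Your proposal is correct and matches the paper's approach. In fact the paper gives no proof at all for this lemma, merely stating that it is ``directly implied by the assumption that $f_L$ and $f_R$ are untouched and $|F|\le k$''; your write-up is exactly the routine verification the paper leaves implicit, built on the natural span observation (connected sets miss no section they straddle) together with the untouchedness of $f_L,f_R$ and the chain of inequalities from Lemma~\ref{lem:fi:order}.
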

By Lemma~\ref{lem:fi:structure}, we can sort the connected components of
$G \setminus (\Isec_L^f \cup \Isec_L^v \cup \Isec_R^v \cup \Isec_R^f)$
into three \emph{categories}, depending on whether they fall into
point~\ref{case:fi:outside}, \ref{case:fi:top-forced}
or~\ref{case:fi:dont-know}. Obviously, the last category is the most interesting, as we are not able to directly decide whether the vertices of the component should be inserted into $B$ or not. 
The subpoints of this category (i.e,~\ref{case:fi:top},~\ref{case:fi:left} and~\ref{case:fi:right})
are henceforth called \emph{cases}. Note that for each connected component $C$
we know its category, but we do not know its case if it falls into category~\ref{case:fi:dont-know}.

We now perform some cleaning. 
If there exists a component
$C \in \Ccomp{G \setminus (\Isec_L^f \cup \Isec_L^v \cup \Isec_R^v \cup \Isec_R^f)}$
that does not fall into any category
(e.g., we have 
$C \not\subseteq N_G(f_L) \cap N_G(f_R)$, but $C$ contains a common neighbor
of $f_L$ and $f_R$), we discard the current branch.
Moreover,
we may include into $B^\mathrm{sure}$ all non-neighbors of $v$ that lie
in a connected component $C$ that falls into category~\ref{case:fi:top-forced}
of Lemma~\ref{lem:fi:structure}, that is, that contains a neighbor of $v$.

Clearly, only at most $k$ components fall into case~\ref{case:fi:top} of Lemma~\ref{lem:fi:structure}, since each such component induces at least one fill edge incident to $v$. However, we do not know which of the components falling into category~\ref{case:fi:dont-know} are in fact those interesting ones. Hence, our main task now is to pinpoint a set of roughly $\Oh(k^4)$ potential components
falling into category~\ref{case:fi:dont-know}
for which case~\ref{case:fi:top} may possibly happen. As each such component
is of size at most $k$, this would conclude the proof of Theorem~\ref{thm:fill-in}.

Let $\Cfam$ be the family of all connected component $C$ of 
$G \setminus (\Isec_L^f \cup \Isec_L^v \cup \Isec_R^v \cup \Isec_R^f)$
that fall into category~\ref{case:fi:dont-know} of Lemma~\ref{lem:fi:structure},
that is, $C \subseteq (N_G(f_L) \cap N_G(f_R)) \setminus N_G(v)$.
We distinguish the following subfamilies that correspond to the subcases of category~\ref{case:fi:dont-know}.
\begin{align*}
\Cfam_v &= \{C \in \Cfam: N_G(C) \subseteq \Isec_L^v \cup \Isec_R^v\} \\
\Cfam_L &= \{C \in \Cfam: N_G(C) \subseteq \Isec_L^f \cup \Isec_L^v\} \\
\Cfam_R &= \{C \in \Cfam: N_G(C) \subseteq \Isec_R^f \cup \Isec_R^v\}
\end{align*}
If $\Cfam_v \cup \Cfam_L \cup \Cfam_R \neq \Cfam$, we discard the current branch.
Moreover, for any $C \in \Cfam_v \setminus (\Cfam_L \cup \Cfam_R)$ we include
all vertices of $C$ into $B^\mathrm{sure}$, as such a component will surely fall into case~\ref{case:fi:top}.

In the sequel we will consider components that belong to different combinations of sets $\Cfam_v,\Cfam_L,\Cfam_R$. The following fact, used often implicitly, follows directly from the definitions of $\Cfam_v,\Cfam_L,\Cfam_R$ and inclusion relations between $\Isec_L^f,\Isec_L^v,\Isec_R^v,\Isec_R^f$. 

\begin{lemma}\label{lem:fi:inclusions}
The following holds:
\begin{itemize}
\item If $C\in \Cfam_L \cap \Cfam_v$ then $N_G(C)\subseteq \Isec_L^v$. If moreover $C\notin \Cfam_R$, then $N_G(C)\cap (\Isec_L^v\setminus \Isec_R^v)\neq \emptyset$.
\item If $C\in \Cfam_R \cap \Cfam_v$ then $N_G(C)\subseteq \Isec_R^v$. If moreover $C\notin \Cfam_L$, then $N_G(C)\cap (\Isec_R^v\setminus \Isec_L^v)\neq \emptyset$.
\item If $C\in \Cfam_L \cap\Cfam_R$, then $N_G(C)\subseteq \Isec_L^v\cap \Isec_R^v$ and in particular $C\in \Cfam_v$.
\end{itemize}
\end{lemma}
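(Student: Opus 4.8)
The plan is to prove all three items purely by unfolding the definitions of $\Cfam_v,\Cfam_L,\Cfam_R$ and plugging in the section inclusion chains
\[
\Isec_L^f \cap \Isec_R^f \subseteq \Isec_L^f \cap \Isec_R^v \subseteq \Isec_L^v \cap \Isec_R^v
\quad\text{and}\quad
\Isec_L^f \cap \Isec_R^f \subseteq \Isec_L^v \cap \Isec_R^f \subseteq \Isec_L^v \cap \Isec_R^v,
\]
which we may take for granted here, since a branch violating them has already been discarded in the preceding paragraph. From these chains I only need two consequences: $\Isec_L^f \cap \Isec_R^v \subseteq \Isec_L^v$ and, symmetrically, $\Isec_L^v \cap \Isec_R^f \subseteq \Isec_R^v$.

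For the first item, I would start from $C \in \Cfam_L \cap \Cfam_v$, which by definition means $N_G(C) \subseteq (\Isec_L^f \cup \Isec_L^v) \cap (\Isec_L^v \cup \Isec_R^v)$; distributing and applying $\Isec_L^f \cap \Isec_R^v \subseteq \Isec_L^v$ collapses the right-hand side to $\Isec_L^v$, giving $N_G(C)\subseteq \Isec_L^v$. For the ``moreover'' clause I would argue by contradiction: if $N_G(C) \cap (\Isec_L^v \setminus \Isec_R^v) = \emptyset$, then $N_G(C) \subseteq \Isec_L^v \cap \Isec_R^v \subseteq \Isec_R^f \cup \Isec_R^v$, forcing $C \in \Cfam_R$. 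The second item is the mirror image, obtained by swapping $L$ and $R$ throughout and invoking $\Isec_L^v \cap \Isec_R^f \subseteq \Isec_R^v$ in place of $\Isec_L^f \cap \Isec_R^v \subseteq \Isec_L^v$.

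For the third item, from $C \in \Cfam_L \cap \Cfam_R$ I would expand $N_G(C) \subseteq (\Isec_L^f \cup \Isec_L^v) \cap (\Isec_R^f \cup \Isec_R^v)$ into the union of the four pairwise intersections $\Isec_L^f \cap \Isec_R^f$, $\Isec_L^f \cap \Isec_R^v$, $\Isec_L^v \cap \Isec_R^f$, $\Isec_L^v \cap \Isec_R^v$, each of which sits inside $\Isec_L^v \cap \Isec_R^v$ by the two chains above; hence $N_G(C) \subseteq \Isec_L^v \cap \Isec_R^v \subseteq \Isec_L^v \cup \Isec_R^v$, which is exactly $C \in \Cfam_v$. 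There is essentially no obstacle to overcome here --- the statement is a bookkeeping consequence of the definitions --- and the only point I would be careful to spell out is that the section inclusion chains are available precisely because of the branch-discarding performed just before the lemma.
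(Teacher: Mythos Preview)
Your proposal is correct and matches the paper's approach exactly: the paper's own proof of this lemma is the single sentence that it ``follows directly from the definitions of $\Cfam_v,\Cfam_L,\Cfam_R$ and inclusion relations between $\Isec_L^f,\Isec_L^v,\Isec_R^v,\Isec_R^f$,'' and you have simply written out those set-algebra manipulations in full.
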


\subsection{Troublesome components}

Our goal now is to focus on $\Cfam_L$ and pinpoint a small set
of components of $\Cfam_L \cap \Cfam_v$ that may possibly fall into case~\ref{case:fi:top}
of Lemma~\ref{lem:fi:structure}. The arguments for $\Cfam_R$ will be symmetrical.

To this end, we will construct a family $\Tfam \subseteq \Cfam_L$
of \emph{troublesome} components. Informally speaking, a component is troublesome
if it is highly unclear where or how it should live in the model $\model$.
We will argue that there is a bounded number of troublesome components (strictly speaking, $\Oh(k^2)$ of them)
and any component that falls into case~\ref{case:fi:top} of Lemma~\ref{lem:fi:structure}
is in some sense ``close'' to a troublesome component.

We start by putting into $\Tfam$ all connected components $C \in \Cfam_L$
that cannot be drawn in the model of a completion of $G$ between sections $\Isec_L^f$ and $\Isec_L^v$
without an incident edge of the solution. More formally,
we denote 
$F_L = \binom{\Isec_L^v}{2} \setminus E(G) \subseteq F$
and define the following:
\begin{definition}
A component $C \in \Cfam_L \cap \Cfam_v$ is \emph{freely drawable}
if there exists an interval model $\model_C$ of $(G+F_L)[C \cup \Isec_L^v]$
that starts with all starting events of $\events{\Isec_L^v \cap \Isec_L^f}$\
and ends with all ending events of $\events{\Isec_L^v}$.
\end{definition}
We now state the formerly informal motivation for this definition.
\begin{lemma}\label{lem:fi:not-free}
If $C \in (\Cfam_L \cap \Cfam_v) \setminus \Cfam_R$ is not freely drawable, then it is touched by $F$.
\end{lemma}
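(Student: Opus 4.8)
The plan is to prove the contrapositive: assuming $C$ is untouched by $F$, I will exhibit an interval model witnessing that $C$ is freely drawable, taking it to be essentially the restriction of $\model$ itself. First I would pin down which case of Lemma~\ref{lem:fi:structure} applies to $C$. Since $C \in \Cfam_L \cap \Cfam_v \subseteq \Cfam$, exactly one of the cases~\ref{case:fi:top}, \ref{case:fi:left}, \ref{case:fi:right} holds for $C$. Case~\ref{case:fi:top} is impossible here: it would force $vw \in F$ for every $w \in C$, and since $v \in \Isec_L^v$ we have $v \notin C$, so every vertex of $C$ would be touched. Case~\ref{case:fi:right} is impossible as well: it forces $N_G(C) \subseteq \Isec_R^f \cup \Isec_R^v$, i.e.\ $C \in \Cfam_R$, contrary to the hypothesis $C \notin \Cfam_R$. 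Hence case~\ref{case:fi:left} holds, giving $p_L^f < \Sbeg{C}{\model} < \Send{C}{\model} < p_L^v$ (and, via $C \in \Cfam_v$ and Lemma~\ref{lem:fi:inclusions}, also $N_G(C) \subseteq \Isec_L^v$, which we record but do not need directly).

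Next I would let $\model_0$ be the model $\model$ restricted to the events $\events{C \cup \Isec_L^v}$; being the restriction of an interval model to a vertex subset, it is a valid interval model of $(G+F)[C \cup \Isec_L^v]$. The key point is that this induced subgraph coincides with $(G+F_L)[C \cup \Isec_L^v]$: both graphs make $\Isec_L^v$ a clique, and since $C$ is untouched no edge of $F$ is incident to $C$, while no edge of $F_L$ is incident to $C$ either (as $F_L \subseteq \binom{\Isec_L^v}{2}$ and $C \cap \Isec_L^v = \emptyset$), so in both graphs the edges inside $C$ and between $C$ and $\Isec_L^v$ are exactly those of $G$. Thus $\model_0$ is also a valid model of $(G+F_L)[C \cup \Isec_L^v]$.

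It then remains to check that $\model_0$ starts with all starting events of $\events{\Isec_L^v \cap \Isec_L^f}$ and ends with all ending events of $\events{\Isec_L^v}$. For the start: every vertex of $\Isec_L^v$ has its ending event past position $p_L^v \geq p_L^f$, so $w \in \Isec_L^v$ lies in $\Isec_L^f$ iff $\model(\Ibeg{w}) \leq p_L^f$; hence the starting events of $\Isec_L^v \cap \Isec_L^f$ are precisely the events of $\events{C \cup \Isec_L^v}$ at $\model$-positions $\leq p_L^f$, using that starting events of $\Isec_L^v \setminus \Isec_L^f$ lie past $p_L^f$, ending events of $\Isec_L^v$ lie past $p_L^v$, and all events of $C$ lie past $p_L^f$ by case~\ref{case:fi:left}. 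For the end: ending events of $\Isec_L^v$ lie at positions $\geq p_L^v+1$, whereas all starting events of $\Isec_L^v$ lie at positions $\leq p_L^v$ and all events of $C$ lie at positions $< p_L^v$ by case~\ref{case:fi:left}, so the ending events of $\Isec_L^v$ fill the last $|\Isec_L^v|$ positions of $\model_0$. This shows $C$ is freely drawable, establishing the contrapositive. I expect the only mildly delicate part to be this final position bookkeeping --- in particular the identity $\Isec_L^v \cap \Isec_L^f = \{w \in \Isec_L^v : \model(\Ibeg{w}) \leq p_L^f\}$, which relies on $p_L^f \leq p_L^v$ (Lemma~\ref{lem:fi:order}) and on every vertex of $\Isec_L^v$ surviving past $p_L^v$; the rest is a straightforward case analysis on Lemma~\ref{lem:fi:structure} together with the elementary fact that restricting an interval model to a vertex subset yields a model of the induced subgraph.
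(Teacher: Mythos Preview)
Your proof is correct and follows essentially the same approach as the paper's own proof: rule out cases~\ref{case:fi:top} and~\ref{case:fi:right} of Lemma~\ref{lem:fi:structure}, then in case~\ref{case:fi:left} use the restriction of $\model$ to $C \cup \Isec_L^v$ as the witnessing model. The paper's proof is simply a terser version of yours; it asserts in one clause that ``the model $\model$ restricted to $C \cup \Isec_L^v$ witnesses that $C$ is freely drawable'' without spelling out the position bookkeeping you carried out for the prefix and suffix conditions.
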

\begin{proof}
As $C \notin \Cfam_R$, it cannot fall into case~\ref{case:fi:right} of Lemma~\ref{lem:fi:structure}.
If $C$ falls into case~\ref{case:fi:top} then it is touched due to the fill-in edges incident to $v$.
Otherwise, unless $C$ is touched, the model $\model$ restricted to $C \cup \Isec_L^v$ witnesses
that $C$ is freely drawable.
\end{proof}
Finally, we remark that we may recognize freely drawable components in polynomial time.
\begin{lemma}\label{lem:fi:free-recognition}
Given $C \in \Cfam_L \cap \Cfam_v$, we can recognize if $C$ is freely drawable in polynomial time.
\end{lemma}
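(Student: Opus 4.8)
The plan is to reduce the question directly to Lemma~\ref{lem:ic-cliques-drawing}. Set $H = (G+F_L)[C \cup \Isec_L^v]$; by the very definition of free drawability, $C$ is freely drawable precisely when $H$ admits an interval model that starts with all starting events of $\events{\Isec_L^v \cap \Isec_L^f}$ and ends with all ending events of $\events{\Isec_L^v}$. So the task is to test, in polynomial time, whether such a constrained model of $H$ exists.

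First I would check in linear time whether $H$ is an interval graph at all, using the classical recognition algorithm~\cite{Golumbic80}. If $H$ is not an interval graph, then it has no interval model whatsoever, so $C$ is not freely drawable and we report this. Otherwise, observe that $\Isec_L^v$ is a clique in $H$: the added edge set $F_L = \binom{\Isec_L^v}{2} \setminus E(G)$ turns $\Isec_L^v$ into a clique, and this clique is preserved when we restrict to $C \cup \Isec_L^v \supseteq \Isec_L^v$. Consequently $\Isec_L^v \cap \Isec_L^f$, being a subset of $\Isec_L^v$, is a clique in $H$ as well, and both sets are contained in $V(H) = C \cup \Isec_L^v$.

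Now I would invoke Lemma~\ref{lem:ic-cliques-drawing} on the interval graph $H$ with $\Omega_1 := \Isec_L^v \cap \Isec_L^f$ and $\Omega_2 := \Isec_L^v$. Lemma~\ref{lem:ic-cliques-drawing} decides in polynomial time whether there is an interval model of $H$ that starts with all starting events of $\events{\Omega_1}$ and ends with all ending events of $\events{\Omega_2}$. By the definition of free drawability this is exactly the condition to be checked, so the answer returned by Lemma~\ref{lem:ic-cliques-drawing} is the answer we seek.

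There is no real obstacle here beyond bookkeeping; the only point that requires care is the degenerate case in which $H$ fails to be an interval graph, which is why we dispatch it separately before we are entitled to apply Lemma~\ref{lem:ic-cliques-drawing} (whose hypothesis demands an interval graph as input). The overall running time is polynomial, dominated by the single call to the procedure of Lemma~\ref{lem:ic-cliques-drawing}.
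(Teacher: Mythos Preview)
Your proposal is correct and takes essentially the same approach as the paper: apply Lemma~\ref{lem:ic-cliques-drawing} to the graph $(G+F_L)[C \cup \Isec_L^v]$ with cliques $\Omega_1 = \Isec_L^v \cap \Isec_L^f$ and $\Omega_2 = \Isec_L^v$. Your additional step of first checking whether $H$ is an interval graph is a reasonable bit of extra care that the paper's one-line proof leaves implicit.
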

\begin{proof}
We simply use Lemma~\ref{lem:ic-cliques-drawing} for the graph $(G+F_L)[C \cup \Isec_L^v]$
and cliques $\Isec_L^v \cap \Isec_L^f$ and $\Isec_L^v$.
\end{proof}

Using Lemma~\ref{lem:fi:free-recognition}, we recognize all components of $(\Cfam_L \cap \Cfam_v) \setminus \Cfam_R$ that
are not freely drawable. If there are more than $2k$ of them, by Lemma~\ref{lem:fi:not-free} we may discard the current branch.
Otherwise, we put all not freely drawable components of $(\Cfam_L \cap \Cfam_v) \setminus \Cfam_R$ into $\Tfam$.

We remark that if $C$ is freely drawable, then $\Isec_L^v \cap \Isec_L^f \subseteq N_G(w)$ for any $w \in C$.

As we needed to exclude the components of $\Cfam_R$ for Lemma~\ref{lem:fi:not-free}, we now proceed to the components of $\Cfam_L \cap \Cfam_R$.
Denote $P = \Isec_L^f \cap \Isec_R^f$ and $K = (\Isec_L^v \cap \Isec_R^v) \setminus P$.
It turns out that the choice of $f_L$ and $f_R$ implies that $K$ is small.
\begin{lemma}\label{lem:fi:K-small}
All vertices of $K$ are touched by $F$ and, consequently, $|K| \leq 2k$.
\end{lemma}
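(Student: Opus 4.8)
The plan is to prove the stronger first claim — that every $w \in K$ is touched by $F$ — from which $|K| \le 2k$ is immediate, since $F$ consists of at most $k$ edges and therefore touches at most $2k$ vertices of $G$.

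First I would unwind the definition of $K$. Fix $w \in K$, so $w \in \Isec_L^v \cap \Isec_R^v$ but $w \notin P = \Isec_L^f \cap \Isec_R^f$. Since $w \in \Isec_L^v = \Isec_\model(p_L^v)$ we have $\model(\Ibeg{w}) \le p_L^v < \model(\Iend{w})$, and since $w \in \Isec_R^v = \Isec_\model(p_R^v-1)$ we have $\model(\Ibeg{w}) < p_R^v \le \model(\Iend{w})$. As $p_L^v = \model(\Ibeg{v}) < \model(\Iend{v}) = p_R^v$, combining these yields
\[
\model(\Ibeg{w}) \le p_L^v < p_R^v \le \model(\Iend{w}).
\]
In particular, $w$ satisfies exactly the condition $\model(\Ibeg{w}) \le p_L^v < p_R^v \le \model(\Iend{w})$ that is quantified over in the definitions of $f_L$ and $f_R$.

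Next I would exploit $w \notin P$ and argue by contradiction, assuming $w$ is untouched by $F$. Since $w \notin \Isec_L^f = \Isec_\model(p_L^f)$ or $w \notin \Isec_R^f = \Isec_\model(p_R^f-1)$, consider the two cases. If $w \notin \Isec_\model(p_L^f)$, then either $\model(\Ibeg{w}) > p_L^f$ or $\model(\Iend{w}) \le p_L^f$; the latter is impossible because $\model(\Iend{w}) \ge p_R^v > p_L^v \ge p_L^f$ by Lemma~\ref{lem:fi:order}, so $\model(\Ibeg{w}) > p_L^f = \model(\Ibeg{f_L})$. But $w$ is an untouched vertex satisfying the spanning condition displayed above, so the maximality of $\model(\Ibeg{f_L})$ among such vertices forces $\model(\Ibeg{w}) \le \model(\Ibeg{f_L}) = p_L^f$ — a contradiction. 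The case $w \notin \Isec_\model(p_R^f-1)$ is symmetric: here $\model(\Ibeg{w}) \ge p_R^f$ is impossible since $\model(\Ibeg{w}) \le p_L^v < p_R^v \le p_R^f$ by Lemma~\ref{lem:fi:order}, so $\model(\Iend{w}) < p_R^f = \model(\Iend{f_R})$, which contradicts the minimality of $\model(\Iend{f_R})$ among untouched vertices satisfying the spanning condition. Hence $w$ must be touched by $F$, which completes the argument.

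I expect no genuine obstacle here; the only points to watch are correctly enumerating the two ways a vertex can fail to lie in a section $\Isec_\model(\cdot)$, eliminating the impossible alternative via the position inequalities of Lemma~\ref{lem:fi:order}, and observing that the spanning condition on $w$ is precisely the hypothesis over which $f_L$ and $f_R$ are chosen extremally.
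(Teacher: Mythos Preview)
Your proof is correct and follows essentially the same approach as the paper's: from $w \in \Isec_L^v \cap \Isec_R^v$ you derive the spanning condition $\model(\Ibeg{w}) \le p_L^v < p_R^v \le \model(\Iend{w})$, and from $w \notin \Isec_L^f \cap \Isec_R^f$ you obtain (after ruling out the impossible alternatives via Lemma~\ref{lem:fi:order}) that $\model(\Ibeg{w}) > p_L^f$ or $\model(\Iend{w}) < p_R^f$, contradicting the extremality of $f_L$ or $f_R$ if $w$ were untouched. The paper's proof is the same argument, just stated more tersely.
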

\begin{proof}
Consider any $x \in K$.
As $x \in \Isec_L^v \cap \Isec_R^v$, we have $\model(\Ibeg{x}) \leq p_L^v < p_R^v \leq \model(\Iend{x})$.
As $x \notin \Isec_L^f \cap \Isec_R^f$, we have $\model(\Ibeg{x}) > p_L^f$ or
$\model(\Iend{x}) < p_R^f$. If $x$ is untouched by $F$, $x$ would be a better candidate
for $f_L$ in the first case, and a better candidate for $f_R$ in the second case.
\end{proof}
Note that by Lemma~\ref{lem:fi:inclusions} we have $N_G(C) \subseteq P \cup K$ for any $C \in \Cfam_L \cap \Cfam_R$.
Lemma~\ref{lem:fi:K-small} allows us to use the bound of Lemma~\ref{lem:A-r-nei}.
\begin{lemma}\label{lem:fi:LR-small}
$|\Cfam_L \cap \Cfam_R| = \Oh(k^2)$.
\end{lemma}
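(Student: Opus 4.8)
The plan is to split $\Cfam_L \cap \Cfam_R$ into the components that are touched by $F$ and those that are untouched. Since $|F|\leq k$, at most $2k$ components of $G$ minus any vertex set are touched, so the whole weight of the argument lies in bounding the untouched members of $\Cfam_L\cap\Cfam_R$ by $\Oh(k^2)$, and for this I will apply Lemma~\ref{lem:A-r-nei} to the set $A = P\cup K$ with threshold $r = 2k$; recall that $|K|\leq 2k$ by Lemma~\ref{lem:fi:K-small}, so $A$ is ``small modulo $P$''. To invoke Lemma~\ref{lem:A-r-nei} I need two things: (i) that each such component is literally a connected component of $G\setminus(P\cup K)$, and (ii) that it contains a vertex with at most $2k$ non-neighbours in $P\cup K$.

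For (i): after the earlier branch-cleaning step we may assume $\Isec_L^f\cap\Isec_R^f\subseteq\Isec_L^v\cap\Isec_R^v$, i.e. $P\cup K=\Isec_L^v\cap\Isec_R^v$, and by Lemma~\ref{lem:fi:inclusions} every $C\in\Cfam_L\cap\Cfam_R$ satisfies $N_G(C)\subseteq\Isec_L^v\cap\Isec_R^v=P\cup K$. Since $C$ is connected and, being a connected component of $G$ minus the four sections $\Isec_L^f,\Isec_L^v,\Isec_R^v,\Isec_R^f$, is disjoint from $P\cup K$, it is exactly a connected component of $G\setminus(P\cup K)$; in particular, distinct members of $\Cfam_L\cap\Cfam_R$ are distinct components of $G\setminus(P\cup K)$.

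For (ii): let $C\in\Cfam_L\cap\Cfam_R$ be untouched by $F$. As $C\subseteq N_G(f_L)\cap N_G(f_R)$, Lemma~\ref{lem:fi:structure} gives $p_L^f<\Sbeg{C}{\model}<\Send{C}{\model}<p_R^f$, so every event of $\events{C}$ lies strictly between positions $p_L^f$ and $p_R^f$ in $\model$. Now take any $x\in P=\Isec_\model(p_L^f)\cap\Isec_\model(p_R^f-1)$: then $\model(\Ibeg{x})\leq p_L^f$ and $\model(\Iend{x})\geq p_R^f$, so the interval of $x$ strictly contains the interval of every $w\in C$ in $\model$, whence $xw\in E(G+F)$; since $C$ is untouched, $xw\notin F$ and thus $xw\in E(G)$. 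As $x\in P$ was arbitrary, $P\subseteq N_G(w)$ for every $w\in C$, so $(P\cup K)\setminus N_G(w)\subseteq K$ and hence $|(P\cup K)\setminus N_G(w)|\leq|K|\leq 2k$ by Lemma~\ref{lem:fi:K-small}.

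Putting these together, every untouched $C\in\Cfam_L\cap\Cfam_R$ is a connected component of $G\setminus(P\cup K)$ containing a vertex whose non-neighbourhood in $P\cup K$ has size at most $2k$, so Lemma~\ref{lem:A-r-nei} with $A=P\cup K$ and $r=2k$ bounds the number of such components by $12k\cdot 2k+4k+18\cdot 2k+4=\Oh(k^2)$; adding the at most $2k$ touched components yields $|\Cfam_L\cap\Cfam_R|=\Oh(k^2)$. I expect the only delicate step to be the positional argument in (ii) --- making sure each $x\in P$ is adjacent to $C$ already in $G$, not merely in $G+F$, by using that the events of $C$ are sandwiched between $p_L^f$ and $p_R^f$ while $x$'s interval straddles that whole range; everything else is a bookkeeping combination of Lemmata~\ref{lem:fi:inclusions}, \ref{lem:fi:K-small}, \ref{lem:fi:structure}, and~\ref{lem:A-r-nei}.
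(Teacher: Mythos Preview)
Your proof is correct and follows essentially the same approach as the paper: split into touched and untouched components, observe that every untouched $C\in\Cfam_L\cap\Cfam_R$ lies strictly between $p_L^f$ and $p_R^f$ so that each vertex of $C$ is adjacent in $G$ to all of $P$, and then apply Lemma~\ref{lem:A-r-nei} with $A=P\cup K$ and $r=|K|\leq 2k$. You are actually more careful than the paper in explicitly verifying point~(i), namely that such a $C$ is literally a connected component of $G\setminus(P\cup K)$; the paper leaves this implicit.
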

\begin{proof}
There are at most $2k$ connected components of $\Cfam_L \cap \Cfam_R$ that are touched by $F$.
Consider now untouched $C \in \Cfam_L \cap \Cfam_R$.
As $p_L^f < \Sbeg{C}{\model} < \Send{C}{\model} < p_R^f$, we have
$aw \in E(G)$ for any $w \in C$, $a \in P$.
The lemma follows from an application of Lemma~\ref{lem:A-r-nei}
to $A = P\cup K$ and $r = |K| \leq 2k$.
\end{proof}
Thus, if $|\Cfam_L \cap \Cfam_R|$ is too large, we discard the current branch.
Moreover, we can also discard the current branch if there exists $C \in \Cfam_L \cap \Cfam_R$
with $|(C \times P) \setminus E(G)| > k$: such a component $C$ would need too much fill-in edges
between itself and $P$.
If neither of the above situations happen, we insert $\Cfam_L \cap \Cfam_R$ into $\Tfam$, that is,
we treat all components of $\Cfam_L \cap \Cfam_R$ as troublesome.

We now inspect the possible order of the starting endpoints
of the vertices of $\Isec_L^v \setminus \Isec_L^f$; all these endpoints
appear between positions $p_L^f$ and $p_L^v$. We denote
$$X = \bigcup_{C \in \Cfam_L \setminus \Cfam_v} N_G(C) \cap \Isec_L^v$$
and observe the following.
\begin{lemma}\label{lem:fi:X-first}
For any $C \in (\Cfam_L \cap \Cfam_v) \setminus \Cfam_R$, if there exists 
$w \in C$ with $X \not\subseteq N_G(w)$, then $C$ is touched by $F$.
\end{lemma}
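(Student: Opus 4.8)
The plan is to argue by contradiction: suppose $C \in (\Cfam_L \cap \Cfam_v) \setminus \Cfam_R$ is untouched by $F$, yet there is some $w \in C$ with $X \not\subseteq N_G(w)$. Pick $a \in X \setminus N_G(w)$. By the definition of $X$, there is a component $C' \in \Cfam_L \setminus \Cfam_v$ with $a \in N_G(C') \cap \Isec_L^v$. Since $C \notin \Cfam_R$, Lemma~\ref{lem:fi:structure} tells us $C$ does not fall into case~\ref{case:fi:right}; hence, as $C$ is untouched, it falls into case~\ref{case:fi:top} or case~\ref{case:fi:left} of Lemma~\ref{lem:fi:structure}. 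In either case $p_L^f < \Sbeg{C}{\model} < \Send{C}{\model} < p_R^v$, and more importantly $N_G(C) \subseteq \Isec_L^f \cup \Isec_L^v$ (case~\ref{case:fi:left}) or $N_G(C) \subseteq \Isec_L^v \cup \Isec_R^v$ (case~\ref{case:fi:top}); combined with $C \in \Cfam_L \cap \Cfam_v$ and Lemma~\ref{lem:fi:inclusions}, in both cases $N_G(C) \subseteq \Isec_L^v$ in $G$.

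The heart of the argument is a position comparison in the model $\model$. First I would locate $C'$: since $C' \in \Cfam_L \setminus \Cfam_v$, it has a neighbor in $\Isec_L^f$ but also, by Lemma~\ref{lem:fi:inclusions}, a neighbor in $\Isec_L^v \setminus \Isec_R^v$, so $C'$ lives in the model somewhere in the strip whose right boundary is $\Isec_L^v$; more precisely, every vertex of $C'$ has its interval disjoint from $v$'s interval and ending at or before $p_L^v$ (as $C' \notin \Cfam_v$ forces it to the $\Isec_L^f$-side) — I would extract from Lemma~\ref{lem:fi:structure} applied to the (super)component of $\Cfam_L$ containing $C'$ the fact that $\Send{C'}{\model} < p_L^v$ while $\Sbeg{C'}{\model}$ may be as far left as $p_L^f$ roughly. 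Since $a \in N_G(C') \cap \Isec_L^v$, we have $\model(\Ibeg{a}) \le \Send{C'}{\model} < p_L^v$, i.e.\ $\Ibeg{a}$ occurs strictly before $p_L^v$. On the other hand $a \notin N_G(w)$ and $w \in C$, with $C$ drawn so that $\Sbeg{C}{\model} > \model(\Iend{f_L})$'s relevant bound — actually the key point is that $w$ and $a$ are nonadjacent in $G$, $a \in \Isec_L^v$ so $\model(\Ibeg{a}) \le p_L^v < p_R^v \le \model(\Iend{a})$, and $w \in C$ has $\model(\Ibeg{w}) > \model(\Ibeg{a})$ would be needed for disjointness — combining these forces $\model(\Iend{w}) < \model(\Ibeg{a})$ (intervals of $a$ and $w$ disjoint, with $w$'s to the left) or $\model(\Ibeg{w}) > \model(\Iend{a})$ (to the right). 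The right option is impossible since $\model(\Iend{a}) \ge p_R^v > \Send{C}{\model} \ge \model(\Ibeg{w})$ would be violated... I would carefully pin down which side, using $\Sbeg{C}{\model}$ versus $\model(\Ibeg{a})$, to reach the contradiction that either $a$ lies between two parts of $C$'s representation (but $C$ is connected and $a \notin N_G(C)$ only if $a$ is outside the span of $C$), or that $C$ and $C'$ overlap forcing an edge between them, hence a fill edge since $C,C'$ are distinct components of $G$ minus the four sections.

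More cleanly: since $a \in N_G(C')$ with $C'$ drawn entirely to the left of $p_L^v$, and $a \in \Isec_L^v$ so $\model(\Iend a) \geq p_R^v$, the interval of $a$ spans from somewhere $\le \Send{C'}{\model} < p_L^v$ up to $\ge p_R^v$. If $C$ is drawn (in $G+F$) in the range $(\Sbeg{C}{\model},\Send{C}{\model}) \subseteq (p_L^f, p_R^v)$ and $w\in C$ with $aw\notin E(G)\cup F$ (as $C$ untouched, so $aw\notin F$, and $aw\notin E(G)$ by assumption), then the interval of $w$ must be disjoint from that of $a$; since $a$'s interval covers the whole range $[\model(\Ibeg a), p_R^v]$ and $w$'s interval lies inside $(\Sbeg{C}{\model}, p_R^v)$ with $\Sbeg{C}{\model}$ possibly $\le \model(\Ibeg a)$, the only escape is $\model(\Iend w) \le \model(\Ibeg a)$, placing $w$'s whole interval to the left of $\Ibeg a$; but then $\Sbeg{C}{\model} \le \model(\Ibeg w) < \model(\Ibeg a) \le \Send{C'}{\model}$, so the spans of $C$ and $C'$ overlap in the model — and since $C$ and $C'$ are both untouched connected components of $G$ minus the four sections, distinct and nonadjacent in $G$, their spans in any model of $G+F$ with $F$ not touching them must be disjoint, a contradiction.

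The main obstacle I expect is the bookkeeping in the last step: correctly establishing that $C'$'s entire representation lies to the left of $p_L^v$ (this needs the full strength of Lemma~\ref{lem:fi:structure}/\ref{lem:fi:inclusions} applied to the $\Cfam_L$-component containing $C'$, not just to $C'$ itself, since $\Dfam$-style splitting off $N_G(f_L)$ was used to form these components), and correctly ruling out the ``$w$ to the right of $a$'' case using $\model(\Iend a) \ge p_R^v$ together with $\Send{C}{\model} < p_R^v$ which holds in cases~\ref{case:fi:top} and~\ref{case:fi:left}. Once those two facts are nailed down, the overlap-of-disjoint-components contradiction is routine. If instead one of $C, C'$ turns out to be touched we are already done, so the contradiction hypothesis ``$C$ untouched'' is the only case requiring work, and a symmetric remark handles the possibility that it is $C'$ rather than $C$ that must be touched (but $C' \in \Cfam_L \setminus \Cfam_v$ is not among the components we are bounding, so we only need $C$ to be touched, which is exactly the statement).
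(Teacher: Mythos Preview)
Your overall setup is fine: assume $C$ is untouched, so case~\ref{case:fi:top} is excluded and $C$ falls into case~\ref{case:fi:left}, giving $p_L^f < \Sbeg{C}{\model} < \Send{C}{\model} < p_L^v$; pick $a \in X \setminus N_G(w)$ and a witnessing component $C' \in \Cfam_L \setminus \Cfam_v$ with $a \in N_G(C')$. The deduction that the intervals of $a$ and $w$ are disjoint with $\model(\Iend{w}) < \model(\Ibeg{a})$ is also correct (though your justification ``$a \in \Isec_L^v$ so $\model(\Iend a) \ge p_R^v$'' is wrong---you only get $\model(\Iend a) > p_L^v$, which however is enough since $\Send{C}{\model} < p_L^v$).

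The real gap is the final ``overlap'' step. From $\Sbeg{C}{\model} \le \model(\Ibeg w) < \model(\Ibeg a) \le \Send{C'}{\model}$ you only obtain $\Sbeg{C}{\model} < \Send{C'}{\model}$; this does \emph{not} force the spans of $C$ and $C'$ to intersect. It is perfectly consistent with $\Send{C}{\model} < \Sbeg{C'}{\model}$, i.e.\ $C$ lying entirely to the left of $C'$. In that scenario no contradiction arises from anything you have written, because you never use the one piece of information that distinguishes $C'$ from an arbitrary component of $\Cfam_L \cap \Cfam_v$: the fact that $C' \notin \Cfam_v$, which means $C'$ has a neighbour $y \in \Isec_L^f \setminus \Isec_L^v$. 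This vertex $y$ is what the paper exploits. Since $y \in \Isec_L^f$, its interval starts at or before $p_L^f$; since $y$ neighbours $C'$, its interval ends at or after $\Sbeg{C'}{\model}$. Together with $a$ (whose interval ends after $p_L^v$) and a path through $C'$ joining $y$ to $a$, one gets a connected set of intervals covering all of $(p_L^f, p_L^v)$, hence covering the interval of $w$. Some vertex on this path must then be adjacent to $w$ in $G+F$, but none is adjacent in $G$ (as $y \notin N_G(C)$ by $N_G(C) \subseteq \Isec_L^v$, $a \notin N_G(w)$ by choice, and internal vertices lie in $C' \neq C$), so $w$ is touched---contradiction.

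Your overlap idea can in fact be rescued by the same ingredient: in the problematic case $\Send{C}{\model} < \Sbeg{C'}{\model}$, the interval of $y$ contains the entire span of $C$ (it starts $\le p_L^f < \Sbeg{C}{\model}$ and ends $\ge \Sbeg{C'}{\model} > \Send{C}{\model}$), forcing $wy \in E(G+F) \setminus E(G)$. But without bringing $y$ into the argument, the proof does not close.
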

\begin{proof}
Consider such component $C$ and vertex $w\in C$.
As $C \notin \Cfam_R$, either case~\ref{case:fi:top} or case~\ref{case:fi:left}
of Lemma~\ref{lem:fi:structure} applies to $C$.
If case~\ref{case:fi:top} applies, then $wv \in F$ and we are done, so assume otherwise.

Let $D \in \Cfam_L \setminus \Cfam_v$ such that
there exists $x \in (N_G(D) \cap \Isec_L^v) \setminus N_G(w)$. Note that in particular $C\neq D$ and hence $w$ does not have any neighbor in $D$ in the graph $G$.
As $D \in \Cfam_L \setminus \Cfam_v$, there exists
some $y \in (\Isec_L^f \setminus \Isec_L^v) \cap N_G(D)$.
Since $C \in \Cfam_v$, then we have $y \notin N_G(C)$, so in particular $wy \notin E(G)$.

Let $P$ be a path in $G$ with endpoints in $x$ and $y$ and all internal vertices in $D$; such a path exists since $D$ is connected.
Note that $P$ contains no neighbor of $w$ in $G$, but connects
$y \in \Isec_L^f = \Isec_\model(p_L^f)$ with $x \in \Isec_L^v = \Isec_\model(p_L^v)$.
As $p_L^f < \model(\Ibeg{w}) < \model(\Iend{w}) < p_L^v$, $w$ neighbors some vertex
of $P$ in $G+F$, and hence $w$ is touched by $F$.
\end{proof}

By Lemma~\ref{lem:fi:X-first} we expect at most $2k$ components of
$(\Cfam_L \cap \Cfam_v) \setminus \Cfam_R$ for which
$X \not\subseteq N_G(w)$ for some $w \in C$.
If there are more such components, we discard the current branch.
Otherwise, we include all such components into $\Tfam$.

We refer to Figure~\ref{fig:fill-in2} for an illustration of some of the introduced notation.

\begin{figure}
\centering
\includegraphics{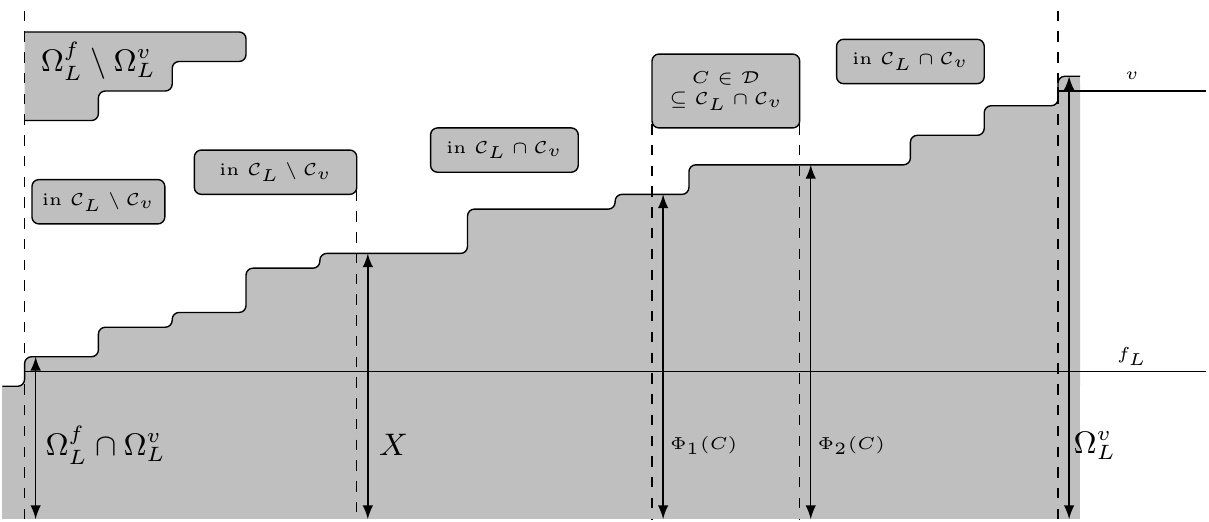}
\caption{A closer insight into the area between $\Isec_L^f$ and $\Isec_L^v$.}
\label{fig:fill-in2}
\end{figure}

We now define the following relation $\unlhd$ on the components of $(\Cfam_L \cap \Cfam_v) \setminus \Cfam_R$:
for two components $C_1,C_2 \in (\Cfam_L \cap \Cfam_v) \setminus \Cfam_R$ we have $C_1 \unlhd C_2$
iff for any $v_1 \in C_1$ and for any $v_2 \in C_2$ it holds that $N_G(v_1) \cap \Isec_L^v \subseteq N_G(v_2) \cap \Isec_L^v$.
Clearly, $\unlhd$ is a transitive and reflexive relation on $(\Cfam_L \cap \Cfam_v) \setminus \Cfam_R$.
Intuitively, $\unlhd$ should be close to a total quasi-order, and should resemble the order
in which the components of $(\Cfam_L \cap \Cfam_v) \setminus \Cfam_R$ that fall into case~\ref{case:fi:left} of Lemma~\ref{lem:fi:structure}
appear in the model $\model$, and components that are equivalent with respect to $\unlhd$ should be interchangeable modules.
This intuition is partially formalized in the following lemma.
\begin{lemma}\label{lem:fi:incomparable}
If two components $C_1,C_2 \in (\Cfam_L \cap \Cfam_v) \setminus \Cfam_R$ are incomparable with respect to $\unlhd$, then
at least one of them is touched by $F$.
\end{lemma}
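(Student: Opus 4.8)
The plan is to argue by contradiction: suppose $C_1, C_2 \in (\Cfam_L \cap \Cfam_v) \setminus \Cfam_R$ are incomparable with respect to $\unlhd$, yet both are untouched by $F$. Since both lie in $\Cfam_L$, both fall into case~\ref{case:fi:left} of Lemma~\ref{lem:fi:structure} (they cannot fall into case~\ref{case:fi:right} because they are not in $\Cfam_R$, and case~\ref{case:fi:top} is excluded since that would force them to be touched by a fill-in edge incident to $v$). Hence each of $C_1, C_2$ is drawn entirely between positions $p_L^f$ and $p_L^v$ in the model $\model$, with $N_G(C_i) \subseteq \Isec_L^f \cup \Isec_L^v$ and, since $C_i \in \Cfam_v$, in fact $N_G(C_i) \subseteq \Isec_L^v$ by Lemma~\ref{lem:fi:inclusions}. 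Because $C_1$ and $C_2$ are distinct connected components of $G$ minus the four sections and both are untouched, all events of $\events{C_1}$ lie before all events of $\events{C_2}$ in $\model$, or vice versa; without loss of generality assume $\Send{C_1}{\model} < \Sbeg{C_2}{\model}$.

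The key step is to extract, from this left/right ordering, a comparison $C_1 \unlhd C_2$. Consider any $v_1 \in C_1$ and any $v_2 \in C_2$, and let $a \in N_G(v_1) \cap \Isec_L^v$; I want to show $a \in N_G(v_2)$. Since $C_1$ is untouched, $v_1 a \in E(G)$ is realized geometrically in $\model$: the intervals of $v_1$ and $a$ intersect. The interval of $v_1$ ends before $\Sbeg{C_2}{\model}$, hence before the interval of $v_2$ starts; on the other hand $a \in \Isec_L^v = \Isec_\model(p_L^v)$, so the interval of $a$ contains position $p_L^v$, which is to the right of $\Send{C_1}{\model} \geq \model(\Iend{v_1})$ and also satisfies $p_L^v = \model(\Ibeg{v})$, which is at least $\Sbeg{C_2}{\model}$? — this last point needs care: I must check that the vertices of $C_2$ start \emph{before} position $p_L^v$. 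This follows from case~\ref{case:fi:left} applied to $C_2$: $\Send{C_2}{\model} < p_L^v$, so in particular $\model(\Ibeg{v_2}) < p_L^v < \model(\Iend{v_2})$ would be false — rather $\model(\Iend{v_2}) < p_L^v$. So instead I argue: $a$'s interval contains $p_L^v$, hence $\model(\Iend{a}) \geq p_L^v > \Send{C_2}{\model} \geq \model(\Iend{v_2})$, and $\model(\Ibeg{a}) \leq \model(\Iend{v_1}) \leq \Send{C_1}{\model} < \Sbeg{C_2}{\model} \leq \model(\Ibeg{v_2})$. Thus the interval of $a$ starts before that of $v_2$ and ends after it, so the intervals intersect, giving $av_2 \in E(G+F)$; and since $C_2$ is untouched and $a \notin C_2$, in fact $av_2 \in E(G)$. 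Therefore $N_G(v_1) \cap \Isec_L^v \subseteq N_G(v_2) \cap \Isec_L^v$ for all such $v_1, v_2$, which is precisely $C_1 \unlhd C_2$ — contradicting incomparability.

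The main obstacle I anticipate is pinning down the precise positional inequalities among the endpoints of $v_1$, $a$, $v_2$ and the boundary positions $p_L^f, p_L^v$: one must invoke case~\ref{case:fi:left} of Lemma~\ref{lem:fi:structure} for \emph{both} components to bound $\Send{C_i}{\model}$ strictly below $p_L^v$, and simultaneously use that $a \in \Isec_L^v$ forces $a$'s interval across $p_L^v$, and that $C_1, C_2$ being untouched means all their edges to $\Isec_L^v$ are genuine $G$-edges (so that the geometric intersection argument transfers back to $E(G)$, not just $E(G+F)$). Once these are lined up, the intersection of intervals is immediate and the comparison $C_1 \unlhd C_2$ drops out, completing the contradiction.
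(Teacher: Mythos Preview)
Your proposal is correct and follows essentially the same approach as the paper's proof: both argue that if neither component is touched then both fall into case~\ref{case:fi:left}, they are linearly ordered in $\model$, and the geometric containment of any $a \in N_G(v_1) \cap \Isec_L^v$ over the interval of $v_2$ forces $C_1 \unlhd C_2$. The paper phrases the key step as the one-line inclusion $N_{G+F}(v_1) \cap \Isec_L^v \subseteq N_{G+F}(v_2) \cap \Isec_L^v$ and then passes to $N_G$ using that $C_2$ is untouched, whereas you unpack the positional inequalities explicitly, but the content is the same.
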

\begin{proof}
If a component of $\Cfam_L$ falls into case~\ref{case:fi:top} of Lemma~\ref{lem:fi:structure}, then all
its vertices are touched. Hence, assume that both $C_1$ and $C_2$ fall into case~\ref{case:fi:left}.

If $v_1v_2 \in F$ for some $v_1 \in C_1$, $v_2 \in C_2$, then both components are touched by $F$.
Otherwise, $\Send{C_1}{\model} < \Sbeg{C_2}{\model}$
or $\Send{C_2}{\model} < \Sbeg{C_1}{\model}$; w.l.o.g. assume the first option.
However, then for any $v_1 \in C_1$ and $v_2 \in C_2$ it holds
that $N_{G+F}(v_1) \cap \Isec_L^v \subseteq N_{G+F}(v_2) \cap \Isec_L^v$. 
Hence $C_1 \unlhd C_2$ unless $C_2$ is touched.
\end{proof}

Consider now an auxiliary graph $G_\Cfam$ with vertex set $(\Cfam_L \cap \Cfam_v) \setminus \Cfam_R$ and two components
$C_1$ and $C_2$ being adjacent iff they are incomparable w.r.t. $\unlhd$. By Lemma~\ref{lem:fi:incomparable},
the family of touched components is a vertex cover of $G_\Cfam$ of size at most $2k$.
We run a $2$-approximation algorithm to find a vertex cover $\mathcal{V}$ of $G_\Cfam$.
If $|\mathcal{V}| > 4k$, we discard the current branch. Otherwise, we insert
$\mathcal{V}$ into $\Tfam$.

This concludes the construction of the family $\Tfam$ of troublesome components. Note that $|\Tfam| = \Oh(k^2)$
and $|\Tfam \setminus (\Cfam_L \cap \Cfam_R)| = \Oh(k)$.
Let $\Dfam = (\Cfam_L \cap \Cfam_v) \setminus \Tfam$ be the set of not troublesome components.
We summarize the properties of the components of $\Dfam$.
\begin{enumerate}
\item Every $C \in \Dfam$ is freely drawable.
\item $N_G(C) \subseteq \Isec_L^v$ for any $C \in \Dfam$.
\item Each component $C \in \Dfam$ does not belong to $\Cfam_R$.
That is, $N_G(C)$ contains a vertex of $\Isec_L^v \setminus \Isec_R^v$.
\item The relation $\unlhd$, restricted to $\Dfam$, is a total quasi-order.
\item For every component $C \in \Dfam$ and each $w \in C$, we have $X \subseteq N_G(w)$.
\end{enumerate}

\subsection{Being close and far from a troublesome component}

In this section we show that any component that is \emph{far} from all components of $\Tfam$, in a specific meaning defined later,
is left untouched by $F$.
This, together with a bound on the number of components \emph{close} to $\Tfam$ will conclude the proof of Theorem~\ref{thm:fill-in}.

For any component $C \in \Cfam_L$ we define the following two measures.
\begin{align*}
\minnei{C} &= \min_{w \in C} |N_G(w) \cap \Isec_L^v|, \\
\maxnei{C} &= \max_{w \in C} |N_G(w) \cap \Isec_L^v|.
\end{align*}
Note that $\maxnei{C_1} \leq \minnei{C_2}$ whenever $C_1 \unlhd C_2$. Observe moreover that $\minnei{C}\geq |X|$ for each $C\in \Dfam$. 

Consider now some $C \in \Dfam$.
We first observe that $N_G(w) \cap \Isec_L^v = N_G(w) \setminus C$ for any $w \in C$.
Second, note that, as $C$ is freely drawable,
for any $w_1,w_2 \in C$ we have $N_G(w_1) \cap \Isec_L^v \subseteq N_G(w_2) \cap \Isec_L^v$ or vice-versa.
In particular, for $C \in \Dfam$ if we define sets 
\begin{align*}
\minneiset{C} &= \bigcap_{w \in C} N_G(w) \cap \Isec_L^v, \\
\maxneiset{C} &= \bigcup_{w \in C} N_G(w) \cap \Isec_L^v,
\end{align*}
then there exists $w_1,w_2 \in C$ with $N_G(w_1) \cap \Isec_L^v = \minneiset{C}$ and
$N_G(w_2) \cap \Isec_L^v = \maxneiset{C}$. In particular,
$|\minneiset{C}| = \minnei{C}$ and $|\maxneiset{C}| = \maxnei{C}$.

Enumerate now $\Dfam = \{C^1,C^2, \ldots,C^{|\Dfam|}\}$ such that
$$C^1 \unlhd C^2 \unlhd \ldots \unlhd C^{|\Dfam|}.$$
Note that the aforementioned numeration is not unique, as $\unlhd$ is a quasi-order: they may exist
$C_1,C_2 \in \Dfam$ with $C_1 \unlhd C_2$ and $C_2 \unlhd C_1$.
However, we note that such a situation is somehow limited by inapplicability of the Module Reduction Rule.
\begin{lemma}\label{lem:fi:equal}
If $C_1 \unlhd C_2$ and $C_2 \unlhd C_1$ for some $C_1,C_2 \in \Dfam$, then
$C_1$, $C_2$ and $C_1 \cup C_2$ are modules in $G$.
Moreover, if $\Dfam' \subseteq \Dfam$ such that $C_1 \unlhd C_2$ and $C_2 \unlhd C_1$
for any $C_1,C_2 \in \Dfam'$, then $|\Dfam'| \leq 2k+2$.
\end{lemma}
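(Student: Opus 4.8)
The plan is to first turn the two-sided relation $C_1 \unlhd C_2$ and $C_2 \unlhd C_1$ into a statement about a single common neighbourhood set. Unfolding the definition of $\unlhd$, for every $w_1 \in C_1$ and $w_2 \in C_2$ we have simultaneously $N_G(w_1)\cap \Isec_L^v \subseteq N_G(w_2)\cap \Isec_L^v$ and $N_G(w_2)\cap \Isec_L^v \subseteq N_G(w_1)\cap \Isec_L^v$; hence there is a fixed set $S \subseteq \Isec_L^v$ with $N_G(w)\cap \Isec_L^v = S$ for every $w \in C_1 \cup C_2$ (this also covers the degenerate case $C_1 = C_2$, where the hypothesis already forces all vertices of that single component to share the same trace on $\Isec_L^v$). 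Here I use two facts about members of $\Dfam$: each $C\in\Dfam$, being a connected component of $G$ minus the four sections, is disjoint from $\Isec_L^v$, and $N_G(C)\subseteq \Isec_L^v$ (property~2 of the summary of $\Dfam$). Together these give the identity $N_G(w)\cap \Isec_L^v = N_G(w)\setminus C_i$ for every $w \in C_i$, which I will use repeatedly.

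From this I read off the module claims directly. For $w\in C_1$ we get $N_G(w)\setminus C_1 = S$, independent of $w$, so $C_1$ is a module; symmetrically $C_2$ is a module. For the union, observe that $S\cap C_2 = \emptyset$ because $S\subseteq \Isec_L^v$ and $C_2$ is disjoint from $\Isec_L^v$ (and likewise $S \cap C_1 = \emptyset$); therefore for $w\in C_1$ we have $N_G(w)\setminus(C_1\cup C_2) = (N_G(w)\setminus C_1)\setminus C_2 = S$, and symmetrically for $w\in C_2$, so $C_1\cup C_2$ is a module whose external neighbourhood is exactly $S$.

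For the quantitative bound I would apply the same reasoning to every pair in $\Dfam'$: there is one set $S$ such that $N_G(w)\cap \Isec_L^v = S$ for every vertex $w$ of every component of $\Dfam'$, hence $N_G(C) = S$ for each $C\in\Dfam'$. Since $C$ is connected and $N_G(C)=S$, the set $C$ is a union of connected components of $G\setminus S$ and, being connected, is itself a connected component of $G\setminus S$. Thus the members of $\Dfam'$ are pairwise distinct connected components of $G\setminus S$, all of them modules in $G$ with the common neighbourhood $S$. Consequently, if $|\Dfam'|\ge 2k+3$, the Module Reduction Rule would be applicable with $X = S$ and any $2k+3$ components of $\Dfam'$ as $M_1,\dots,M_{2k+3}$, contradicting our standing assumption that it is not; hence $|\Dfam'|\le 2k+2$. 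The only step requiring care — and the one I would single out as the main (if modest) obstacle — is this passage from ``$C$ is a component of $G$ minus the four sections'' to ``$C$ is a component of $G\setminus S$'', since that is precisely the form in which the Module Reduction Rule must be invoked; everything else is bookkeeping with the identity $N_G(w)\cap\Isec_L^v = N_G(w)\setminus C$ and the disjointness of the components from $\Isec_L^v$.
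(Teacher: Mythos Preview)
Your proof is correct and follows essentially the same route as the paper: unfold the definition of $\unlhd$ to obtain a common trace $S$ on $\Isec_L^v$, use the identity $N_G(w)\cap\Isec_L^v = N_G(w)\setminus C_i$ to conclude the module claims, and then invoke the Module Reduction Rule for the bound. The only cosmetic difference is that the paper applies the rule with $X = \Isec_L^v$ (so the components of $\Dfam'$ are immediately components of $G\setminus X$ by property~2 of $\Dfam$), whereas you take $X = S$ and add the short extra step that $N_G(C)=S$ forces $C$ to be a component of $G\setminus S$; both choices work equally well.
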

\begin{proof}
By the definition of the relation $\unlhd$, we infer that 
$$N_G(v_1) \setminus C_1 = N_G(v_1) \cap \Isec_L^v = N_G(v_2) \cap \Isec_L^v = N_G(v_2) \setminus C_2$$
for any $v_1 \in C_1$, $v_2 \in C_2$. The first claim follows.
For the second claim, note that if $|\Dfam'| \geq 2k+3$, then the Module Reduction Rule would be applicable
to any $2k+3$ components of $\Dfam'$, and the set $\Isec_L^v$.
\end{proof}
\begin{corollary}\label{cor:fi:index-jump}
For any $1 \leq a \leq b \leq |\Dfam|$ we have
$$\minnei{C^b} - \maxnei{C^a} \geq \left\lceil \frac{b-a}{2k+3} \right\rceil - 1.$$
\end{corollary}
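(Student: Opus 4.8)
The plan is to reduce the statement to a clean counting argument along the chain $C^1 \unlhd C^2 \unlhd \dots \unlhd C^{|\Dfam|}$. First, since $\unlhd$ restricted to $\Dfam$ is a total quasi-order, it is in particular reflexive, so $C \unlhd C$ for every $C \in \Dfam$; unwinding the definition of $\unlhd$, this says that all vertices of $C$ share the same neighbourhood inside $\Isec_L^v$. Hence $\minnei{C} = \maxnei{C}$ for each $C \in \Dfam$, and I will write $d(C)$ for this common integer and $M(C) := N_G(w) \cap \Isec_L^v$ (for any $w \in C$) for the corresponding set, so $|M(C)| = d(C)$. From the already noted inequality $\maxnei{C_1} \le \minnei{C_2}$ for $C_1 \unlhd C_2$, applied along consecutive elements of the chain, the sequence $d(C^1), d(C^2), \dots, d(C^{|\Dfam|})$ is non-decreasing. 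Thus it suffices to prove $d(C^b) - d(C^a) \ge \lceil (b-a)/(2k+3) \rceil - 1$ for $1 \le a \le b \le |\Dfam|$.

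The key step is to bound how long $d$ can stay constant. Fix an integer $m$ and let $\{i, i+1, \dots, j\}$ be the (necessarily contiguous, by monotonicity) set of indices $\ell$ with $d(C^\ell) = m$. For $i \le \ell < \ell' \le j$, transitivity of $\unlhd$ along the chain gives $C^\ell \unlhd C^{\ell'}$, so $M(C^\ell) \subseteq M(C^{\ell'})$; since both sets have size $m$, they are equal. Consequently $M(C^i) = M(C^{i+1}) = \dots = M(C^j)$, which means the components $C^i, \dots, C^j$ are pairwise $\unlhd$-equivalent in both directions, and Lemma~\ref{lem:fi:equal} forces $j - i + 1 \le 2k+2$. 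In other words, every value is attained by at most $2k+2$ of the indices. I expect this passage --- from ``the neighbourhood sizes agree'' to ``the neighbourhood sets agree, so Lemma~\ref{lem:fi:equal} applies'' --- to be the only genuinely non-routine point; it is exactly where inapplicability of the Module Reduction Rule (hidden inside Lemma~\ref{lem:fi:equal}) is used.

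It remains to finish by counting. Look at the $b - a + 1$ components $C^a, C^{a+1}, \dots, C^b$: their $d$-values form a non-decreasing sequence of integers in which each value occurs at most $2k+2$ times, so the sequence takes at least $\lceil (b-a+1)/(2k+2) \rceil$ distinct values, whence $d(C^b) - d(C^a) \ge \lceil (b-a+1)/(2k+2) \rceil - 1$. Finally $(b-a+1)(2k+3) = (b-a)(2k+2) + (b-a) + (2k+3) > (b-a)(2k+2)$, so $(b-a+1)/(2k+2) > (b-a)/(2k+3)$ and hence $\lceil (b-a+1)/(2k+2) \rceil \ge \lceil (b-a)/(2k+3) \rceil$; combining, $d(C^b) - d(C^a) \ge \lceil (b-a)/(2k+3) \rceil - 1$. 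Recalling that $d(C^b) = \minnei{C^b}$ and $d(C^a) = \maxnei{C^a}$ gives the claim. The argument is uniform; in particular it handles the degenerate ranges $b = a$ (right-hand side $-1$) and $b - a < 2k+3$ (right-hand side $0$) with no separate case analysis. As a sanity check on robustness, one can replace the use of reflexivity by running the same run-length analysis on the interleaved non-decreasing sequence $\maxnei{C^1} \le \minnei{C^2} \le \maxnei{C^2} \le \dots$: a window of $2k+4$ consecutive components cannot be entirely $\unlhd$-equivalent, so it contains a pair $C^s \unlhd C^t$ with $C^t \not\unlhd C^s$, and one shows (again via equal-size nested neighbourhoods and Lemma~\ref{lem:fi:equal}) that then $\minnei{C^t} > \maxnei{C^s}$; iterating this over $\lfloor (b-a)/(2k+3) \rfloor$ disjoint windows yields the same bound for $b>a$.
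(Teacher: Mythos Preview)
Your argument has a genuine gap at its very first step. You deduce from reflexivity of $\unlhd$ that $\minnei{C}=\maxnei{C}$ for every $C\in\Dfam$. That deduction is logically valid from the definition of $\unlhd$, but the premise is false: $\unlhd$ is \emph{not} reflexive (the paper's remark that it is ``clearly reflexive'' is simply inaccurate, and the paper never uses it). Indeed, $C\unlhd C$ would require $N_G(w_1)\cap\Isec_L^v\subseteq N_G(w_2)\cap\Isec_L^v$ for all $w_1,w_2\in C$, i.e.\ that $C$ is a module; but the paper explicitly works with $\minneiset{C}\subsetneq\maxneiset{C}$ for freely drawable components (the neighbourhoods in $\Isec_L^v$ are only totally \emph{ordered}, not equal). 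So your sequence $d(C^i)$ is not well-defined, and the rest of the main argument collapses.

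Your fallback sketch has the same problem in disguise. From $C^s\unlhd C^t$ and $C^t\not\unlhd C^s$ you cannot conclude $\minnei{C^t}>\maxnei{C^s}$: take $C^t$ a module with $\maxneiset{C^t}=\maxneiset{C^s}$ while $\minneiset{C^s}\subsetneq\maxneiset{C^s}$; then $C^t\not\unlhd C^s$ (it would need $\maxneiset{C^t}\subseteq\minneiset{C^s}$), yet $\minnei{C^t}=\maxnei{C^s}$.

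The fix is close to what you wrote, but applied to the right sequence. Track the non-decreasing sequence $\maxnei{C^1}\le\maxnei{C^2}\le\ldots$ (non-decreasing since $\maxnei{C^{i}}\le\minnei{C^{i+1}}\le\maxnei{C^{i+1}}$). On a maximal run $\maxnei{C^{i}}=\ldots=\maxnei{C^{j}}$, one checks that $C^{i+1},\ldots,C^j$ are forced to be modules with identical neighbourhood $\maxneiset{C^i}$, so Lemma~\ref{lem:fi:equal} gives $j-i\le 2k+2$; but $C^i$ itself need not be a module, so the run length is at most $2k+3$, not $2k+2$. Your counting then yields $\maxnei{C^{b-1}}-\maxnei{C^a}\ge\lceil(b-a)/(2k+3)\rceil-1$, and $\minnei{C^b}\ge\maxnei{C^{b-1}}$ finishes. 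This is exactly the paper's argument, phrased slightly differently.
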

\begin{proof}
Let $a < c_1 < c_2 < \ldots < c_s < b$ be the sequence of all indices $a < c < b$
such that $\maxnei{C^{c-1}} < \maxnei{C^c}$.
By Lemma~\ref{lem:fi:equal}, $c_{i+1} - c_i \leq 2k+3$ for any $1 \leq i < s$ and $c_1 - a \leq 2k+3$, $b - c_s \leq 2k+3$.
Consequently, $(2k+3)(s+1) \geq b-a$.
The lemma follows from the observation that $s \leq \maxnei{C^{c_s}} - \maxnei{C^a} \leq \minnei{C^b} - \maxnei{C^a}$.
\end{proof}

Given the ordering $C^1, C^2, \ldots, C^{|\Dfam|}$ we can also observe the following corollary of the fact that all components of
$\Dfam$ are freely drawable.
\begin{lemma}\label{lem:fi:draw}
For any $1 \leq a \leq b \leq |\Dfam|$, if we define $F' = \binom{\maxneiset{C^b}}{2} \setminus E(G)$
then the graph
$$(G+F')\left[ \maxneiset{C^b} \cup \bigcup_{c=a}^b C^c \right]$$
is interval and admits a model that starts with the starting events of $\events{\minneiset{C^a}}$
and ends with the ending events of $\events{\maxneiset{C^b}}$.
\end{lemma}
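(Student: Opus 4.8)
The plan is to prove the statement by induction on $b-a$, constructing the model by concatenating freely drawable models of single components along shared cliques that act as hinges.

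First I would record the combinatorial facts that make the gluing possible. From $C^1 \unlhd C^2 \unlhd \cdots$ restricted to $\Dfam$ (a total quasi-order), picking the extremal vertices one obtains the nested chain
$$\minneiset{C^a} \subseteq \maxneiset{C^a} \subseteq \minneiset{C^{a+1}} \subseteq \maxneiset{C^{a+1}} \subseteq \cdots \subseteq \minneiset{C^b} \subseteq \maxneiset{C^b} \subseteq \Isec_L^v.$$
For $w \in C^c$ we have $N_G(w) \cap \Isec_L^v \subseteq \maxneiset{C^c} \subseteq \maxneiset{C^b}$, hence $N_G(w) \cap \maxneiset{C^b} = N_G(w) \cap \Isec_L^v$. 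Consequently, for any clique $\Omega$ with $\minneiset{C^c} \subseteq \Omega \subseteq \maxneiset{C^b}$, the graph $(G+F')[\Omega \cup C^c]$ coincides with the restriction of $(G+F_L)[\Isec_L^v \cup C^c]$ to $\Omega \cup C^c$: both turn $\Omega$ into a clique, leave $G[C^c]$ untouched (as $C^c$ is disjoint from $\Isec_L^v$), and join $w\in C^c$ to $N_G(w)\cap \Omega$. Restricting a freely drawable model of $C^c$ to $\Omega \cup C^c$ therefore yields an interval model of $(G+F')[\Omega \cup C^c]$ beginning with the starting events of $\events{\Isec_L^v \cap \Isec_L^f}$ (which is contained in $\minneiset{C^c}$) and ending with the ending events of $\events{\Omega}$.

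Next I would upgrade the left boundary. The key observation is that $\minneiset{C^c}$, being the common $G$-neighbourhood of $C^c$ inside $\Isec_L^v$ and a subset of the clique $\Omega$, is \emph{universal} in $(G+F')[\Omega \cup C^c]$. Hence in any interval model of this graph we may slide all starting events of $\events{\minneiset{C^c}}$ to the very front (their ending events are already, or can be slid to, the very back) without changing any adjacency, and this does not disturb the fact that the model ends with the ending events of $\events{\Omega}$. So for every $c$ and every clique $\Omega$ with $\minneiset{C^c} \subseteq \Omega \subseteq \maxneiset{C^b}$ there is an interval model of $(G+F')[\Omega \cup C^c]$ starting with $\events{\minneiset{C^c}}$'s starting events and ending with $\events{\Omega}$'s ending events. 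For the base case $b=a$, taking $\Omega = \maxneiset{C^a}$ already proves $P(a,a)$.

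For the inductive step, let $\mu'$ be the model for the range $(a+1,b)$ given by the induction hypothesis: it starts with the starting events of $\events{\minneiset{C^{a+1}}}$ and ends with the ending events of $\events{\maxneiset{C^b}}$. Let $\rho$ be a model of $(G+F')[\minneiset{C^{a+1}} \cup C^a]$ obtained from the previous paragraph with $\Omega = S := \minneiset{C^{a+1}}$ (legitimate since $\maxneiset{C^a} \subseteq S \subseteq \maxneiset{C^b}$), starting with the starting events of $\events{\minneiset{C^a}}$ and ending with the ending events of $\events{S}$. Both $\rho$ and $\mu'$ contain the clique $S$; in $\rho$ its ending events form the last $|S|$ positions and in $\mu'$ its starting events form the first $|S|$ positions. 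I would glue by deleting those $2|S|$ interface events and concatenating the two remaining event sequences, so that each $z \in S$ keeps its start from $\rho$ and its end from $\mu'$, its interval bridging the seam, while every interval of $C^a$ stays strictly left of the seam and every interval of $\maxneiset{C^b}\setminus S$ and of $C^{a+1},\dots,C^b$ stays strictly right of it. A routine adjacency check—using that inside $\maxneiset{C^b}$ the component $C^a$ sees only vertices of $S$ (as $N_G(w)\cap\maxneiset{C^b}\subseteq\maxneiset{C^a}\subseteq S$ for $w\in C^a$)—shows the concatenation is a valid interval model of $(G+F')[\maxneiset{C^b}\cup\bigcup_{c=a}^b C^c]$, and it inherits the front of $\rho$ and the back of $\mu'$, yielding exactly the required boundary. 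The main obstacle is the bookkeeping in this last step: ensuring the two boundary conditions interlock (which dictates using $S=\minneiset{C^{a+1}}$ as the hinge and needing $\maxneiset{C^a}\subseteq\minneiset{C^{a+1}}$) and verifying that both the universal-vertex shift and the amputation of $S$'s interface events neither create nor destroy edges; the rest follows directly from the chain of inclusions and the definition of free drawability.
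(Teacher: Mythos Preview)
Your proof is correct and follows essentially the same inductive scheme as the paper: induction on $b-a$, with the step performed by concatenating two sub-models along a shared clique acting as a hinge. The only cosmetic differences are that you split specifically at $c=a+1$ and build the left piece directly on the hinge clique $S=\minneiset{C^{a+1}}$ (thereby avoiding the paper's small ``middle insertion'' of the starting events of $\events{\minneiset{C^c}\setminus\maxneiset{C^{c-1}}}$), and that you spell out the base case more carefully via the universal-vertex sliding argument, whereas the paper simply asserts it is ``equivalent to the definition of $C^a$ being freely drawable.''
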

\begin{proof}
We prove the lemma by induction on $b-a$. For the base case $a=b$, observe that the claim is equivalent to the definition of $C^a$ being freely drawable.
In the induction step, pick any $a < c \leq b$ and use the induction hypothesis for
components $C^a, C^{a+1}, \ldots, C^{c-1}$ and $C^c, C^{c+1}, \ldots, C^b$, obtaining models $\model_1$ and $\model_2$.
Create the desired model $\model_0$ by concatenating:
\begin{enumerate}
\item the model $\model_1$, with removed suffix consisting of the ending events of $\events{\maxneiset{C^{c-1}}}$,
\item the starting events of $\events{\minneiset{C^c} \setminus \maxneiset{C^{c-1}}}$, and
\item the model $\model_2$, with removed prefix consisting of the starting events of $\events{\minneiset{C^c}}$.
\end{enumerate}
It is straightforward to verify that $\model_0$ satisfies all the promised properties.
\end{proof}

We now turn our attention to the troublesome components and inspect how they interact with the family $\Dfam$.
For each $T \in \Tfam$ define the following.
\begin{align*}
\ToutL{T} &= \min \{x : \maxnei{C^x} \geq \minnei{T}\} \\
\ToutR{T} &= \max \{x : \minnei{C^x} \leq \maxnei{T}\} \\
\TinL{T} &= \min \{x: \minnei{C^x} > \minnei{T} + k\} \\
\TinR{T} &= \max \{x: \maxnei{C^x} < \maxnei{T}\}
\end{align*}
All these values can attain $+\infty$ or $-\infty$ if the corresponding set for minimization or maximization is empty.

Clearly, $\ToutL{T} \leq \TinL{T}$, $\ToutR{T} \geq \TinR{T}$ and $\ToutL{T} \leq \ToutR{T} + 1$.
We note that, by Corollary~\ref{cor:fi:index-jump}, we have $\TinL{T} - \ToutL{T} = \Oh(k^2)$ and
$\ToutR{T} - \TinR{T} = \Oh(k)$. 
We claim the following.
\begin{lemma}\label{lem:fi:wide-T}
If $\TinR{T} - \TinL{T} > 2k$, then $T$ does not fall into case~\ref{case:fi:left} of Lemma~\ref{lem:fi:structure}.
\end{lemma}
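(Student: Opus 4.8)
The plan is to argue by contradiction: suppose $T$ falls into case~\ref{case:fi:left} of Lemma~\ref{lem:fi:structure}, so that $p_L^f < \Sbeg{T}{\model} < \Send{T}{\model} < p_L^v$ and $N_G(T) \subseteq \Isec_L^f \cup \Isec_L^v$. I would first identify, among the components $C^{\TinL{T}}, C^{\TinL{T}+1}, \ldots, C^{\TinR{T}}$, those that are untouched by $F$; since $|F| \le k$, at most $2k$ of the more than $2k$ indices in the range $(\TinL{T}, \TinR{T})$ correspond to touched components, so there exist at least two untouched components $C^a$ and $C^b$ with $\TinL{T} < a < b < \TinR{T}$ (in fact many, but two will suffice). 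By the definition of $\TinL{T}$ we have $\minnei{C^a} > \minnei{T} + k$, hence $\minneiset{C^a}$ contains more than $\minnei{T}+k$ vertices of $\Isec_L^v$; by the definition of $\TinR{T}$ we have $\maxnei{C^b} < \maxnei{T}$, so $T$ has a vertex with strictly more neighbors in $\Isec_L^v$ than any vertex of $C^b$.

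The heart of the argument is then a positional squeeze in the model $\model$. An untouched component $C^a \in \Dfam$ falls into case~\ref{case:fi:left} (it cannot fall into case~\ref{case:fi:top} while untouched, nor into case~\ref{case:fi:right} since $C^a \notin \Cfam_R$), so $p_L^f < \Sbeg{C^a}{\model} < \Send{C^a}{\model} < p_L^v$, and the same holds for $C^b$ and for $T$. Within the strip between positions $p_L^f$ and $p_L^v$, these three pairwise non-adjacent (in $G+F$, as $C^a, C^b$ are untouched and $T$ is assumed untouched — wait: $T$ may be touched, so I instead compare only $C^a$ against $C^b$ and separately use the neighborhood inequalities involving $T$) intervals are linearly ordered. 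I would argue that the relative order of $C^a$ and $C^b$ in $\model$ is forced by $\unlhd$: since $\minnei{C^b} \ge \minnei{C^a}$ (indeed $a<b$ and, using Lemma~\ref{lem:fi:equal}/Corollary~\ref{cor:fi:index-jump} together with $b-a > $ a suitable multiple of $2k+3$, even $\minnei{C^b} > \maxnei{C^a}$, so $C^a$ and $C^b$ are strictly comparable), we get $\Send{C^a}{\model} < \Sbeg{C^b}{\model}$. Now the key contradiction: a vertex $w_T \in T$ with $|N_G(w_T) \cap \Isec_L^v| = \maxnei{T} > \maxnei{C^b} = |N_G(w_b) \cap \Isec_L^v|$ for every $w_b \in C^b$, combined with a vertex $w_a \in C^a$ realizing $N_G(w_a)\cap\Isec_L^v = \minneiset{C^a}$ with $|\minneiset{C^a}| > \minnei{T}+k$, forces $T$'s interval, which lies in the same strip, to be positioned so that it must either (i) contain or be contained in $C^a$'s span forcing $\ge k+1$ fill edges onto $T \cup \Isec_L^v$ (contradicting $|F|\le k$ via the $\minnei{T}+k < \minnei{C^a}$ gap, exactly as in the proof of Lemma~\ref{lem:A-r-nei}-style counting, since every vertex of $\Isec_L^v$ starting between $\Sbeg{C^a}{\model}$ and $\Send{C^a}{\model}$ and not adjacent to $w_T$ yields a fill edge at $w_T$), or (ii) be squeezed between $C^a$ and $C^b$ but then have fewer neighbors in $\Isec_L^v$ than $\maxnei{C^b}$ dictates, contradicting $\maxnei{T} > \maxnei{C^b}$ since the starting events available to $w_T$ between $\Send{C^a}{\model}$ and $\Sbeg{C^b}{\model}$ are a subset of those available to vertices of $C^b$. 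Either way we contradict either minimality/size of $F$ or the chosen inequalities.

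The main obstacle I anticipate is handling the case analysis cleanly when $T$ itself is touched by $F$ (which is entirely possible — $T$ is troublesome, not necessarily untouched), so the "pairwise non-adjacent in $G+F$" reasoning cannot be applied to $T$ directly; the workaround is to phrase everything in terms of the \emph{graph-$G$} neighborhoods $N_G(w)\cap\Isec_L^v$ and the positions of \emph{starting} events in the strip $(p_L^f, p_L^v)$, using only that $C^a, C^b$ are untouched so that their intervals behave as in $G$, and that $\Isec_L^v$ is a clique in $G+F$ so its vertices' starting events are totally ordered and, crucially, every vertex of $\Isec_L^v$ whose starting event lies strictly inside the span of some $w$ with $p_L^f < \Sbeg{}{} < \Send{}{} < p_L^v$ and which is non-adjacent to $w$ in $G$ must be made adjacent by $F$. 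A secondary subtlety is to pin down exactly how large $b-a$ must be (in terms of multiples of $2k+3$) to guarantee strict $\unlhd$-comparability via Corollary~\ref{cor:fi:index-jump}; choosing the two untouched indices not merely inside $(\TinL{T},\TinR{T})$ but suitably far apart — which is possible precisely because the hypothesis $\TinR{T}-\TinL{T} > 2k$ leaves room after discarding $\le 2k$ touched indices — resolves this, and is the reason the exact threshold in the statement is $2k$.
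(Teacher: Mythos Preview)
Your core intuition --- a neighborhood-size sandwich forcing a positional contradiction in the strip $(p_L^f,p_L^v)$ --- is exactly right, but you have made the argument considerably harder than necessary, and the extra machinery introduces a genuine gap.

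The paper's proof needs only a \emph{single} untouched component $C^c$ with $\TinL{T}\le c\le\TinR{T}$ (closed interval: this gives $\ge 2k+2$ indices, so after discarding the $\le 2k$ touched ones at least one untouched $C^c$ survives). Picking $x,y\in T$ realising $\minnei{T}$ and $\maxnei{T}$, the definitions of $\TinL{T},\TinR{T}$ give, for any $w\in C^c$,
\[
|N_{G+F}(x)\cap\Isec_L^v|\;\le\;\minnei{T}+k\;<\;\minnei{C^c}\;\le\;|N_{G+F}(w)\cap\Isec_L^v|\;\le\;\maxnei{C^c}\;<\;\maxnei{T}\;\le\;|N_{G+F}(y)\cap\Isec_L^v|.
\]
If $T$ also fell into case~\ref{case:fi:left}, then since $C^c$ is untouched there are no $G+F$-edges between $T$ and $C^c$, so the (connected) span of $T$ lies entirely before or entirely after $C^c$ in the strip. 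But within the strip the size $|N_{G+F}(u)\cap\Isec_L^v|$ is monotone in $\model(\Iend{u})$, so either placement contradicts one of the two strict inequalities above. That is the entire argument; no second component, no $\unlhd$-comparability, no Corollary~\ref{cor:fi:index-jump}.

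Your two-component route does not go through as written. With only $\TinR{T}-\TinL{T}>2k$ you get at most $\ge 2$ untouched indices in the \emph{closed} interval (your use of the open interval already loses the count), and you have no control whatsoever over how far apart they are --- they could be adjacent indices --- so the appeal to Corollary~\ref{cor:fi:index-jump} to force strict $\unlhd$-comparability of $C^a$ and $C^b$ is unjustified. All of this evaporates once you notice that a single $C^c$ already provides both sides of the sandwich simultaneously, which is precisely why the threshold in the statement is $2k$ and not some multiple of $(2k+3)$.
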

\begin{proof}
Let $x,y \in T$ such that
$|N_G(x) \cap \Isec_L^v| = \minnei{T}$ and
$|N_G(y) \cap \Isec_L^v| = \maxnei{T}$.
If $\TinR{T} - \TinL{T} > 2k$ then there exists a component $C^c$ that is untouched by $F$
for some $\TinL{T} \leq c \leq \TinR{T}$. 
Hence, for any $w \in C^c$ we have
\begin{eqnarray*}
|N_{G+F}(x) \cap \Isec_L^v| \leq |N_G(x) \cap \Isec_L^v| + k < |N_G(w) \cap \Isec_L^v| = |N_{G+F}(w) \cap \Isec_L^v| < |N_G(y) \cap \Isec_L^v| \leq |N_{G+F}(y) \cap \Isec_L^v|.
\end{eqnarray*}
Summarizing, $|N_{G+F}(x) \cap \Isec_L^v| < |N_{G+F}(w) \cap \Isec_L^v| < |N_{G+F}(y) \cap \Isec_L^v|$. As $T$ is connected in $G$ and no edge of $G+F$ connects $C^c$ with $T$, it cannot happen that both
$C^c$ and $T$ fall into case~\ref{case:fi:left} of Lemma~\ref{lem:fi:structure}.
However, since $C^c$ is untouched and does not belong to $\Cfam_R$, $C^c$ falls into case~\ref{case:fi:left} of Lemma~\ref{lem:fi:structure}.
This finishes the proof of the lemma.
\end{proof}
Let $\Tfam' = \{T \in \Tfam: \TinR{T} - \TinL{T} \leq 2k\}$ be the set of these troublesome components
for which Lemma~\ref{lem:fi:wide-T} is not applicable. Note also that for any $T \in \Tfam'$ we
have $-1 \leq \ToutR{T} - \ToutL{T} = \Oh(k^2)$.

We say that a component $C^c \in \Dfam$ is \emph{far} from a troublesome component $T$
if either $\ToutR{T} < c - \eta$ or $\ToutL{T} > c + \zeta$, 
   where
\begin{align*}
\gamma&= (2k+3)(k+2)+1, & \delta &= 2(2k+3)+1, \\
\eta&= \gamma\cdot (2k+2), & \zeta &= \delta\cdot (2k+3).
\end{align*}
A component $C$ is \emph{close} to $T$ if it is not far from $T$.
Define $\Dfam_0$ to be the set of these components $C^c \in \Dfam$ such that
$C^c$ is far from all components of $\Tfam'$ and, moreover, $\eta < c < |\Dfam| - \zeta$.

With this definition, we are now ready for the crucial argumentation of this section.
\begin{lemma}\label{lem:fi:far-untouched}
Any component $C \in \Dfam_0$ is untouched by $F$.
Consequently, such $C$ falls into case~\ref{case:fi:left} of Lemma~\ref{lem:fi:structure}.
\end{lemma}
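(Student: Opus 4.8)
The plan is to argue by contradiction. Suppose some $C = C^c \in \Dfam_0$ is touched by $F$, and set $F' = F \setminus \{e \in F : e \text{ is incident to } C^c\}$; since $C^c$ is touched we have $F' \subsetneq F$, so it suffices to exhibit an interval model $\model'$ of $G + F'$, which will contradict the minimality of $F$. Recall the properties of $C^c \in \Dfam$: it is freely drawable, $N_G(C^c) \subseteq \Isec_L^v$, and since $F_L$ has no edge meeting $C^c$, the graph $G[C^c] = (G+F_L)[C^c]$ is an induced subgraph of an interval graph, hence interval; thus $C^c$ can be redrawn inside the ``left shelf'' $(p_L^f, p_L^v)$ with no incident fill‑in edge. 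Note also that, once we show $C^c$ is untouched, Lemma~\ref{lem:fi:structure} forces it into case~\ref{case:fi:left}: case~\ref{case:fi:right} is excluded because $C^c \notin \Cfam_R$, and case~\ref{case:fi:top} would make $C^c$ touched (through fill‑in edges to $v$).

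The heart of the proof is the existence of a \emph{clean window}: indices $1 \le a < c < b \le |\Dfam|$ with $b - a \ge \gamma$, such that $C^a$ and $C^b$ are untouched by $F$, and such that in the permutation $\model_0$ obtained from $\model$ by deleting all events of $\events{C^c}$, the stretch strictly between the last event of $C^a$ and the first event of $C^b$ consists only of events of $C^{a+1}, \ldots, C^{b-1}$ (with $C^c$ deleted) together with endpoint events of vertices of $\Isec_L^v \cup \Isec_L^f$. Being untouched members of $(\Cfam_L \cap \Cfam_v) \setminus \Cfam_R$, the components $C^a, C^b$ fall into case~\ref{case:fi:left}, hence live inside $(p_L^f, p_L^v)$; each being connected has a contiguous interval span, so within that span the only foreign events are endpoints of $\Isec_L^v$‑vertices. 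Any component with an event strictly between them therefore also falls into case~\ref{case:fi:left}. To pin such an intruder down to an index in $(a,b)$ one combines: Lemma~\ref{lem:fi:wide-T}, which kills every $T \in \Tfam \setminus \Tfam'$; the narrowness $\ToutR{T} - \ToutL{T} = \Oh(k^2)$ of the influence interval of a $T \in \Tfam'$, so that the hypothesis defining $\Dfam_0$ (that $c \notin [\ToutL{T} - \zeta, \ToutR{T} + \eta]$ for all $T \in \Tfam'$), together with $\gamma, \delta = \Oh(k^2)$ being dwarfed by $\eta, \zeta$, lets one pick $a, b$ within $\Oh(k^2)$ of $c$ while avoiding every troublesome span; the bound of $2k$ on the number of touched components, which locates the untouched $C^a, C^b$ at that distance; and the total quasi‑order on $\Dfam$ with Lemma~\ref{lem:fi:incomparable} (untouched case‑\ref{case:fi:left} components appear in $\model$ in the $\unlhd$‑order) and Lemma~\ref{lem:fi:equal}/the Module Reduction Rule (bounding $\unlhd$‑equivalence classes by $2k+2$), which together force the intruders to be exactly $C^{a+1}, \ldots, C^{b-1}$. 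Finally $b - a \ge \gamma$ yields, via Corollary~\ref{cor:fi:index-jump}, the neighborhood‑size margin $\minnei{C^b} - \maxnei{C^a} \ge k+2$, and the multipliers in $\eta = \gamma(2k+2)$, $\zeta = \delta(2k+3)$ are exactly what is needed to absorb the module multiplicities and keep $a, b$ inside $[1, |\Dfam|]$. I expect this clean‑window step to be the main obstacle, as it is where all the earlier structural machinery (troublesome components, far/close, the Module Reduction Rule) is cashed in.

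Granting the clean window, the rest is a splice. We may assume $\Send{C^a}{\model} < \Sbeg{C^b}{\model}$; in the opposite case $C^a, \ldots, C^b$ are pairwise $\unlhd$‑equivalent modules (Lemma~\ref{lem:fi:equal}), and the argument only simplifies. Apply Lemma~\ref{lem:fi:draw} to the block $C^a, \ldots, C^b$: with $F'' = \binom{\maxneiset{C^b}}{2} \setminus E(G)$, the graph $(G + F'')[\maxneiset{C^b} \cup \bigcup_{i=a}^{b} C^i]$ is interval and admits a model $\model_{[a,b]}$ that starts with the starting events of $\events{\minneiset{C^a}}$ and ends with the ending events of $\events{\maxneiset{C^b}}$. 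Since $\maxneiset{C^b} \subseteq \Isec_L^v$ we have $F'' \subseteq F_L \subseteq F$, so $\model_{[a,b]}$ introduces no new fill‑in edge, and in particular none incident to any $C^i$ — the freely drawn $C^c$ included.

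Now form $\model'$ from $\model_0$ by excising the clean‑window stretch together with the intervals of $C^a, \ldots, C^b$ and inserting $\model_{[a,b]}$ in its place, identifying the boundary clique events of $\model_{[a,b]}$ (the starts of $\minneiset{C^a}$, already lying just left of $C^a$ in $\model_0$, and the ends of $\maxneiset{C^b}$, already lying just right of $C^b$) with their existing copies, and placing the ending events of $\Isec_L^f \setminus \Isec_L^v$ that occurred in the stretch just before the inserted block and the remaining starting events of $\Isec_L^v$ just after it. That this reinsertion respects every adjacency and non‑adjacency of $G + F'$ follows from $\Isec_L^f \cap \Isec_L^v \subseteq \minneiset{C^a}$ (freely drawable $C^a$), the fact that the block components have no $G$‑neighbor in $\Isec_L^f \setminus \Isec_L^v$, and the neighborhood‑size margin. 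Hence $\model'$ is an interval model of $G + F'$ with $F' \subsetneq F$, contradicting the minimality of $F$. Therefore $C^c$ is untouched by $F$, and by Lemma~\ref{lem:fi:structure} it falls into case~\ref{case:fi:left}.
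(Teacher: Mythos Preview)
Your approach is the paper's: redraw a block of $\Dfam$-components via Lemma~\ref{lem:fi:draw} and splice the result into $\model$, obtaining a completion strictly contained in $F$. The contradiction framing versus the paper's direct ``$F'\subseteq F$, hence $F'=F$'' is cosmetic, and your identification of the clean-window step as the crux is exactly right.

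The clean-window argument as sketched, however, has a genuine gap. You ask only for two untouched boundary components $C^a,C^b$ with $b-a\ge\gamma$; the resulting margin $\minnei{C^b}-\maxnei{C^a}>k$ sits \emph{between} the boundaries, not \emph{at} them. To exclude a touched $T\in\Tfam'$ (or a touched $C^d\in\Dfam$ with $d<a$) from the stretch you need $|K_1|>\maxnei{T}+k$, where $K_1=\maxneiset{C^a}=\Isec_\model(\Send{C^a}{\model})$: every vertex of such an intruder would be forced adjacent in $G+F$ to all of $K_1$, and the far hypothesis only gives $\maxnei{T}<\minnei{C^a}\le |K_1|$, leaving room for $k$ fill-in edges to close the gap. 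The paper gets the missing boundary margin by pigeonholing not for a single untouched $C^a$ but for an entire untouched \emph{block} $C^{\overline a},\ldots,C^{\overline a+\gamma-1}$ (this is precisely what the multiplier $2k+2$ in $\eta=\gamma(2k+2)$ buys), setting $p_1=\Send{C^{\overline a+\gamma-1}}{\model}$, and applying Corollary~\ref{cor:fi:index-jump} across the block to get $|K_1|\ge\minnei{C^{\overline a+\gamma-1}}>\maxnei{C^{\overline a}}+k>\maxnei{T}+k$. A symmetric (shorter) block on the right handles the other side. Your single pair $(a,b)$ supplies neither boundary margin, so the window is not actually clean. Two smaller points: you never exclude intruders from $\Cfam_L\setminus\Cfam_v$ (easy---such a component has a $G$-neighbour in $\Isec_L^f\setminus\Isec_L^v$, whose interval must end before the untouched $C^a$ begins---but missing); and the paper separates the positions $p_1,p_2$ (set by the untouched boundary blocks) from the redraw range $[\widehat a,\widehat b]$ (set by $K_1,K_2$), which makes the splice verifiable, whereas collapsing both roles onto one pair $(a,b)$ leaves the boundary matching vague.
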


\begin{figure}[tb]
\centering
\includegraphics[width=\linewidth]{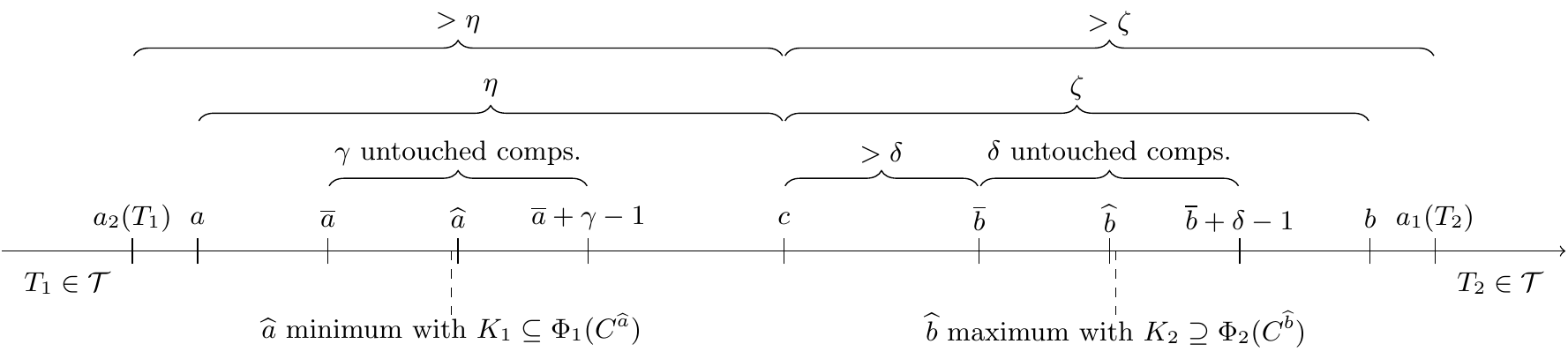}
\caption{The indices defined in the proof of Lemma~\ref{lem:fi:far-untouched}.}
\label{fig:axis}
\end{figure}

\begin{proof}
Let $C^c \in \Dfam$ be far from all components of $\Tfam'$.
Denote $a = c - \eta$ and $b = c + \zeta$.
By the assumptions of the lemma, $1 \leq a < b \leq |\Dfam|$ and, for each $T \in \Tfam'$
we have either $\maxnei{T} < \minnei{C^a}$ or $\minnei{T} > \maxnei{C^b}$.
We refer to Figure~\ref{fig:axis} for indices defined in the course of this proof.

By the Pigeonhole Principle, there exists some $\overline{a}$, $a \leq \overline{a} \leq c-\gamma$, such that
all components $C^{\overline{a}}, C^{\overline{a}+1}, \ldots, C^{\overline{a} + \gamma - 1}$ are untouched by $F$.
Symmetrically, there exists some $\overline{b}$, 
$c + \delta < \overline{b} \leq b-\delta+1$, such that all components
$C^{\overline{b}}, C^{\overline{b}+1},\ldots,C^{\overline{b}+\delta-1}$ are untouched by $F$.
By Corollary~\ref{cor:fi:index-jump}, we have
\begin{align}
k &< \minnei{C^{\overline{a}+\gamma-1}} - \maxnei{C^{\overline{a}}}, \label{eq:fi:final-a}\\
0 &< \minnei{C^{\overline{b}+\delta-1}} - \maxnei{C^{\overline{b}}}, \label{eq:fi:final-b}\\
0 &< \minnei{C^{\overline{b}}} - \maxnei{C^c} \leq \minnei{C^{\overline{b}}} - \maxnei{C^{\overline{a}+\gamma-1}}.\label{eq:fi:final-c}
\end{align}

Recall that an untouched component of $\Dfam$ needs to fall into case~\ref{case:fi:left} of Lemma~\ref{lem:fi:structure}.
Moreover, such components need to lie one after another in the model $\model$, that is,
if $C_1,C_2 \in \Dfam$ are untouched, then $\Send{C_1}{\model} < \Sbeg{C_2}{\model}$
or $\Send{C_2}{\model} < \Sbeg{C_1}{\model}$.
Note that the first case is possible only if $C_1 \unlhd C_2$, and the second one only if $C_2 \unlhd C_1$.

Let $p_1 = \Send{C^{\overline{a}+\gamma-1}}{\model}$ and $p_2 = \Sbeg{C^{\overline{b}}}{\model}$. 
From~\eqref{eq:fi:final-c} we infer that $p_1 < p_2$.
Denote $K_1 = \maxneiset{C^{\overline{a}+\gamma-1}}$ and $K_2 = \minneiset{C^{\overline{b}}}$
and observe that $\Isec_\model(p_1) = K_1 \subseteq \Isec_\model(p_2-1) \subseteq K_2$.

For any $C \in \Cfam_L$, we have either $\Send{C}{\model} \leq p_1$, $\Sbeg{C}{\model} \geq p_2$
or $p_1 < \Sbeg{C}{\model} < \Send{C}{\model} < p_2$. We claim the following.
\begin{claim}\label{cl:fi:final}
Let $C\in \Cfam_L$. If $p_1 < \Sbeg{C}{\model} < \Send{C}{\model} < p_2$, then $C \in \Dfam$ and $C = C^d$ for some $d$ with
$\maxnei{C^{\overline{a}+\gamma-1}} \leq \minnei{C^d} \leq \maxnei{C^d} \leq \minnei{C^{\overline{b}}}$ (in particular $\overline{a} < d < \overline{b} + \delta-1$, by Corollary~\ref{cor:fi:index-jump}).
\end{claim}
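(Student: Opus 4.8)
The plan is to first locate $C$ precisely in the model $\model$, then prove that $C$ --- indeed, every component of $\Cfam_L$ whose span lies strictly inside $(p_1,p_2)$ --- is untouched by $F$, and finally read off all the required properties.

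\textit{Locating $C$.} As $C^{\overline a+\gamma-1}$ and $C^{\overline b}$ are untouched members of $\Dfam$, they lie outside $\Cfam_R$ and cannot fall into case~\ref{case:fi:top} of Lemma~\ref{lem:fi:structure} (that would touch them by a fill edge to $v$), so both fall into case~\ref{case:fi:left}; in particular $p_L^f<\Sbeg{C^{\overline a+\gamma-1}}{\model}$ and $\Send{C^{\overline b}}{\model}<p_L^v$, which with the hypothesis yields $p_L^f<p_1<\Sbeg{C}{\model}<\Send{C}{\model}<p_2<p_L^v$, so $C$ too falls into case~\ref{case:fi:left}. Any $u\in N_G(C)$ has its interval meeting the (strictly interior) interval region of $C$; it cannot be contained strictly inside $(p_1,p_2)$, since then $u$ would meet none of the four distinguished sections while being adjacent to the component $C$. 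Hence $u$ reaches $p_1$ from the left or $p_2-1$ from the right, i.e.\ $u\in\Isec_\model(p_1)\cup\Isec_\model(p_2-1)=K_1\cup\Isec_\model(p_2-1)\subseteq K_2\subseteq\Isec_L^v$. Thus $N_G(C)\subseteq K_2$, so $C\in\Cfam_v$ and therefore $C\in\Cfam_L\cap\Cfam_v$, and already $\maxnei{C}\le|K_2|=\minnei{C^{\overline b}}$. Dually, every $w\in C$ has interval strictly inside $(p_1,p_L^v)$, whereas every $z\in K_1=\Isec_\model(p_1)$ has $\model(\Ibeg z)\le p_1$ and, lying in $\Isec_L^v$, $\model(\Iend z)>p_L^v$; so $z$'s interval strictly contains $w$'s, giving $zw\in E(G+F)$ for all $z\in K_1$, $w\in C$.

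\textit{The hard part: untouchedness.} This is the crux. I would show, by contradiction with minimality of $F$, that any component $C'$ of $\Cfam_L$ with span strictly inside $(p_1,p_2)$ is untouched: if $C'$ received a fill edge, one deletes the events of $C'$ from $\model$ and re-inserts $C'$ without any incident fill edge. The long runs $C^{\overline a},\dots,C^{\overline a+\gamma-1}$ and $C^{\overline b},\dots,C^{\overline b+\delta-1}$ of untouched --- hence freely drawable --- components provide a ``staircase'' with free room: by \eqref{eq:fi:final-a}--\eqref{eq:fi:final-c} the quantities $|K_1|$ and $|K_2|$ are strictly bracketed by the $\minnei/\maxnei$ values along these runs, and since $N_G(C')\subseteq K_1\cup\Isec_\model(p_2-1)\subseteq\Isec_L^v$ while $K_1\subseteq N_{G+F}(w)$ for all $w\in C'$, one can slot $C'$ in next to a run-component of matching neighbourhood using only the $G$-edges of $C'$, producing a model of $G+(F\setminus F')$ with $F'\neq\emptyset$ the fill edges incident to $C'$. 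Turning this re-drawing into a rigorous argument --- in particular checking that $G[C']$ is an interval graph, as it must be if $C'$ carries no fill edge --- is the technical heart, and is where the large constants $\gamma,\delta$ are needed.

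\textit{Harvesting the conclusion.} With $C$ untouched, $zw\in E(G)$ for all $z\in K_1$, $w\in C$, so $K_1\subseteq\minneiset{C}$; combined with $X\subseteq K_1$ (because $C^{\overline a+\gamma-1}\in\Dfam$ satisfies $X\subseteq N_G(w')$ for all $w'\in C^{\overline a+\gamma-1}$, forcing $X\subseteq\maxneiset{C^{\overline a+\gamma-1}}=K_1$) we get $X\subseteq N_G(w)$ for all $w\in C$. The restriction of $\model$ to $C\cup\Isec_L^v$ is then an interval model of $(G+F_L)[C\cup\Isec_L^v]$ that starts with all starting events of $\events{\Isec_L^v\cap\Isec_L^f}$ and ends with all ending events of $\events{\Isec_L^v}$, so $C$ is freely drawable. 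Since $C^{\overline a+\gamma-1}\notin\Cfam_R$, some $z\in\maxneiset{C^{\overline a+\gamma-1}}\setminus(\Isec_R^f\cup\Isec_R^v)\subseteq K_1$ has interval containing all of $C$, and as $C$ is untouched this places $z\in N_G(C)$, so $C\notin\Cfam_R$. Finally $K_1\subseteq\minneiset{C}\subseteq\maxneiset{C}\subseteq K_2$, together with the untouchedness of all components spanning $(p_1,p_2)$ and the fact that the neighbourhoods of components drawn left of $p_1$ (resp.\ right of $p_2$) are pinned inside $K_1$ (resp.\ inside $\Isec_\model(p_2-1)$), shows $C$ is $\unlhd$-comparable with every component of $(\Cfam_L\cap\Cfam_v)\setminus\Cfam_R$; hence $C$ is isolated in the conflict graph $G_\Cfam$ and is not picked by the $2$-approximate vertex cover $\mathcal V$ (taking the greedy variant, which never selects isolated vertices). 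Thus $C\notin\Tfam$, so $C\in\Dfam$, say $C=C^d$; the chain $\maxnei{C^{\overline a+\gamma-1}}=|K_1|\le\minnei{C^d}\le\maxnei{C^d}\le|K_2|=\minnei{C^{\overline b}}$ is exactly what we proved, and $\overline a<d<\overline b+\delta-1$ follows from Corollary~\ref{cor:fi:index-jump}: if $d\le\overline a$ then $\maxnei{C^d}\le\maxnei{C^{\overline a}}\le\maxnei{C^{\overline a+\gamma-1}}-(k+1)=|K_1|-(k+1)<\minnei{C^d}$, a contradiction, and symmetrically $d\ge\overline b+\delta-1$ forces $\minnei{C^d}\ge\minnei{C^{\overline b+\delta-1}}\ge\minnei{C^{\overline b}}+1=|K_2|+1>\maxnei{C^d}$.
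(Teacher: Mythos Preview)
Your approach diverges sharply from the paper's and contains a genuine gap.

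\textbf{The paper's proof.} The paper never argues that $C$ is untouched. It simply observes that for every $w\in C$ one has $K_1\subseteq N_{G+F}(w)\cap\Isec_L^v\subseteq K_2$, so (using only $|F|\le k$) $|N_G(w)\cap K_1|\ge|K_1|-k$ and $N_G(w)\cap\Isec_L^v\subseteq K_2$. It then runs a short case analysis: if $C\notin\Cfam_v$ then $\maxnei{C}\le|X|$ and \eqref{eq:fi:final-a} gives a contradiction; if $C\in\Tfam\setminus\Tfam'$ then Lemma~\ref{lem:fi:wide-T} forbids $C$ from being in case~\ref{case:fi:left}; and if $C\in\Tfam'$ then the \emph{farness} hypothesis on $C^c$ forces $\maxnei{C}<\minnei{C^{\overline a}}$ or $\minnei{C}>\maxnei{C^{\overline b+\delta-1}}$, both incompatible with the $K_1$/$K_2$ bounds via \eqref{eq:fi:final-a}--\eqref{eq:fi:final-b}. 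Only $C\in\Dfam$ remains, and the index bounds drop out. This is a two-paragraph counting argument.

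\textbf{Gap in your approach.} Your route---first prove every $C'\in\Cfam_L$ with span inside $(p_1,p_2)$ is untouched, then deduce $C\in\Dfam$---fails at the step ``$C$ is isolated in $G_\Cfam$, hence $C\notin\mathcal V$.'' Untouchedness of $C$ does \emph{not} imply isolation: Lemma~\ref{lem:fi:incomparable} only says that if $C$ and $C'$ are incomparable then \emph{one} of them is touched, so there may well be a touched $C'\in(\Cfam_L\cap\Cfam_v)\setminus\Cfam_R$ with an edge to $C$ in $G_\Cfam$. Concretely, take $C'$ drawn to the right of $p_2$: your claim that its neighbourhood is ``pinned inside $\Isec_\model(p_2-1)$'' controls $N_{G+F}(w')\cap\Isec_L^v$, not $N_G(w')\cap\Isec_L^v$; a fill edge incident to $w'$ can make $N_G(w')\cap\Isec_L^v$ incomparable with $\maxneiset{C}$. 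Once $C$ has a neighbour in $G_\Cfam$, the $2$-approximation may pick $C$ into $\mathcal V\subseteq\Tfam$, and your chain of implications collapses. Note that the actual reason $C\notin\Tfam$ is precisely the farness of $C^c$ from $\Tfam'$ (the hypothesis of the surrounding Lemma~\ref{lem:fi:far-untouched}), which your argument never invokes.

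\textbf{On the ``hard part.''} Your untouchedness claim happens to be true, but you do not prove it---you defer it as ``the technical heart.'' In fact, its proof in the paper goes \emph{through} Claim~\ref{cl:fi:final} (that is how the re-drawing $\model'$ in the proof of Lemma~\ref{lem:fi:far-untouched} is justified), so using it to prove the claim inverts the logical dependence.
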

\begin{proof}
Observe that if $C$ satisfies $p_1 < \Sbeg{C}{\model} < \Send{C}{\model} < p_2$, then for every $w\in C$ it must hold that $K_1\subseteq N_{G+F}(w)\cap \Isec_L^v\subseteq K_2$. Since $|F|\leq k$, we infer that $|N_G(w)\cap K_1|\geq |K_1|-k$ and $N_G(w)\subseteq K_2$, for each $w\in C$. We now consider a few cases depending on the category $C$ belongs to.

If $C \notin \Cfam_v$ then $\maxnei{C} \leq |X| \leq \minnei{C^1}$ as $N_G(C) \cap \Isec_L^v \subseteq X$ by the definition of $X$.
Hence, by~\eqref{eq:fi:final-a}, $\maxnei{C} + k < |K_1|$, and the edges of $F$ cannot make $C$ adjacent to the entire $K_1$.

If $C \in \Tfam \setminus \Tfam'$, then Lemma~\ref{lem:fi:wide-T} implies that $C$ cannot lie between positions $p_1$ and $p_2$.
If $C \in \Tfam'$ then, by the choice of $C^c$, $\overline{a}$ and $\overline{b}$, we have either $\maxnei{C} < \minnei{C^{\overline{a}}}$ or
$\minnei{C} > \maxnei{C^{\overline{b}+\delta-1}}$.
In the first case, by~\eqref{eq:fi:final-a} we infer that $\maxnei{C} + k < |K_1|$.
In the second case, by~\eqref{eq:fi:final-b} we infer that $\minnei{C} > |K_2|$.
In both cases, the argumentation of the first paragraph shows that $C$ cannot lie between positions $p_1$ and $p_2$.

We are left with the case where $C \in \Dfam$ and $C = C^d$ for some $1 \leq d \leq |\Dfam|$.
By contradiction, assume first that $\minnei{C^d} < \maxnei{C^{\overline{a}+\gamma-1}}$.
If $d \geq \overline{a}$, then $C^d$ is untouched and the vertex $w \in C^d$ that has only $\minnei{C^d} < |K_1|$
neighbors in $\Isec_L^v$ cannot be placed after position $p_1$. Otherwise, 
by~\eqref{eq:fi:final-a} we have $\maxnei{C^d} + k < |K_1|$, and the edges of $F$ are not sufficient
to make $C^d$ fully adjacent to $K_1$.
In the second case, when $\maxnei{C^d} > \minnei{C^{\overline{b}}} = |K_2|$, clearly $C^d$ cannot be placed 
before position $p_2$ as there exists a vertex of $C^d$ that has more than $|K_2|$ neighbors in $\Isec_L^v$.
This finishes the proof of the claim.
\cqed\end{proof}

Define now indices $\widehat{a}$ and $\widehat{b}$ as follows: $\widehat{a}$ is minimum such that
$\minnei{C^{\widehat{a}}} \geq |K_1|$ (equivalently, $K_1 \subseteq \minneiset{C^{\widehat{a}}}$)
and $\widehat{b}$ is maximum such that 
$\maxnei{C^{\widehat{b}}} \leq |K_2|$ (equivalently, $K_2 \supseteq \maxneiset{C^{\widehat{b}}}$).
By the definition of $K_1$ and $K_2$, we have $\overline{a} < \widehat{a} \leq \overline{a}+\gamma$
and $\overline{b}-1 \leq \widehat{b} \leq \overline{b} + \delta-1$.
Denote $F_K = \binom{K_2}{2} \setminus E(G)$; note that $F_K \subseteq F$.
By Lemma~\ref{lem:fi:draw}, it is easy to see that there exists an interval model $\model_0$ of
$$(G+F_K)\left[K_2 \cup \bigcup_{d=\widehat{a}}^{\widehat{b}} C^d\right]$$
that starts with the starting events of $\events{K_1}$ and ends with the ending events of $\events{K_2}$.

Let us create a model $\model'$ from $\model$ by
\begin{enumerate}
\item removing all events of $\bigcup_{d = \widehat{a}}^{\widehat{b}} \events{C^d}$
as well as all starting events of $\events{K_2 \setminus K_1}$;
observe that, by Claim~\ref{cl:fi:final}, we have in particular removed all events that lie in $\model$ between positions $p_1$ and $p_2$, exclusive;
\item inserting all events of $\model_0$, except for the prefix consisting of the starting events of $\events{K_1}$
and the ending events of $\events{K_2}$, in the place between former positions $p_1$ and $p_2$ in $\model$, in the original order.
\end{enumerate}
Since $K_1 = \Isec_\model(p_1)$ and $K_2 = \minneiset{C^{\overline{b}}}$ we infer that $\model'$ is an interval
model of $G+F'$ for some completion $F'$. As $F_K \subseteq F$, we have $F' \subseteq F$.
Moreover, as $\widehat{a} \leq c \leq \widehat{b}$, $C^c$ is untouched by $F'$.
By the inclusion-wise minimality of $F$, $F' = F$ and the lemma is proven.
\end{proof}

We now show that almost all elements of $\Dfam$ in fact belong to $\Dfam_0$.
\begin{lemma}\label{lem:fi:few-close}
$|\Dfam \setminus \Dfam_0| = \Oh(k^4)$.
\end{lemma}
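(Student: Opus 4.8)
The plan is to bound $|\Dfam \setminus \Dfam_0|$ by bounding separately the three sources of membership in $\Dfam \setminus \Dfam_0$: the prefix indices $c \le \eta$, the suffix indices $c > |\Dfam| - \zeta$, and the components $C^c$ that are \emph{close} to some troublesome component $T \in \Tfam'$. The first two contribute at most $\eta + \zeta = \Oh(k^2)$ components directly, since $\eta = \gamma \cdot(2k+2) = \Oh(k^2)$ and $\zeta = \delta\cdot(2k+3) = \Oh(k^2)$ by their definitions. So the whole lemma reduces to bounding the number of components $C^c \in \Dfam$ that are close to some $T \in \Tfam'$.

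First I would fix a single $T \in \Tfam'$ and count the components $C^c \in \Dfam$ close to $T$. By definition, $C^c$ is close to $T$ iff $c - \eta \le \ToutR{T}$ and $\ToutL{T} \le c + \zeta$; equivalently $\ToutL{T} - \zeta \le c \le \ToutR{T} + \eta$. Hence the number of indices $c$ with $C^c$ close to $T$ is at most $(\ToutR{T} + \eta) - (\ToutL{T} - \zeta) + 1 = (\ToutR{T} - \ToutL{T}) + \eta + \zeta + 1$. Now I invoke the remark made right after the definition of $\Tfam'$: for $T \in \Tfam'$ we have $-1 \le \ToutR{T} - \ToutL{T} = \Oh(k^2)$ (this follows from Corollary~\ref{cor:fi:index-jump} applied to the bounds $\TinL{T} - \ToutL{T} = \Oh(k^2)$, $\ToutR{T} - \TinR{T} = \Oh(k)$, and $\TinR{T} - \TinL{T} \le 2k$ which defines $\Tfam'$). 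Combined with $\eta + \zeta = \Oh(k^2)$, we conclude that each single $T \in \Tfam'$ is close to at most $\Oh(k^2)$ components of $\Dfam$.

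Finally, since $\Tfam' \subseteq \Tfam$ and $|\Tfam| = \Oh(k^2)$ (as established at the end of the construction of troublesome components), the total number of components of $\Dfam$ that are close to \emph{some} $T \in \Tfam'$ is at most $|\Tfam'| \cdot \Oh(k^2) = \Oh(k^2) \cdot \Oh(k^2) = \Oh(k^4)$. Adding back the $\Oh(k^2)$ prefix/suffix components gives $|\Dfam \setminus \Dfam_0| = \Oh(k^4)$, as claimed. I do not expect any real obstacle here: the only thing to be careful about is correctly reading off the $\Oh(k^2)$ bound on $\ToutR{T}-\ToutL{T}$ for $T \in \Tfam'$ from Corollary~\ref{cor:fi:index-jump} (which converts differences of $\phi$-values into differences of indices up to a factor $2k+3$, already baked into the definitions of $\gamma$ and $\delta$), and double-checking the constants hidden in $\eta$, $\zeta$, $\gamma$, $\delta$ are all polynomial in $k$ of degree at most $2$, so that the final product lands at degree $4$.
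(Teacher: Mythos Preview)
There is a genuine gap in your argument: you miscompute the order of $\eta$. By definition $\gamma = (2k+3)(k+2)+1 = \Oh(k^2)$ and $\eta = \gamma\cdot(2k+2)$, so $\eta = \Oh(k^3)$, not $\Oh(k^2)$. Consequently the number of indices $c$ close to a fixed $T\in\Tfam'$ is $(\ToutR{T}-\ToutL{T}) + \eta + \zeta + 1 = \Oh(k^3)$, which matches the paper's own remark ``a component $T\in\Tfam'$ is close to $\Oh(k^3)$ components of $\Dfam$''. Multiplying this by $|\Tfam| = \Oh(k^2)$ only yields $\Oh(k^5)$, one factor of $k$ too many. (Your own last sentence flags exactly this check as the one thing to be careful about; indeed it is, and it fails.)

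The paper recovers the $\Oh(k^4)$ bound by splitting $\Tfam$ more carefully, using the fact recorded right after the construction of $\Tfam$ that $|\Tfam\setminus(\Cfam_L\cap\Cfam_R)| = \Oh(k)$ while $|\Cfam_L\cap\Cfam_R| = \Oh(k^2)$. The $\Oh(k)$ troublesome components outside $\Cfam_L\cap\Cfam_R$ contribute $\Oh(k)\cdot\Oh(k^3)=\Oh(k^4)$ close components, which is fine. For the $\Oh(k^2)$ components $T\in\Cfam_L\cap\Cfam_R$ one does not sum the per-$T$ bounds; instead one observes that every such $T$ has $N_G(T)\subseteq P\cup K$ with $|(T\times P)\setminus E(G)|\le k$ and $|K|\le 2k$, forcing $|P|-k\le\minnei{T}\le\maxnei{T}\le|P|+2k$. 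Thus all these $T$ have their $\phi$-values confined to a window of width $\Oh(k)$, and via Corollary~\ref{cor:fi:index-jump} the union of their close-index intervals has length bounded independently of $|\Cfam_L\cap\Cfam_R|$. This collective treatment of $\Cfam_L\cap\Cfam_R$ is the missing idea in your plan.
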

\begin{proof}
Clearly, a component $T \in \Tfam'$ is close to $\Oh(k^3)$ components of $\Dfam$.
Moreover, note that for any $T \in \Cfam_L \cap \Cfam_R$ we have
that $N_G(T) \subseteq P \cup K$, but, as $|(T \times P) \setminus E(G)| \leq k$
and $|K| \leq 2k$ (Lemma~\ref{lem:fi:K-small}),
it implies $|P|-k \leq \minnei{T} \leq \maxnei{T} \leq |P|+2k$.
Consequently, by Corollary~\ref{cor:fi:index-jump} there are $\Oh(k^2)$ components of $\Dfam$
that are close to some $T \in \Cfam_L \cap \Cfam_R$.
As $|\Tfam \setminus (\Cfam_L \cap \Cfam_R)| = \Oh(k)$, the lemma follows.
\end{proof}

\newcommand{\liten}{{\textrm{small}}}

Let $\Cfam_{\liten}$ be the family of those components $C\in \Cfam$ for which $|C|\leq k$. Note that a component $C\in \Cfam$ can fall into case \ref{case:fi:top} only if $C\in \Cfam_v\cap \Cfam_{\liten}$, since each vertex of a component falling into case \ref{case:fi:top} must have a fill-in edge to $v$, and the number of such edges is at most $k$.

Finally, denote
$$B_L = \bigcup \left((\Dfam \setminus \Dfam_0)\cap \Cfam_{\liten}\right) \cup \bigcup \left( \Tfam \cap \Cfam_v\cap \Cfam_{\liten}\right).$$
By Lemma~\ref{lem:fi:few-close} and the definition of $\Cfam_{\liten}$ we have that $\left|\bigcup \left((\Dfam \setminus \Dfam_0)\cap \Cfam_{\liten}\right)\right|=\Oh(k^5)$. Since $|\Tfam|=\Oh(k^2)$, we have $\left|\bigcup \left( \Tfam \cap \Cfam_v\cap \Cfam_{\liten}\right)\right|=\Oh(k^3)$. As a result, we obtain $|B_L| = \Oh(k^5)$. 
Symmetrically, by inspecting $\Cfam_R$ instead of $\Cfam_L$, we obtain a set $B_R$ of size $\Oh(k^5)$.

Define now $B = B^\mathrm{sure} \cup B_L \cup B_R$.
As $\Cfam_v \setminus \Tfam \subseteq \Dfam$, Lemma~\ref{lem:fi:far-untouched} ensures
that $\{w \in V(G): vw \in F\} \subseteq B$. Hence, we insert $B$ into the constructed family $\fiset$
and conclude the proof of Theorem~\ref{thm:fill-in}.

\section{Small-separation lemma}\label{sec:left-right}
In this short section we prove the following structural result.
\begin{theorem}\label{thm:left-right}
Let $(G,k)$ be a YES-instance to \icname{}, let $F$ be a minimum solution to $(G,k)$
and let $\model$ be the canonical model of $G+F$.
Let $p_L < p_R$ be two integers and denote $\Isec_L = \Isec_\model(p_L)$, $\Isec_R = \Isec_\model(p_R-1)$.
Assume $K \subseteq V(G)$ is such that $K \subseteq \Isec_L \setminus \Isec_R$ or $K \subseteq \Isec_R \setminus \Isec_L$.
Then there are at most $3\sqrt{k} + |K|$ connected components $C$ of $G \setminus (\Isec_L \cup \Isec_R)$ satisfying:
\begin{enumerate}
\item $N_G(C) \subseteq K \cup (\Isec_L \cap \Isec_R)$,
\item $p_L < \Sbeg{C}{\model} < \Send{C}{\model} < p_R$, and
\item there exists $\event \in \events{K}$ such that $\Sbeg{C}{\model} < \model(\event) < \Send{C}{\model}$.
\end{enumerate}
\end{theorem}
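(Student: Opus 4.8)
The plan is to fix a minimum solution $F$ and its canonical model $\model$, and to argue entirely about how the intervals of the relevant components sit inside $\model$. By mirroring $\model$ we may assume $K \subseteq \Isec_L \setminus \Isec_R$; the case $K \subseteq \Isec_R \setminus \Isec_L$ is then obtained by reversing $\model$. The first, routine, step records the positional facts: every $u \in K$ satisfies $\model(\Ibeg{u}) \leq p_L$ and $p_L < \model(\Iend{u}) < p_R$ (since $u \in \Isec_L$ but $u \notin \Isec_R$), hence the only events of $\events{K}$ lying strictly between positions $p_L$ and $p_R$ are the ending events $\Iend{u}$, $u \in K$. Consequently condition (3) is equivalent to saying that $C$ \emph{straddles} some $u \in K$, i.e.\ $\Sbeg{C}{\model} < \model(\Iend{u}) < \Send{C}{\model}$. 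I would also record that, since $G[C]$ and hence $(G+F)[C]$ is connected, the intervals of $C$ cover, without gaps, exactly the positions $[\Sbeg{C}{\model},\Send{C}{\model}]$; call this the \emph{span} of $C$.

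The second step isolates the conflict structure. Say two of the components in question \emph{conflict} if their spans overlap. If $C$ and $C'$ conflict then some vertex of $C$ and some vertex of $C'$ have intersecting intervals, hence are adjacent in $G+F$; as $C,C'$ are distinct connected components of $G \setminus (\Isec_L \cup \Isec_R)$ they are non-adjacent in $G$, so this edge lies in $F$, and these witnessing edges are distinct for distinct conflicting pairs (they have endpoints in distinct components). Thus there are at most $|F| \leq k$ conflicting pairs. Moreover, if $t$ of the components all straddle the same $\model(\Iend{u})$ and $C'$ is the one among them with the rightmost starting event, then every span among the other $t-1$ contains $[\Sbeg{C'}{\model},\model(\Iend{u})]$, so the $\prec$-first vertex $y_{C'}$ of $C'$ overlaps a vertex of each of the other $t-1$ components and therefore receives at least $t-1$ edges of $F$; summing over all of them shows that these $t$ components force $\binom{t}{2}$ distinct edges of $F$.

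The third step is the count. Every component $C$ of the family $\mathcal{C}$ in the statement straddles at least one $u\in K$; let $\ell(C)\in K$ be the one whose ending event is leftmost, and partition $\mathcal{C}$ into classes $\mathcal{C}_u = \{C \in \mathcal{C} : \ell(C) = u\}$. There are at most $|K|$ classes, so classes of size $1$ contribute at most $|K|$ to $|\mathcal{C}|$. Within a class $\mathcal{C}_u$ of size $t_u\ge 2$ all components straddle $\model(\Iend{u})$, hence by the previous step force $\binom{t_u}{2}$ edges of $F$; these edge sets are pairwise disjoint over different $u$ (a pair $\{C,C'\}$ determines $u=\ell(C)=\ell(C')$), so $\sum_{u}\binom{t_u}{2}\le |F|\le k$. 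The remaining, and crucial, point is to bound $\sum_{u}(t_u-1)$ by $3\sqrt{k}$. This does \emph{not} follow from $\sum_u\binom{t_u}{2}\le k$ alone (many classes of size two would already give a bound linear in $k$); here one must use that $\model$ is the canonical model of a \emph{minimum} solution: a component straddles $\model(\Iend{u})$ only because it has a vertex forced left of $\Iend{u}$ by a $G$-edge to $u$ and a vertex forced right of it, and the greedy behaviour of the canonical model, together with inapplicability of the Module Reduction Rule, limits the number of values $u$ for which two or more components are simultaneously forced to straddle $\model(\Iend{u})$ enough that a Cauchy--Schwarz estimate applied to $\sum_u\binom{t_u}{2}\le k$ yields $\sum_u(t_u-1)=\Oh(\sqrt{k})$; chasing the constants delivers the stated $3\sqrt{k}+|K|$.

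The main obstacle is exactly this last point: the positional and conflict bookkeeping of the first three steps is elementary, but the naive charging proves only a bound of $|K|+k$, and squeezing it down to $|K|+\Oh(\sqrt{k})$ genuinely requires exploiting minimality of $F$ and canonicity of $\model$ to rule out a proliferation of small conflict classes. I expect that this is where the bulk of the work in the actual proof lies.
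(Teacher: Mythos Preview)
Your first two steps are correct, and indeed the paper itself sketches exactly this argument as a warm-up: partitioning by a straddled endpoint and using $\sum_u \binom{t_u}{2} \leq k$ together with Cauchy--Schwarz yields a bound of order $\sqrt{k|K|} + |K|$, not $\sqrt{k} + |K|$. The paper explicitly calls this the easy bound and notes it would only give an $\Oh(k^{2/3}\log k)$ exponent in the final algorithm. You are right that the crux is elsewhere, but your speculation about how to close the gap is off: neither the Module Reduction Rule nor canonicity of $\model$ is used in the proof, and the mechanism is not a bound on the number of endpoints that are multiply straddled followed by Cauchy--Schwarz.

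The paper abandons the conflict-class count entirely and instead runs a monotonicity argument on the section-size function $f(p) = |\Isec_\model(p)|$. For each $C \in \Cfam$, minimality of $F$ forces $f(p) > f(\Sbeg{C}{\model}-1)$ for every position $p$ with $\Sbeg{C}{\model} \leq p < p_R$: otherwise one could shift the events of $C$ to a position of smaller section size and strictly decrease $|F|$. (This is established via three short lemmas about $f$ and the auxiliary $f_C(p) = |\Isec_\model(p)\setminus C|$.) Hence the values $f(\Sbeg{C}{\model}-1)$ are pairwise distinct over $C \in \Cfam$, and it remains to bound the range of $f$. Here comes the missing idea: split $\Cfam$ by whether the vertex $v$ with $\model(\Iend{v}) = \Send{C}{\model}$ is cheap or expensive. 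At most $2\sqrt{k}$ components end expensively. Among those ending cheaply, pick the one with rightmost $\Send{C}{\model}$; its cheap last vertex gives $f(\Send{C}{\model}) \leq |\Isec_L \cap \Isec_R| + |K| + \sqrt{k}$, while $f \geq |\Isec_L \cap \Isec_R|$ everywhere between $p_L$ and $p_R$, so at most $|K|+\sqrt{k}$ components end cheaply. The cheap/expensive dichotomy is what your proposal is missing.
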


\subsection{A few words on motivation}

Before we proceed to the proof of Theorem~\ref{thm:left-right}, let us now shortly elaborate on
the motivation of this result.

Assume we have two vertices $x$ and $y$, and we know (have guessed) that they are cheap with respect to the minimum solution $F$ we are looking for. Moreover,
in the canonical model $\model$ of $G+F$ we have $\model(\Ibeg{x}) < \model(\Ibeg{y}) < \model(\Iend{y}) < \model(\Iend{x})$.
By Corollary~\ref{cor:cheap-fill-in}, there are only $k^{\Oh(\sqrt{k})} n^{70}$ choices
for each of the set $\incF{x}$, $\incF{y}$, so assume we know them as well.
Similarly, there is only a subexponential number of choices for the sections at the endpoints
of $x$ and $y$. Hence, assume we have guessed them and denote them by $\Isec_L^x$, $\Isec_L^y$,
   $\Isec_R^y$ and $\Isec_R^x$. Note that we may assume that standard inclusions between these sections: $\Isec_L^x\cap \Isec_R^y\subseteq \Isec_L^y$, $\Isec_L^y\cap \Isec_R^x\subseteq \Isec_R^y$, and $\Isec_L^x\cap \Isec_R^x\subseteq \Isec_L^y\cap \Isec_R^y$.

\begin{figure}[tb]
\centering
\includegraphics{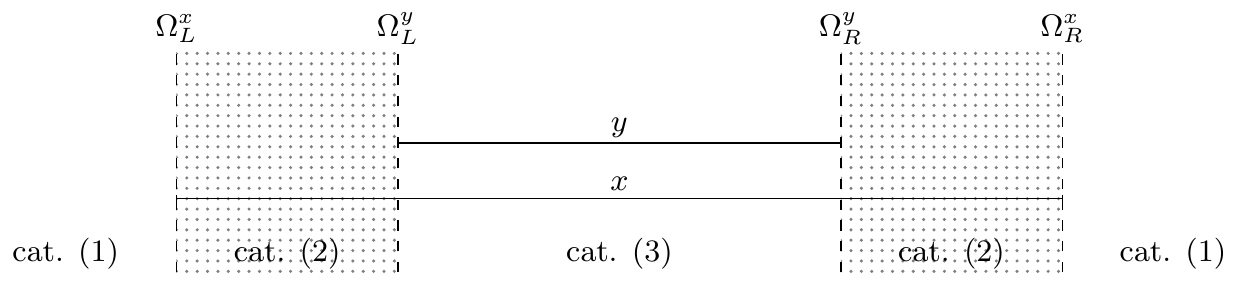}
\caption{Motivation for Theorem~\ref{thm:left-right}: we would like to reason about the alignment
 of the vertices of category (2) in the dotted areas.}
\label{fig:left-right-motiv}
\end{figure}

Consider any vertex $v \in V(G) \setminus (\Isec_L^x \cup \Isec_L^y \cup \Isec_R^y \cup \Isec_R^x)$.
Note that, by inspecting whether $vx \in E(G) \cup \incF{x}$ and whether $vy \in E(G) \cup \incF{y}$,
we may classify $v$ into one of three categories (see also Figure~\ref{fig:left-right-motiv}):
\begin{enumerate}
\item $vx \notin E(G) \cup \incF{x}$ and $vy \notin E(G) \cup \incF{y}$, hence
$\model(\Iend{v}) < \model(\Ibeg{x})$ or $\model(\Ibeg{v}) > \model(\Iend{x})$;
\item $vx \in E(G) \cup \incF{x}$ but $vy \notin E(G) \cup \incF{y}$, hence
$\model(\Ibeg{x}) < \model(\Ibeg{v}) < \model(\Iend{v}) < \model(\Ibeg{y})$ or
$\model(\Iend{y}) < \model(\Ibeg{v}) < \model(\Iend{v}) < \model(\Iend{x})$;
\item $vx \in E(G) \cup \incF{x}$ and $vy \in E(G) \cup \incF{y}$, hence
$\model(\Ibeg{y}) < \model(\Ibeg{v}) < \model(\Iend{v}) < \model(\Iend{y})$.
\end{enumerate}
Moreover, the choice of the category needs to be homogeneous among each connected component of
$G \setminus (\Isec_L^x \cup \Isec_L^y \cup \Isec_R^y \cup \Isec_R^x)$.

We will be interested mostly in the second category, and we would like to guess which components
$C$ of this category lie, in the model $\model$, to the left of the vertex $y$,
and which lie to the right of it.
Note that we may deduce this choice from the neighborhood of a component $C$
unless $N_G(C) \subseteq \Isec_L^y \cap \Isec_R^y$.

Theorem~\ref{thm:left-right} helps us if $K := (\Isec_L^y \cap \Isec_R^y) \setminus (\Isec_L^x \cap \Isec_R^x)$ is small, in particular, if it contains only expensive vertices
and thus its cardinality is bounded by $2\sqrt{k}$.
First, Lemma~\ref{lem:A-r-nei}, applied to $r = |K|$ and $A = \Isec_L^y \cap \Isec_R^y$
ensures that there are only $\mathrm{poly}(k)$ candidate components $C$.
Second, Theorem~\ref{thm:left-right} ensures that there are only $\Oh(\sqrt{k})$ such components $C$
that contain an event of $\events{K}$ between $\Sbeg{C}{\model}$ and $\Send{C}{\model}$;
we may guess them and guess on which side of $y$ they lie in the model $\model$.
Finally, we observe that the remaining components have been turned into modules in $G+F$ and,
as we shall show formally later, we may arrange them in a greedy manner.

\subsection{Proof}

By symmetry, let us assume that $K \subseteq \Isec_R \setminus \Isec_L$.
In particular, all starting events and no ending event of $\events{K}$ lie between
$p_L$ and $p_R$.
We say that a component $C$ \emph{occupies} the event $\event \in \events{K}$
if $\Sbeg{C}{\model} < \model(\event) < \Send{C}{\model}$.
Let $\Cfam$ be the family of component of $G \setminus (\Isec_L \cup \Isec_R)$ that satisfy
all conditions of Theorem~\ref{thm:left-right}, that is, we are to bound $|\Cfam|$

First, note that a much weaker bound $2k + |K|$ for Theorem~\ref{thm:left-right} is straightforward:
there are at most $2k$ components $C$ touched by $F$, and no two untouched components
may occupy the same event of $\events{K}$.
However, such a bound is useless from the point of view of the aforementioned motivation.

Second, we remark that it is quite easy to obtain a bound of order $\Oh(\sqrt{k|K|}+ |K|)$.
For each $C \in \Cfam$ pick one endpoint $\event_C\in \events{K}$ occupied by $C$.
For a starting event $\event$, denote
$n_\event = |\{C \in \Cfam: \event = \event_C\}|$. We are to bound
$|\Cfam| = \sum_\event n_\event$, where the number of non-zero values $n_\event$ is bounded
by $|K|$. Observe that $\sum_\event \binom{n_\event}{2} \leq |F| \leq k$,
as there exists at least one edge of $F$ between each pair of components that occupy the same
endpoint. The promised bound follows from the Cauchy-Schwarz inequality.

An $\Oh(\sqrt{k|K|} + |K|)$ bound is sufficient to establish a subexponential algorithm for \icname{},
but the final dependency on $k$ in the exponent would be $\Oh(k^{2/3} \log k)$.
Hence, we employ a more careful analysis of the components of $\Cfam$ to obtain the bound
promised in Theorem~\ref{thm:left-right}, and, consequently, reduce the dependency on $k$ to
exponential in $\Oh(\sqrt{k} \log k)$.

For any position $p_L \leq p < p_R$ and any component $C \in \Cfam$ we define
\begin{align*}
f(p) &= |\Isec_\model(p)|, & f_C(p) &= |\Isec_\model(p) \setminus C|.
\end{align*}
Recall that for each $C \in \Cfam$ we have $p_L < \Sbeg{C}{\model} < \Send{C}{\model} < p_R$ and $N_G(C) \subseteq K \cup (\Isec_L \cap \Isec_R) \subseteq \Isec_R$.
We refer to Figure~\ref{fig:left-right} for an overview of the notation used in this proof.

Informally speaking, the aforementioned inclusion allows us to compare the model $\model$ with its modification $\model'$, where some prefix of events of $\events{C}$ are shifted a bit to the right,
that is, $N_G(C) \subseteq \Isec_R$ ensures that $\model'$ still represents $G+F'$ for some completion $F'$. 
If $f_C$ for some $C \in \Cfam$ has a small value at some local minimum at $p \geq \Sbeg{C}{\model}$, we may shift all events of $\events{C}$ that lie before $p$ to this local minimum,
obtaining a smaller completion $F'$. We infer that $f$ is in some sense increasing, and we need to ``pay'' at least one in the value of $f$
for each component $C \in \Cfam$. Theorem~\ref{thm:left-right} will follow from an observation that the value of $f$ cannot change by much more than $|K|$.

We proceed to a formal argumentation. In the next three lemmas we establish the fact that $f$ is in some sense increasing.
\begin{lemma}\label{lem:lr:no-hole-inside}
For each $C \in \Cfam$ and each $\Sbeg{C}{\model} \leq p < \Send{C}{\model}$,
we have $f_C(p) \geq f(\Sbeg{C}{\model}-1)$.
\end{lemma}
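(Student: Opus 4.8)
The plan is to assume the statement fails and contradict the minimality of $F$. So suppose $f_C(p) < f(\Sbeg{C}{\model}-1)$ for some $\Sbeg{C}{\model} \le p < \Send{C}{\model}$; we may clearly assume that $p$ minimises $f_C$ on $[\Sbeg{C}{\model},\Send{C}{\model})$. First I would record the boundary behaviour: the event at position $\Sbeg{C}{\model}$ must be a starting event $\Ibeg{v_0}$ of some $v_0 \in C$, so $\Isec_\model(\Sbeg{C}{\model}) = \Isec_\model(\Sbeg{C}{\model}-1) \cup \{v_0\}$ with $v_0 \in C$ while $\Isec_\model(\Sbeg{C}{\model}-1) \cap C = \emptyset$; hence $f_C(\Sbeg{C}{\model}) = |\Isec_\model(\Sbeg{C}{\model}-1)| = f(\Sbeg{C}{\model}-1)$. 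In particular $p > \Sbeg{C}{\model}$, and tracking how $f_C$ changes event by event, strictly more intervals of vertices outside $C$ end than begin on the stretch $(\Sbeg{C}{\model},p]$.

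Next I would build a completion $F'$ of $G$ by pushing the part of $C$ lying left of $p$ rightward past position $p$. Formally, let $S$ be the set of events of $\events{C}$ at positions $\le p$ in $\model$ — these comprise both endpoints of each vertex of $C$ whose interval is entirely left of $p$, together with the starting endpoints of the vertices of $\Isec_\model(p) \cap C$ — and let $\model'$ be obtained by removing the events of $S$ and reinserting them, in the same relative order, directly after position $p$. Every vertex of $\Isec_\model(p) \cap C$ has its ending endpoint beyond $p$, so $\model'$ keeps each interval starting before it ends and is a legitimate interval model, defining a completion $F'$. The key structural point is the hypothesis $N_G(C) \subseteq K \cup (\Isec_L \cap \Isec_R) \subseteq \Isec_R = \Isec_\model(p_R-1)$: every neighbour of $C$ ends at a position $\ge p_R > \Send{C}{\model} > p$ (and the ones in $\Isec_L \cap \Isec_R$ also start at a position $\le p_L < \Sbeg{C}{\model}$), so sliding the left part of $C$ only as far right as position $p$ keeps every edge of $G$ incident to $C$; moreover every adjacency internal to $C$, and every adjacency between two vertices outside $C$, is preserved verbatim, since the relevant events keep their relative order. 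Thus $E(G) \subseteq E(G+F')$, i.e. $F'$ is again a solution to $(G,k)$.

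It remains to see that $|F'| < |F|$, which I would obtain from the counting identity $|E(G+\model)| = \sum_{v \in V(G)}\bigl(|\Isec_\model(\model(\Ibeg{v}))| - 1\bigr)$, which charges each edge to the later of its two starting events. The only terms that change when passing from $\model$ to $\model'$ are those of vertices whose starting event lies in $S$ and of vertices outside $C$ whose open set is affected by removing $S$; a direct computation shows that each vertex of $\Isec_\model(p) \cap C$ now starts over an open set of size essentially $f_C(p)$, strictly smaller (by the first paragraph) than its value in $\model$, while the remaining changes cancel or are non-positive, so $|E(G+\model')| < |E(G+\model)|$ and hence $|F'| < |F|$, contradicting the minimality of $F$. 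I expect the main obstacle to be exactly this last bookkeeping: the push can nest a short interval of $C$ inside a non-$C$ interval that straddles $p$, creating a new edge, so one must either stop the moved block just short of such straddling intervals, or observe that each such new edge is compensated by an edge of $F$ the push destroys — the latter being available because a non-$C$ interval lying entirely left of $p$ is outside $\Isec_R$ and hence non-adjacent to $C$ in $G$, so every $C$-incident adjacency the push breaks was an edge of $F$, never of $G$.
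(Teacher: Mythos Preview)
Your construction of $\model'$ and the reason it is a model of a completion of $G$ are exactly what the paper does. The only part left incomplete is the final edge count, and the obstacle you name --- a fully-moved interval of $C$ may acquire a new neighbour outside $C$ --- is real. However, your proposed global counting identity and the compensation heuristic are unnecessarily indirect; the clean fix is a per-vertex comparison, which is what the paper does.

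Take $p$ to be the \emph{smallest} position with $f_C(p) < f(\Sbeg{C}{\model}-1)$ (rather than a global minimiser). Since $f_C(\Sbeg{C}{\model}-1)=f(\Sbeg{C}{\model}-1)$, the event at position $p$ must be $\Iend{x}$ for some $x\notin C$, and $f_C(q)\ge f(\Sbeg{C}{\model}-1)>f_C(p)$ for every $\Sbeg{C}{\model}-1\le q<p$. Now note that $F\triangle F'$ consists only of edges between $C$ and $V(G)\setminus C$, so it suffices to compare $|N_{G+F'}(v)\setminus C|$ with $|N_{G+F}(v)\setminus C|$ for each $v\in C$. If $\model(\Ibeg{v})>p$ nothing changes. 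If $\model(\Ibeg{v})\le p<\model(\Iend{v})$ only $\Ibeg{v}$ moves, and it moves right, so the external neighbourhood can only shrink; moreover $vx\in E(G+F)\setminus E(G+F')$, so it shrinks strictly, and by connectivity at least one such $v$ exists. Finally, if $\model(\Iend{v})\le p$ then in $\model'$ the entire interval of $v$ sits in the inserted block, so $N_{G+F'}(v)\setminus C=\Isec_\model(p)\setminus C$ has size exactly $f_C(p)$, whereas $N_{G+F}(v)\setminus C\supseteq \Isec_\model(\model(\Ibeg{v}))\setminus C$ has size at least $f_C(\model(\Ibeg{v}))>f_C(p)$. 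This resolves your worry directly: yes, new edges may appear for such $v$, but the \emph{total} external degree of $v$ drops from at least $f_C(\model(\Ibeg{v}))$ to exactly $f_C(p)$. Summing gives $|F'|<|F|$.

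So your proof is on the right track and uses the same construction; the missing piece is just this vertex-by-vertex accounting, together with the choice of $p$ as the leftmost violating position to make the inequalities fall out cleanly.
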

\begin{proof}
Assume the contrary, and let $p$ be the smallest position such that $\Sbeg{C}{\model} \leq p < \Send{C}{\model}$
and $f_C(p) < f(\Sbeg{C}{\model}-1)$. Note that $f(\Sbeg{C}{\model}-1) = f_C(\Sbeg{C}{\model}-1)$.

Consider a model $\model'$ constructed from $\model$ as follows: all events of $\events{C}$ that lie before or on the position $p$ in the model $\model$
are moved (without changing their internal order) to the place just after position $p$.
As $N_G(C) \subseteq K \cup (\Isec_L \cap \Isec_R) \subseteq \Isec_R$, this is an interval model of $G+F'$ for some completion $F'$ of $G$.
We claim that $|F'| < |F|$.

Note that any $e \in F \triangle F'$ connects $C$ with $V(G) \setminus C$ ($\triangle$ denotes the symmetric difference).
Thus, it suffices to show that for each $v \in C$ we have $|\{w: vw \in F'\} \setminus C| \leq |\{w: vw \in F\} \setminus C|$, or equivalently $|\incsol{F'}{v}| \leq |\incsol{F}{v}|$, and that for at least one vertex of $C$ the inequality is sharp.

Consider any $v \in C$. If $\model(\Ibeg{v}) > p$ we have $\incsol{F'}{v} = \incsol{F}{v}$, so there is nothing to show.
If $\model(\Ibeg{v}) \leq p < \model(\Iend{v})$ then, while constructing $\model'$, we did not move $\Iend{v}$ while we moved $\Ibeg{v}$ to the right,
thus $\incsol{F'}{v} \subseteq \incsol{F}{v}$.
Moreover, as $p$ is the leftmost position with $f_C(p) < f(\Sbeg{C}{\model}-1)$, there exists $x \in V(G) \setminus C$ such that $\model(\Iend{x}) = p$.
We have $vx \in F \setminus F'$ and, consequently, $\incsol{F'}{v} \subsetneq \incsol{F}{v}$. Note that there is at least one vertex
that falls into the currently considered case by the connectivity of $C$.

We are left with the case $\model(\Iend{v}) \leq p$. However, now
$$(N_G(v) \setminus C) \uplus (\{w: vw \in F'\} \setminus C) = \Isec_\model(p) \setminus C,$$
whereas 
$$(N_G(v) \setminus C) \uplus (\{w: vw \in F \} \setminus C) \supseteq \Isec_\model(\Ibeg{v}) \setminus C;$$
here, $\uplus$ denotes a disjoint union of sets. The lemma follows from the definition of the position $p$:
$$|\Isec_\model(p) \setminus C| = f_C(p) < f_C(\model(\Ibeg{v})) = |\Isec_\model(\Ibeg{v}) \setminus C|.$$
\end{proof}
\begin{lemma}\label{lem:lr:step-inside}
For every $C \in \Cfam$ there exists an index $q$, $\Sbeg{C}{\model} \leq q < \Send{C}{\model}$,
such that $f_C(q) > f(\Sbeg{C}{\model}-1)$.
\end{lemma}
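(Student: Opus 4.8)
The plan is to exhibit one explicit position $q$ inside the span of $C$ at which $f_C$ strictly exceeds the baseline value $f(\Sbeg{C}{\model}-1)$. Lemma~\ref{lem:lr:no-hole-inside} already guarantees that $f_C$ never drops below this baseline on the interval of positions $[\Sbeg{C}{\model},\Send{C}{\model})$, so the only thing missing is a single strict increase. The natural place to look is where an interval of $K$ ``opens on top of'' $C$: by the third condition in the definition of $\Cfam$ there is an event $\event\in\events{K}$ with $\Sbeg{C}{\model}<\model(\event)<\Send{C}{\model}$, and since we have assumed $K\subseteq\Isec_R\setminus\Isec_L$, no ending event of $\events{K}$ lies between $p_L$ and $p_R$ (as recalled above), so $\event$ must be a starting event, say $\event=\Ibeg{w}$ with $w\in K$.

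Now I would set $q=\model(\Ibeg{w})$ and verify three quick facts. First, $q$ is a legal index: $\Sbeg{C}{\model}<q<\Send{C}{\model}$, and because all positions are integers this also gives $\Sbeg{C}{\model}\le q-1<\Send{C}{\model}$, so Lemma~\ref{lem:lr:no-hole-inside} applies at position $q-1$ and yields $f_C(q-1)\ge f(\Sbeg{C}{\model}-1)$. Second, since $\model^{-1}(q)=\Ibeg{w}$, passing from position $q-1$ to position $q$ only inserts $w$ into the section, i.e.\ $\Isec_\model(q)=\Isec_\model(q-1)\uplus\{w\}$. Third, $w\in K\subseteq\Isec_R$ while $C$ is a connected component of $G\setminus(\Isec_L\cup\Isec_R)$, so $w\notin C$; hence $f_C(q)=f_C(q-1)+1\ge f(\Sbeg{C}{\model}-1)+1>f(\Sbeg{C}{\model}-1)$, and $q$ is the required index.

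I do not expect any real obstacle here: the statement is essentially a bookkeeping upgrade of Lemma~\ref{lem:lr:no-hole-inside}, and the only two points needing a little care are (a) arguing that the $K$-event occupied by $C$ is a starting event, which is exactly what the assumption $K\subseteq\Isec_R\setminus\Isec_L$ buys us, and (b) the off-by-one check that $q-1$ still lies in the range where Lemma~\ref{lem:lr:no-hole-inside} can be invoked. Conceptually this lemma records the intuition that every component of $\Cfam$ forces the section size $f$ to rise by at least one somewhere inside its span; the counting argument that converts these individual increases into the global bound of Theorem~\ref{thm:left-right} will be carried out in the lemmas that follow.
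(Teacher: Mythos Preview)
Your proof is correct and follows essentially the same approach as the paper: both use the occupied event $\event\in\events{K}$, note it must be a starting event because $K\subseteq\Isec_R\setminus\Isec_L$, and combine the resulting jump in $f_C$ with the baseline bound of Lemma~\ref{lem:lr:no-hole-inside}. Your version is slightly more explicit about the off-by-one check and the conclusion $f_C(q)=f_C(q-1)+1$, whereas the paper simply observes $f_C(\model(\event))\neq f_C(\model(\event)-1)$ and invokes Lemma~\ref{lem:lr:no-hole-inside} to rule out constancy; the content is the same.
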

\begin{proof}
By Lemma~\ref{lem:lr:no-hole-inside} it suffices to prove that $f_C$ is not constantly to equal $f(\Sbeg{C}{\model}-1)=f_C(\Sbeg{C}{\model}-1)$ for arguments
between $\Sbeg{C}{\model}$ (inclusive) and $\Send{C}{\model}$ (exclusive). However, by the definition of $\Cfam$, there exists a starting endpoint
$\event \in \events{K}$ occupied by $C$. For such $\event$ we have $f_C(\model(\event)) \neq f_C(\model(\event)-1)$ and the lemma follows.
\end{proof}

\begin{figure}[tb]
\centering
\includegraphics{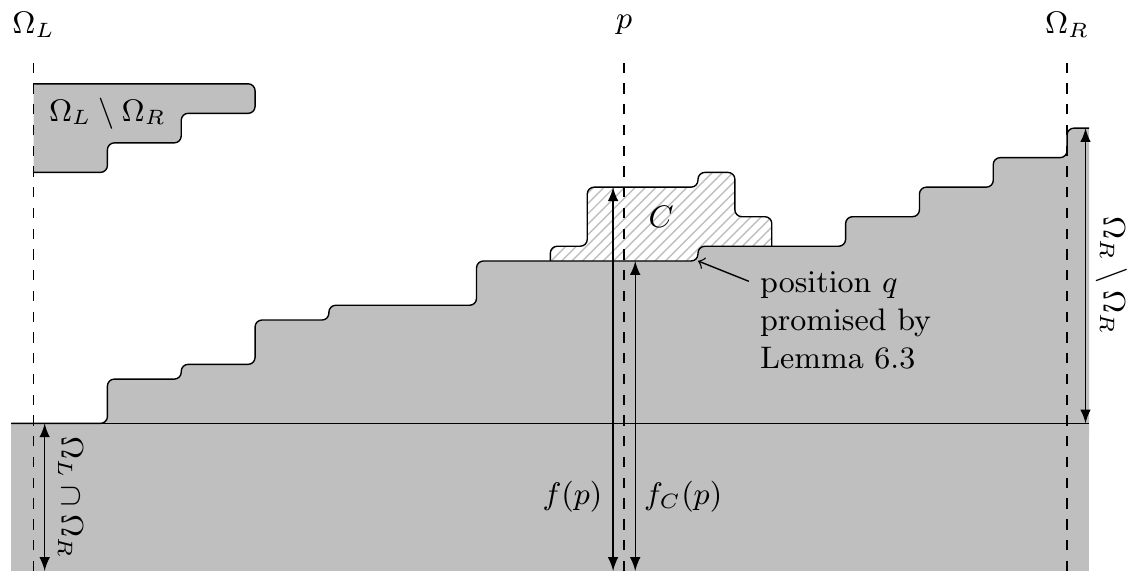}
\caption{Notation in proof of Theorem~\ref{thm:left-right}.}
\label{fig:left-right}
\end{figure}

\begin{lemma}\label{lem:lr:no-hole-after}
For every $C \in \Cfam$ and every position $p$ such that $\Send{C}{\model} \leq p < p_R$, we have $f(p) > f(\Sbeg{C}{\model}-1)$.
\end{lemma}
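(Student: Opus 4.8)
The plan is to rerun the model-surgery argument behind Lemma~\ref{lem:lr:no-hole-inside}, but now shifting the \emph{whole} block of events of $\events{C}$ past position $p$, and to extract the decisive extra unit of ``savings'' from Lemma~\ref{lem:lr:step-inside}. Write $m=f(\Sbeg{C}{\model}-1)$ and suppose, for contradiction, that $f(p)\le m$ for some $p$ with $\Send{C}{\model}\le p<p_R$. Two easy facts: since $(G+F)[C]$ is connected, the intervals of $C$ in $\model$ cover every position from $\Sbeg{C}{\model}$ to $\Send{C}{\model}$, so in particular no vertex of $C$ spans position $p$ and $\Isec_\model(p)\cap C=\emptyset$; and every $w\in N_G(C)$ overlaps an interval of $C$, hence $\model(\Ibeg{w})<\Send{C}{\model}\le p$, while $w\in N_G(C)\subseteq K\cup(\Isec_L\cap\Isec_R)\subseteq\Isec_R=\Isec_\model(p_R-1)$ forces $\model(\Iend{w})\ge p_R>p$; thus $N_G(C)\subseteq\Isec_\model(p)$.

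Now build $\model'$ from $\model$ by removing all events of $\events{C}$ and reinserting them, in their original relative order, as one consecutive block placed immediately after position $p$ (at the boundary $p=\Send{C}{\model}$ this means immediately after the last non-$C$ event preceding $\Send{C}{\model}$; a direct check shows the section just before the reinserted block is then again $\Isec_\model(p)$). By the two facts above, $\model'$ is an interval model of $G+F'$ for some completion $F'$ of $G$, and in $\model'$ every $v\in C$ is adjacent outside $C$ to exactly the vertices of $\Isec_\model(p)$. As $F$ and $F'$ agree on all edges with both endpoints in $C$ and on all edges avoiding $C$, we compare the $C$-crossing edges vertex by vertex: for each $v\in C$ we get $|\{w\notin C:vw\in F'\}|=f(p)-|N_G(v)\setminus C|$, whereas in $\model$ the interval of $v$ overlaps all of $\Isec_\model(\Ibeg{v})\setminus C$, so $|\{w\notin C:vw\in F\}|\ge f_C(\Ibeg{v})-|N_G(v)\setminus C|\ge m-|N_G(v)\setminus C|$ by Lemma~\ref{lem:lr:no-hole-inside}. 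Summing over $v\in C$ alone only yields $|F|-|F'|\ge|C|\cdot(m-f(p))\ge0$. To gain the last unit, apply Lemma~\ref{lem:lr:step-inside}: it provides $q$ with $\Sbeg{C}{\model}\le q<\Send{C}{\model}$ and $f_C(q)\ge m+1$; since the intervals of $C$ cover $q$, some $v_q\in C$ has $\Isec_\model(q)\setminus C$ inside its $\model$-neighbourhood, improving its term to $|\{w\notin C:v_qw\in F\}|\ge(m+1)-|N_G(v_q)\setminus C|$. Hence $|F|-|F'|\ge(|C|-1)(m-f(p))+\bigl((m+1)-f(p)\bigr)\ge1$ whenever $f(p)\le m$, contradicting the minimality of $F$. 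Therefore $f(p)\ge m+1>f(\Sbeg{C}{\model}-1)$.

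The step I expect to require genuine care is making the block-move precise at the boundary $p=\Send{C}{\model}$ and confirming that $\model'$ still represents a completion of $G$ --- this is exactly where the inclusion $N_G(C)\subseteq\Isec_\model(p)$ is needed; the rest is the same bookkeeping as in Lemmata~\ref{lem:lr:no-hole-inside} and~\ref{lem:lr:step-inside}. Conceptually, the one nontrivial ingredient is the appeal to Lemma~\ref{lem:lr:step-inside}, which is what turns a ``$f$ does not drop'' statement into the strict increase past $\Send{C}{\model}$ that the event-counting argument at the end of the proof of Theorem~\ref{thm:left-right} ultimately relies on.
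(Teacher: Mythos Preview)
Your proof is correct and follows essentially the same approach as the paper: assume $f(p)\le f(\Sbeg{C}{\model}-1)$, move the entire block $\events{C}$ just past position $p$, compare the cross-edges of $F$ and $F'$ vertex-by-vertex using Lemma~\ref{lem:lr:no-hole-inside} for the weak inequality and Lemma~\ref{lem:lr:step-inside} (plus connectivity of $C$) for the strict one. The paper justifies that $\model'$ is a model of a completion simply via $N_G(C)\subseteq\Isec_R$, whereas you spell out $N_G(C)\subseteq\Isec_\model(p)$ explicitly; your worry about the boundary case $p=\Send{C}{\model}$ is harmless but unnecessary, since all events of $\events{C}$ already lie at positions $\le p$ and the block-move is well-defined regardless.
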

\begin{proof}
By contradiction, assume there exists such position $p$ with $\Send{C}{\model} \leq p < p_R$ and $f(p) \leq f(\Sbeg{C}{\model}-1)$.
Consider a model $\model'$ constructed from $\model$ by taking all events of $\events{C}$ and putting them
(without changing their internal order) between former positions $p$ and $p+1$.
As $N_G(C) \subseteq K \cup (\Isec_L \cap \Isec_R) \subseteq \Isec_R$, this is an interval model of $G+F'$ for some completion $F'$ of $G$.
Again, we claim that $|F'| < |F|$.

Note that any $e \in F \triangle F'$ connects $C$ with $V(G) \setminus C$.
Thus, it suffices to show that for any $v \in C$ we have $|\{w: vw \in F'\} \setminus C| \leq |\{w: vw \in F\} \setminus C|$ and for at least one vertex of $C$ the inequality is sharp.

Consider any $v \in C$. We have
$$(N_G(v) \setminus C) \uplus (\{w: vw \in F'\} \setminus C) = \Isec_\model(p),$$
whereas for any position $q$ such that $\model(\Ibeg{v}) \leq q < \model(\Iend{v})$ we have
$$(N_G(v) \setminus C) \uplus (\{w: vw \in F \} \setminus C) \supseteq \Isec_\model(q) \setminus C.$$
By the definition of the position $p$ and Lemma~\ref{lem:lr:no-hole-inside} we have
$$|\Isec_\model(p)| = f(p) \leq f(\Sbeg{C}{\model}-1) \leq f_C(q) = |\Isec_\model(q) \setminus C|.$$
Hence $|\incsol{F'}{v}| \leq |\incsol{F}{v}|$.

Consider now a position $q$ given by Lemma~\ref{lem:lr:step-inside}. By the connectivity of $C$, there exists $v \in C$ such that
$\model(\Ibeg{v}) \leq q < \model(\Iend{v})$. For this position we have $f(\Sbeg{C}{\model}-1) < f_C(q)$ and thus $|\incsol{F'}{v}| < |\incsol{F}{v}|$.
\end{proof}
Concluding, we obtain the following corollary.
\begin{corollary}\label{cor:lr:higher}
For any $C \in \Cfam$ and any position $\Sbeg{C}{\model} \leq p < p_R$ we have
$f(p) > f(\Sbeg{C}{\model}-1)$.
\end{corollary}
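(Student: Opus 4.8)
The plan is to derive Corollary~\ref{cor:lr:higher} directly from Lemmata~\ref{lem:lr:no-hole-inside} and~\ref{lem:lr:no-hole-after}, splitting on whether the position $p$ still lies inside the span of the component $C$ in the model $\model$ or already to the right of it. Fix $C \in \Cfam$ and a position $p$ with $\Sbeg{C}{\model} \leq p < p_R$; recall that by the definition of $\Cfam$ we have $p_L < \Sbeg{C}{\model} < \Send{C}{\model} < p_R$, so the two ranges $[\Sbeg{C}{\model}, \Send{C}{\model})$ and $[\Send{C}{\model}, p_R)$ jointly cover all admissible $p$.

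In the case $\Send{C}{\model} \leq p < p_R$ there is nothing to do: Lemma~\ref{lem:lr:no-hole-after} states exactly that $f(p) > f(\Sbeg{C}{\model}-1)$.

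The remaining case is $\Sbeg{C}{\model} \leq p < \Send{C}{\model}$. Here Lemma~\ref{lem:lr:no-hole-inside} only provides the non-strict inequality $f_C(p) \geq f(\Sbeg{C}{\model}-1)$, so I would recover the missing unit from the contribution of $C$ itself. The key observation is that since $G[C]$ is connected, in the model $\model$ the union of the intervals associated with the vertices of $C$ forms a single interval of the line; consequently, for every position $p$ with $\Sbeg{C}{\model} \leq p < \Send{C}{\model}$ there is some $v \in C$ with $\model(\Ibeg{v}) \leq p < \model(\Iend{v})$, i.e.\ $\Isec_\model(p) \cap C \neq \emptyset$. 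Then
\[
f(p) = |\Isec_\model(p)| = f_C(p) + |\Isec_\model(p) \cap C| \geq f(\Sbeg{C}{\model}-1) + 1 > f(\Sbeg{C}{\model}-1),
\]
as required.

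There is essentially no obstacle in this argument; the only point that needs a sentence of justification is the connectivity observation that $\Isec_\model(p) \cap C$ stays non-empty throughout the span $[\Sbeg{C}{\model}, \Send{C}{\model})$, which is the standard fact that a connected interval graph has a connected union of intervals. Note also that Lemma~\ref{lem:lr:step-inside} is not needed here directly—it has already been consumed inside the proof of Lemma~\ref{lem:lr:no-hole-after}.
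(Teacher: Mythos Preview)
Your proof is correct and follows exactly the paper's approach: split on whether $p < \Send{C}{\model}$ or $p \geq \Send{C}{\model}$, apply Lemma~\ref{lem:lr:no-hole-after} in the second case, and in the first case combine Lemma~\ref{lem:lr:no-hole-inside} with the observation that $\Isec_\model(p)\cap C \neq \emptyset$ (equivalently $f_C(p) < f(p)$) throughout the span of $C$. The paper states this last inequality without justification, whereas you correctly supply the connectivity argument; your remark about Lemma~\ref{lem:lr:step-inside} being used only indirectly is also accurate.
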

\begin{proof}
For $p < \Send{C}{\model}$ the claim follows from Lemma~\ref{lem:lr:no-hole-inside}
as $f_C(p) < f(p)$ for every $p$ with $\Sbeg{C}{\model} \leq p < \Send{C}{\model}$.
In the remaining case of $p \geq \Send{C}{\model}$, the claim follows directly from Lemma~\ref{lem:lr:no-hole-after}.
\end{proof}

We now conclude the proof of Theorem~\ref{thm:left-right} by showing that the value of $f$ cannot change too much.
A component $C \in \Cfam$ is \emph{ending expensively} if the vertex $v \in C$ with $\model(\Iend{v}) = \Send{C}{\model}$
(i.e., $\Iend{v}$ is the last event of $\events{C}$ in the model $\model$) is an expensive vertex w.r.t. $F$, and \emph{ending cheaply} otherwise.
Note that there are at most $2\sqrt{k}$ components that end expensively.
Consider a component $C \in \Cfam$ with maximum $\Send{C}{\model}$ among components that end cheaply
(if there are none, the bound of Theorem~\ref{thm:left-right} holds trivially). Let $v \in C$ satisfy $\model(\Iend{v}) = \Send{C}{\model}$.
Note that 
$$f(\Send{C}{\model}) \leq |N_G(v) \cup \incF{v}| \leq |\Isec_L \cap \Isec_R| + |K| + \sqrt{k},$$
as $v$ is cheap.
On the other hand, for any $p_L \leq p < p_R$ we have $\Isec_L \cap \Isec_R \subseteq \Isec_\model(p)$, thus
$$f(p) \geq |\Isec_L \cap \Isec_R|.$$
By Corollary~\ref{cor:lr:higher}, there are at most
$$f(\Send{C}{\model}) - \min_{p_L \leq p < p_R} f(p) \leq |K| + \sqrt{k}$$
components of $\Cfam$ that end cheaply.
Together with at most $2\sqrt{k}$ components ending expensively, we obtain the bound of Theorem~\ref{thm:left-right}.

We remark here that one can obtain a slightly better $2\sqrt{2k}+|K|$ bound by redefining a cheap vertex to be one with at most $\sqrt{2k}$ incident edges
from the solution. However, we prefer to stick with the thresholds defined in the preliminaries for the sake of clarity of the presentation.

\section{Dynamic programming}\label{sec:dp}
In this final section we describe a dynamic programming algorithm
to solve \icname{} in $\Ohstar(k^{\Oh(\sqrt{k})})$ time.
To this end, fix an \icname{} instance $(G,k)$ and,
without loss of generality, assume that the Module Reduction Rule is not applicable to $(G,k)$.

A straightforward approach, basing on the subexponential algorithm for the \textsc{Chordal Completion} problem,
would be to enumerate all possible sections via Theorem~\ref{thm:sections} and, for each section $\Isec$, try
to deduce (or guess) which components of $G \setminus \Isec$ lie to the left and which lie to the right
to the section $\Isec$. However, if $\Isec$ is large, there may be many such components with many different
neighborhoods in $\Isec$ and, consequently, such a guessing step seems expensive.
Thus, we need to employ a more involved definition of a ``separation'' to define a subproblem for the dynamic programming.

\subsection{Worlds}

We first make use of Corollary~\ref{cor:cheap-fill-in} to observe that, for a fixed vertex $v$ that is cheap in a 
given minimal solution $F$,
we can afford classifying vertices $w \in V(G) \setminus \{v\}$ depending on whether they
are included in one of the sections at endpoints of $v$, or are incident to $v$.
\begin{definition}
A \emph{world} is a tuple $\W = (v,\Isec_L,\Isec_R,p_L,p_R,F_v)$ where 
\begin{enumerate}
\item $v \in V(G)$, $\Isec_L,\Isec_R \subseteq V(G)$, $F_v \subseteq (\{v\} \times (V \setminus \{v\})) \setminus E(G)$ and $1 \leq p_L \leq p_R \leq 2n-1$;
\item $v \in \Isec_L \cap \Isec_R$;
\item $p_R - p_L = |\Isec_L \triangle \Isec_R| + 2|N_{G+F_v}(v) \setminus (\Isec_L \cup \Isec_R)|$;
\item for any $w \in \Isec_L \cup \Isec_R$ either $w=v$ or $vw \in E(G) \cup F_v$;
\item for any connected component $C$ of $G \setminus (\Isec_L \cup \Isec_R)$ either $C \subseteq N_{G+F_v}(v)$ or $C \cap N_{G+F_v}(v) = \emptyset$; and
\item $|F_v| \leq \sqrt{k}$.
\end{enumerate}
\end{definition}
For a world $\W = (v,\Isec_L,\Isec_R,p_L,p_R,F_v)$ 
we denote (see also Figure~\ref{fig:dp:world}):
\begin{align*}
\Wv{\W} &= v & \WF{\W} &= F_v \\
\WsecL{\W} &= \Isec_L & \WsecR{\W} &= \Isec_R \\
\WpL{\W} &= p_L & \WpR{\W} &= p_R \\
\WW{\W} &= N_{G+F_v}[v] & \WI{\W} &= \WW{\W} \setminus (\Isec_L \cup \Isec_R).
\end{align*}

\begin{figure}
\centering
\includegraphics{figures/fig-dp-world}
\caption{A world with its most important elements (to the left) and its symbolic notation used in subsequent figures (to the right).}
\label{fig:dp:world}
\end{figure}

\begin{definition}
Let $F$ be a completion of $G$ and $\model$ be a model of $G+F$.
We say that the world $\W$ \emph{appears} in the model $\model$ if:
\begin{enumerate}
\item $\WF{\W} = \incF{\Wv{\W}}$,
\item $\WpL{\W} = \model(\Ibeg{\Wv{\W}})$ and $\WpR{\W} = \model(\Iend{\Wv{\W}})-1$,
\item $\WsecL{\W} = \Isec_\model(\WpL{\W})$ and $\WsecR{\W} = \Isec_\model(\WpR{\W})$.
\end{enumerate}
\end{definition}
The following observation is straightforward from the definition of a world.
\begin{lemma}\label{lem:world-exists}
For any solution $F$ to $(G,F)$ with model $\model$ of $G+F$, and any vertex $v \in V(G)$ that is cheap w.r.t. $F$,
the following tuple is in fact a world appearing in $\model$:
$$(v,\Isec_\model(\Ibeg{v}),\Isec_\model(\model(\Iend{v})-1),\model(\Ibeg{v}),\model(\Iend{v})-1,\incF{v}).$$
\end{lemma}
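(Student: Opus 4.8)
Lemma~\ref{lem:world-exists} asks us to verify that a specific tuple, built directly from a solution $F$ and its model $\model$, satisfies all six defining conditions of a world, and additionally ``appears'' in $\model$ in the sense of the preceding definition. The plan is to simply go down the two checklists in order, since each item follows either by direct unwinding of definitions or by an elementary property of interval models.

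Let me set $v \in V(G)$ cheap w.r.t. $F$, write $\model$ for the fixed model of $G+F$, and let $\W = (v, \Isec_\model(\Ibeg{v}), \Isec_\model(\model(\Iend{v})-1), \model(\Ibeg{v}), \model(\Iend{v})-1, \incF{v})$. First I would check that $\W$ is a world. Condition (1): the types are immediate --- $v \in V(G)$; $\Isec_\model(\cdot) \subseteq V(G)$ by definition of a section; $\incF{v} \subseteq (\{v\} \times (V \setminus \{v\})) \setminus E(G)$ since $\incF{v}$ collects fill edges at $v$, which are non-edges of $G$; and $1 \leq \model(\Ibeg{v}) \leq \model(\Iend{v}) - 1 \leq 2n-1$ because $\model(\Ibeg{v}) < \model(\Iend{v})$ and positions range in $\{1,\dots,2n\}$. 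Condition (2): $v \in \Isec_\model(\Ibeg{v})$ since $\model(\Ibeg{v}) \leq \model(\Ibeg{v}) < \model(\Iend{v})$, and similarly $v \in \Isec_\model(\model(\Iend{v})-1)$. Condition (4): if $w \in \Isec_\model(\Ibeg{v}) \cup \Isec_\model(\model(\Iend{v})-1)$ and $w \neq v$, then the intervals of $w$ and $v$ share a point (either position $\model(\Ibeg{v})$ or position $\model(\Iend{v})-1$), so $vw \in E(G+F) = E(G) \cup F$; and $vw \in F$ implies $vw \in \incF{v}$, so $vw \in E(G) \cup \incF{v}$. Condition (6): $|\incF{v}| \leq \sqrt{k}$ is exactly the definition of $v$ being cheap.

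The two remaining world-conditions, (3) and (5), are the only places requiring a small argument about the geometry of the model, so I would treat them as the main (mild) obstacle. For condition (5): a connected component $C$ of $G \setminus (\Isec_L \cup \Isec_R)$, where $\Isec_L = \Isec_\model(\Ibeg{v})$, $\Isec_R = \Isec_\model(\model(\Iend{v})-1)$, has the property that no vertex of $C$ lies in a section at position $\model(\Ibeg{v})$ or $\model(\Iend{v})-1$; hence in the model $\model$ every vertex of $C$ has its interval entirely within $(\model(\Ibeg{v}), \model(\Iend{v})-1)$, or entirely before $\model(\Ibeg{v})$, or entirely after $\model(\Iend{v})-1$ --- and by connectivity of $C$ (no edges of $G+F$ between $C$ and $v$ other than via $N_{G+F}(v)$... more carefully: $C$ is connected in $G$, hence its intervals form an overlapping chain, so they all fall into the same one of these three zones). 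If they all fall in the middle zone, every vertex of $C$ is adjacent to $v$ in $G+F$, so $C \subseteq N_{G+F}(v)$; otherwise all of $C$ lies in one of the outer zones, so $C \cap N_{G+F}(v) = \emptyset$. For condition (3): partition $N_{G+F}[v]$ into $\Isec_L \cap \Isec_R$, $\Isec_L \setminus \Isec_R$, $\Isec_R \setminus \Isec_L$, and $\WI{\W} = N_{G+F}[v] \setminus (\Isec_L \cup \Isec_R)$; then count events strictly between positions $\model(\Ibeg{v})$ and $\model(\Iend{v})$. Each $w \in \Isec_L \setminus \Isec_R$ contributes exactly its ending event $\Iend{w}$ in this range (its start is $\leq \model(\Ibeg{v})$, its end is $\leq \model(\Iend{v})-1$, and in fact $> \model(\Ibeg{v})$ since $w \neq v$... one checks $w \notin \Isec_R$ forces $\model(\Iend{w}) \leq \model(\Iend{v})-1$, and $w \in \Isec_L$ with $w$ adjacent to $v$ forces the end event to come after $\model(\Ibeg{v})$); symmetrically each $w \in \Isec_R \setminus \Isec_L$ contributes exactly its starting event; each $w \in \WI{\W}$ contributes both its events; and vertices of $\Isec_L \cap \Isec_R$ contribute neither (their intervals span the whole range). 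Counting: the number of positions strictly between $\model(\Ibeg{v})$ and $\model(\Iend{v})$ is $\model(\Iend{v}) - \model(\Ibeg{v}) - 1 = (\model(\Iend{v})-1) - \model(\Ibeg{v}) = p_R - p_L$, and this equals $|\Isec_L \triangle \Isec_R| + 2|\WI{\W}|$, which is condition (3). One must double-check that $v$'s own endpoints are not double-counted --- but $\Ibeg{v}$ is at position $p_L$ and $\Iend{v}$ at position $p_R + 1$, both outside the open range, so they are correctly excluded, which is exactly why the definition uses $p_R = \model(\Iend{v})-1$ rather than $\model(\Iend{v})$.

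Finally, $\W$ \emph{appears} in $\model$: this is immediate from the construction, as each of the three ``appears'' conditions holds by definition of the corresponding coordinate of $\W$ --- $\WF{\W} = \incF{v}$, $\WpL{\W} = \model(\Ibeg{v})$ and $\WpR{\W} = \model(\Iend{v})-1$, and $\WsecL{\W} = \Isec_\model(\model(\Ibeg{v})) = \Isec_\model(\WpL{\W})$, $\WsecR{\W} = \Isec_\model(\model(\Iend{v})-1) = \Isec_\model(\WpR{\W})$. This completes the proof. I expect the whole argument to be short; the only step needing genuine care is the event-counting for condition (3), where getting the off-by-one right (the role of $p_R = \model(\Iend{v})-1$) is the crux.
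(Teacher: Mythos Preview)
Your proposal is correct and follows the natural approach: verify each of the six world-conditions and the three ``appears'' conditions directly from the definitions. The paper itself simply remarks that the lemma is ``straightforward from the definition of a world'' and gives no detailed argument, so your explicit walk-through of conditions (3) and (5) --- the event-count and the three-zone argument for connected components --- is exactly the kind of checking the paper leaves to the reader.
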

We denote the world defined in Lemma~\ref{lem:world-exists} by $\W(\model,v)$.

We also remark that for a world $\W$ appearing in a model $\model$, we have for every $w \notin \WsecL{\W} \cup \WsecR{\W}$
that
$$\WpL{\W} < \model(\Ibeg{w}) < \model(\Iend{w}) \leq \WpR{\W} \Leftrightarrow w\Wv{\W} \in E(G) \cup \WF{\W} \Leftrightarrow w \in \WI{\W}.$$

On the other hand, Theorem~\ref{thm:sections} and Corollary~\ref{cor:cheap-fill-in}, together with an observation
that the properties of a world can be verified in polynomial time, allow us to claim the following.
\begin{lemma}\label{lem:world-enum}
One can in $\Ohstar(k^{\Oh(\sqrt{k})})$ time enumerate a family $\worldset$ of $k^{\Oh(\sqrt{k})} n^{106}$ worlds in $G$
such that for any minimal solution $F$ to $(G,k)$, all worlds that appear in the canonical model of $G+F$
belong to $\worldset$.
\end{lemma}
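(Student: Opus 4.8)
The plan is a straightforward brute-force enumeration, driven by the two structural enumeration results already available. First I would compute, in $\Ohstar(k^{\Oh(\sqrt{k})})$ time, the family $\sectionset$ of Theorem~\ref{thm:sections}, which has size $k^{\Oh(\sqrt{k})} n^{17}$ and contains every section of the canonical model $\model$ of $G+F$ for every minimal solution $F$. Next, for each vertex $v \in V(G)$ I would compute the family $\fiset'$ of Corollary~\ref{cor:cheap-fill-in} for the designated vertex $v$; each such family has size $k^{\Oh(\sqrt{k})} n^{70}$ and is computed in $k^{\Oh(\sqrt{k})} n^{\Oh(1)}$ time, so doing this for all $n$ vertices still stays within the overall time budget.

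Then I would form candidate tuples $(v, \Isec_L, \Isec_R, p_L, p_R, F_v)$ as follows: choose $v \in V(G)$, choose $\Isec_L, \Isec_R \in \sectionset$, choose $F_v$ from the family $\fiset'$ attached to $v$, and choose $p_L \in \{1,\dots,2n-1\}$. The point that keeps the polynomial factor at $n^{106}$ rather than $n^{107}$ is that the coordinate $p_R$ should \emph{not} be guessed independently: by the third condition in the definition of a world we must have $p_R = p_L + |\Isec_L \triangle \Isec_R| + 2|N_{G+F_v}(v) \setminus (\Isec_L \cup \Isec_R)|$, and the right-hand side is computable in polynomial time from the already-chosen data. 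For every such tuple I would verify, in polynomial time, whether all six conditions in the definition of a world hold, and if so insert the tuple into $\worldset$. The number of tuples inspected is $n \cdot (k^{\Oh(\sqrt{k})} n^{17})^2 \cdot (k^{\Oh(\sqrt{k})} n^{70}) \cdot (2n-1) = k^{\Oh(\sqrt{k})} n^{106}$, and each verification is polynomial, so the whole routine runs in $\Ohstar(k^{\Oh(\sqrt{k})})$ time and outputs at most $k^{\Oh(\sqrt{k})} n^{106}$ worlds.

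For correctness, let $F$ be a minimal solution, let $\model$ be the canonical model of $G+F$, and let $\W$ be a world appearing in $\model$; write $v = \Wv{\W}$, $p_L = \WpL{\W}$, $p_R = \WpR{\W}$, $F_v = \WF{\W}$. Since $\W$ appears in $\model$ we have $F_v = \incF{v}$, and condition~6 of a world gives $|\incF{v}| \le \sqrt{k}$, so $v$ is cheap with respect to $F$; hence Corollary~\ref{cor:cheap-fill-in} guarantees that $\{w : vw \in F\} = \incF{v}$ lies in the family $\fiset'$ for $v$. Also, because $\W$ appears in $\model$ we have $\WsecL{\W} = \Isec_\model(p_L)$ and $\WsecR{\W} = \Isec_\model(p_R)$, which are sections of the canonical model and therefore belong to $\sectionset$ by Theorem~\ref{thm:sections}. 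Finally $p_L = \model(\Ibeg{v}) \in \{1,\dots,2n-1\}$, and $p_R$ equals precisely the value forced by condition~3. Thus the tuple $\W$ is among the candidate tuples generated above; since $\W$ is a genuine world it passes the verification step and is inserted into $\worldset$.

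The whole argument is essentially bookkeeping, so I do not expect a real obstacle; the only delicate point is the exact exponent of $n$, which is exactly why it is worth recording explicitly that $p_R$ is a function of the other coordinates rather than an independent guess. The two facts one leans on — the stated size and time bounds for $\sectionset$ and $\fiset'$, and the polynomial-time checkability of the six world axioms — are immediate from the earlier results and from the definition of a world, respectively.
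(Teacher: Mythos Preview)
Your proposal is correct and follows exactly the paper's approach: enumerate $v$, $p_L$, $\Isec_L$, $\Isec_R$, and $F_v$ using Theorem~\ref{thm:sections} and Corollary~\ref{cor:cheap-fill-in}, derive $p_R$ from condition~3, and verify the world axioms in polynomial time. The paper states this even more tersely, simply noting the decomposition $106 = 70 + 2\cdot 17 + 2$ of the exponent.
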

We remark that the exponent $106 = 70 + 2\cdot 17 + 2$
(obtained by enumerating all possible choices $v$, $p_L$, $\Isec_L$, $\Isec_R$ and $F_v$)
is a very rough estimation. For example, one can observe that the sections $\Isec_L$ and $\Isec_R$
were already guessed in the course of guessing $F_v$ in the proof of Theorem~\ref{thm:fill-in}.
However, as the exponent in the dependency on $n$ became unholy already a few sections ago, we refrain from optimizing it.

Worlds are first basic building blocks for our states of dynamic programming: there are only relatively few interesting worlds (Lemma~\ref{lem:world-enum})
while a world $\W$ allows us to distinguish vertices that lie between the endpoints of $\Wv{\W}$ in the model we are looking for.

\subsection{Terraces}

Unfortunately, worlds are not sufficient to capture all relevant DP states.
We need a second building block, which we call a \emph{terrace}.
Intuitively, a terrace describes the behaviour either in one world (called a \emph{flat terrace})
  or in the neighborhood of a world (called a \emph{nested terrace}).

\subsubsection{Flat terraces}

\begin{definition}
A \emph{flat terrace} $\T$ consists of a single world $\W$.
\end{definition}

For a flat terrace $\T = \W$ we denote 
\begin{align*}
\TAI{\T} &= \TBI{\T} = \WI{\W} \\
\TAsecL{\T} &= \TBsecL{\T} = \WsecL{\W} \\
\TAsecR{\T} &= \TBsecR{\T} = \WsecR{\W} \\
\TApL{\T} &= \TBpL{\T} = \WpL{\W} \\
\TApR{\T} &= \TBpR{\T} = \WpR{\W}. 
\end{align*}

\subsubsection{Nested terrace}

The definition of a nested terrace is more involved. We start with a the following definition.

\begin{definition}
A \emph{nested half-terrace} $\T$ is a triple of worlds $(\Win,\Wout_1,\Wout_2)$ such that
$\Wv{\Wout_1} \neq \Wv{\Win} \neq \Wv{\Wout_2}$,
$$\WpL{\Wout_2} \leq \WpL{\Wout_1} < \WpL{\Win} \leq \WpR{\Win} < \WpR{\Wout_2} \leq \WpR{\Wout_1},$$
  and
$$|(\WsecL{\Win} \cap \WsecR{\Win}) \setminus (\WsecL{\Wout_1} \cap \WsecR{\Wout_2})| \leq 2\sqrt{k}.$$
\end{definition}

Note that we allow $\Wout_1 = \Wout_2$.
For a nested half-terrace $\T = (\Win,\Wout_1,\Wout_2)$ we denote (see also Figure~\ref{fig:dp:terrace})
\begin{align*}
\TAsecL{\T} &= \WsecL{\Wout_1} & \TBsecL{\T} &= \WsecR{\Win} \\
\TAsecR{\T} &= \WsecL{\Win} & \TBsecR{\T} &= \WsecR{\Wout_2} \\
\TApL{\T} &= \WpL{\Wout_1} & \TBpL{\T} &= \WpR{\Win} \\
\TApR{\T} &= \WpL{\Win} & \TBpR{\T} &= \WpR{\Wout_2}.
\end{align*}

\begin{figure}
\centering
\includegraphics{figures/fig-dp-terrace}
\caption{A nested terrace with its most important notation (to the left) and its symbolic notation used in subsequent figures (to the right).
The dotted areas are the `important' areas for a terrace: the left one has borders $\Omega_L^1$, $\Omega_R^1$ and interior $I^1$,
    and the right one has borders $\Omega_L^2$, $\Omega_R^2$ and interior $I^2$.}
\label{fig:dp:terrace}
\end{figure}

However, to properly define $\TAI{\T}$ and $\TBI{\T}$ we need to enhance a nested half-terrace
$\T$ with an information, for each vertex
$v \in (\WI{\Wout_1} \cap \WI{\Wout_2}) \setminus \WW{\Win}$
whether it should lie before or after $\Wv{\Win}$ in the model $\model$ we are looking for.

\begin{definition}
A \emph{nested terrace} $\T$ is a quadruple $(\Win,\Wout_1,\Wout_2,g)$
  where $(\Win,\Wout_1,\Wout_2)$ is a nested half-terrace
and $g: (\WI{\Wout_1} \cap \WI{\Wout_2}) \setminus \WW{\Win} \to \{1,2\}$ is a function such that
whenever two vertices $x$ and $y$ in the domain of $g$ are adjacent, then $g(x) = g(y)$
(that is, $g$ is constant on each connected component in the graph induced by its domain).
\end{definition}
We may now denote for a nested terrace $(\Win,\Wout_1,\Wout_2,g)$
\begin{align*}
\TAI{\T} &= g^{-1}(1) & \TBI{\T} &= g^{-1}(2).
\end{align*}

\begin{definition}
Let $F$ be a completion of $G$ and $\model$ be a model of $G+F$.
We say that a nested terrace $\T = (\Win,\Wout_1,\Wout_2,g)$ \emph{appears} in the model $\model$ if
all $\Win,\Wout_1,\Wout_2$ appear in $\model$ and, moreover, for any
$w \in (\WI{\Wout_1} \cap \WI{\Wout_2})\setminus \WW{\Win}$
we have $\model(\Iend{w}) < \model(\Ibeg{\Wv{\Win}})$ if and only if $g(w) = 1$.
\end{definition}

A direct check from the definition shows the following.
\begin{lemma}\label{lem:terrace-exists}
Let $F$ be a completion of $G$ and $\model$ be a model of $G+F$.
Let $x \in V(G)$ be an arbitrary cheap vertex different than $\Uroot$.
Let $y_1$ be the cheap vertex with rightmost $\model(\Ibeg{y_1})$
and $y_2$ be the cheap vertex with leftmost $\model(\Iend{y_2})$
among the cheap vertices $y$ satisfying $\model(\Ibeg{y}) < \model(\Ibeg{x}) < \model(\Iend{x}) < \model(\Iend{y})$.
Then $(\W(\model,x),\W(\model,y_1),\W(\model,y_2))$ is a nested half-terrace that appears in $\model$.

Moreover, if we denote
\begin{align*}
X^1 &= \{w \in V(G): \model(\Ibeg{y_1}) < \model(\Ibeg{w}) < \model(\Iend{w}) < \model(\Ibeg{x})\} \\
X^2 &= \{w \in V(G): \model(\Iend{x}) < \model(\Ibeg{w}) < \model(\Iend{w}) < \model(\Iend{y_2})\} \\
g &= (X^1 \times \{1\}) \cup (X^2 \times \{2\})
\end{align*}
then $X^1 \cup X^2 = (\WI{\W(\model,y_1)} \cap \WI{\W(\model,y_2)}) \setminus \WW{\W(\model,x)}$ and
$(\W(\model,x),\W(\model,y_1),\W(\model,y_2),g)$ is a nested terrace that appears in $\model$.
\end{lemma}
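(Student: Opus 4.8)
The plan is to verify the statement directly against the definitions of a world, a nested half-terrace, and a nested terrace; as the phrasing ``a direct check'' suggests, this is largely bookkeeping, with one genuine combinatorial point. First I would argue that $y_1$ and $y_2$ are well-defined: the set $S$ of cheap vertices $y$ with $\model(\Ibeg{y}) < \model(\Ibeg{x}) < \model(\Iend{x}) < \model(\Iend{y})$ is nonempty, because $\Uroot$ is universal, hence untouched by $F$ (so cheap) and, in the canonical models we deal with, occupies the first and last positions of $\model$ (cf.\ Section~\ref{sec:prelims}), so $\Uroot \in S$ whenever $x \neq \Uroot$. From the extremality in the definitions of $y_1,y_2$ and the fact that $y_1,y_2\in S$ I would record the chain
$$\model(\Ibeg{y_2}) \leq \model(\Ibeg{y_1}) < \model(\Ibeg{x}) < \model(\Iend{x}) < \model(\Iend{y_2}) \leq \model(\Iend{y_1}),$$
and observe $y_1 \neq x \neq y_2$ since $x \notin S$.

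Next I would assemble the nested half-terrace. By Lemma~\ref{lem:world-exists}, $\W(\model,x)$, $\W(\model,y_1)$, $\W(\model,y_2)$ are worlds appearing in $\model$; substituting $\WpL{\W(\model,v)} = \model(\Ibeg{v})$ and $\WpR{\W(\model,v)} = \model(\Iend{v})-1$ turns the chain of $p_L/p_R$ inequalities demanded by the definition of a nested half-terrace into exactly the chain above together with $\model(\Ibeg{x}) < \model(\Iend{x})$. The only substantive point is the bound on $|(\WsecL{\W(\model,x)} \cap \WsecR{\W(\model,x)}) \setminus (\WsecL{\W(\model,y_1)} \cap \WsecR{\W(\model,y_2)})|$. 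Unwinding $\Isec_\model$, the first intersection is the set of $w$ whose interval contains that of $x$ (that is, $\model(\Ibeg{w}) \leq \model(\Ibeg{x})$ and $\model(\Iend{w}) \geq \model(\Iend{x})$), and the second is the analogous set with $y_1$ and $y_2$ in place of $x$. A vertex $w \neq x$ lying in the first set but not the second satisfies $\model(\Ibeg{w}) < \model(\Ibeg{x})$ and $\model(\Iend{w}) > \model(\Iend{x})$, hence would lie in $S$ if it were cheap, and then the extremal choice of $y_1,y_2$ would force $\model(\Ibeg{w}) \leq \model(\Ibeg{y_1})$ and $\model(\Iend{w}) \geq \model(\Iend{y_2})$ --- i.e.\ $w$ in the second set, a contradiction. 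So every such $w \neq x$ is expensive, and the difference is contained in $\{x\}$ together with the expensive vertices (at most $2\sqrt{k}$ of them), which gives the required estimate. This is the step I expect to be the crux.

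For the ``moreover'' part I would use the characterization of $\WI{\cdot}$ for an appearing world to write $\WI{\W(\model,v)} = \{w : \model(\Ibeg{v}) < \model(\Ibeg{w}) \text{ and } \model(\Iend{w}) < \model(\Iend{v})\}$, and then, using $\model(\Ibeg{y_2}) \leq \model(\Ibeg{y_1})$ and $\model(\Iend{y_2}) \leq \model(\Iend{y_1})$, identify $\WI{\W(\model,y_1)} \cap \WI{\W(\model,y_2)}$ with the set of $w$ whose interval lies strictly between positions $\model(\Ibeg{y_1})$ and $\model(\Iend{y_2})$. Removing $\WW{\W(\model,x)} = N_{G+F}[x]$ deletes precisely the $w$ whose interval meets that of $x$, and the survivors split according to whether their interval lies entirely left of $\model(\Ibeg{x})$ or entirely right of $\model(\Iend{x})$ --- which are exactly the sets $X^1$ and $X^2$ of the statement. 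This yields $X^1 \uplus X^2 = (\WI{\W(\model,y_1)} \cap \WI{\W(\model,y_2)}) \setminus \WW{\W(\model,x)}$, so $g$ is a well-defined function (as $X^1 \cap X^2 = \emptyset$) with the claimed domain. Finally, each vertex of $X^1$ ends before position $\model(\Ibeg{x})$ while each vertex of $X^2$ starts after $\model(\Iend{x}) > \model(\Ibeg{x})$, so no edge of $G+F$ --- in particular none of $G$ --- joins $X^1$ to $X^2$; hence $g$ is constant on every connected component of $G$ restricted to its domain, so $(\W(\model,x),\W(\model,y_1),\W(\model,y_2),g)$ is a nested terrace, and it appears in $\model$ since its three worlds do and since, for $w$ in the domain of $g$, $\model(\Iend{w}) < \model(\Ibeg{x})$ holds iff $w \in X^1$ iff $g(w) = 1$. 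Everything here is definition-chasing; besides the $2\sqrt{k}$ estimate, the only thing requiring care is keeping straight the interaction between the sections $\Isec_\model(p)$ and the sets $\WW{\cdot}$, $\WI{\cdot}$.
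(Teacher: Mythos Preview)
Your proposal is correct and follows essentially the same approach as the paper: both argue that $\Uroot$ witnesses the existence of $y_1,y_2$, and that the only nontrivial point is the $2\sqrt{k}$ bound, which holds because any $w\neq x$ in the difference would, if cheap, contradict the extremal choice of $y_1$ or $y_2$ and hence must be expensive. Your treatment is in fact more explicit than the paper's (you spell out the ``moreover'' part and note the membership of $x$ itself in the difference, which the paper glosses over); the shared off-by-one from including $x$ is a cosmetic issue in the paper's constant and does not affect the argument.
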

\begin{proof}
Note that the vertices $y_1$ and $y_2$ exist, as $\Uroot$ is a candidate for both of them.
The only claim that is not straightforward is that there are at most $2\sqrt{k}$ vertices
  with $\model(\Ibeg{w}) < \model(\Ibeg{x}) < \model(\Iend{x}) < \model(\Iend{w})$
  and $\model(\Ibeg{w}) > \model(\Ibeg{y_1})$ or $\model(\Iend{w}) < \model(\Iend{y_2})$.
  However, this follows from the definition of $y_1$ and $y_2$: all such $w$ are expensive w.r.t. $F$.
\end{proof}
We denote the nested terrace defined in Lemma~\ref{lem:terrace-exists} by $\T(\model,x)$.
Note that the vertices $y_1$ and $y_2$ can be deduced from the model $\model$ and vertex $x$;
for fixed $\model$ and $x$, we denote them by $y_1(\model,x)$ and $y_2(\model,x)$.

At the end of this section we would like to include a few words about the intuition. Every terrace $\T$ has two `active' areas, $\TAI{\T}$ and $\TBI{\T}$, whose best possible completions we would like to compute. In a nested terrace these areas are in fact disjoint, and we have $\TApL{\T}\leq \TApR{\T}\leq \TBpL{\T}\leq \TBpR{\T}$. A flat terrace, however, is a degenerated case where these two areas are in fact the same. Thus, only the first and the last inequality holds, that is, we trivially have $\TApL{\T}\leq \TApR{\T}$ and $\TBpL{\T}\leq \TBpR{\T}$, but not necessarily $\TApR{\T}\leq \TBpL{\T}$ (and in fact this inequality will be most often false). Hence, when talking about an arbitrary terrace we will use {\em{only}} inequalities $\TApL{\T}\leq \TApR{\T}$ and $\TBpL{\T}\leq \TBpR{\T}$, which are true in both cases. Intuitively, in the sequel we combine pairs of terraces, and in this combination we look at only one active area of each participating terrace. Thus, we in fact have no chance of attempting using any inequality that relates the placements of two active areas of the same terrace.

\subsubsection{Enumerating terraces}

We now show that we can enumerate a relatively small family of potential terraces.
\begin{theorem}\label{thm:terrace-enum}
One can in $\Ohstar(k^{\Oh(\sqrt{k})})$ time enumerate a family $\terset$
of $k^{\Oh(\sqrt{k})} n^{318}$ terraces such that 
if $(G,k)$ is a YES-instance of \icname{}, then, for the canonical solution $F$
and the canonical model $\model$ of $G+F$, all terraces that appear in $\model$
belong to $\terset$.
\end{theorem}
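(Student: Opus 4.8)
The plan is to treat flat and nested terraces separately. For flat terraces the statement is immediate: a flat terrace is a world, so by Lemma~\ref{lem:world-enum} the family $\worldset$, of size $k^{\Oh(\sqrt{k})} n^{106}$, already contains every world (hence every flat terrace) appearing in the canonical model, and we simply declare all of $\worldset$ to be flat terraces. All the difficulty is in the nested terraces. We enumerate nested half-terraces $(\Win,\Wout_1,\Wout_2)$ by iterating over all triples from $\worldset^3$ and keeping those that satisfy the (polynomial-time checkable) defining inequalities and the bound $|(\WsecL{\Win}\cap\WsecR{\Win})\setminus(\WsecL{\Wout_1}\cap\WsecR{\Wout_2})|\le 2\sqrt{k}$; since every nested terrace appearing in a model has all three of its worlds appearing there, Lemma~\ref{lem:world-enum} guarantees that this loop catches all relevant half-terraces, and it produces at most $(k^{\Oh(\sqrt{k})}n^{106})^3 = k^{\Oh(\sqrt{k})}n^{318}$ of them. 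As $318>106$, it then suffices to attach to each half-terrace a family of at most $k^{\Oh(\sqrt{k})}$ candidate functions $g$ that is guaranteed to contain the unique $g$ realised by the canonical model $\model$ whenever the three worlds appear in it.

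So fix a half-terrace $\T_0=(\Win,\Wout_1,\Wout_2)$ and set $I=(\WI{\Wout_1}\cap\WI{\Wout_2})\setminus\WW{\Win}$; this is a purely combinatorial object, so we may compute $\Cfam:=\Ccomp{G[I]}$ directly. Every admissible $g$ is constant on the members of $\Cfam$, so the task reduces to deciding, for each $C\in\Cfam$, whether $g(C)=1$ (i.e.\ $C$ is placed to the left of $\Wv{\Win}$ in $\model$) or $g(C)=2$. The first structural ingredient is that $|\Cfam|=\mathrm{poly}(k)$: in $\model$ each $C\in\Cfam$ lies entirely inside one of the two dotted areas, its $G$-neighbourhood is contained in the two border sections of that area together with a small ``transversal'' set of vertices with one endpoint strictly inside the area, and by Lemma~\ref{lem:A-r-nei} (applied with $A$ the relevant border intersection and $r$ a suitable $\Oh(\sqrt k)$ value) all but $\mathrm{poly}(k)$ components are untouched modules with the same neighbourhood, hence, by inapplicability of the Module Reduction Rule, there are only $\mathrm{poly}(k)$ of them. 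This polynomial bound is exactly what later turns an $n^{\Oh(\sqrt k)}$ ``guess a subset'' step into a $k^{\Oh(\sqrt k)}$ one.

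The second ingredient is the dichotomy coming from $K:=(\WsecL{\Win}\cap\WsecR{\Win})\setminus(\WsecL{\Wout_1}\cap\WsecR{\Wout_2})$, which has $|K|\le 2\sqrt k$ by the very definition of a half-terrace; split $K=K_L\uplus K_R$ according to whether a vertex's ``extra'' endpoint falls in the left or the right dotted area. Call $C\in\Cfam$ \emph{occupying} if in $\model$ some event of $\events{K}$ lies strictly between $\Sbeg{C}{\model}$ and $\Send{C}{\model}$. For a non-occupying $C$ every vertex of $K$ is adjacent in $G+F$ either to all of $C$ or to none of $C$, and an exchange argument (shift $C$ towards the canonical side, in the spirit of the proofs of Theorem~\ref{thm:left-right} and Lemma~\ref{lem:fi:far-untouched}, using minimality of $F$ and canonicity of $\model$ through Lemma~\ref{lem:can:modules}) shows that the side of such a $C$ is forced up to $k^{\Oh(\sqrt k)}$ global choices. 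For an occupying $C$ we apply Theorem~\ref{thm:left-right}: taking $\Isec_L,\Isec_R$ to be the two border sections of the left (resp.\ right) dotted area and $K_L$ (resp.\ $K_R$) in the role of $K$ there, we obtain that at most $\Oh(\sqrt k)$ components occupy a $K_L$-event and at most $\Oh(\sqrt k)$ occupy a $K_R$-event. Since $|\Cfam|=\mathrm{poly}(k)$, we may guess the identities of these $\Oh(\sqrt k)$ occupying components ($\binom{\mathrm{poly}(k)}{\Oh(\sqrt k)}=k^{\Oh(\sqrt k)}$ ways) and their sides ($2^{\Oh(\sqrt k)}$ ways); together with the $k^{\Oh(\sqrt k)}$ choices for the remaining, greedily-placed components this yields $k^{\Oh(\sqrt k)}$ candidate functions $g$ per half-terrace, and hence the claimed total of $k^{\Oh(\sqrt k)}n^{318}$ terraces.

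The main obstacle will be the greedy step for non-occupying components and, relatedly, the bookkeeping that connects the combinatorially available data (the half-terrace together with $G$) with the $F$-dependent picture. Concretely, the neighbourhood in $K$ that actually governs where a component may sit is its $G+F$-neighbourhood, not its $G$-neighbourhood, so the case analysis must absorb the (few, since $|K|\le 2\sqrt k$) possible fill-in patterns on $K$; and one must verify that the ``border sections plus a small transversal'' description of $N_G(C)$ needed for Lemma~\ref{lem:A-r-nei} and Theorem~\ref{thm:left-right} really holds, which amounts to isolating the handful of vertices that have an endpoint strictly inside a dotted area but do not belong to $K$ and showing that they can be folded into the transversal. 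All the tools for this---the Module Reduction Rule, Lemmata~\ref{lem:A-nei} and~\ref{lem:A-r-nei}, Theorem~\ref{thm:left-right}, and the canonical-model lemmata---are already available, so the remaining effort is in assembling them carefully rather than in a new idea.
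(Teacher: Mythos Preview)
Your architecture matches the paper's: enumerate half-terraces from $\worldset^3$, restrict attention to a $\mathrm{poly}(k)$-sized family of components, isolate the $\Oh(\sqrt{k})$ ``occupying'' ones via Theorem~\ref{thm:left-right}, and place the rest greedily using minimality of $F$ and canonicity of $\model$. Two steps, however, are not correct as written.

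First, the claim $|\Cfam|=\mathrm{poly}(k)$ for $\Cfam=\Ccomp{G[I]}$ is false: there can be $\Theta(n)$ such components (take many vertices of $\WsecL{\Wout_1}\setminus\WsecL{\Win}$ ending at distinct positions in the left dotted area, with a fresh small component between each pair of consecutive endings). The paper first strips off the components whose side is determined outright by their neighbourhood: setting $P=\WsecL{\Wout_1}\cap\WsecR{\Wout_2}$ and $K=(\WsecL{\Win}\cap\WsecR{\Win})\setminus P$, any $C$ with $N_G(C)\not\subseteq P\cup K$ has a neighbour that lives on only one side of $\Wv{\Win}$, so $g(C)$ is forced. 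Only for the residual family $\{C:N_G(C)\subseteq P\cup K\}$ does one invoke Lemma~\ref{lem:A-r-nei} (with $A=P\cup K$ and $r=k+|K|$, since each vertex of such $C$ is $(G{+}F)$-adjacent to all of $P$) to get the $\Oh(k^2)$ bound. Your proposed application with ``$A$ the relevant border intersection, $r=\Oh(\sqrt{k})$'' cannot work, because which border is relevant for $C$ is exactly the value $g(C)$ you are trying to determine; and the same preliminary restriction is what makes the hypothesis $N_G(C)\subseteq K\cup(\Isec_L\cap\Isec_R)$ of Theorem~\ref{thm:left-right} hold.

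Second, you correctly flag the greedy step as the hard part but do not supply its mechanism. The paper guesses the left-to-right order of the events $\{\Ibeg{x}:x\in K\}$ inside the left area and of $\{\Iend{x}:x\in K\}$ inside the right area, as well as the total order $\unlhd$ (by section size, then lexicographically) on the $\Oh(\sqrt{k})$ ``slot'' sections $B(p_L^i),B(p_R^j)$ between consecutive $K$-events. Two internal claims then show that each non-occupying $C$ has all its events consecutive in $\model$ and sits in its $\unlhd$-minimum admissible slot. Ties between a left and a right slot with equal section are resolved by Lemma~\ref{lem:can:modules} plus guessing, for each such tied pair, how many components go left. This is the content your ``exchange argument up to $k^{\Oh(\sqrt{k})}$ global choices'' is standing in for.
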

\begin{proof}
Enumeration of potential flat terraces follows directly from Lemma~\ref{lem:world-enum}.
Similarly, we can enumerate
a family of $k^{\Oh(\sqrt{k})} n^{318}$ nested half-terraces
such that all nested-half terraces appearing $\model$ belong to this family.
To finish the proof we need to show that, for a fixed nested half-terrace $(\Win,\Wout_1,\Wout_2)$,
   we may enumerate
a family of $k^{\Oh(\sqrt{k})}$ potential functions $g$.
Henceforth we assume that we have a fixed nested half-terrace $(\Win,\Wout_1,\Wout_2)$ that appears in $\model$.
We describe the algorithm as a branching algorithm that generates $k^{\Oh(\sqrt{k})}$ subcases
and outputs a single function $g$ in each subcase. We argue that in the case $(\Win,\Wout_1,\Wout_2)$ indeed appears in $\model$, the correct function $g$ completing $(\Win,\Wout_1,\Wout_2)$ to a nested terrace appearing in $\model$ will be among the enumerated candidates.

Let $\Cfam_0$ be the family of these components $C \in \Ccomp{G \setminus (\WsecL{\Wout_1} \cup \WsecL{\Win} \cup \WsecR{\Win} \cup \WsecR{\Wout_2})}$
for which $C \subseteq (\WI{\Wout_1} \cap \WI{\Wout_2}) \setminus \WW{\Win}$. That is, $\bigcup \Cfam_0$ is the domain
of the function $g$ in any nested terrace $(\Win,\Wout_1,\Wout_2,g)$.

Denote $P = \WsecL{\Wout_1} \cap \WsecR{\Wout_2}$ and $K = (\WsecL{\Win} \cap \WsecR{\Win}) \setminus P$.
Note that we may assume $P \subseteq \WsecL{\Win} \cap \WsecR{\Win}$, as otherwise clearly
$(\Win,\Wout_1,\Wout_2)$
does not appear in $\model$ and we may discard such a choice of a nested half-terrace.
Moreover, by the definition of a nested half-terrace, $|K| \leq 2\sqrt{k}$.

Pick any $C \in \Cfam_0$. Note that, unless $N_G(C) \subseteq P \cup K$, we may deduce whether the vertices
of $C$ lie to the left or to the right of $\Wv{\Win}$ in the model $\model$, and, consequently, fix $g(w)$ for every $w \in C$.
Hence, in the rest of the proof we focus on the family $\Cfam \subseteq \Cfam_0$ of these components $C$
where $N_G(C) \subseteq P \cup K$.

\begin{claim}\label{cl:terrace-enum:small-C}
Providing $(\Win,\Wout_1,\Wout_2)$ appears in $\model$, it holds that $|\Cfam| = \Oh(k^2)$.
\end{claim}
\begin{proof}
If $F$ is a solution to $(G,k)$, for any $C \in \Cfam$ we have $|(C \times P) \setminus E(G)| \leq k$.
We obtain the claim by applying Lemma~\ref{lem:A-r-nei} to the set $A := P \cup K$ and threshold $r := k + |K|$.
\cqed\end{proof}
Thus, if $|\Cfam|$ exceeds the bound of Lemma~\ref{lem:A-r-nei}, we discard the choice of the nested half-terrace.
We proceed further with the assumption $|\Cfam| = \Oh(k^2)$.

Now we filter out components of $\Cfam$ that are handled by Theorem~\ref{thm:left-right}.
To this end, define $\Cfam' \subseteq \Cfam$ to be the family of components $C \in \Cfam$
such that $\Sbeg{C}{\model} < \model(\event) < \Send{C}{\model}$
for some $\event \in \events{K}$.
\begin{claim}\label{cl:terrace-enum:tiny-Cp}
$|\Cfam'| \leq 10\sqrt{k}$.
\end{claim}
\begin{proof}
The claim follows from two applications of Theorem~\ref{thm:left-right}: one to the pair of sections $\WsecL{\Wout_1}, \WsecL{\Win}$
and the set $(\WsecL{\Win} \cap \WsecR{\Win}) \setminus \WsecL{\Wout_1}$ and one
to the pair of sections $\WsecR{\Win}, \WsecR{\Wout_2}$ and the set $(\WsecL{\Win} \cap \WsecR{\Win}) \setminus \WsecR{\Wout_2}$.
\cqed\end{proof}
We guess the subfamily $\Cfam'$ and for each such $C \in \Cfam'$ we guess whether all vertices of $C$ lie to the left or to the right
of $\Wv{\Win}$ in the model $\model$. As $|\Cfam| = \Oh(k^2)$ and $|\Cfam'| \leq 10\sqrt{k}$, such a guess leads
to $k^{\Oh(\sqrt{k})}$ subcases.
We denote $\Dfam = \Cfam \setminus \Cfam'$ the family of the remaining components.

Let $\{x_L^1, x_L^2, \ldots, x_L^{s_L-1}\}$ be the set of these $x \in K$ such that $\model(\Ibeg{x}) > \WpL{\Wout_1}$, enumerated
such that $\model(\Ibeg{x_L^1}) < \model(\Ibeg{x_L^2}) < \ldots < \model(\Ibeg{x_L^{s_L-1}})$. 
Symmetrically, 
let $\{x_R^1, x_R^2, \ldots, x_R^{s_R-1}\}$ be the set of these $x \in K$ such that $\model(\Iend{x}) \leq \WpR{\Wout_2}$, enumerated
such that $\model(\Iend{x_R^1}) < \model(\Iend{x_R^2}) < \ldots < \model(\Iend{x_R^{s_R-1}})$. 
Denote $x_L^0 = \Wv{\Wout_1}$, $x_R^{s_R} = \Wv{\Wout_2}$ and $x_R^{s_L} = x_R^0 = \Wv{\Win}$.
Recall that $|K| \leq 2\sqrt{k}$; at the cost of branching into $k^{\Oh(\sqrt{k})}$ subcases, we guess the sequences $x_L^i$ and $x_R^i$.

Let us now investigate how the components of $\Dfam$ lie in the model $\model$.
\begin{claim}\label{cl:terrace-enum:consecutive-D}
For any $C \in \Dfam$, all events of $\events{C}$ are consecutive events in the model $\model$.
That is, for any $\event \notin \events{C}$ either $\model(\event) < \Sbeg{C}{\model}$ or $\model(\event) > \Send{C}{\model}$.
\end{claim}
\begin{proof}
For the sake of contradiction, assume that there exists an event $\event \notin \events{C}$ such that $\Sbeg{C}{\model} < \model(\event) < \Send{C}{\model}$.
Let $\event \in \{\Ibeg{w},\Iend{w}\}$ for some $w \notin C$.
By the definition of $\Dfam$, $w \notin K$.
Clearly, $w \notin P = \WsecL{\Wout_1} \cap \WsecR{\Wout_2}$. Hence $w \notin N_G(C)$, as $C\in \Dfam\subseteq \Cfam$.

Take now any position $p$ such that $\Sbeg{C}{\model}-1 \leq p \leq \Send{C}{\model}$ and consider a model $\model'$
created from $\model$ by taking out all events of $\events{C}$ and inserting them between former positions $p$ and $p+1$ in the original order.
As every event not in $\events{C}$ that lies between $\Sbeg{C}{\model}$ and $\Send{C}{\model}$ is an endpoint of a non-neighbor of $C$,
$\model'$ is an interval model of $G+F'$ for some completion $F'$ of $G$. Moreover, $F \triangle F'$ consists only of edges between $C$
and $V(G) \setminus C$.

Pick any $v \in C$. Clearly,
$$(N_G(v) \setminus C) \uplus (\{w: vw \in F'\} \setminus C) =  \Isec_\model(p) \setminus C.$$
On the other hand, for any position $q$ with $\model(\Ibeg{v}) \leq q < \model(\Iend{v})$ we have
$$(N_G(v) \setminus C) \uplus (\{w : vw \in F\} \setminus C) \supseteq \Isec_\model(q) \setminus C.$$
Thus, if we choose $p$ so that $|\Isec_\model(p) \setminus C|$ is minimum possible, we obtain
$|\{w: vw \in F'\} \setminus C| \leq |\{w: vw \in F\} \setminus C|$ for every $v \in C$ and, consequently, $|F'| \leq |F|$.
Consider now any $v \in C$ with $\model(\Ibeg{v}) < \model(\event) < \model(\Iend{v})$; let $(q,q')=(\model(\event)-1,\model(\event))$ if $\event$ is a closing event, and let $(q,q')=(\model(\event),\model(\event)-1)$ if $\event$ is an opening event. We infer that $|\Isec_\model(q) \setminus C|=|\Isec_\model(q') \setminus C|+1$, and hence in particular $|\Isec_\model(q) \setminus C|>|\Isec_\model(p) \setminus C|$ by the choice of $p$. We thus obtain $|\{w: vw \in F'\} \setminus C| < |\{w: vw \in F\} \setminus C|$, which implies 
$|F'| < |F|$, a contradiction with the choice of $F$.
\cqed\end{proof}
By Claim~\ref{cl:terrace-enum:consecutive-D} we infer that the components of $\Dfam$ are put into the model $\model$ in somewhat
independent and greedy manner. More precisely, define for a position $p$ a set $B(p) := \Isec_\model(p) \setminus (\bigcup \Dfam)$.
On the sets $B(p)$ we define an order as follows: $B(p) \unlhd B(q)$ if $|B(p)| < |B(q)|$ or $|B(p)| = |B(q)|$ and $B(p) \preceq B(q)$,
where $\prec$ is the order $\prec$ on $V(G)$ extended to subsets of $V(G)$ compared lexicographically.
Note that $\unlhd$ is a total order.

For any $0 \leq i < s_L$ we define $p^i_L$ to be any index $\model(\Ibeg{x_L^i}) \leq p^i_L < \model(\Ibeg{x_L^{i+1}})$
with minimum $B(p^i_L)$ according to the order $\unlhd$. Moreover, by Claim~\ref{cl:terrace-enum:consecutive-D} we can observe that for every $C\in \Dfam$, the set $B(p)$ is constant for all $p$ with $\Sbeg{C}{\model}-1 \leq p \leq \Send{C}{\model}$. Hence, we can always choose $p^i_L$ in such a way that $p^i_L<\Sbeg{C}{\model}$ or $p^i_L\geq \Send{C}{\model}$ for each $C\in \Dfam$. Consequently $\Isec_\model(p^i_L)\cap (\bigcup \Dfam)=\emptyset$ and $B(p^i_L)=\Isec_\model(p^i_L)$. Symmetrically we define $p^i_R$ for $0 \leq i < s_R$; again we can do it in such a manner that $\Isec_\model(p^i_R)\cap (\bigcup \Dfam)=\emptyset$ and $B(p^i_R)=\Isec_\model(p^i_R)$ for each $0\leq i <s_R$.

We now denote
\begin{align*}
P_L := P \cup (K \cap \Isec_\model(\WpL{\Wout_1})) = \Isec_\model(\WpL{\Wout_1})\cap \Isec_\model(\WpR{\Win}), \\
P_R := P \cup (K \cap \Isec_\model(\WpR{\Wout_2})) = \Isec_\model(\WpL{\Win})\cap \Isec_\model(\WpR{\Wout_2}).
\end{align*}
Formally, if any of the equalities above does not hold, we may discard the choice of the half-terrace. We now claim the following.
\begin{claim}\label{cl:terrace-enum:place-D}
For every $C \in \Dfam$ and for every position $p$ with $\Sbeg{C}{\model} - 1 \leq p \leq \Send{C}{\model}$,
the set $B(p)$ is the minimum (in the order $\unlhd$) set among sets $B(q)$ for $q \in P^C$, where $P^C$ is defined as:
$$P^C = \{p^i_L: N_G(C) \subseteq P_L \cup \{x^j_L: j \leq i\}\} \cup \{p^i_R: N_G(C) \subseteq P_R \cup \{x^j_R: j > i\}\}.$$
\end{claim}
\begin{proof}
As we already argued the set $B(p)$ is constant for all $p$ with $\Sbeg{C}{\model}-1 \leq p \leq \Send{C}{\model}$,
and equals $\Isec_\model(p_0) \setminus C$ for any such $p_0$, which we henceforth fix.

Assume that $C$ lies to the left of $\Wv{\Win}$ in the model $\model$. Let $0 \leq \iota < s_L$ be such that
$\model(\Ibeg{x^\iota_L}) < \Sbeg{C}{\model} < \Send{C}{\model} < \model(\Ibeg{x^{\iota+1}_L})$. Then, by the definition of
$p^\iota_L$ we have $B(p^\iota_L) \unlhd B(p_0)$. Moreover, $N_G(C) \subseteq (P \cup K) \cap B(p_0) = P_L \cup \{x^j_L: j \leq \iota\}$ and
hence $p^\iota_L \in P^C$.
The argument for $C$ lying on the right of $\Wv{\Win}$ is symmetric.
Hence, we infer that $\min_{q \in P^C} B(q) \unlhd B(p_0)$.

In the other direction, take $q_0\in P^C$ that yields the minimum set $B(q)$ with respect to $\unlhd$; note that $B(q_0)\unlhd B(p_0)$, so in particular $|B(q_0)|\leq |B(p_0)|$.
Observe that we can construct a model $\model'$ from $\model$ by taking out all events of $\events{C}$ and placing them
between position $q_0$ and $q_0+1$. By the definition of $P^C$, such a model $\model'$ is a interval model of $G+F'$ for some completion $F'$
of $G$. Observe now in $G+F$ the edges between $C$ and $V(G)\setminus C$ constitute the whole set $B(p_0)\times C$, which in particular contains all the edges between $C$ and $V(G)\setminus C$ that were present in the original graph $G$. Moreover, since $B(q_0)=\Isec_\model(q_0)$ because of $q_0\in P^C$, in $G+F'$ the edges between $C$ and $V(G)\setminus C$ constitute the whole set $B(q_0)\times C$, which again contains all the edges between $C$ and $V(G)\setminus C$ that were present $G$. Consequently $|F'|-|F|=|B(q_0)\times C|-|B(p_0)\times C|$. By the fact that $F$ is a minimum solution we infer that $|B(q_0)| \geq |B(p_0)|$, which together with the previously proven reverse inequality shows that $|B(p_0)| = |B(q_0)|$. If now it happens that $B(q_0) \prec B(p_0)$, then it is easy to observe that $F'$ is lexicographically smaller than $F$, a contradiction
to the assumption that $F$ is the canonical solution.
This concludes the proof of the claim.
\cqed\end{proof}

As the cost of $k^{\Oh(\sqrt{k})}$ additional subcases, we may guess the order $\unlhd$ restricted to the sections $B(p^i_L)$ and $B(p^i_R)$;
note that we do not want to guess neither positions $p^i_L,p^i_R$ nor sets $B(p^i_L),B(p^i_R)$ themselves, only the relative order of the sets $B(p^i_L)$ and $B(p^i_R)$
with respect to the order $\unlhd$. Observe also that some of the sets $B(p^i_L)$, $B(p^i_R)$ might be actually equal (which we also guess), but this can happen only for pairs from the opposite sides: sets $B(p^i_L)$ are pairwise different because of having different intersections with $\{x_L^i: 0\leq i\leq s_L\}$, and likewise sets $B(p^i_R)$ are pairwise different.
Once we know the order of these sets w.r.t. $\unlhd$ and the sequences $x_L^i$ and $x_R^i$,
Claim~\ref{cl:terrace-enum:place-D} allows for each component $C \in \Dfam$ to choose its place in the model $\model$ in a greedy manner.

More precisely, consider $C \in \Dfam$ and the set $P^C$ defined in Claim~\ref{cl:terrace-enum:place-D}.
Knowing the order $\unlhd$, by Claim~\ref{cl:terrace-enum:place-D} we know that $C$ is placed in the model $\model$
between $\Ibeg{x_L^i}$ and $\Ibeg{x_L^{i+1}}$ for any $0 \leq i < s_L$ such that $B(p_L^i)$ is $\unlhd$-minimum in $\{B(q): q \in P^C\}$
or
between $\Iend{x_R^i}$ and $\Iend{x_R^{i+1}}$ for any $0 \leq i < s_R$ such that $B(p_R^i)$ is $\unlhd$-minimum in $\{B(q): q \in P^C\}$.
Hence, we know whether $C$ lies to the left or to the right of $\Wv{\Win}$ in the model $\model$
unless the minimum $\{B(q) : q \in P^C\}$ is attained by some $p_L^i$ and $q_L^j$ at the same time. 

We now inspect more closely how such a situation could happen. As $B(p_L^i) = B(p_R^j)$, we have $B(p_L^i), B(p_R^j) \subseteq \WsecL{\Win} \cap \WsecR{\Win} = P \cup K$.
Hence,
$$B(p_L^i) = P_L \cup \{x_L^\ell: \ell \leq i\} = P_R \cup \{x_R^\ell: \ell > j\} = B(p_R^j).$$
In particular, for any $q \in P^C \setminus \{p_L^i, p_R^j\}$ we have
$B(p_L^i) \lhd B(q)$. Recall also that for any $0 \leq i < s_L$, we have at most
one $j = j(i)$ such that $B(p_L^i) = B(p_R^j)$.

Let $0 \leq i < s_L$ be such that $j(i)$ exists.
Let $\Dfam_i \subseteq \Dfam$ be the family of such components
$C \in \Dfam$ such that the minimum of $\{B(q): q \in P^C\}$
is attained at $X := B(p_L^i) = B(p_R^{j(i)})$.
Note that $N_{G+F}(v) \setminus C = X$ for each $v \in C$.
Hence, Lemma~\ref{lem:can:modules} applies and, as $\model$
is the canonical model of $G+F$, the components of $\Dfam_i$
are arranged according to their minimum elements in the order
$\prec$.
That is, for any $C_1, C_2 \in \Dfam_i$ such that $C_1$ lies before $\Wv{\Win}$
and $C_2$ lies after $\Wv{\Win}$ in the model $\model$, we have that the $\prec$-minimum
vertex of $C_1$ precedes the $\prec$-minimum vertex of $C_2$ in the order $\prec$.
Thus, to know which components of $\Dfam_i$
lie in the model $\model$ before $\Wv{\Win}$ it suffices
to know \emph{how many} of them lie there. As $|\Cfam| = \Oh(k^2)$
and $s_L = \Oh(\sqrt{k})$, guessing, for each $0 \leq i < s_L$ with defined $j(i)$,
how many components
of $\Dfam_i$ lie before $\Wv{\Win}$ in the model $\model$
leads to $k^{\Oh(\sqrt{k})}$ subcases.
This concludes the proof of Theorem~\ref{thm:terrace-enum}.
\end{proof}

\subsection{Dynamic programming: states and computation}

\subsubsection{DP states}

Armed with the notion of terraces, we are ready to define
the state of our dynamic programming algorithm.
\begin{definition}
A \emph{state} $\state$ is a pair of terraces $(\T_1, \T_2)$
such that $\TApL{\T_2} \leq \TBpL{\T_1} < \TApR{\T_2} \leq \TBpR{\T_1}$
and
$$\TApR{\T_2} - \TBpL{\T_1} = 2|\TBI{\T_1} \cap \TAI{\T_2}| + |\TBsecL{\T_1} \triangle \TAsecR{\T_2}|.$$
\end{definition}
We remark that each of the terraces participating in a state might be either flat or nested. Moreover, it can happen that $\T_1=\T_2$. For a state $\state = (\T_1,\T_2)$ we define (see also Figure~\ref{fig:dp:state})
\begin{align*}
\SsecL{\state} &= \TBsecL{\T_1} & \SsecR{\state} &= \TAsecR{\T_2} \\
\SpL{\state} &= \TBpL{\T_1} & \SpR{\state} &= \TApR{\T_2} \\
\SI{\state} & = \TBI{\T_1} \cap \TAI{\T_2} & \SW{\state} &= \SI{\state} \cup \SsecL{\state} \cup \SsecR{\state}
\end{align*}

\begin{figure}
\centering
\includegraphics{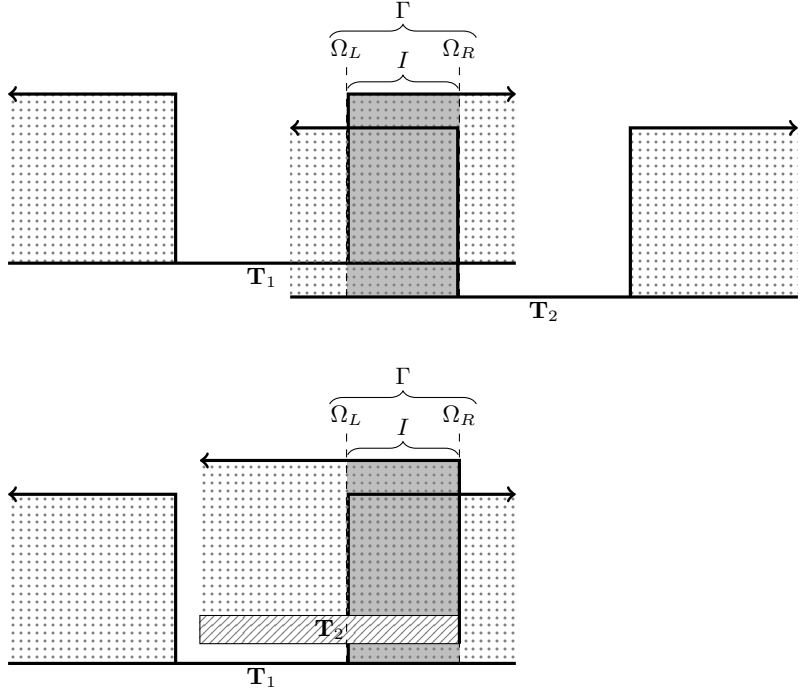}
\caption{A DP state defined by two nested terraces (above) and a nested terrace and a flat terrace (below).
 The DP state asks for the optimal way to arrange events in the gray area.
Observe that the gray area is defined as an intersection of the \emph{second} important area of the first terrace
and the \emph{first} important area of the second terrace. Furthermore, its borders are the the \emph{left} border
of the \emph{second} important area of the first terrace and the \emph{right} border of the \emph{first} important
area of the second terrace.}
\label{fig:dp:state}
\end{figure}

\begin{definition}
Let $F$ be a completion of $G$ and $\model$ be a model of $G+F$.
We say that a state $\state = (\T_1,\T_2)$ \emph{appears} in the model $\model$ if both $\T_1$ and $\T_2$ appear in $\model$.
\end{definition}

A direct check shows the following:
\begin{lemma}\label{lem:state-appear}
If $\state$ appears in a model $\model$ of a completion
$G+F$, then the events that appear
on positions $p$ satisfying
$\SpL{\state} < p \leq \SpR{\state}$ are exactly:
$$\Sevents{\state} := \events{\SI{\state}} \cup \{\Iend{v}: v \in \SsecL{\state} \setminus \SsecR{\state}\} \cup \{\Ibeg{v}: v \in \SsecR{\state} \setminus \SsecL{\state}\}.$$
\end{lemma}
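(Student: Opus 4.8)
The plan is to unwind the definitions of the state $\state=(\T_1,\T_2)$ and of ``appearing'' and simply identify which positions of the model $\model$ fall in the range $\SpL{\state}<p\le\SpR{\state}$. Recall that $\SpL{\state}=\TBpL{\T_1}$ and $\SpR{\state}=\TApR{\T_2}$, that $\SsecL{\state}=\TBsecL{\T_1}$, $\SsecR{\state}=\TAsecR{\T_2}$, and that $\SI{\state}=\TBI{\T_1}\cap\TAI{\T_2}$. Since $\state$ appears in $\model$, both $\T_1$ and $\T_2$ appear in $\model$, so all the underlying worlds appear, and in particular (for a nested terrace) the quantities $\TBpL{\cdot},\TBsecL{\cdot},\TApR{\cdot},\TAsecR{\cdot}$ literally are positions and sections of $\model$ at those positions, by unrolling the definition of ``appears'' for a world and for a nested terrace; for a flat terrace the same holds since $\TBpL{\T}=\TBpR{\T}=\WpL{\W}$ etc.\ are positions/sections of $\model$. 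Thus $\SsecL{\state}=\Isec_\model(\SpL{\state})$ and $\SsecR{\state}=\Isec_\model(\SpR{\state})$.

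First I would fix the range $R=\{\SpL{\state}+1,\dots,\SpR{\state}\}$ and ask, for each event $\event$, whether $\model(\event)\in R$. An event $\Ibeg{v}$ lies in $R$ iff $v$ has already started by position $\SpR{\state}$ but has not started by position $\SpL{\state}$, i.e.\ iff $v\in \Isec_\model(\SpR{\state})\cup\{w:\model(\Iend{w})\le \SpR{\state}, \model(\Ibeg{w})>\SpL{\state}\}$ minus those started early; symmetrically for $\Iend{v}$. The clean way to organize this is to partition vertices according to their position relative to the two section boundaries: a vertex is either (a) entirely contained in $(\SpL{\state},\SpR{\state}]$, i.e.\ both its events are in $R$; (b) straddles the left boundary (starts at or before $\SpL{\state}$, ends after it); (c) straddles the right boundary; (d) straddles both; or (e) is disjoint from the range on one side or the other. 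The events contributing to $R$ are then: both events of type (a) vertices; the ending events of vertices straddling the left boundary but not the right (these are exactly $\SsecL{\state}\setminus\SsecR{\state}$); the starting events of vertices straddling the right boundary but not the left (exactly $\SsecR{\state}\setminus\SsecL{\state}$); and \emph{no} events of vertices straddling both boundaries (these lie in $\SsecL{\state}\cap\SsecR{\state}$ and both their endpoints are outside $R$), nor of vertices disjoint from $R$.

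It remains to identify the type-(a) vertices with $\SI{\state}$. Here I would use the characterization of the interior of a world recalled just before Lemma~\ref{lem:world-enum}: for a world $\W$ appearing in $\model$ and $w\notin\WsecL{\W}\cup\WsecR{\W}$, we have $w\in\WI{\W}$ iff $\WpL{\W}<\model(\Ibeg{w})<\model(\Iend{w})\le\WpR{\W}$. Applying this to the relevant worlds of $\T_1$ and $\T_2$, a vertex $w$ not in any of the four sections $\TBsecL{\T_1},\TAsecR{\T_2}$ (and the other sections bounding the important areas, which are nested appropriately) has both events strictly inside $(\SpL{\state},\SpR{\state}]$ iff $w\in\TBI{\T_1}$ and $w\in\TAI{\T_2}$, i.e.\ $w\in\SI{\state}$; for a flat terrace $\TBI{\T}=\TAI{\T}=\WI{\W}$ so the same statement holds. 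One should also check that $\SI{\state}$ is disjoint from $\SsecL{\state}\cup\SsecR{\state}$, which is immediate since a vertex of $\SI{\state}$ has both endpoints strictly inside the range while a vertex of a boundary section has an endpoint outside it. Combining the three contributions gives exactly $\Sevents{\state}$; the cardinality bookkeeping $|R|=|\Sevents{\state}|$ is then precisely the defining equation $\SpR{\state}-\SpL{\state}=2|\SI{\state}|+|\SsecL{\state}\triangle\SsecR{\state}|$ of a state, so nothing extra is needed.

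The only mildly delicate point — and the part I would write most carefully — is handling the case distinction between flat and nested terraces uniformly, and making sure that when $\T_1$ or $\T_2$ is nested, the ``second important area'' of $\T_1$ and the ``first important area'' of $\T_2$ really are delimited by $\model$-positions with the stated sections, including the side conditions $P\subseteq \WsecL{\Win}\cap\WsecR{\Win}$ etc.\ that are guaranteed by the fact that the terraces \emph{appear} in $\model$ (so these conditions are not merely assumed but hold). Once that is in place, everything is a direct unrolling of definitions and no real combinatorics is involved; this is why the lemma is asserted with ``a direct check shows''.
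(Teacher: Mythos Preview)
Your proposal is correct and is exactly what the paper's ``a direct check shows'' is gesturing at: once you verify $\SsecL{\state}=\Isec_\model(\SpL{\state})$ and $\SsecR{\state}=\Isec_\model(\SpR{\state})$ from the appearing worlds, the partition of vertices by how their intervals sit relative to the two boundary positions immediately gives the claimed event set, with the identification of ``fully inside'' vertices with $\SI{\state}$ following from the interior characterization of worlds (and, for nested terraces, the definition of $g$). One small slip: for a flat terrace you wrote $\TBpL{\T}=\TBpR{\T}=\WpL{\W}$, but of course $\TBpR{\T}=\WpR{\W}$; this does not affect the argument.
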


Note that we have $|\Sevents{\state}|=2|\TBI{\T_1} \cap \TAI{\T_2}| + |\TBsecL{\T_1} \triangle \TAsecR{\T_2}|=\SpL{\state} - \SpR{\state}$ by the definition of a state. Observe that an immediate corollary of Theorem~\ref{thm:terrace-enum}
is an enumeration algorithm for states.
\begin{corollary}\label{cor:state-enum}
One can in $\Ohstar(k^{\Oh(\sqrt{k})})$ time enumerate a family $\stateset$
of $k^{\Oh(\sqrt{k})} n^{636}$ states such that 
if $(G,k)$ is a YES-instance of \icname{}, then, for the canonical solution $F$
and the canonical model $\model$ of $G+F$, all states that appear in $\model$
belong to $\stateset$.
\end{corollary}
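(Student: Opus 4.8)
The plan is a direct product construction on top of Theorem~\ref{thm:terrace-enum}. First we run the enumeration algorithm of Theorem~\ref{thm:terrace-enum} to obtain, in $\Ohstar(k^{\Oh(\sqrt{k})})$ time, a family $\terset$ of $k^{\Oh(\sqrt{k})} n^{318}$ terraces containing all terraces that appear in the canonical model of $G+F$ (for the canonical solution $F$, provided $(G,k)$ is a YES-instance). We then let $\stateset$ be the family of all pairs $(\T_1,\T_2) \in \terset \times \terset$ that form a state, i.e., that satisfy $\TApL{\T_2} \leq \TBpL{\T_1} < \TApR{\T_2} \leq \TBpR{\T_1}$ together with the arithmetic equality $\TApR{\T_2} - \TBpL{\T_1} = 2|\TBI{\T_1} \cap \TAI{\T_2}| + |\TBsecL{\T_1} \triangle \TAsecR{\T_2}|$. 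Each terrace in $\terset$ (be it flat or nested) is stored as an explicit tuple of vertex sets, integers, and, in the nested case, the function $g$; hence all quantities appearing in the definition of a state can be computed directly, and testing whether a pair $(\T_1,\T_2)$ is a state takes polynomial time.

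The size bound is immediate: $|\stateset| \leq |\terset|^2 = (k^{\Oh(\sqrt{k})} n^{318})^2 = k^{\Oh(\sqrt{k})} n^{636}$. For the running time, enumerating $\terset$ costs $\Ohstar(k^{\Oh(\sqrt{k})})$, and then iterating over all $|\terset|^2$ ordered pairs and running the polynomial-time check on each costs $k^{\Oh(\sqrt{k})} n^{636} \cdot \mathrm{poly}(n,k) = \Ohstar(k^{\Oh(\sqrt{k})})$; the whole procedure thus fits in the claimed time bound.

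For correctness, suppose $(G,k)$ is a YES-instance, let $F$ be the canonical solution and $\model$ the canonical model of $G+F$, and let $\state = (\T_1,\T_2)$ be a state that appears in $\model$. By the definition of a state appearing in a model, both $\T_1$ and $\T_2$ appear in $\model$, so by Theorem~\ref{thm:terrace-enum} we have $\T_1,\T_2 \in \terset$. Since $\state$ is a state, the pair $(\T_1,\T_2)$ satisfies the defining inequality and equality, so it is included in $\stateset$ by construction. Hence every state appearing in the canonical model belongs to $\stateset$, as required. There is no real obstacle here beyond observing that the state conditions are polynomial-time verifiable from the explicit encodings of the two terraces; the corollary is essentially a bookkeeping consequence of Theorem~\ref{thm:terrace-enum}.
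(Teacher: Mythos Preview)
Your proof is correct and matches the paper's approach: the paper simply states that the corollary is an immediate consequence of Theorem~\ref{thm:terrace-enum}, and your product-of-pairs construction with a polynomial-time state check is exactly the intended argument.
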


\subsubsection{DP table}

Thus, a state (similarly as a world and a terrace) describes
which events of $\events{V(G)}$ lie between positions
$\SpL{\state}$ and $\SpR{\state}$. Moreover, there is only
a subexponential number of reasonable states.
However, contrary to worlds and terraces, the family of states
is rich enough to allow us to perform dynamic programming 
on a table indexed by the family $\stateset$
of Corollary~\ref{cor:state-enum}.

Formally, we say that a bijection
$\Smodel:\Sevents{\state} \to \{\SpL{\state}+1,\SpL{\state}+2,\ldots,\SpR{\state}\}$
is a \emph{completion} of state $\state$ if $\Smodel$, treated as a permutation of $\Sevents{\state}$,
preceded
with the starting events of $\SsecL{\state}$ and succeeded with the ending
events of $\SsecR{\state}$ (in any order) is an interval model of $G[\SW{\state}]+F_\Smodel$
for some completion $F_\Smodel$ of $G[\SW{\state}]$.
With a completion $\Smodel$ we associate a sequence
$\Smodel(\event_1), \Smodel(\event_2), \ldots, \Smodel(\event_{|\Sevents{\state}|})$
where $\event_1,\event_2,\ldots,\event_{|\Sevents{\state}|}$ is the ordering
of $\Sevents{\state}$ defined as follows: we first take all starting events
of $\Sevents{\state}$, sorted according to $\prec$, and then all ending
events of $\Sevents{\state}$, sorted according to reversed order $\prec$.
For two completions $\Smodel$ and $\Smodel'$ of $\state$,
we say that $\Smodel \lhd \Smodel'$ if
\begin{enumerate}
\item $|F_\Smodel| < |F_{\Smodel'}|$, or
\item $|F_\Smodel| = |F_{\Smodel'}|$ and $F_\Smodel \prec F_{\Smodel'}$, or
\item $F_\Smodel = F_{\Smodel'}$ and the sequence associated with $\Smodel$
is lexicographically smaller than the sequence associated with $\Smodel'$.
\end{enumerate}
Note that $\unlhd$ is a total order on completions of $\state$.
For a state $\state$ we define $\Smodel^\state$ to be the $\unlhd$-minimum
completion of $\state$.

In our dynamic programming algorithm we compute a value $M[\state]$ for
each $\state \in \stateset$. We aim at $M[\state] = \Smodel^\state$ at least
for each $\state$ that appears in the canonical model $\model$.
Note the following.

\begin{lemma}\label{lem:dp:model-opt}
For any $\state$ that appears in the canonical model $\model$,
we have $\Smodel^\state = \model|_{\Sevents{\state}}$.
\end{lemma}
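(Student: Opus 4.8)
The statement to prove, Lemma~\ref{lem:dp:model-opt}, asserts that for any state $\state$ appearing in the canonical model $\model$ of the canonical solution $F$, the restriction $\model|_{\Sevents{\state}}$ is precisely the $\unlhd$-minimum completion $\Smodel^\state$ of $\state$.

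\medskip

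The plan is to argue in two directions. First I would verify that $\model|_{\Sevents{\state}}$ is indeed a completion of $\state$: since $\state$ appears in $\model$, Lemma~\ref{lem:state-appear} tells us that the events occupying positions $\SpL{\state}+1,\ldots,\SpR{\state}$ in $\model$ are exactly $\Sevents{\state}$, and by the definitions of $\SsecL{\state}$ and $\SsecR{\state}$ (which are sections of $\model$ at the appropriate positions), the permutation $\model|_{\Sevents{\state}}$ preceded by the starting events of $\SsecL{\state}$ and succeeded by the ending events of $\SsecR{\state}$ is simply the restriction of $\model$ to $\events{\SW{\state}}$, hence an interval model of $G[\SW{\state}]+F|_{\SW{\state}}$. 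So $\model|_{\Sevents{\state}}$ is a valid completion of $\state$, witnessed by the completion set $F_{\model|_{\Sevents{\state}}} = \binom{\SW{\state}}{2} \cap F$ (the fill-in edges of $F$ with both endpoints inside $\SW{\state}$).

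\medskip

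For the other direction, the standard exchange argument applies: suppose for contradiction that $\Smodel^\state \lhd \model|_{\Sevents{\state}}$, i.e.\ the minimum completion of the state is strictly better. I would then surgically replace, inside $\model$, the block of events on positions $\SpL{\state}+1,\ldots,\SpR{\state}$ by the block prescribed by $\Smodel^\state$ (in the positional order induced by $\Smodel^\state$), leaving everything outside untouched. Because both blocks start with the starting events of $\SsecL{\state}$ being ``pinned'' from the left at position $\SpL{\state}$ and end with the ending events of $\SsecR{\state}$ pinned from the right at position $\SpR{\state}$ — more precisely, the vertices of $\SsecL{\state}$ have their starting events at or before $\SpL{\state}$ and the vertices of $\SsecR{\state}$ have their ending events at or after $\SpR{\state}$ — the glued object $\model'$ remains a valid interval model of $G+F'$ for some completion $F'$ of $G$. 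The crucial bookkeeping is that $F'$ and $F$ differ only on pairs inside $\SW{\state}$: edges of $F$ with an endpoint outside $\SW{\state}$ are unaffected since those events were not moved, and conversely the only fill edges that can change are between vertices whose relevant events lie in the modified block. Hence $|F'| - |F| = |F_{\Smodel^\state}| - |F_{\model|_{\Sevents{\state}}}|$, and analogously for the $\prec$-comparison and the lexicographic comparison of the associated sequences. Going through the three cases of the definition of $\lhd$: if $|F_{\Smodel^\state}| < |F_{\model|_{\Sevents{\state}}}|$ then $|F'| < |F|$, contradicting minimality of $F$; if the sizes are equal but $F_{\Smodel^\state} \prec F_{\model|_{\Sevents{\state}}}$ then $F' \prec F$ with $|F'|=|F|$, contradicting that $F$ is the canonical (i.e.\ $\prec$-minimum among minimum) solution; and if $F_{\Smodel^\state} = F_{\model|_{\Sevents{\state}}}$ but the sequence of $\Smodel^\state$ is lexicographically smaller, then $F'=F$ and $\model'$ is a model of $G+F$ whose canonization tuple is lexicographically smaller than that of $\model$, contradicting that $\model$ is the canonical model of $G+F$. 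In each case we reach a contradiction, so $\model|_{\Sevents{\state}} \unlhd \Smodel^\state$, and combined with the minimality of $\Smodel^\state$ we conclude $\Smodel^\state = \model|_{\Sevents{\state}}$.

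\medskip

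The main obstacle I anticipate is the careful justification that the splicing operation produces a legitimate interval model and that the symmetric difference $F \triangle F'$ is confined to $\binom{\SW{\state}}{2}$; one must check that no vertex outside $\SW{\state}$ has an event strictly between positions $\SpL{\state}$ and $\SpR{\state}$ in $\model$ (which is exactly the content of Lemma~\ref{lem:state-appear}), and that vertices in $\SsecL{\state} \setminus \SsecR{\state}$, $\SsecR{\state} \setminus \SsecL{\state}$, and $\SsecL{\state} \cap \SsecR{\state}$ have their surviving events positioned consistently with any completion of $\state$ — this is what makes the glued permutation respect condition~(1) and~(2) of an interval model globally rather than just locally. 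Matching up the three-tiered comparison $\lhd$ with the global canonicity of $(F,\model)$ is then a routine but slightly tedious case check; the key point is that the ordering on completions of a state was deliberately defined (via $|F_\Smodel|$, then $F_\Smodel$ under $\prec$, then the associated sequence under lex order) to mirror exactly the global canonization, so the correspondence goes through cleanly.
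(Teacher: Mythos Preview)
Your proposal is correct and follows essentially the same exchange argument as the paper: verify that $\model|_{\Sevents{\state}}$ is a completion of $\state$, splice $\Smodel^\state$ into $\model$ to obtain a model $\model'$ of $G+F'$ with $F' = (F \setminus F_{\model|_{\Sevents{\state}}}) \cup F_{\Smodel^\state}$, and then run through the three tiers of the comparison $\lhd$ to contradict minimality of $|F|$, then $\prec$-minimality of $F$, then canonicity of $\model$. The paper's proof is terser on the splicing step (it simply asserts the identity $F' = (F \setminus F_\Smodel) \cup F_{\Smodel^\state}$), whereas you correctly flag and outline the verification that $F \triangle F' \subseteq \binom{\SW{\state}}{2}$ via Lemma~\ref{lem:state-appear}; this is exactly the right justification and matches what the paper leaves implicit.
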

\begin{proof}
Clearly, $\Smodel := \model|_{\Sevents{\state}}$ is a completion of $\state$
and $F_\Smodel = F \cap \binom{\SW{\state}}{2}$.
Moreover, if we consider a model $\model'$ defined as
$$\model' = \model|_{\events{V(G)} \setminus \Sevents{\state}} \cup \Smodel^\state,$$
then we obtain an interval model for $F' := (F \setminus F_\Smodel) \cup F_{\Smodel^\state}$.
Observe that:
\begin{enumerate}
\item $|F_{\Smodel^\state}| \leq |F_\Smodel|$ by the minimality of $\Smodel^\state$, whereas
if $|F_\Smodel| > |F_{\Smodel^\state}|$ then $|F'| < |F|$, contradicting the minimality of $F$;
hence $|F_\Smodel| = |F_{\Smodel^\state}|$.
\item $F_{\Smodel^\state} \preceq F_\Smodel$ by the minimality of $\Smodel^\state$, whereas
if $F_\Smodel \succ F_{\Smodel^\state}$ then $F' \prec F$, contradicting the fact that $F$ is canonical;
hence $F_\Smodel = F_{\Smodel^\state}$ and $F' = F$.
\item The sequence associated with $\Smodel^\state$ is lexicographically not larger than
the sequence associated with $\Smodel$, whereas, if it would be lexicographically strictly smaller,
then $\model'$ would be lexicographically smaller model than $\model$, contradicting the fact that $\model$
  is the canonical model of $G+F$. Hence, $\Smodel^\state = \Smodel$.
\end{enumerate}
\end{proof}

\subsubsection{DP computation}

We now proceed to the description of computation of
$M[\state]$ for $\state \in \stateset$. In the base case, if
$|\Sevents{\state}| \leq 4\sqrt{k}+4$, we find $M[\state] = \Smodel^\state$
by brute-force in $\Ohstar(k^{\Oh(\sqrt{k})})$ time by trying all possible bijections.

Consider now a state $\state$ where $|\Sevents{\state}| > 4\sqrt{k}$.
We claim that the family of sets is rich enough so that we can compute $M[\state]$
by ``gluing'' the solution of at most three substates.

More formally, to compute $M[\state]$ we iterate through all possible choices of
sequences $(\state^i)_{i=1}^s$ for $s=2,3$ where
\begin{enumerate}
\item $\SpL{\state^1} = \SpL{\state}$ and $\SsecL{\state^1} = \SsecL{\state}$,
\item $\SpR{\state^s} = \SpR{\state}$ and $\SsecR{\state^s} = \SsecR{\state}$,
\item $\SpR{\state^i} = \SpL{\state^{i+1}}$ and 
 $\SsecR{\state^i} = \SsecL{\state^{i+1}}$ for each $1 \leq i < s$,
\item $\Sevents{\state} = \biguplus_{i=1}^s \Sevents{\state^i}$,
\item $\SpR{\state^i} - \SpL{\state^i} < \SpR{\state} - \SpL{\state}$ for each $1 \leq i \leq s$.
\end{enumerate}
For each such sequence, we consider a candidate permutation $\Smodel$ defined
as a union (concatenation) of permutations $(M[\state^i])_{i=1}^s$.
As $M[\state]$ we chose the permutation $\Smodel$ which is $\unlhd$-minimum
among all considered permutations that are completions of $\state$. 
Note that, the last condition for the states $\state^i$ ensures that, 
if we compute $M[\state]$ in the order of increasing value $\SpR{\state}-\SpL{\state}$,
then in the computation we use already known values of $M[\state^i]$ for $1 \leq i \leq s$.

\begin{figure}
\centering
\includegraphics{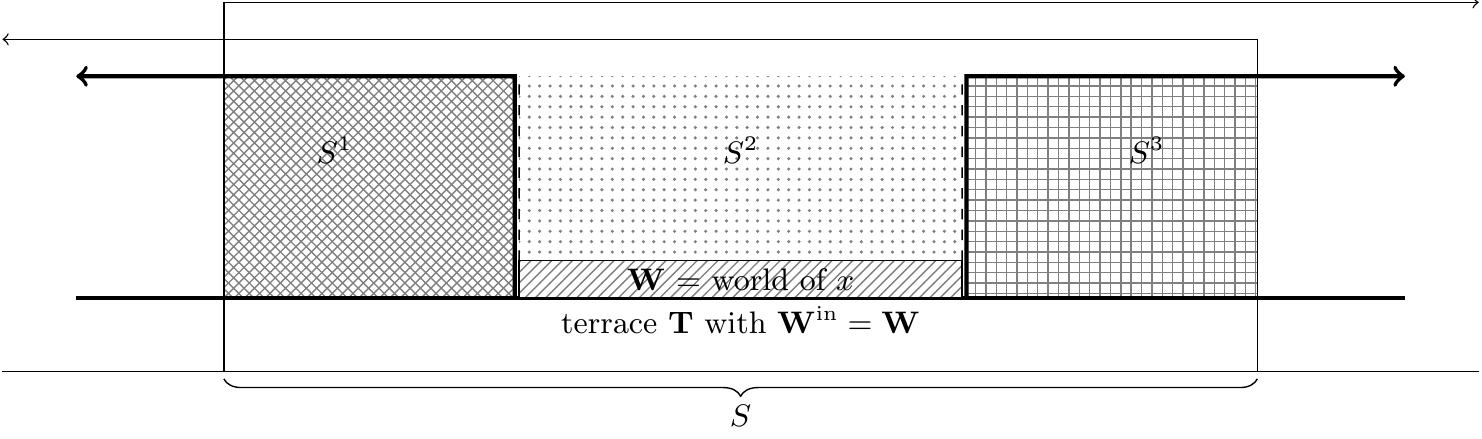}
\caption{A computation of the value for DP state $\state$ when $x \in \SI{\state}$ and we glue values from three substates.}
\label{fig:dp:states1}
\end{figure}

\begin{figure}
\centering
\includegraphics{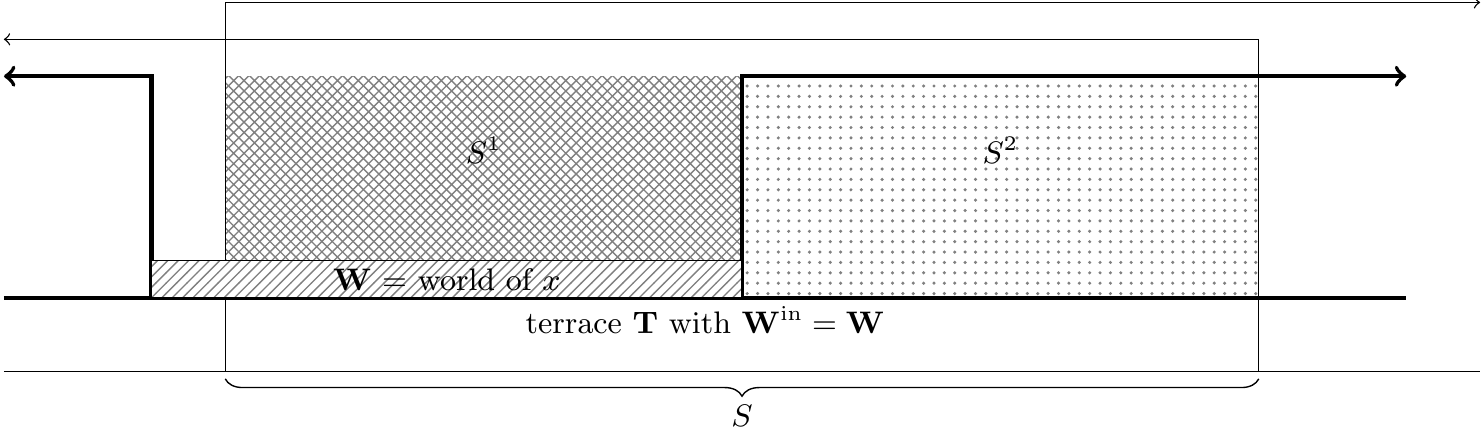}
\caption{A computation of the value for DP state $\state$ when $x \notin \SI{\state}$ and we glue values from two substates.}
\label{fig:dp:states2}
\end{figure}

If no candidate completion of $\state$ is found, we pick any permutation of $M[\state]$;
as we shall see in the next lemma, such a state $\state$ cannot appear in the canonical model $\model$.
\begin{lemma}\label{lem:dp:ok}
For any $\state$ that appears in the canonical model $\model$,
we have
$$M[\state] = \model|_{\Sevents{\state}} = \Smodel^\state.$$
\end{lemma}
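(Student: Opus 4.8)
The plan is to prove Lemma~\ref{lem:dp:ok} by induction on the quantity $\SpR{\state} - \SpL{\state}$, i.e. on the size of the window of positions that the state $\state$ is responsible for. The base case is immediate: when $|\Sevents{\state}| \le 4\sqrt{k}+4$ we compute $M[\state]$ by brute force over all bijections $\Sevents{\state} \to \{\SpL{\state}+1,\ldots,\SpR{\state}\}$, so $M[\state]$ equals the $\unlhd$-minimum completion $\Smodel^\state$ by definition; and by Lemma~\ref{lem:dp:model-opt} this equals $\model|_{\Sevents{\state}}$ whenever $\state$ appears in $\model$. The two things to verify in the inductive step are (i) \emph{soundness}: every candidate permutation $\Smodel$ that the algorithm assembles by concatenating $(M[\state^i])_{i=1}^s$ for an admissible splitting sequence $(\state^i)$, and that passes the ``is a completion of $\state$'' test, satisfies $F_{\Smodel} \supseteq$ no spurious edges and does not beat $\Smodel^\state$ in the $\unlhd$ order; and (ii) \emph{completeness}: when $\state$ appears in $\model$, there is \emph{some} admissible splitting sequence for which the concatenation of the $M[\state^i]$'s reproduces exactly $\model|_{\Sevents{\state}}$.

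For soundness, fix an admissible sequence $(\state^i)_{i=1}^s$; the conditions imposed (matching left/right positions and sections at the seams, $\Sevents{\state}$ partitioned, each subwindow strictly shorter) guarantee that concatenating $(M[\state^i])_{i=1}^s$ after the starting events of $\SsecL{\state}$ and before the ending events of $\SsecR{\state}$ yields a genuine bijection $\Sevents{\state}\to\{\SpL{\state}+1,\ldots,\SpR{\state}\}$. If this concatenation is additionally a completion of $\state$ — which the algorithm explicitly checks before accepting it — then it witnesses an element of the set over which $\Smodel^\state$ is the minimum, so $\Smodel^\state \unlhd M[\state]$ always holds (after the inductive step $M[\state]$ is \emph{some} completion, or an arbitrary permutation if none was found). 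For completeness, suppose $\state=(\T_1,\T_2)$ appears in $\model$. We need to exhibit a splitting into two or three shorter substates all of which appear in $\model$, so that by the induction hypothesis $M[\state^i] = \model|_{\Sevents{\state^i}}$, and then by Lemma~\ref{lem:state-appear} the concatenation of these restrictions is exactly $\model|_{\Sevents{\state}}$, which by Lemma~\ref{lem:dp:model-opt} is $\Smodel^\state$. Since this candidate is considered by the algorithm and is $\unlhd$-minimum, it will be selected, giving $M[\state] = \Smodel^\state = \model|_{\Sevents{\state}}$.

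The heart of the argument — and the step I expect to be the main obstacle — is producing the splitting sequence. Because $|\Sevents{\state}| > 4\sqrt{k}$, there are more than $4\sqrt{k}$ events strictly inside the window; since there are at most $2\sqrt{k}$ expensive vertices, the window must contain an endpoint of some cheap vertex $x$ (more precisely, one can pick a cheap vertex $x$ whose interval $[\model(\Ibeg{x}),\model(\Iend{x})]$ interacts with the window in a way that lets us peel off substates). Using $x$ we invoke Lemma~\ref{lem:world-exists} and Lemma~\ref{lem:terrace-exists} to obtain the world $\W(\model,x)$ and the terrace $\T(\model,x)$ appearing in $\model$; by Theorem~\ref{thm:terrace-enum} (via Corollary~\ref{cor:state-enum}) these lie in our enumerated families. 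Then one forms the substates by pairing $\T_1$, $\T(\model,x)$ (in one of its two important areas, and for the flat/nested variants), and $\T_2$ appropriately — this is precisely the picture in Figures~\ref{fig:dp:states1} and~\ref{fig:dp:states2}, where the case $x \in \SI{\state}$ uses three substates (left of $x$, interior of $x$, right of $x$) and the case $x \notin \SI{\state}$ uses two. One must check, somewhat tediously, that each resulting pair of terraces satisfies the numerical constraints in the definition of a state, that the seam positions and seam sections match, that the $\Sevents{}$ sets partition $\Sevents{\state}$, and — crucially — that each substate's window is \emph{strictly} shorter than the window of $\state$, so the induction is well-founded; the latter is where the careful choice of $x$ (cheap, with at least one endpoint strictly inside the window, and chosen extremally so that the peeled-off worlds are nondegenerate) does the work. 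Once the splitting is in hand, the rest is bookkeeping via Lemma~\ref{lem:state-appear} and the two canonicity lemmas already established.
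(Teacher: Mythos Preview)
Your proposal is correct and follows essentially the same approach as the paper: induction on $\SpR{\state}-\SpL{\state}$, brute force in the base case, and in the inductive step choosing a cheap vertex $x$ with an endpoint in the window, building the flat terrace $\W(\model,x)$ and nested terrace $\T(\model,x)$, and splitting into three substates $(\T_1,\T_n),(\T_f,\T_f),(\T_n,\T_2)$ when $x\in\SI{\state}$ or two substates $(\T_1,\T_f),(\T_n,\T_2)$ otherwise. The one place you left deliberately vague---the ``extremal'' choice of $x$---is exactly where the paper does the real work: it picks $x$ maximizing $\min(\model(\Iend{x}),\SpR{\state}+1)-\max(\model(\Ibeg{x}),\SpL{\state})$, with ties broken in favor of $x\in\SI{\state}$; this maximality forces $y_1,y_2$ (the outer vertices of $\T(\model,x)$) to have their intervals strictly containing the window, which is precisely what guarantees the substates are well-formed and strictly shorter, and the tie-breaking plus the observation that $|\Sevents{\state}|>4\sqrt{k}+4$ yields at least three cheap vertices handles the degenerate boundary case you anticipated.
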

\begin{proof}
The second equality is due to Lemma~\ref{lem:dp:model-opt}.
We prove that $M[\state] = \model|_{\Sevents{\state}}$ for any state $\state$ that appears
in $\model$, by induction on $|\Sevents{\state}| = \SpR{\state}-\SpL{\state}$. Note that $M[\state]$ is defined via the same minimization condition as $\Smodel^\state$ but on a smaller family of permutations, so it suffices to prove that $\model|_{\Sevents{\state}}$ is among the candidate permutations considered when computing $M[\state]$. For states where $|\Sevents{\state}| \leq 4\sqrt{k}+4$ this is clearly true, as the brute-force algorithm in fact considers all the possible candidate permutations.

Consider then $\state = (\T_1,\T_2)$ with $|\Sevents{\state}| > 4\sqrt{k}+4$. 
Observe that in this case we have at least three vertices $x \in (\SsecL{\state} \triangle \SsecR{\state}) \cup \SI{\state}$
that are cheap w.r.t. $F$.
Pick one such vertex with maximum possible value of:
\begin{equation}\label{eq:x-choice}
\min(\model(\Iend{x}),\SpR{\state}+1) - \max(\model(\Ibeg{x}),\SpL{\state}).
\end{equation}
In case of a tie, we prefer $x$ belonging to $\SI{\state}$.

We consider two cases: whether $x \in \SI{\state}$ or not.
If $x \in \SI{\state}$,
consider the flat terrace $\T_f = \W(\model,x)$ and the nested terrace $\T_n = \T(\model,x)$,
with vertices $y_1 = y_1(\model,x)$ and $y_2 = y_2(\model,x)$ (see Figure~\ref{fig:dp:states1}). 
Observe that, by the choice of $x$, we have
\begin{equation}\label{eq:x-SI}
\model(\Ibeg{y_2}) \leq \model(\Ibeg{y_1}) \leq \SpL{\state} = \TBpL{\T_1} <  \model(\Ibeg{x}) 
< \model(\Iend{x}) \leq \SpR{\state} = \TApR{\T_2} < \model(\Iend{y_2}) \leq \model(\Iend{y_1}).\end{equation}
That is, the claim that $\model(\Ibeg{y_i}) \leq \SpL{\state} < \SpR{\state} < \model(\Iend{y_i})$ for $i=1,2$
follows from \eqref{eq:x-choice} in the choice of $x$, since otherwise $y_i$ would be a better candidate for $x$.
Consider now states $\state^1 = (\T_1,\T_n)$, $\state^2 = (\T_f,\T_f)$ and $\state^3 = (\T_n,\T_2)$.
From \eqref{eq:x-SI} we infer that
$$\SpL{\state} = \SpL{\state^1} < \SpR{\state^1} = \SpL{\state^2} \leq \SpR{\state^2} = \SpL{\state^3} < \SpR{\state^3} = \SpR{\state},$$
and, consequently, the last condition for considering states $(\state^i)_{i=1}^3$ holds.
A direct check shows that these three states appear in $\model$, and the algorithm indeed
considers concatenating $M[\state^1]$, $M[\state^2]$ and $M[\state^3]$
to obtain $M[\state]$. By induction hypothesis, $M[\state^i] = \model|_{\Sevents{\state^i}}$ for $i=1,2,3$
and the inductive claim follows in this case.

In the second case, without loss of generality assume that 
$x \in \SsecL{\state} \setminus \SsecR{\state}$ (see Figure~\ref{fig:dp:states2}). Note that, by the criterion \eqref{eq:x-choice}, $x$ is such a cheap vertex with maximum $\model(\Iend{x})$.
Consider the flat terrace $\T_f = \W(\model,x)$ and the nested terrace $\T_n = \T(\model,x)$,
with vertices $y_1 = y_1(\model,x)$ and $y_2 = y_2(\model,x)$. 
Observe that, by the choice of $x$, we have
\begin{equation}\label{eq:x-left}\model(\Ibeg{y_2}) \leq \model(\Ibeg{y_1}) < \model(\Ibeg{x}) \leq \SpL{\state} = \TBpL{\T_1} 
< \model(\Iend{x}) \leq \SpR{\state} = \TApR{\T_2} < \model(\Iend{y_2}) \leq \model(\Iend{y_1}).
\end{equation}
That is, the inequality $\SpR{\state} < \model(\Iend{y_2})$ follows from the choice of rightmost possible $\model(\Iend{x})$. Consider now states $\state^1 = (\T_1,\T_f)$, $\state^2 = (\T_n,\T_2)$.
Using~\ref{eq:x-left} we observe that, unless $\model(\Iend{x}) = \SpL{\state}+1$, we have that
$$\SpL{\state} = \SpL{\state^1} < \SpR{\state^1} = \SpL{\state^2} < \SpR{\state^2} = \SpR{\state}.$$
However, if $\model(\Iend{x}) = \SpL{\state}+1$ then the value of \eqref{eq:x-choice} for the vertex $x$ equals one, and is minimum possible.
There can be at most one such $x \in \SsecL{\state} \setminus \SsecR{\state}$ and at most one such $x \in \SsecR{\state} \setminus \SsecL{\state}$.
Since there are at least three cheap vertices in $(\SsecL{\state} \triangle \SsecR{\state}) \cup \SI{\state}$, we infer that there exists one such $x' \in \SI{\state}$.
As the value of \eqref{eq:x-choice} for $x'$ is at least one, this contradicts the tie-breaking rule in the choice of $x$.

A direct check shows that both $\state^1$ and $\state^2$ appear in $\model$, and the algorithm
considers concatenating $M[\state^1]$ with $M[\state^2]$
to obtain $M[\state]$. By induction hypothesis, $M[\state^i] = \model|_{\Sevents{\state^i}}$ for $i=1,2$
and the inductive claim follows in this case as well.
This concludes the proof of Lemma~\ref{lem:dp:ok}.
\end{proof}

We now observe that the world $\W_\Uroot := \W(\model,\Uroot)$ is easy to guess:
\begin{align*}
\Wv{\W_\Uroot} &= \Uroot & \WF{\W_\Uroot} &= \emptyset \\
\WpL{\W_\Uroot} &= 1 & \WpR{\W_\Uroot} &= 2n-1 \\
\WsecL{\W_\Uroot} &= \{\Uroot\} & \WsecR{\W_\Uroot} &= \{\Uroot\}.
\end{align*}
Hence, we may proceed as follows: we compute the table $M$, read the cell $M[\state(\W_\Uroot,\W_\Uroot)]$,
and add the events $\Ibeg{\Uroot}$ and $\Iend{\Uroot}$
before and after the permutation found in this cell.
By Lemma~\ref{lem:dp:ok}, if $(G,k)$ is a YES-instance, the obtained permutation is
the canonical model for $G+F$ where $F$ is the canonical solution to $(G,k)$.
This concludes the proof of Theorem~\ref{thm:ic}.

\section{Conclusions}\label{sec:conc}
We would like to conclude our paper with two suggestions for future research.
First, in the light of our techniques the question for a polynomial kernel
for \icname{} is appealing. 
We think that the techniques developed in our work to cope with the lack of kernel,
in some sense being local kernelization arguments,
can help with obtaining an affirmative answer to this question.
The question if   \icname{}  admits a polynomial kernel  is important from  practical considerations  too.  
Although the running time of our algorithm is subexponential in $k$, so far 
   our result is mainly of theoretical importance due to  the  high degree polynomial of $n$.
 This is why  
 the most promising approach to significantly reduce the polynomial dependency on $n$
is to actually develop a polynomial kernel for \icname{}.
A polynomial kernel for \ic{} 
 would also reduce significantly the exponent in the running time by making the arguments of Section~\ref{sec:fill-in} obsolete. Needless to say, the argumentation of Sections~\ref{sec:pmc} and~\ref{sec:fill-in} could be tremendously simplified if such a polynomial kernel was at our disposal.
We remark here that it is also possible that
the very recent techniques of Cao~\cite{yixin:new}, that lead to a linear dependency
on the size of the graph in the ``forbidden subgraph'' branching algorithm,
may help decrease the dependency on the size of the graph in our algorithm.


For the second suggestion, we observe that except for the case of proper interval graphs,
the obtained subexponential parameterized algorithms for completion problems
to graph classes present in Figure~\ref{fig:diagram} run in time $k^{\Oh(\sqrt{k})} n^{\Oh(1)}$.
As an algorithm with running time bound $2^{o(\sqrt{k})} n^{\Oh(1)}$ would actually be a
$2^{o(n)}$-time algorithm, we suspect that $2^{\Oh(\sqrt{k})}$ or even $k^{\Oh(\sqrt{k})}$ may be
the best possible dependency on $k$ in the running time for these problems. Unfortunately, there is a big gap here between what we suspect and what we can prove,   even assuming the Exponential Time Hypothesis (ETH).
A natural research direction  is the quest for asymptotically tight bounds for completion problems. As  concrete open questions,  
 is there    $2^{\Omega(\sqrt{k})}n^{\Oh(1)}$ lower bound for \icname{}  under the assumption of ETH? 
 What about $2^{\Omega(\sqrt{k} \log{k})}n^{\Oh(1)}$? Or maybe  it is possible to solve   
   the completion problem to at least one of the graph classes
in Figure~\ref{fig:diagram} within  running time $2^{\Oh(\sqrt{k})} n^{\Oh(1)}$  thus shaving off the $\log k$ factor in the exponent?
%
%

\bibliographystyle{abbrv}
\bibliography{../completion}

\appendix
\section*{Appendix}\label{sec:boring}
\begin{proof}[Proof of Lemma~\ref{lem:ic-cliques-drawing}]
Without loss of generality assume that $\Omega_1$ and $\Omega_2$ are non-empty, as otherwise we may with polynomial overhead guess the first or the last event of the model.

First observe that if $G$ is disconnected, but $\Omega_1$ and $\Omega_2$ are in the same
connected component of $G$ then clearly no such interval model of $G$ exists, 
as any interval model of $G$ needs to arrange connected components of $G$ one-by-one.
Hence, assume in the rest of the proof that either $G$ is connected or $\Omega_1$ and $\Omega_2$
are contained in two different connected components of $G$.
Let $C_1$ be the connected component containing $\Omega_1$ and $C_2$ the one containing $\Omega_2$.

Consider a graph $H$ created from $G$ by adding two 3-vertex paths
$x_1,x_2,x_3$ and $y_1,y_2,y_3$ and making $x_1$ fully adjacent to $\Omega_1$
and $y_1$ fully adjacent to $\Omega_2$. We claim that there exists an interval
model of $G$ as requested in the statement of the lemma if and only if $H$
is an interval graph. Observe that such a claim would finish the proof of the lemma, as
$H$ can be constructed in linear time.

In one direction, consider the model $\model$ of $G$ as in the statement of the lemma.
Precede the ordering $\model$ with events
$\Ibeg{x_3},\Ibeg{x_2},\Iend{x_3},\Ibeg{x_1},\Iend{x_2}$
and insert the event $\Iend{x_1}$ immediately after all starting events of $\events{\Omega_1}$.
Symmetrically, succeed the ordering $\model_C$ with events
$\Ibeg{y_2},\Iend{y_1},\Ibeg{y_3},\Iend{y_2},\Iend{y_3}$
and insert the event $\Ibeg{y_1}$ immediately before all ending events of $\events{\Omega_2}$.
It is straightforward to verify that this is an interval model of the graph $H$.

In the other direction, let $\model$ be an interval model of $H$ and consider events $\Ibeg{x_2}$ and $\Iend{x_2}$. 
Observe that if $\model(\Ibeg{x_1}) < \model(\Ibeg{x_2})$ and simultaneously $\model(\Iend{x_2}) < \model(\Iend{x_1})$ (i.e., the interval of $x_1$ contains the interval of $x_2$) then there
is no place to put the endpoints of $x_3$ into the model, as $x_1x_3 \notin E(H)$ but
$x_2x_3 \in E(H)$. Consequently, either
$\model(\Ibeg{x_2}) < \model(\Ibeg{x_1}) < \model(\Iend{x_2}) < \model(\Iend{x_1})$ (case (1.i)) or
$\model(\Ibeg{x_1}) < \model(\Ibeg{x_2}) < \model(\Iend{x_1}) < \model(\Iend{x_2})$ (case (1.ii)).
Assume first that the case (1.i) happens.
As $x_1$ is adjacent to $x_2$ and to every vertex of $\Omega_1$, but no vertex of $V(G)$ is adjacent to $x_2$,
we infer that the events between $\Iend{x_2}$ and $\Iend{x_1}$ in the model $\model$
are first all starting events of $\events{\Omega_1}$ and then possibly some ending events of $\events{\Omega_1}$,
and, moreover, all other events of $\events{C_1}$
appear in $\model$ to the right of $\Iend{x_1}$. Consequently, the model $\model$,
restricted to $\events{C_1}$, starts with the starting
events of $\events{\Omega_1}$.
Observe that in the case (1.ii), i.e.,
$\model(\Ibeg{x_1}) < \model(\Ibeg{x_2}) < \model(\Iend{x_1}) < \model(\Iend{x_2})$,
we obtain the symmetric conclusion: the model $\model$, restricted to $\events{C_1}$,
ends with the ending events of $\events{\Omega_1}$.

An analogous reasoning can be made for the path $y_1,y_2,y_3$; let us denote the respective cases (2.i) and (2.ii). Consider first the case when $C_1=C_2=V(G)$ and $G$ is connected, and examine the model $\model$ restricted to $\events{C_1}=\events{C_2}=\events{V(G)}$. From our study we infer that this model starts with all the starting events of $\events{\Omega_1}$ providing that (1.i) happens, or with all the starting events of $\events{\Omega_2}$ providing that (2.i) happens. Moreover, this model ends with all the ending events of $\events{\Omega_1}$ providing that (1.ii) happens, or with all the ending events of $\events{\Omega_2}$ providing that (2.ii) happens. Observe, however, that if (1.i) and (2.i) happened simultaneously, then the first event of $\model$ restricted to $\events{V(G)}$ would be $\Ibeg{v}$ for some $v\in \Omega_1\cap\Omega_2$. In this case we would have $\Ibeg{x_1}<\Ibeg{v}<\Iend{x_1}$ and $\Ibeg{y_1}<\Ibeg{v}<\Iend{y_1}$, which means that the intervals of $x_1$ and $y_1$ would overlap, contradicting the fact that $x_1$ and $y_1$ are not adjacent in $H$. Similarly, (1.ii) and (2.ii) cannot happen simultaneously. Since either (1.i) or (1.ii) happens, and either (2.i) or (2.ii) happens, we infer that either ((1.i) and (2.ii)) happens, or ((1.ii) and (2.i)) happens. In case ((1.i) and (2.ii)) we are already done, since $\model$ restricted to $\events{V(G)}$ has exactly the desired property. In case ((1.ii) and (2.i)) it suffices to revert the model $\model$ restricted to $\events{V(G)}$.

Examine now the case when $C_1\neq C_2$. Consider model $\model'$ of $V(G)$ constructed from $\model$ by the following reshuffling of connected components of $G$: We first place the model of $C_1$, possibly reversing it if (1.ii) happened instead of (1.i). Then we arrange the models of all the connected components of $G$ other than $C_1,C_2$ in any order. Finally, we place the model of $C_2$, possibly reversing it if (2.i) happened instead of (2.ii). It is straightforward to see that this model of $G$ has the desired property.
\end{proof}

\begin{proof}[Proof of Lemma~\ref{lem:can:modules}]
Assume otherwise, and let $i$ be the smallest index such that $x_i \prec x_{i-1}$.
Denote $p = \Sbeg{C_i}{\model}$.
As $i>1$ and $N_G(v) \setminus C_j = X$ for every $1 \leq j \leq s$ and $v\in C_j$, we have that $\Isec_\model(p-1)= X$.

Consider a model $\model'$ of $G$ that is constructed as follows:
\begin{enumerate}
\item First, we take all events of $\model^{-1}(\{1,2,\ldots,p-1\}) \setminus \events{C_{i-1}}$,
  in the order as they appear in $\model$.
\item Second, we take all events of $\events{C_i}$, in the order as they appear in $\model$.
\item Third, we take all events of $\events{C_{i-1}}$, in the order as they appear in $\model$.
\item Finally, we take all events of $\model^{-1}(\{p,p+1,\ldots,2n\}) \setminus \events{C_i}$,
  in the order as they appear in $\model$.
\end{enumerate}
A direct check shows that $\model'$ is an interval model of $G$. We now claim the following: for every vertex $u \notin C_{i-1}$ we have $\model'(\Ibeg{u}) \leq \model(\Ibeg{u})$. This claim is trivial for the vertices $u\in C_i$, and for the vertices $u\notin C_{i-1}$ with $\model(\Ibeg{u})<p$. Consider then any vertex $u\notin C_{i-1}$ such that $\model(\Ibeg{v})\geq p$. Since $i>1$ and $N_G(v) \setminus C_1 = X$ for every $v\in C_1$, we infer that all the vertices of $X$ have starting events before position $p$ in $\model$, and hence $u\notin X$. Therefore $u\notin N_G(C_i)$, so in fact $\model(\Ibeg{u})>\Send{C_i}{\model}$. By the definition of $\model'$ we infer that $\model(\Ibeg{u})=\model'(\Ibeg{u})$, and the claim is proven.

Now observe that
\begin{itemize}
\item $\model'(\Ibeg{v}) \leq \model(\Ibeg{v})$ for any $v \preceq x_i$, as 
only for vertices $v \in C_{i-1}$ it is possible that $\model'(\Ibeg{v}) > \model(\Ibeg{v})$ and all vertices of
$C_{i-1}$ are at least as late as $x_{i-1} \succ x_i$ in the order $\prec$;
\item $\model'(\Ibeg{x_i}) < \model(\Ibeg{x_i})$, since $C_{i-1}$ is non-empty.
\end{itemize}
Hence, $\model$ is not the canonical model and the lemma is proven.
\end{proof}

\end{document}